\tiny\color{gray},
\newcolumntype{L}[1]{>{\raggedright\arraybackslash}p{#1}}
\newcolumntype{C}[1]{>{\centering\arraybackslash}m{#1}}
\newcolumntype{R}[1]{>{\raggedleft\arraybackslash}p{#1}}
\newcommand{\DTV}[2]{d_{\mathrm{TV}}\left({#1},{#2}\right)}
\renewcommand{\epsilon}{\varepsilon}
\newcommand{\tmix}{T_{\mathsf{mix}}}
\newcommand{\ExtVec}[2]{{#1}^{#2}}
\newcommand{\0}{-1}
\newcommand{\1}{+1}
\newtheorem{theorem}{Theorem}[section]
\newtheorem{observation}[theorem]{Observation}
\newtheorem{claim}[theorem]{Claim}
\newtheorem*{claim*}{Claim}
\newtheorem{example}[theorem]{Example}
\newtheorem{fact}[theorem]{Fact}
\newtheorem{lemma}[theorem]{Lemma}
\newtheorem{proposition}[theorem]{Proposition}
\newtheorem{corollary}[theorem]{Corollary}
\theoremstyle{definition}
\newtheorem{definition}[theorem]{Definition}
\newtheorem{remark}[theorem]{Remark}
\newtheorem*{remark*}{Remark}
\newtheorem{question}{Question}
\newtheorem{case}{Case}
\newtheorem*{case*}{Case}
\def\Pr{\mathop{\mathbf{Pr}}\nolimits}
\newcommand{\norm}[1]{\left\Vert#1\right\Vert}
\newcommand{\tuple}[1]{\left(#1\right)} 
\newcommand{\inner}[2]{\left\langle #1,#2\right\rangle}
\newcommand{\tp}{\tuple}
\newcommand{\abs}[1]{\left\vert#1\right\vert}
\newcommand{\spgap}[2]{\lambda_{\mathsf{#1}}^{\mathsf{#2}}}
\newcommand{\GD}{\mathsf{GD}}
\newcommand{\FD}{\mathsf{FD}}
\newcommand{\sgap}{min\text{-}gap}
\newcommand{\Mag}[2]{{#1}^{({#2})}}
\newcommand{\INF}[4]{\mathcal{I}_{#1}^{#2}({#3},{#4})}
\newcommand{\interior}[1]{%
  {\kern0pt#1}^{\mathrm{o}}%
}
\def\*#1{\boldsymbol{#1}} 
\def\+#1{\mathcal{#1}} 
\def\-#1{\mathrm{#1}} 
\def\^#1{\mathscr{#1}} 
\newcommand{\todo}[1]{\typeout{TODO: \the\inputlineno: #1}\textbf{{\color{red}[[[ #1 ]]]}}}
\renewcommand{\Pr}[2][]{ \ifthenelse{\isempty{#1}}
  {{\mathbf{Pr}}\left[#2\right]} {{\mathbf{Pr}}_{#1}\left[#2\right]} } 
\newcommand{\E}[2][]{ \ifthenelse{\isempty{#1}}
  {{\mathbf{\mathbf{E}}}\left[#2\right]}
  {{\mathbf{\mathbf{E}}}_{#1}\left[#2\right]} }
\newcommand{\Var}[2][]{ \ifthenelse{\isempty{#1}}
  {\mathbf{\mathbf{Var}}\left[#2\right]}
  {\mathbf{\mathbf{Var}}_{#1}\left[#2\right]} }
\newcommand{\OPr}[2][]{ \ifthenelse{\isempty{#1}}
  {\mathop{\mathbf{Pr}}\left[#2\right]} {\mathop{\mathbf{Pr}}_{#1}\left[#2\right]} } 
\newcommand{\OEp}[2][]{ \ifthenelse{\isempty{#1}}
  {\mathop{\mathbf{\mathbf{E}}}\left[#2\right]}
  {\mathop{\mathbf{\mathbf{E}}}_{#1}\left[#2\right]} }
\newcommand{\one}[1]{\textbf{1}\left[#1\right]}
\newcommand{\Rd}{\mathsf{Trans}}
\newcommand{\relaxT}[2][]{
  \ifthenelse{\isempty{#2}}
  {t_{\mathrm{rel}}^{\mathrm{#1}}}
  {t_{\mathrm{rel}}^{\mathrm{#1}}(#2)}
}
\newcommand{\Plim}[1][]{
  \ifthenelse{\isempty{#1}}
  {P^{\mathsf{FD}}_{\theta}}
  {P^{\mathsf{FD}}_{#1}}
}
\newcommandx{\Instance}[4][1= , 2=, 3= , 4= ]{
  \ifthenelse{\isempty{#1}}
  {\tp{V, E,
      \ifthenelse{\isempty{#3}}
      {\tp{\lambda_v}_{v\in V}}
      {\*\lambda},
      \ifthenelse{\isempty{#4}}
      {\tp{A_e}_{e \in E}}
      {\*A}
    }
  }
  {
    \ifthenelse{\isempty{#2}}
    {
      \tp{V#1, E#1,
        \ifthenelse{\isempty{#3}}
        {\tp{\lambda#1_{v}}_{v \in V#1}}
        {\*\lambda#1},
        \ifthenelse{\isempty{#4}}
        {\tp{A#1_{e}}_{e\in E#1}}
        {\*A#1}
      }
    }
    {
      \tp{V#1, E#1,
        \ifthenelse{\isempty{#3}}
        {\tp{\lambda#1_{v#2}}_{v#2 \in V#1}}
        {\*\lambda#1},
        \ifthenelse{\isempty{#4}}
        {\tp{A#1_{e#2}}_{e#2\in E#1}}
        {\*A#1}
      }
    }
  }
}
\newcommandx{\Inst}[2][1=, 2= ]{\Instance[][][#1][#2]}
\newcommand{\Pproj}[1][]{P^{\mathrm{proj}}_{#1}}
\newcommandx{\HyperGeo}[3][1 = \ell, 2=V, 3=k]{\Pi_{#2, #3, #1}}
\def\p:#1{#1^\star} 
\def\b:#1{#1^\circ} 
\def\r:#1{#1^\ast} 
\newcommand{\TSAW}{T_{\mathrm{SAW}}}
\newcommand{\ALO}[2]{I^{#2}_{#1}}
\title{Rapid mixing of Glauber dynamics via spectral independence for all degrees}
\date{}
\author{Xiaoyu Chen\textsuperscript{1}}
\author{Weiming Feng\textsuperscript{1 2}}
\author{Yitong Yin\textsuperscript{1}}
\author{Xinyuan Zhang\textsuperscript{1}}
\address{1: State Key Laboratory for Novel Software Technology, Nanjing University, 163 Xianlin Avenue, Nanjing, Jiangsu Province, China.}
\address{2: School of Informatics, University of Edinburgh, Informatics Forum, Edinburgh, United Kingdom.}
\address{\textnormal{E-mails: \url{chenxiaoyu233@smail.nju.edu.cn}, \url{wfeng@ed.ac.uk}, \url{yinyt@nju.edu.cn}, \url{zhangxy@smail.nju.edu.cn}}}
\thanks{
  This research was supported by the National Key R\&D Program of China 2018YFB1003202.
  Weiming Feng is supported by funding from the European Research Council (ERC) under the European Union's Horizon 2020 research and innovation programme (grant agreement No.~947778).
}
\begin{document}
\begin{abstract}
We prove an optimal $\Omega\tp{n^{-1}}$ lower bound on spectral gap of the Glauber dynamics for anti-ferromagnetic two-spin systems with $n$ vertices in the tree uniqueness regime.
This spectral gap holds for any, including unbounded, maximum degree $\Delta$.
%
Consequently, we have the following mixing time bounds for the models satisfying the uniqueness condition with a slack $\delta\in(0,1)$: 
\begin{itemize}
\item $C(\delta) n^2\log n$ mixing time for the hardcore model with fugacity $\lambda\le (1-\delta)\lambda_c(\Delta)= (1-\delta)\frac{(\Delta - 1)^{\Delta - 1}}{(\Delta - 2)^\Delta}$;
\item $C(\delta) n^2$ mixing time for the Ising model with edge activity $\beta\in\left[\frac{\Delta-2+\delta}{\Delta-\delta},\frac{\Delta-\delta}{\Delta-2+\delta}\right]$;
\end{itemize}
where the maximum degree $\Delta$ may depend on the number of vertices $n$, and $C(\delta)$ depends only on $\delta$.
%

Our proof is built upon the recently developed connections between the Glauber dynamics for spin systems and the high-dimensional expander walks. 
In particular,
we prove a stronger notion of spectral independence, called the \emph{complete spectral independence}, and use a novel Markov chain called the \emph{field dynamics} to connect this stronger spectral independence to the rapid mixing of Glauber dynamics for all degrees.
\end{abstract}

\maketitle



\setcounter{tocdepth}{1}
\tableofcontents
\setcounter{page}{0} \thispagestyle{empty} 
\newpage

\section{Introduction}
Spin systems 
are basic graphical models for high-dimensional joint distributions expressed by pairwise interactions,
and have been extensively studied in theoretical computer science, probability theory, and statistical physics.
%
A \emph{two-spin system} is specified on an undirected graph $G=(V,E)$ by three  real parameters $\beta,\gamma,\lambda\ge 0$.
Without loss of generality, one can assume that $0\le \beta\le \gamma$, $\gamma>0$ and $\lambda>0$.
%
Each \emph{configuration} $\sigma\in\{\0,\1\}^V$ assigns every vertex $v\in V$ one of the two \emph{spin states} from $\{\0,\1\}$. 
A probability distribution $\mu$ over all configurations $\sigma\in\{\0,\1\}^V$, called \emph{Gibbs distribution}, is defined as:
\[
\mu(\sigma)=
\frac{1}{Z}\beta^{m_{\1}(\sigma)} \gamma^{m_{\0}(\sigma)}\lambda^{n_{\1}(\sigma)},
\]
where $m_{i}(\sigma) \triangleq \abs{\{(u,v) \in E\mid \sigma_u = \sigma_v = i\}}$ denotes the number of $i$-monochromatic edges for $i\in\{\0,\1\}$ and $n_{\1}(\sigma) \triangleq \abs{\{v \in V \mid \sigma_v = \1\}}$ denotes the number of $\1$-spin vertices, and the normalizing factor 
\[
Z=\sum_{\sigma\in\{\0,\1\}^V}\beta^{m_{\1}(\sigma)} \gamma^{m_{\0}(\sigma)}\lambda^{n_{\1}(\sigma)}
\]
gives the \emph{partition function}.
%
%
The two-spin system is called \emph{ferromagnetic} if $\beta\gamma>1$ and is called \emph{anti-ferromagnetic} if $\beta\gamma<1$.
%
%
In particular, two extensively studied classes of two-spin systems are:
\begin{itemize}
\item 
The \emph{hardcore model} with \emph{fugacity} $\lambda$, which corresponds to two-spin systems with $\beta=0$ and $\gamma=1$. 
Every configuration $\sigma$ with $\mu(\sigma)>0$ corresponds to an independent set $I_{\sigma}$ in $G$, 
and 
$\mu(\sigma)\propto\lambda^{|I_\sigma|}$.
\item 
The \emph{Ising model} with \emph{edge activity} $\beta$ and \emph{external field} $\lambda$, which corresponds to two-spin systems with $\beta=\gamma$. 
The Gibbs distribution becomes $\mu(\sigma)\propto\beta^{m(\sigma)}\lambda^{n_{\1}(\sigma)}$, where $m(\sigma)=m_{\0}(\sigma)+m_{\1}(\sigma)$ gives the number of monochromatic edges.
\end{itemize}

A phenomenon  that has drawn considerable attention of two-spin systems is the computational phase transition for sampling.
In a seminal work~\cite{Wei06}, 
by exploiting a phase transition property based on decay of correlation, known as the \emph{spatial mixing},
Weitz showed that sampling from the Gibbs distribution $\mu$ of the hardcore model with fugacity $\lambda < \lambda_c(\Delta)$ on any $n$-vertex graph of bounded maximum degree $\Delta$ is tractable in time $n^{O(\log\Delta)}$.
%
Here the critical threshold $\lambda_c(\Delta) \triangleq \frac{(\Delta - 1)^{\Delta - 1}}{(\Delta - 2)^\Delta}$ is a famous threshold for the uniqueness/non-uniqueness phase transition for the hardcore model on the infinite $\Delta$-regular tree~\cite{kelly1985stochastic},
beyond which, i.e.~when $\lambda>\lambda_c(\Delta)$, 
the infinite-volume Gibbs measure on the $\Delta$-regular tree is not uniquely defined,
and approximately sampling from the hardcore model on graphs with bounded maximum degree $\Delta$ becomes computationally intractable~\cite{sly2010computational,sly2012computational,GSV16}.
%

This sharp computational phase transition was extended to all anti-ferromagnetic two-spin systems of maximum degree $\Delta$ specified by $(\beta,\gamma,\lambda)$ that is \emph{up-to-$\Delta$ unique}~\cite{lly12, LLY13, SST14},
which corresponds to the uniqueness condition for infinite regular trees up to degree $\Delta$.
And if $(\beta,\gamma,\lambda)$ lies outside this regime, the sampling problem becomes computationally intractable~\cite{sly2012computational,galanis2015inapproximability}.
Similar bounds were also achieved by another family of critical algorithms based on the polynomial interpolation approach for approximating non-vanishing polynomials~\cite{peters2019conjecture,liu2019fisher,SS20}.
%
%
%
%
A glaring issue with both these families of critical algorithms is the high time cost, usually in a form of $n^{O(\log\Delta)}$, 
which is due to enumerating $O(\log n)$-sized local structures.
Such time complexity grows super-polynomially in the size $n$ of the model when the maximum degree $\Delta$ is unbounded.

A major open question is whether sampling from spin systems is always tractable for the class of instances 
within the uniqueness regime, 
which does not by any means restrict to the graphical models with universally bounded max-degrees.
%
We wonder whether such \emph{fixed-parameter tractable} algorithms exist:
\begin{question}\label{question-FPT}
Let $\delta\in(0,1)$ be an arbitrary gap.
Can we approximately sample from the hardcore model on any $n$-vertex graph $G$ of maximum degree $\Delta_G$ with fugacity $\lambda\le(1-\delta)\lambda_c(\Delta_G)$ in time $f(\delta)\cdot\mathrm{poly}(n)$?
\end{question}
%
For anti-ferromagnetic two-spin systems, the condition $\lambda\le(1-\delta)\lambda_c(\Delta)$ is generalized by the regime of $(\beta,\gamma,\lambda)$ that is \emph{up-to-$\Delta$ unique with gap $\delta$},
which corresponds to the interior (determined by the gap $\delta$) of the regime for the uniqueness condition on infinite regular trees up to degree $\Delta$.

There is a canonical Markov chain based algorithm for sampling from Gibbs distributions known as the \emph{Glauber dynamics} ({a.k.a}~\emph{heat bath}, \emph{Gibbs sampling}).
The Glauber dynamics for a joint distribution $\mu$ of variables from $V$ is a Markov chain $\tp{X_t}_{t \geq 0}$ on space $\Omega(\mu)$, where $\Omega(\mu)$ denotes the the support of $\mu$.
At the $t$-th step, the rule for updating $X_t$ is:
\begin{itemize}
\item pick a $v\in V$ uniformly at random;
\item update $X_t(v)$ according to $\mu$ projected onto $v$ given the boundary condition ${X_t(V\setminus\{v\})}$.
\end{itemize}
The chain is stationary and reversible at $\mu$~\cite{levin2017markov}.
The rate of convergence is given by the \emph{mixing time}:
\begin{align*}
  \forall 0 < \epsilon < 1, \quad T_{\mathrm{mix}}(\epsilon) \triangleq \max_{X_0 \in \Omega(\mu) } \min \left\{t \mid \DTV{X_t}{\mu} \leq \epsilon\right\},
\end{align*}
where $\DTV{X_t}{\mu}$ denotes the {total variation distance} between the distribution of $X_t$ and $\mu$.

Let $P$ denote the transition matrix of the Glauber dynamics, which is positive semidefinite~\cite{DGU14}.
Then $P$ has non-negative real eigenvalues $1 = \lambda_1 \geq \lambda_2 \geq \cdots \geq \lambda_{|\Omega(\mu)|} \geq 0$.
The \emph{spectral gap}  is defined by 
$\spgap{gap}{}(P) \triangleq 1 - \lambda_2$.
Given the spectral gap, the mixing time is bounded as:
\begin{align}
T_{\mathrm{mix}}(\epsilon) \leq \frac{1}{\spgap{gap}{}(P)} \log \tp{\frac{1}{\epsilon\cdot\mu_{\min}}}, \quad\text{where } \mu_{\min}\triangleq\min_{\sigma \in \Omega(\mu)}\mu(\sigma).	\label{eq:mixing-time-spectral-gap}
\end{align}

It was widely speculated that for a wide range of spin systems, the computational phase transition for sampling is captured by the rapid mixing of Glauber dynamics 

Using coupling based techniques, one can obtain $O(n\log n)$ mixing time bounds. 
However, previous critical results using coupling methods held either by assuming girth lower bounds~\cite{hayes2006coupling,efthymiou2019convergence}
or by assuming ferromagnetism and bounded maximum degree~\cite{mossel2013exact}.

In a series of breakthrough works~\cite{ALOV19,alev2020improved, CGM19}, the Glauber dynamics was interpreted as a higher order random walk on simplicial complexes, and techniques for high-dimensional expander walks were applied to analyze its mixing.
In particular, Alev and Lau~\cite{alev2020improved} gave a sharp ``local-to-global'' argument for lifting the spectral expansions from a local down-up walk to the high-dimensional expander walk.
In a seminal work~\cite{anari2020spectral}, Anari, Liu and Oveis Gharan formulated a key concept, called the \emph{spectral independence},
which is measured by the spectral radius of the influence matrix,
in which each entry $(u,v)\in V^2$ gives the influence of $u$'s spin on the marginal probability at $v$ in the Gibbs distribution.
%
%
For product distributions,
this value is 0.
Intuitively, it measures how variables are independent of each other in a joint distribution.
The concept intrinsically connects the spatial mixing properties with the local spectral expansions of high-dimensional walks.
Then by utilizing the ``local-to-global''  result of \cite{alev2020improved}, Anari, Liu and Oveis Gharan~\cite{anari2020spectral} proved an $n^{\exp(O(1/\delta))}$ mixing time bound for the hardcore model with fugacity $\lambda \leq (1-\delta)\lambda_c(\Delta)$.
In \cite{chen2020rapid}, Chen, Liu and Vigoda proved tight bounds on spectral independence from spatial mixing properties, and consequently gave an improved $n^{O(1/\delta)}$ mixing time bound for all anti-ferromagnetic two-spin systems that are up-to-$\Delta$ unique with gap $\delta$, that is, when the uniqueness condition is satisfied with a slack $\delta$.

This series of breakthroughs culminated in a $C(\delta,\Delta)n\log n$ mixing time bound for all anti-ferromagnetic two-spin systems satisfying the up-to-$\Delta$ uniqueness with gap $\delta$ by Chen, Liu and Vigoda~\cite{chen2020optimal},
which was obtained by an ingenious  ``local-to-global'' argument for the relative entropy decay (modified log-Sobolev constant) in  a multi-level down-up walk that corresponds to heat-bath block dynamics.
%
%
It gave an optimal $O(n\log n)$ mixing time when the max-degree $\Delta=O(1)$, although in general the bound may grow like $\Delta^{O(\Delta^2/\delta)}n\log n$.
As observed in a very recent work~\cite{jain2021spectral}, a variance-decay variant of~\cite{chen2020optimal} could in fact imply a $\tilde{O}(\Delta^{O(1/\delta)}n^{2})$ mixing time.
All in all, the previous best bound on the mixing time for classes of near-critical instances without degree or girth restriction is $n^{O(1/\delta)}$.

Now look back to the problems like
\Cref{question-FPT}.
The major unresolved instances are the ones that have very large degree (e.g.~$\Delta=n^{\Omega(1)}$) and contain many small cycles.
%
Indeed for such instances, it is unknown whether local Markov chains as Glauber dynamics should mix rapidly, 
or  whether there are hard instances that prevent efficient sampling
while
only the sub-criticality of the system is fixed.
\subsection{Results for two-spin systems}\label{section-results-2spin}

We show that large degree and small cycles do not slow down mixing.
In particular, we give an optimal lower bound for the spectral gap of the Glauber dynamics for all anti-ferromagnetic two-spin systems within the uniqueness regime. 
\begin{theorem}
\label{theorem-2pin-gap}
For all $\delta\in(0,1)$, there exists a finite $C(\delta)>0$ such that for every anti-ferromagnetic two-spin system on an $n$-vertex graph $G=(V,E)$ with maximum degree $\Delta=\Delta_G\ge 3$ specified by $(\beta,\gamma,\lambda)$ that is up-to-$\Delta$ unique with gap $\delta$, 
the spectral gap of the Glauber dynamics is 
\[
\spgap{gap}{} \geq 	\frac{1}{C(\delta)n}.
\]
\end{theorem}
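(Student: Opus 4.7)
The plan is to follow the high-dimensional expander walks framework of Anari--Liu--Oveis Gharan and Alev--Lau, but to replace the ordinary spectral independence bound, whose "local-to-global" lift loses a $\Delta$-dependent factor, with a stronger uniform notion of \emph{complete spectral independence}, and to route the analysis through the auxiliary \emph{field dynamics}, whose mixing does not degrade with $\Delta$. The final bound $\spgap{gap}{} \geq 1/(C(\delta) n)$ on the Glauber dynamics will then follow by a Dirichlet-form comparison between the field dynamics and the Glauber dynamics.

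\textbf{Step 1: complete spectral independence.} I would first show that for every anti-ferromagnetic two-spin system $(\beta,\gamma,\lambda)$ that is up-to-$\Delta$ unique with gap $\delta$, the pairwise influence matrix has spectral radius bounded by a constant $C_1(\delta)$, and that this bound is preserved under arbitrary pinnings \emph{and} under arbitrary decreases of the external field, i.e.\ replacing each $\lambda_v$ by $\theta_v\lambda_v$ with $\theta_v\in[0,1]$. The key point is that such field modifications can only move the system further into the uniqueness regime, so the tight spectral-independence-from-spatial-mixing bound of Chen--Liu--Vigoda applies uniformly across the entire induced family of distributions.

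\textbf{Step 2: field dynamics.} Next, I would introduce and analyze the field dynamics, a block-update chain parametrized by a damping $\theta=\theta(\delta)\in(0,1)$ chosen small enough that the scaled system with external field $\theta\lambda$ is uniformly sub-critical. One step proceeds by first using the current configuration together with $\theta$ to identify a random subset $S\subseteq V$, and then resampling $\sigma_S$ from the correct conditional Gibbs distribution; the reversible measure of this chain is exactly $\mu$. Because the damped field is bounded by a $\delta$-dependent constant, the ratio between the largest and smallest single-site marginals in the scaled system is $O_\delta(1)$, so the Alev--Lau local-to-global lift, combined with the complete spectral independence from Step 1 applied at every level of the induced down-up walk, yields a spectral gap of $\Omega_\delta(1)$ for the field dynamics — crucially, independent of both $n$ and $\Delta$.

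\textbf{Step 3: comparison to Glauber dynamics, and the main obstacle.} Finally, I would realize one step of the field dynamics via many single-site Glauber updates and compare Dirichlet forms to translate the constant spectral gap of the field dynamics into $\spgap{gap}{}(\Glauber)\geq c(\delta)/n$. The technical heart of the argument — and, I expect, the main obstacle — lies here: the resample in the field dynamics acts on a potentially large random block $S$, and naive block-to-site comparisons would immediately reintroduce a dependence on $\Delta$ or on the block size. Avoiding this requires exploiting the sub-criticality of the scaled field throughout, essentially proving that the block-resample on $S$ admits a clean inductive decomposition into Glauber updates at the damped field $\theta\lambda$, with constants depending only on $\delta$. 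Marrying this decomposition with the complete spectral independence bound from Step 1, through the Alev--Lau machinery, is precisely what allows the final mixing bound to be free of any $\Delta$-dependence.
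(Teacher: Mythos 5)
Your high-level plan (complete spectral independence $\to$ constant spectral gap for the field dynamics $\to$ Dirichlet-form comparison to Glauber) is exactly the paper's. But two of your three steps contain gaps that would block the argument as written.

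\textbf{Step 1 is incorrect as stated for general anti-ferromagnetic two-spin systems.} You claim that replacing each $\lambda_v$ by $\theta_v\lambda_v$ with $\theta_v\in[0,1]$ ``can only move the system further into the uniqueness regime,'' and hence that spectral independence is preserved under all such damping. This is true for the hardcore model but false in general. When $\lambda$ is on the ``large'' side of the uniqueness window (concretely, when $\lambda > (\gamma/\beta)^{\Delta_v/2}$ for a vertex of degree $\Delta_v$), decreasing $\lambda_v$ moves the system \emph{towards} the critical region. The paper handles this by constructing a per-vertex \emph{good direction} $\chi_v\in\{\0,\1\}$ (the sign of $\lambda$ relative to $(\gamma/\beta)^{\Delta_v/2}$), flipping the roles of $\0$ and $\1$ at every vertex with $\chi_v=\0$, and proving complete spectral independence only in that flipped direction. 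Without this flipping step, the notion ``completely $\eta$-SI'' that your Step~2 and Step~3 rely on simply does not hold for the Gibbs distribution $\mu$ you start with, and the chain of implications breaks.

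\textbf{Step 2 glosses over the main technical innovation.} You propose to ``apply the Alev--Lau local-to-global lift'' directly to the field dynamics. But the field dynamics is an \emph{adaptive} block chain: the resampled block $S$ depends on the current configuration, and the resample is from the damped measure $\mu^{(\theta)}$ rather than $\mu$. It is not a down-up walk on a simplicial complex, so the local-to-global machinery does not apply to it directly, and there is no obvious way to instantiate ``at every level of the induced down-up walk.'' The paper's actual route is to introduce the $k$-transformed distribution $\mu_k$ (each Boolean variable replaced by a $k$-block, with a single $\1$ planted at a uniformly random index when the original variable is $\1$), prove that $\mu_k$ inherits $(\eta+2)$-spectral independence from the complete $\eta$-spectral independence of $\mu$, apply the Chen--Liu--Vigoda block-dynamics spectral-gap bound to the uniform $\lceil\theta kn\rceil$-block dynamics on $\mu_k$, and then show that as $k\to\infty$ the projection of this block dynamics converges entry-wise to the field dynamics. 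The $k$-transformation is what converts ``adaptive blocks on $\mu$'' into ``uniform blocks on an auxiliary measure'' and is the step that makes Alev--Lau-type results usable here; it has no analogue in your sketch. Your Step~3 identifies the right goal (a $\Delta$-free comparison of Dirichlet forms), though the actual mechanism is a clean tensorization-of-variance identity rather than an inductive block decomposition; that part is a presentational rather than a substantive gap.
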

\begin{remark}
The constant $C(\delta)$ in the theorem depends only on the gap $\delta$, and is independent of the maximum degree $\Delta$ and the parameters $(\beta,\gamma,\lambda)$.
This gives an optimal spectral gap $\Omega(n^{-1})$ for all anti-ferromagnetic two-spin systems satisfying the uniqueness condition with a constant gap, regardless of the maximum degree $\Delta$.
More precisely, $C(\delta)$ is bounded as $C(\delta)=(\frac{1}{\delta})^{O(1/\delta)}$, and for the hardcore or Ising models, $C(\delta)$ can be further improved to $C(\delta)=\exp(O(\frac{1}{\delta}))$.
\end{remark}
Due to the well known relation between the spectral gap and the mixing time in~\eqref{eq:mixing-time-spectral-gap},
Theorem~\ref{theorem-2pin-gap} implies the mixing time bound for the same class of spin systems:
\begin{align}
\tmix(\epsilon) \leq C(\delta)n \tp{n\log\tp{\lambda+\frac{1}{\lambda}}+n\Delta\log \alpha + \log \frac{1}{\epsilon}},
\label{eq:2pin-mixing-time}
\end{align}
where $C(\delta)= (\frac{1}{\delta})^{O(1/\delta)}$ is the same factor as in Theorem~\ref{theorem-2pin-gap}, $\alpha = (\frac{1}{\beta}+2)$ when $\beta>0$ and
$\alpha = (\gamma+\frac{1}{\gamma}+2)$ when $\beta=0$. 
Therefore, the mixing time of the Glauber dynamics for anti-ferromagnetic 2-spin systems within the uniqueness regime is always bounded by $n^{3+o(1)}$, even when the maximum degree $\Delta$ or the parameters $(\beta,\gamma,\lambda)$ may depend on $n$, as long as the positive $\beta,\gamma,\lambda\in\exp\tp{n^{o(1)}}\cap\exp\tp{-n^{o(1)}}$.

For the hardcore model and the Ising model, we have even better bounds.
\begin{theorem}
\label{theorem-hardcore}
For all $\delta \in(0,1)$, there exists a $C(\delta)=\exp(O(\frac{1}{\delta}))$ such that 
for every hardcore model on an $n$-vertex graph $G=(V,E)$ with maximum degree $\Delta=\Delta_G\ge 3$  and with fugacity $\lambda\le(1-\delta)\lambda_c(\Delta)$,
the spectral gap and the mixing time of the Glauber dynamics are respectively bounded as
\[
\spgap{gap}{} \geq 	\frac{1}{C(\delta)n}
\quad\text{ and }\quad
\tmix(\epsilon) \leq C(\delta)n\left( n \log \Delta + \log \frac{1}{\epsilon}\right).
\] 
\end{theorem}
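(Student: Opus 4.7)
The plan is to derive \Cref{theorem-hardcore} as the hardcore-specific instance of \Cref{theorem-2pin-gap}, with the sharper constant $\exp(O(1/\delta))$ coming from a tighter spatial-mixing analysis of the hardcore tree recursion, followed by the standard spectral-gap to mixing-time conversion.

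First, setting $(\beta,\gamma,\lambda)=(0,1,\lambda)$ with $\lambda\le(1-\delta)\lambda_c(\Delta)$ places the two-spin system in the up-to-$\Delta$ unique regime with gap $\delta$, so \Cref{theorem-2pin-gap} immediately gives $\spgap{gap}{}\ge 1/(C(\delta)n)$. To improve the constant from the generic $(1/\delta)^{O(1/\delta)}$ to $\exp(O(1/\delta))$, I would rerun the proof of \Cref{theorem-2pin-gap} on the hardcore model, replacing the generic anti-ferromagnetic contraction analysis with the sharper potential-function argument (using an $\mathrm{arcsinh}$-style potential as in \cite{lly12,chen2020rapid}) that is known to yield a contraction of rate $1-\Theta(\delta)$ for the hardcore tree recursion $F(x_1,\dots,x_d)=\prod_{i=1}^{d}(1+\lambda x_i)^{-1}$ uniformly in $d$. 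Feeding this sharp decay-of-correlations estimate into the complete-spectral-independence framework produces a complete-spectral-independence constant $\eta=O(1/\delta)$, which then incurs only a single $\exp(O(1/\delta))$ factor through the field-dynamics reduction, rather than the $(1/\delta)^{O(1/\delta)}$ loss suffered by arbitrary anti-ferromagnetic edge activities.

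Second, to translate the spectral gap into the mixing-time bound, I would invoke~\eqref{eq:mixing-time-spectral-gap} and bound $\log(1/\mu_{\min})$ for the hardcore model. Any configuration $\sigma\in\Omega(\mu)$ corresponds to an independent set $I_{\sigma}$ with weight $\lambda^{|I_{\sigma}|}/Z$, so
\[
\mu_{\min}\ge\frac{\min(1,\lambda^{n})}{Z}\ge\frac{\min(1,\lambda^{n})}{(1+\lambda)^{n}},
\]
whence $\log(1/\mu_{\min})\le n\log(1+\lambda)+n\lvert\log\lambda\rvert$. Since $\lambda\le\lambda_c(\Delta)=\Theta(1/\Delta)$ for $\Delta\ge 3$, both summands are $O(\log\Delta)$, so $\log(1/\mu_{\min})=O(n\log\Delta)$. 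Combining this with $\spgap{gap}{}\ge 1/(C(\delta)n)$ and~\eqref{eq:mixing-time-spectral-gap} yields the claimed $\tmix(\epsilon)\le C(\delta)n(n\log\Delta+\log(1/\epsilon))$.

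The main obstacle is the first step: verifying that the field-dynamics reduction specialized to the hardcore model really loses only a single exponential factor in $1/\delta$, so that an $O(1/\delta)$ complete-spectral-independence bound translates into an $\exp(O(1/\delta))$ mixing constant. All other ingredients — the uniqueness-with-gap $\delta$ hypothesis, the standard spectral-gap to mixing-time inequality, and the crude $\log(1/\mu_{\min})=O(n\log\Delta)$ estimate — are routine; the real content lies in tracking the dependence on $\delta$ through the complete-spectral-independence and field-dynamics chain, and in showing that it does not blow up super-exponentially for the hardcore recursion.
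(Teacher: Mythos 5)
Your high-level route is the same as the paper's (complete spectral independence plus the field-dynamics boosting of \Cref{theorem:main}), but you misattribute the source of the $\exp(O(1/\delta))$ constant, and this misattribution leaves your argument proving only the weaker $(1/\delta)^{O(1/\delta)}$ bound. The complete spectral independence bound for the hardcore model is the \emph{same} $\eta = O(1/\delta)$ (specifically $144/\delta$) that the paper establishes for all anti-ferromagnetic two-spin systems, using the same $\sqrt{\abs{h}}$ potential of \cite{LLY13,chen2020rapid} in both cases (\Cref{lm:two-spin-property}); no sharper hardcore-specific contraction estimate is available or needed. The actual gain is in the choice of the magnetization scalar $\theta$ fed into \Cref{theorem:main}. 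For general anti-ferromagnetic systems the proof of \Cref{lm:two-spin-property} forces $\theta = \delta^2/64$, so the factor $(\theta/2)^{2\eta+7}$ from \Cref{lemma-field-dynamics-mixing} becomes $(\delta^2/128)^{O(1/\delta)} = (1/\delta)^{O(1/\delta)}$. For the hardcore model one can instead take a universal constant $\theta = 1/25$, because $\theta\lambda \le \theta\lambda_c(\Delta) < \frac{1}{2\Delta}$ for every $\Delta \ge 3$, which puts $\Mag{\mu}{\theta}$ in the regime where path coupling gives a $\frac{1}{2n}$ spectral gap uniformly over pinnings; then $(\theta/2)^{O(1/\delta)} = \exp(O(1/\delta))$. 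Your proposal, which tracks only the spectral-independence constant, would not produce this improvement; identifying the universal-constant $\theta$ is the essential observation.

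A second gap is in your $\mu_{\min}$ estimate. You deduce $\log(1/\mu_{\min}) = O(n\log\Delta)$ from the \emph{upper} bound $\lambda\le\lambda_c(\Delta)$, but that provides no lower bound on $\lambda$, and the term $n\abs{\log\lambda}$ can be arbitrarily large when $\lambda$ is tiny (e.g.~$\lambda = e^{-n}$). The paper handles this with a case split: for $\lambda < \frac{1}{2\Delta}$, path coupling directly gives the stated spectral gap and mixing time, so the field-dynamics machinery is not even invoked; for $\lambda \ge \frac{1}{2\Delta}$, one has both $\lambda \le \lambda_c(\Delta) < 4$ and $\abs{\log\lambda}\le\log(2\Delta)$, whence $\mu_{\min}\ge (8\Delta)^{-n}$ and $\log(1/\mu_{\min}) = O(n\log\Delta)$. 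Without this case split, your mixing-time bound is not proved for small $\lambda$.
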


As far as we know, this is the first fixed-parameter-tractable (FPT) result for  the hardcore model within the uniqueness regime, where the gap $\delta$ to the critical threshold is the only fixed parameter.
Previously, for the same hardcore uniqueness regime, \cite{chen2020rapid} gave an $n^{O(1/\delta)}$  and~\cite{chen2020optimal} gave a $\Delta^{O(\Delta^2/\delta)}n\log n$ 
upper bounds on the mixing time.
Here we give a $\tilde{O}(n^2)$ mixing time bound, which is always bounded by a polynomial of absolute constant degree, no matter how large  the graph maximum degree $\Delta$ is.
And unlike the mixing results in~\cite{hayes2006coupling,efthymiou2019convergence,jain2021spectral}, our result needs not to assume a girth lower bound.

\begin{theorem}
\label{theorem-Ising}
For all $\delta \in(0,1)$, there exists a $C(\delta)=\exp(O(\frac{1}{\delta}))$ such that for every Ising model on an $n$-vertex graph $G=(V,E)$ with maximum degree $\Delta=\Delta_G\ge 3$  and with edge activity  $\beta\in\left[\frac{\Delta-2+\delta}{\Delta-\delta},\frac{\Delta-\delta}{\Delta-2+\delta}\right]$ and external field $\lambda>0$, 
the spectral gap and the mixing time of the Glauber dynamics are respectively bounded as
\[
\spgap{gap}{} \geq 	\frac{1}{C(\delta)n}
\quad\text{ and }\quad
\tmix(\epsilon) \leq C(\delta)n\left( n + \log \frac{1}{\epsilon}\right).
\] 
\end{theorem}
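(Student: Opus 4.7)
The strategy is to obtain Theorem~\ref{theorem-Ising} by specialising the proof of Theorem~\ref{theorem-2pin-gap} to the Ising model, exploiting the symmetry $\beta=\gamma$ to gain on two fronts: a tighter constant $\exp(O(1/\delta))$ in the complete spectral independence estimate, and a linear-in-$n$ bound on $\log(1/\mu_{\min})$. Together with the field dynamics bridge that powers Theorem~\ref{theorem-2pin-gap}, these two refinements yield both the spectral gap and the mixing time claims.

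For the spectral gap, I would first establish complete spectral independence for the Ising Gibbs distribution with constant $\exp(O(1/\delta))$. The relevant tree recursion is the Ising recursion $f(x)=\lambda(\beta x+1)/(x+\beta)$, whose symmetry at $\beta=\gamma$ admits a single well-behaved potential $\Phi$ (the Ising analogue of the potential used in \cite{lly12,LLY13,SST14}) under which the recursion is a contraction with rate $1-\Omega(\delta)$ throughout the uniqueness regime with gap $\delta$, uniformly in $\lambda>0$ and in the branching $d\ge 2$. Iterating this contraction on the self-avoiding-walk tree makes the pairwise influence at depth $k$ at most $(1-\Omega(\delta))^k$, so the total influence summed over the tree is $O(1/\delta)$, giving complete spectral independence with an $\exp(O(1/\delta))$ constant after the local-to-global lifting. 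Feeding this into the field dynamics reduction that underlies Theorem~\ref{theorem-2pin-gap} then yields $\spgap{gap}{}\ge 1/(C(\delta)n)$ with $C(\delta)=\exp(O(1/\delta))$.

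For the mixing time I would combine this spectral gap with \eqref{eq:mixing-time-spectral-gap} and control $\log(1/\mu_{\min})$ by $O(n)$. Writing $\mu(\sigma)=\beta^{m(\sigma)}\lambda^{n_{\1}(\sigma)}/Z$ and using $Z\le 2^n\max_\sigma\beta^{m(\sigma)}\lambda^{n_{\1}(\sigma)}$ gives $\log(1/\mu_{\min})\le |E|\cdot|\log\beta|+n\cdot|\log\lambda|+O(n)$. The uniqueness window forces $|\log\beta|=O(1/\Delta)$, so $|E|\cdot|\log\beta|\le(n\Delta/2)\cdot O(1/\Delta)=O(n)$, removing the $\Delta$ factor that appears in the general bound \eqref{eq:2pin-mixing-time}. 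The $n|\log\lambda|$ contribution is handled by the $\pm$ spin-flip symmetry of the Ising model (which lets us assume $\lambda\le 1$) combined with a short preprocessing step via the field dynamics: starting from the configuration favoured by $\lambda$, one field-dynamics step brings the chain into a range of residual fields $\lambda'\in[\Omega(1),1]$ on which Glauber mixes efficiently, and the total-variation gap accumulated during the reduction is absorbed into the final $O(n)$ slack. Altogether, $\tmix(\epsilon)\le C(\delta)n(n+\log(1/\epsilon))$.

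The main technical obstacle is securing the $1-\Omega(\delta)$ contraction of the Ising tree recursion in a single potential uniformly in $\lambda>0$ and in $d\ge 2$, and then carrying this gain through the local-to-global lifting and the field dynamics reduction without losing any polynomial in $1/\delta$. The generic potential used for Theorem~\ref{theorem-2pin-gap} only achieves a $1-\Omega(\delta^{O(1)})$ contraction, which is what inflates the constant there to $(1/\delta)^{O(1/\delta)}$; obtaining the optimal $\exp(O(1/\delta))$ requires an Ising-specific linearisation at the critical fixed point and a careful tracking of constants across each reduction in the chain of arguments.
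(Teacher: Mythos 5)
Your overall blueprint — drop into the field-dynamics machinery, use Ising symmetry to take $\lambda \le 1$, and argue that $\log(1/\mu_{\min})$ is $O(n)$ — is aligned with the paper. But two of the load-bearing steps are misdiagnosed, and the small-$\lambda$ case is not actually closed.

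First, the source of the improvement from $(1/\delta)^{O(1/\delta)}$ to $\exp(O(1/\delta))$ is not a sharper contraction rate or an Ising-specific linearisation at the critical fixed point. Both the general theorem and the Ising/hardcore theorems already rest on a $1 - \Omega(\delta)$ contraction and produce spectral independence $\eta = O(1/\delta)$; in fact for the Ising model the paper uses the \emph{trivial} potential $\phi \equiv 1$ and verifies directly that the up-to-$\Delta$ uniqueness window $\beta \in [\frac{\Delta-2+\delta}{\Delta-\delta}, \frac{\Delta-\delta}{\Delta-2+\delta}]$ forces $\sum_i \abs{h(y_i)} \le (\Delta-1)\frac{\abs{1-\beta}}{1+\beta} \le 1-\delta$, with no Ising-tuned potential needed. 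What actually differentiates the constants is the choice of the magnetization parameter $\theta$ in \Cref{theorem:main}. For general anti-ferromagnetic two-spin systems one needs $\theta = \Theta(\delta^2)$ (so that $\mu^{(\theta)}$ lands in a regime where path coupling contracts), and $\theta^{O(\eta)} = \delta^{O(1/\delta)}$. For Ising and hardcore one can take a universal constant $\theta$ (the paper takes $\theta = 1/500$ for Ising, $\theta = 1/25$ for hardcore), and $\theta^{O(\eta)}$ becomes $\exp(O(1/\delta))$. Your proposal never identifies this point, and the extra machinery you propose (linearisation at the fixed point, tracking constants through the lifting) is not what produces the gain.

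Second, your handling of $\log(1/\mu_{\min})$ when $\lambda$ is very small does not close the gap. You correctly observe that $\abs{\log\beta} = O(1/\Delta)$ kills the $\Delta$ factor in the edge term, and spin-flip symmetry lets you take $\lambda \le 1$; but for $\lambda \to 0$ the term $n\abs{\log\lambda}$ is unbounded, and you attempt to dispatch it via a ``preprocessing step via the field dynamics'' absorbed into the $O(n)$ slack. This does not work: the field dynamics is a proof device, not an algorithmic preprocessing stage, and it cannot change $\mu_{\min}$ for the Glauber chain whose mixing time you are bounding via \eqref{eq:mixing-time-spectral-gap}. The paper instead splits by cases: for $\lambda \le 1/500$ it bypasses \eqref{eq:mixing-time-spectral-gap} entirely and bounds the Glauber mixing time directly by path coupling (which needs no control on $\mu_{\min}$), and for $\lambda \in [1/500, 1]$ it invokes the explicit marginal bound $b \ge 1/14000$ from \cite{chen2020optimal} to get $\mu_{\min} \ge b^n$. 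Without some such case split, the $n\abs{\log\lambda}$ term is not controlled and your mixing-time bound does not follow.
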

Compared to~\cite{chen2020optimal}, which gave $\Delta^{O({1}/{\delta})}n\log n$ mixing time bound for the same Ising uniqueness regime, and \cite{mossel2013exact}, which gave $\exp(\Delta^{O({1}/{\delta})})n\log n$ mixing time bound for the ferromagnetic half of this regime, 
our mixing time is bounded by ${O}(n^2)$ for all, including the unbounded, maximum degrees $\Delta$, while previously the best known mixing time upper bound when $\Delta$ is unbounded was $n^{O(1/\delta)}$~\cite{chen2020rapid, chen2020optimal}.

\subsection{Results for spectrally independent joint distributions}


%
%
Let $V$ be a set of Boolean random variables, and $\mu$ a  distribution over $\{\0,\1\}^V$.
We use $\Omega(\mu)$ to denote the support of $\mu$. 
A configuration $\tau \in \{\0,\1\}^V$ is \emph{feasible} if $\tau \in \Omega(\mu)$. 
For any subset $\Lambda \subseteq V$, let $\mu_{\Lambda}$ denote the marginal distribution on $\Lambda$ projected from $\mu$. 
A partial configuration $\tau \in \{\0,\1\}^\Lambda$, where $\Lambda\subseteq V$, is {feasible} if $\tau \in \Omega(\mu_{\Lambda})$.
For $\sigma_\Lambda \in \Omega(\mu_{\Lambda})$,  we use $\mu^{\sigma_{\Lambda}}$ to denote the distribution over $\{\0,\1\}^V$  induced by $\mu$ conditioned on the configuration on $\Lambda$ being fixed as $\sigma_\Lambda$.
Formally:
\begin{align*}
\forall \tau \in \{\0,\1\}^V,\quad \mu^{\sigma_\Lambda}_{}(\tau) = \Pr[\*X \sim \mu]{\*X = \tau \mid \*X_\Lambda = \sigma_\Lambda }.	
\end{align*}
For $S \subseteq V$, we use $\mu^{\sigma_\Lambda}_{S}$ to denote the marginal distribution on $S$ projected from $\mu^{\sigma_\Lambda}$,
and write $\mu^{\sigma_\Lambda}_{v}=\mu^{\sigma_\Lambda}_{\{v\}}$.

\begin{definition}[spectral independence]\label{definition-weight-tot-inf}
Let $\mu$ be a distribution over $\{\0,\1\}^V$.
For any $\Lambda \subseteq V$, any $\sigma_\Lambda \in \Omega(\mu_{\Lambda})$, 
the \emph{influence matrix} $\Psi^{\sigma_{\Lambda}}_{\mu} \in \mathbb{R}^{(V \setminus \Lambda) \times (V\setminus \Lambda)}_{\geq 0}$ is defined as: 
$\Psi^{\sigma_{\Lambda}}_{\mu}(u,u) = 0$ for all $u \in V \setminus \Lambda$; and for all distinct $u, v \in V \setminus \Lambda$,
\begin{align*}
\Psi^{\sigma_{\Lambda}}_{\mu}(u,v) \triangleq \max_{c,c' \in \Omega\tp{\mu^{\sigma_\Lambda}_u}} \DTV{\mu_v^{\sigma_\Lambda, u\gets c}}{\mu_v^{\sigma_\Lambda, u \gets c'}},
\end{align*}
where 
$\mu_v^{\sigma_\Lambda, u\gets c}$ denotes the marginal distribution $\mu_v^{\sigma_\Lambda}$ further conditioned on the value of $v$ being fixed as $c$,
and $\DTV{\cdot}{\cdot}$ denotes the total variation distance.

Let $\eta > 0$.
The distribution $\mu$ is said to be \emph{$\eta$-spectrally independent} if for any $\Lambda \subseteq V$, any $\sigma_\Lambda \in \Omega(\mu_{\Lambda})$, 
the spectral radius of the  influence matrix $\Psi^{\sigma_\Lambda}_{\mu}$ has
\begin{align}
\label{eq-sp-independent}
 \rho \tp{\Psi^{\sigma_\Lambda}_{\mu}} \leq \eta.
\end{align}
\end{definition}

\begin{remark}
The above definition is an alternative to the original definition of the spectral independence due to Anari, Liu and Oveis Gharan~\cite{anari2020spectral}, who defined the notion using the {signed} influence matrix $I^{\sigma_{\Lambda}}_{\mu}(u,v)\triangleq \mu^{\sigma_\Lambda, u \gets \1}_v(\1) - \mu^{\sigma_\Lambda, u \gets \0}_v(\1)$.
Here for some technical reasons (explained in~\Cref{section-mixing-field-dynamics-outline}), we adopt the notion of spectral independence proposed in~\cite{feng2021rapid} defined using the {absolute} influence matrix.
The two definitions are morally equivalent, and in spin systems they can both be implied by the spatial mixing property~\cite{anari2020spectral, chen2020rapid, chen2021rapid, feng2021rapid}.
\end{remark}

In a seminal work~\cite{anari2020spectral}, Anari, Liu and Oveis Gharan introduced the notion of spectral independence and proved a $n^{O(\eta)}$ mixing time bound for $\eta$-spectrally independent joint distributions of Boolean variables.
%
For Gibbs distributions specified by spin systems, in a recent major breakthrough~\cite{chen2020optimal} Chen, Liu and Vigoda proved a $C(\delta,\Delta)n \log n$ upper bound for the mixing time of Glauber dynamics
after establishing a tight $O({1}/{\delta})$-spectral independence~\cite{chen2020rapid}.  This mixing time bound is remarkably optimal when the max-degree $\Delta$ is bounded by a constant, however in general the $C(\delta,\Delta)$ factor can grow as fast as $\Delta^{O(\Delta^2/\delta)}$.

In this paper, we give a novel connection between the spectral independence and    rapid mixing of the Glauber dynamics, 
which can give us optimal spectral gap bounds without restricting the maximum degree of the graphical model.
To formally state our result, we  introduce the following notions.

\begin{definition}[magnetizing a joint distribution with local fields]
\label{defintion-magnetizing}
Let $\mu$ be a distribution over $\{\0,\1\}^V$. 
Let $\*\phi =(\phi_v)_{v \in V}$, where each $\phi_v >0$ specifies a \emph{local field} at $v$.
Denote by $\Mag{\mu}{\*\phi}$ the distribution 
obtained from imposing the local fields $\*\phi$ onto $\mu$.
Formally, $\pi=\Mag{\mu}{\*\phi}$ is a distribution over $\{\0,\1\}^V$ such that:
\begin{align*}
\forall \sigma \in \{\0,\1\}^V, \quad	
\pi(\sigma) 
\propto \mu(\sigma)\prod_{v \in V:\sigma_v = \1}\phi_v.
\end{align*}
In particular, if $\*\phi$ is a constant vector with $\phi_v = \theta$ for all $v \in V$ for some scalar $\theta>0$, we write  $\Mag{\mu}{\theta}=\Mag{\mu}{\*\phi}$.
\end{definition}

After magnetizing with local fields $\*\phi$, each variable $v\in V$ is locally biased towards $\1$ if $\phi_v > 1$ or towards $\0$ if $\phi_v<1$.
Without loss of generality, we consider only the case with $\phi_v\le 1$, which can cover the $\phi_v> 1$ case by flipping the roles of $\0$ and $\1$ for variable $v$.


\begin{definition}[complete spectral independence]\label{definition-complete-SI}
Let $\eta > 0$.
A distribution $\mu$ over $\{\0,\1\}^V$ is said to be \emph{completely $\eta$-spectrally independent} if $\Mag{\mu}{\*\phi}$ is $\eta$-spectrally independent for all $\*\phi \in (0,1]^V$.
\end{definition}
The notion captures a desirable situation:
when establishing the spectral independence, one usually proves stronger results, 
so that the spectral independence remains to hold for all smaller local fields.
%
For instance, in the hardcore model, decreasing $\lambda$ would only make the model more spectrally independent.
For general anti-ferromagnetic 2-spin systems, this becomes more complicated, 
nevertheless, due to~\Cref{lemma-good-direction}, the same holds after properly flipping the roles of \0 and \1 for variables.

%


%
Let $\mu$ be a distribution over $\{\0,\1\}^V$. 
We use $\spgap{gap}{\GD}(\mu)$ to denote the spectral gap of the Glauber dynamics for $\mu$.
We further consider the spectral gap {up to worst-case pinning}, which is defined as:
%
\begin{align}
\label{eq-def-worst-gap}
\spgap{\sgap}{\GD}(\mu) \triangleq \min_{\Lambda \subseteq V, \sigma_\Lambda \in \Omega(\mu_\Lambda)} \spgap{gap}{\GD}\tp{\mu^{\sigma_\Lambda}},
\end{align}
where $\spgap{gap}{\GD}\tp{\mu^{\sigma_\Lambda}}$ gives the spectral gap of Glauber dynamics for the joint distribution $\mu^{\sigma_\Lambda}$ over $\{\0,\1\}^V$ conditioned on $\sigma_\Lambda$, with convention that $\spgap{gap}{\GD}\tp{\mu^{\sigma_\Lambda}}=1$ if $\mu^{\sigma_\Lambda}$ has trivial support with $|\Omega(\mu^{\sigma_\Lambda})| = 1$.



%

\begin{theorem}[main technical theorem]
\label{theorem:main}
Let $\mu$ be a distribution over $\{\0,\1\}^V$, and $\eta > 0$.
If $\mu$ is completely $\eta$-spectrally independent, then for all $\theta\in(0,1)$, 
\begin{align*}
\spgap{gap}{\GD}(\mu) \geq \tp{\frac{\theta}{2}}^{2\eta + 7} \cdot \spgap{\sgap}{\GD}\left(\Mag{\mu}{\theta}\right). 
\end{align*}
\end{theorem}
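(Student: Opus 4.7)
The plan is to factor the Glauber dynamics on $\mu$ through an intermediate Markov chain, the \emph{field dynamics} $\Plim[\theta]$ with stationary distribution $\mu$. A single step from $\sigma \in \{\0,\1\}^V$ would be: first choose a random set $S \supseteq \{v : \sigma_v = \1\}$ by additionally including each vertex $v$ with $\sigma_v = \0$ independently with probability $\theta$; then resample $\tau_S \sim \mu_S^{\sigma_{V\setminus S}}$, keeping $\tau_{V\setminus S} = \sigma_{V\setminus S}$. A direct calculation shows reversibility with respect to $\mu$; morally, this step is a Bayes pairing between $\mu$ and $\Mag{\mu}{\theta}$, where $\Mag{\mu}{\theta}$ plays the role of a ``sparse $\1$-pattern'' marginal that one keeps and builds on.

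I would then carry out a two-step comparison. A single transition of $\Plim[\theta]$ is an exact conditional sample from $\mu_S^{\sigma_{V\setminus S}}$, which can be implemented by running the Glauber dynamics on the pinned distribution $\mu^{\sigma_{V\setminus S}}$; conversely, each Glauber update on $\mu$ can be related to a composition of one step of $\Plim[\theta]$ followed by several Glauber steps restricted to the resampled block $S$. A Dirichlet-form / block-dynamics comparison, in the spirit of Jerrum--Son--Tetali--Vigoda, should then yield
\[
\spgap{gap}{\GD}(\mu) \;\geq\; (\theta/2)^{c_1} \cdot \spgap{gap}{}\bigl(\Plim[\theta]\bigr) \cdot \spgap{\sgap}{\GD}\bigl(\Mag{\mu}{\theta}\bigr),
\]
for a small absolute constant $c_1$, where the $(\theta/2)^{c_1}$ factor compensates for the probability that a Glauber-selected vertex falls inside the $O(\theta n)$-sized block $S$, and where using the \emph{worst-case} gap $\spgap{\sgap}{\GD}$ is essential because the pinning $\sigma_{V\setminus S}$ is random.

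It then remains to prove $\spgap{gap}{}(\Plim[\theta]) \geq (\theta/2)^{2\eta + c_2}$. I would view $\Plim[\theta]$ as a $\theta$-biased down-up walk on the weighted simplicial complex associated to $\mu$, so that the intermediate link distributions at every level correspond to distributions of the form $\Mag{\mu^{\sigma_\Lambda}}{\*\phi}$ with $\*\phi \in (0,1]^{V \setminus \Lambda}$. The \emph{complete} $\eta$-spectral independence hypothesis is exactly what is needed to bound the spectral radius of every such local influence matrix by $\eta$; plugging this into the sharp local-to-global theorem of Alev--Lau yields the $(\theta/2)^{2\eta}$ contribution, and combining with the block-dynamics losses gives the exponent $2\eta + 7$ as stated.

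The main obstacle I anticipate is in justifying the down-up-walk interpretation of $\Plim[\theta]$: the ``down'' step is not a uniform projection but a $\theta$-biased one that preserves the current $\1$-pattern, so the intermediate link distributions are of the magnetized-plus-pinned form $\Mag{\mu^{\sigma_\Lambda}}{\*\phi}$ rather than plain pinnings of $\mu$. This is precisely why ordinary spectral independence does not suffice and why the ``complete'' strengthening was introduced; keeping the exponent linear in $\eta$ --- as opposed to something like $\eta/\theta$ --- requires an inductive control of the local fields $\*\phi$ along the walk, which will be the technical heart of the argument.
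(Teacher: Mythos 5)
Your high-level structure matches the paper: introduce the field dynamics $\Plim$ (the paper's \Cref{section-ex-field-block-dynamics}), and decompose the target bound into a comparison lemma and a mixing lemma (\Cref{lemma-compare-field-Glauber} and \Cref{lemma-field-dynamics-mixing}). But there are two substantive divergences, one minor and one major.

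On the comparison step, the paper proves $\spgap{gap}{\GD}(\mu) \geq \spgap{gap}{\FD}(\mu,\theta) \cdot \spgap{\sgap}{\GD}\bigl(\Mag{\mu}{\theta}\bigr)$ with \emph{no} extra $(\theta/2)^{c_1}$ overhead. The argument is not a JSTV-style block-chain comparison: it is a direct Dirichlet form calculation (\Cref{lemma:dirichlet-decomposition}) combined with the approximate tensorization of variance for each pinned distribution $\pi^{\*1_R}$, and the key single-line relaxation is the marginal monotonicity $\pi^{\sigma}_v(\1) \le \mu^{\sigma}_v(\1)$ for $\pi = \Mag{\mu}{\theta}$ with $\theta \in (0,1)$. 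Your intuition that ``the selected vertex might fall inside $S$'' is the wrong place to look for a loss; the bias in selecting $S$ and the bias in resampling from $\pi$ rather than $\mu$ cancel exactly, and the only inequality used is the marginal one above. Your proposed $(\theta/2)^{c_1}$ factor is not fatal by itself, but it does shift the burden onto the mixing lemma, where the more serious gap lies.

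The major gap is in the mixing lemma. You correctly identify that $\Plim$ looks like a ``$\theta$-biased down-up walk'' and that the links should be of the form $\Mag{\mu^{\sigma_\Lambda}}{\*\phi}$, but you then assert that Alev--Lau's local-to-global machinery can be ``plugged in'' and that the $(\theta/2)^{2\eta}$ bound follows from an ``inductive control of the local fields $\*\phi$ along the walk.'' This is precisely the part you do not supply, and it is where the whole difficulty lives: the existing local-to-global theorems are stated for the \emph{uniform} down-up walk on a simplicial complex (pick a uniform random level-$\ell$ face, then resample), not for a biased, state-dependent projection. The paper does \emph{not} prove a biased local-to-global theorem. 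Instead, it introduces the \emph{$k$-transformation} (\Cref{definition-k-transformation}): it replaces each Boolean variable $v$ by $k$ Boolean variables $(v,1),\ldots,(v,k)$ of which at most one is $\1$, and then it shows (i) the field dynamics $\Plim$ is the $k\to\infty$ limit of the \emph{uniform} $\lceil \theta kn\rceil$-block dynamics on $\mu_k$ (\Cref{lemma:block-dynamics-spectral-gap-implies-limit-chain-spectral-gap}), and (ii) the $k$-transformation only degrades spectral independence by an additive constant, $\mu_k$ being $(\eta+2)$-spectrally independent whenever $\mu$ is \emph{completely} $\eta$-spectrally independent (\Cref{lemma-spind-mu-to-muk}). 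With these two facts the ordinary Chen--Liu--Vigoda / Alev--Lau theorem for uniform block dynamics (\Cref{theorem:spectral-independent-imply-spectral-gap}) can be applied to $\mu_k$, and taking $k\to\infty$ gives the $(\theta/2)^{2\eta+7}$ bound. This encoding is the technical heart of the proof, and without it your proposal has no mechanism to produce the claimed exponent: the ``inductive control of the local fields'' you mention is not carried out, and the paper's entire point is that one should instead pass to the $k$-blown-up model where the walk is genuinely uniform and the standard machinery applies.
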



The above theorem can be thought as a boosting theorem.
After magnetizing with external field $\theta$, the original near-critical $\mu$ is transformed to a much easier $\Mag{\mu}{\theta}$ whose spectral gap 
is either known or easy to bound using standard approaches.
Then \Cref{theorem:main} effectively boosts the mixing result for the easier distribution $\Mag{\mu}{\theta}$ up to the near-critical regime, with a $\theta^{O(\eta)}$ overhead which is determined by both the degree of spectral independence and the distance between the critical threshold and the easier regime.

\begin{example}[hardcore model]\label{example-hardcore-GD-mixing}
A generalization of the analyses in~\cite{chen2020rapid} shows that the Gibbs distribution $\mu$ of the hardcore model with fugacity $\lambda\le(1-\delta)\lambda_c(\Delta)$ is completely $O(\frac{1}{\delta})$-spectrally independent (formally stated in \Cref{lemma-good-direction}).
%
%
We then choose $\theta=\frac{1}{25}$, so that $\Mag{\mu}{\theta}$ corresponds to the hardcore model with a smaller fugacity 
\[
\theta\lambda=\frac{\lambda}{25}<\frac{(\Delta-1)^{\Delta-1}}{25(\Delta-2)^\Delta}\le\frac{1}{2\Delta},
\]
where the last inequality holds for all $\Delta\ge 3$.
In that regime, 
there is a coupling of the Glauber dynamics that decays step-wise, 
and consequently (see~\cite{MFC98} and~\cite[ch.~13.1]{levin2017markov}), the spectral gap $\spgap{\sgap}{\GD}\left(\Mag{\mu}{\theta}\right)\ge\frac{1}{2n}$.
Then by \Cref{theorem:main}, the Gibbs distribution $\mu$ of the hardcore model with fugacity $\lambda\le(1-\delta)\lambda_c(\Delta)$ has
\[
\spgap{gap}{\GD}(\mu) \geq \theta^{O(\frac{1}{\delta})} \cdot \spgap{\sgap}{\GD}\left(\Mag{\mu}{\theta}\right)
\ge \frac{1}{\exp(O(\frac{1}{\delta})) \cdot n}.
\]
Assuming $\frac{1}{2\Delta}\le\lambda<\lambda_c(\Delta)$ (the $\lambda<\frac{1}{2\Delta}$ case is dealt by coupling), it holds that $\mu_{\min}\ge(\frac{1}{8\Delta})^n$. Thus by~\eqref{eq:mixing-time-spectral-gap}, 
\[
\tmix(\epsilon) \leq 
\frac{1}{\spgap{gap}{\GD}(\mu)}\log\left(\frac{1}{\epsilon\mu_{\min}}\right)
\le\exp\left(O\left({1}/{\delta}\right)\right)\cdot n\left( n \log \Delta + \log \frac{1}{\epsilon}\right).
\]
\end{example}

\begin{remark}\label{remark-general-result-by-flipping}
For general anti-ferromagnetic two-spin systems, similar arguments apply, except that naively decreasing the external field $\lambda$ might not work.
Nevertheless, we observe that for every variable there always exists a good direction pointing to either $\0$ or $\1$ such that biasing the local field of the variable towards the good direction may only make the model easier.
This is formally stated in~\Cref{lemma-good-direction}.
Therefore, we can flip the roles of \0 and \1 for those variables whose good direction is pointing to \1,
which results in a new Gibbs distribution $\nu$ that is isomorphic to $\mu$, 
such that the spectral independence for anti-ferromagnetic two-spin systems proved in~\cite{chen2020rapid} in fact means the complete spectral independence of $\nu$.
Then the same argument as the hardcore model can apply. 
%
%
The details are given in~\Cref{section-spectral-gap-2-spin}.
\end{remark}


\section{The Field Dynamics and Proofs Outline}
\subsection{The field dynamics}
We introduce a novel Markov chain called \emph{the field dynamics},
which plays a key role in the proof of our main theorem (\Cref{theorem:main}) and may be of independent interests.

Given a distribution $\mu$ over $\{\0,\1\}^V$ and a scalar $\theta\in(0,1)$, 
the field dynamics for $\mu$ with parameter $\theta$, 
denoted by $\Plim$, is a Markov chain on space $\Omega(\mu)$ defined as follows.

\begin{center}
  \begin{tcolorbox}[=sharpish corners, colback=white, width=1\linewidth]
  	\begin{center}
	\textbf{\emph{The Field Dynamics}}
  	\end{center}
  	\vspace{6pt}
    The initial state is an arbitrary feasible configuration $\*\sigma \in \Omega(\mu)$;\\
    The rule for updating a configuration $\*\sigma\in\Omega(\mu)$ is:
    \begin{enumerate}
    \item generate a random $S\subseteq V$
    by independently selecting each $v \in V$ into $S$ with probability
    \begin{align*}
    p_v \triangleq \begin{cases}
 	1 &\text{if } \sigma_v = \0,\\
 	\theta & \text{if } \sigma_v = \1;
 	\end{cases}
	\end{align*}
	\item replace $\sigma_S$ by a random partial configuration sampled according to $\pi^{\sigma_{V \setminus S}}_S$, where $\pi = \Mag{\mu}{\theta}$.
    \end{enumerate}
  \end{tcolorbox} 
\end{center}
The field dynamics can be thought as an adaptive variant of the heat-bath block dynamics.
In each step, a block $S$ is randomly generated adaptively to the current configuration $\sigma$.
And the configuration on $S$ is resampled given the boundary condition $\sigma_{V\setminus S}$ according to a properly biased distribution $\pi=\Mag{\mu}{\theta}$ for canceling with the bias introduced in the adaptive construction of block $S$.

For instance, on the hardcore model, the field dynamics gives us the following novel Markov chain:
\begin{example}[the field dynamics for the hardcore model]
Let $\mu$ be the Gibbs distribution of the hardcore model on graph $G=(V,E)$ with fugacity $\lambda$.
The rule for updating the current $\sigma\in\Omega(\mu)$:
\begin{itemize}
\item 
each occupied vertex $v$ (with $\sigma_v=\1$) is picked independently with probability $1-\theta$; all picked vertices and their neighbors are removed from $V$ and let $\interior{S}$ denote the set of remaining vertices;
\item 
resample $\sigma_{\interior{S}}$ according to the Gibbs distribution $\pi$ of the hardcore model on $G[\interior{S}]$ with fugacity $\theta\lambda$.
\end{itemize}

\end{example}



The next proposition shows that the field dynamics $\Plim$ for $\mu$ has the stationary distribution $\mu$, and is always ergodic and reversible (hence having real eigenvalues).
The proposition is proved in \Cref{section-EF-stationary}.
\begin{proposition}
\label{theorem-field-dynamics-basic}
For every distribution $\mu$ over $\{\0,\1\}^V$, and for all $\theta \in (0,1)$, 
the field dynamics $\Plim$ for $\mu$ is irreducible, aperiodic and reversible with respect to $\mu$.
\end{proposition}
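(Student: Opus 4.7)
The plan is to verify the three claims directly from the definition of $\Plim$.

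For \emph{reversibility}, I would fix two configurations $\sigma,\tau\in\Omega(\mu)$ and expand $P(\sigma,\tau)$ as a sum over the random set $S$. The key observation is that in a single step from $\sigma$ to $\tau$ the boundary $V\setminus S$ is frozen, so one must have $\sigma_{V\setminus S}=\tau_{V\setminus S}$; equivalently $S$ must contain the ``disagreement-or-minus'' set $D(\sigma,\tau)\triangleq\{v:\sigma_v=\0\}\cup\{v:\tau_v=\0\}$, while $S$ may contain any subset $T$ of the common $\1$-set $W(\sigma,\tau)\triangleq\{v:\sigma_v=\tau_v=\1\}$. Writing $A=\{\sigma=\1,\tau=\0\}$ and $B=\{\sigma=\0,\tau=\1\}$, the probability that the first step picks $S=D(\sigma,\tau)\cup T$ starting from $\sigma$ factors as $\theta^{|A|}\theta^{|T|}(1-\theta)^{|W\setminus T|}$, because $\0$-vertices are selected with probability $1$ while $\1$-vertices are selected with probability $\theta$.

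Next I would rewrite the second step using $\pi_S^{\sigma_{V\setminus S}}(\tau_S)=\pi(\tau)/\pi_{V\setminus S}(\sigma_{V\setminus S})$, noting that the denominator depends only on $\sigma_{V\setminus S}=\tau_{V\setminus S}$ and hence is symmetric in $\sigma,\tau$. Assembling everything,
\begin{align*}
\mu(\sigma)P(\sigma,\tau)=\mu(\sigma)\,\theta^{|A|}\,\pi(\tau)\!\!\sum_{T\subseteq W}\!\!\frac{\theta^{|T|}(1-\theta)^{|W\setminus T|}}{\pi_{V\setminus S}(\sigma_{V\setminus S})}.
\end{align*}
Since $\pi=\Mag{\mu}{\theta}$ satisfies $\pi(\rho)\propto\mu(\rho)\theta^{n_{\1}(\rho)}$, a direct substitution using $n_{\1}(\sigma)=|A|+|W|$ and $n_{\1}(\tau)=|B|+|W|$ gives $\mu(\sigma)\theta^{|A|}\pi(\tau)=\mu(\tau)\theta^{|B|}\pi(\sigma)$ up to the common normalizing constant of $\pi$. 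Exchanging the roles of $\sigma$ and $\tau$ in the sum over $T$ leaves it invariant, so the detailed balance equation $\mu(\sigma)P(\sigma,\tau)=\mu(\tau)P(\tau,\sigma)$ follows term by term.

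For \emph{irreducibility} and \emph{aperiodicity}, I would use the single event $S=V$, which has probability $\theta^{n_{\1}(\sigma)}>0$ in the first step (every $\0$-vertex is picked with probability $1$ and every $\1$-vertex with probability $\theta$); conditioned on it, the second step outputs any $\tau\in\Omega(\pi)=\Omega(\mu)$ with positive probability $\pi(\tau)>0$, since $\pi$ and $\mu$ have the same support. This shows both that $\Omega(\mu)$ is a single communicating class and that $P(\sigma,\sigma)>0$, establishing irreducibility and aperiodicity simultaneously.

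I do not anticipate a real obstacle: the only place where one has to be careful is the bookkeeping of the weights assigned to $S$ by the adaptive rule and checking that the $\theta^{|A|}$ versus $\theta^{|B|}$ asymmetry is exactly compensated by the bias $\theta^{n_{\1}}$ built into $\pi=\Mag{\mu}{\theta}$. This compensation is precisely the reason for resampling from $\pi$ rather than from $\mu$, and it is what makes the calculation above work cleanly.
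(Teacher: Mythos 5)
Your proof is correct and follows essentially the same approach as the paper: it parametrizes the first-step randomness via the set of resampled vertices and verifies detailed balance by direct expansion, while irreducibility and aperiodicity come from the event $S=V$. The only difference is cosmetic — you parametrize by $T=S\cap W$ instead of the paper's $R=V\setminus S$ (with $R=W\setminus T$), and your decomposition into the symmetric sum over $T$ times the factor $\mu(\sigma)\theta^{|A|}\pi(\tau)$ isolates very cleanly where the definition $\pi=\Mag{\mu}{\theta}$ is used, but the underlying computation is identical to the paper's.
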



Let $\spgap{gap}{\FD}(\mu,\theta)$ denote the spectral gap of the field dynamics $\Plim$ for $\mu$. 
We state two key lemmas for the spectral gap of the field dynamics.
%
The first is a \emph{mixing lemma} that guarantees the fast mixing of the field dynamics  for $\mu$ assuming the complete spectral independence of $\mu$.
\begin{lemma}[field dynamics - mixing lemma]
\label{lemma-field-dynamics-mixing}
Let $\mu$ be a distribution over $\{\0,\1\}^V$ and $\eta > 0$.
If $\mu$ is completely $\eta$-spectrally independent, then for all $\theta\in(0,1)$, 
\begin{align*}
\spgap{gap}{\FD}(\mu,\theta) \geq \tp{\frac{\theta}{2}}^{2 \eta + 7}.	
\end{align*}
\end{lemma}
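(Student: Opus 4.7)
The plan is to interpret the field dynamics as a two-step down-up walk on a weighted bipartite structure coupling $\mu$ with $\pi = \Mag{\mu}{\theta}$, and then to bound its spectral gap through a local-to-global argument powered by the complete $\eta$-spectral independence hypothesis. As a first step, I would introduce the joint distribution $\nu(\sigma, T)$ on $\Omega(\mu) \times 2^V$ obtained by sampling $\sigma \sim \mu$ and then independently including each $v$ with $\sigma_v = \1$ in $T$ with probability $1 - \theta$. A direct computation gives that the $\sigma$-marginal of $\nu$ is $\mu$, that the conditional $\nu(T \mid \sigma)$ coincides with the rule used in one field dynamics step to generate $T = V \setminus S$, and that $\nu(\sigma \mid T)$ is exactly $\pi$ conditioned on $\sigma_T \equiv \1$. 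Hence one field dynamics step is precisely the bipartite down-up walk ``draw $T \sim \nu(\cdot \mid \sigma)$, then resample $\sigma' \sim \nu(\cdot \mid T)$,'' so $\Plim$ factors as a product of a down and an up operator whose spectral gap equals $1 - \lambda_2(DU)$.

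To access a local-to-global argument, I would refine the single down step into $L = O(\log(1/\theta))$ successive Bernoulli thinnings parameterised by a sequence $1 = \theta_0 > \theta_1 > \cdots > \theta_L = \theta$, yielding a weighted multi-level complex whose level-$i$ faces are the intermediate pinned sets. Each intermediate distribution in this chain is a magnetization $\Mag{\mu}{\theta_i}$ further conditioned on a $\1$-pinning; since both conditioning and further magnetizing preserve $\eta$-spectral independence when the initial distribution is completely $\eta$-spectrally independent, the hypothesis transfers uniformly to every link of the complex. At each link, the associated local walk is a two-variable resampling inside a magnetized and pinned conditional of $\mu$, and its top nontrivial eigenvalue is controlled by the spectral radius of the absolute influence matrix on the remaining coordinates, hence by roughly $\eta / k$ at a link of co-dimension $k$.

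Invoking the Alev--Lau style local-to-global theorem on the spectral gap of a down-up walk in terms of the product of local spectral gaps across levels would then give an estimate of the form
\[
\spgap{gap}{\FD}(\mu, \theta) \;\ge\; \prod_{k} \left(1 - \frac{\eta}{k}\right) \;\ge\; \left(\frac{\theta}{2}\right)^{2\eta + 7},
\]
where the last inequality follows from estimating $\sum_{k} \eta/k$ across the active range of levels (roughly $[\theta n,\, n]$), exponentiating, and absorbing lower-order contributions into the constant $7$ in the exponent.

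The main obstacle I expect is the adaptive, variable-depth nature of the down step: the random set $T$ has size depending on $\sigma$, so the natural object is a weighted, non-pure complex rather than the homogeneous simplicial complex that underlies Alev--Lau in its cleanest formulation. I anticipate handling this either by compounding the spectral gaps of intermediate field dynamics at the parameters $\theta_i$ via comparison or path-type arguments for reversible chains sharing a common stationary distribution, or by proving a weighted variant of the local-to-global theorem suitable for this non-pure setting. A secondary delicate point is that the local walks here are magnetized conditional resamplings rather than plain Glauber pair-swaps, which is precisely why the proof requires the \emph{complete} version of spectral independence: it supplies a single uniform $\eta$-bound that survives every combination of field shifts and $\1$-pinnings encountered along the layered construction.
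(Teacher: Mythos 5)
Your opening observation is correct and is in the spirit of the paper: the field dynamics really is a bipartite down-up walk where the ``down'' step thins the $\1$-set of $\sigma$ binomially and the ``up'' step resamples from $\pi=\Mag{\mu}{\theta}$ conditioned on the retained $\1$-pinning. You also correctly identify where the complete spectral independence hypothesis is needed — to survive arbitrary further magnetizations and pinnings. But the central technical step is missing. The multi-level decomposition into Bernoulli thinnings produces random blocks of random, $\sigma$-dependent size, so (as you note yourself) the object you build is not a pure weighted simplicial complex and the Alev--Lau / Chen--Liu--Vigoda local-to-global machinery does not apply as stated. The two workarounds you gesture at do not close this: composing intermediate field dynamics at $\theta_0>\cdots>\theta_L$ does not multiply spectral gaps, since these are all reversible for the same $\mu$ and their composition is not a down-up walk of the kind the theorem analyzes; and ``proving a weighted variant of the local-to-global theorem for the non-pure setting'' is exactly the hard part, not a step you can defer.

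The paper resolves the non-purity by a different device, the $k$-transformation $\mu_k=\Rd(\mu,k)$: replace each $v$ with $k$ copies $(v,1),\dots,(v,k)$, encoding $\sigma_v=\1$ by exactly one random $\1$ among the copies. On $\{\0,\1\}^{V\times[k]}$, the ordinary \emph{uniform} $\lceil\theta kn\rceil$-block dynamics is a pure down-up walk on a genuine simplicial complex, so the Chen--Liu--Vigoda bound $\spgap{gap}{}(P_{k,\ell})\ge(\ell/2kn)^{2\lceil\eta'\rceil+1}$ (\Cref{theorem:spectral-independent-imply-spectral-gap}) applies once one checks $\mu_k$ is spectrally independent. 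That is precisely where completeness is spent: one shows (\Cref{lemma-spind-mu-to-muk}) that complete $\eta$-spectral independence of $\mu$ forces $\mu_k$ to be $(\eta+2)$-spectrally independent, uniformly in $k$, because any pinning of $\mu_k$ projects down to a pinning plus a local-field rescaling of $\mu$. Finally, projecting the $\mu_k$-block dynamics back to $\{\0,\1\}^V$ cannot decrease the spectral gap (\Cref{lemma:projected-block-dynamics-vs-block-dynamics}), and as $k\to\infty$ the projected chain converges entrywise to $\Plim$ (\Cref{lemma:limit-of-projected-chain-is-limit-chain}); since the hypergeometric block concentrates around $\theta$-fraction per class, the limit of $(\lceil\theta kn\rceil/2kn)^{2\eta+7}$ is exactly $(\theta/2)^{2\eta+7}$. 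In short: the gap in your proposal is that you never make the complex pure; the $k$-transformation is the idea that does that, and everything else in your plan — the influence-matrix control of local walks and the role of complete spectral independence — then falls into place through the Chen--Liu--Vigoda theorem rather than needing a bespoke non-pure local-to-global result.
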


On the other hand, the following \emph{comparison lemma} relates the spectral gap of the field dynamics to that of the Glauber dynamics, which holds universally for all joint distributions $\mu$ with Boolean domain.

\begin{lemma}[field dynamics - comparison lemma]
\label{lemma-compare-field-Glauber}
Let $\mu$ be a distribution over $\{\0,\1\}^V$.
For all $\theta\in(0,1)$, 
\begin{align*}
\spgap{gap}{\GD}(\mu) \geq \spgap{gap}{\FD}(\mu,\theta) \cdot	\spgap{\sgap}{\GD}\left(\Mag{\mu}{\theta}\right).
\end{align*}
\end{lemma}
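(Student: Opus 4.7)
The plan is to establish the lemma through a one-sided Dirichlet form comparison. Writing $\pi=\Mag{\mu}{\theta}$, I aim to prove
\[
\+E_{\FD}(f,f)\ \le\ \frac{1}{\spgap{\sgap}{\GD}(\pi)}\,\+E_{\GD}^{\mu}(f,f)\qquad\text{for all }f\colon\Omega(\mu)\to\mathbb{R},
\]
where $\+E_{\FD}$ and $\+E_{\GD}^{\mu}$ denote the Dirichlet forms of $\Plim$ and of the Glauber dynamics for $\mu$, both against $\mu$. Combined with the Poincar\'e inequality $\Var_\mu(f)\le \+E_{\FD}(f,f)/\spgap{gap}{\FD}(\mu,\theta)$ --- which holds because $\Plim$ is reversible with respect to $\mu$ by \Cref{theorem-field-dynamics-basic} --- and the variational definition of $\spgap{gap}{\GD}(\mu)$, this immediately yields $\spgap{gap}{\GD}(\mu)\ge \spgap{gap}{\FD}(\mu,\theta)\cdot \spgap{\sgap}{\GD}(\pi)$.

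I would first expand $\+E_{\FD}(f,f)$ as a weighted sum of block variances. The FD block-selection rule forces $V_{\0}(\sigma)\subseteq S$, so the retained boundary $\sigma_{V\setminus S}$ is always the all-$\1$ partial configuration, which I denote $\mathbf{1}_{V\setminus S}$. Grouping transitions by the sampled block $S$ and this forced boundary, $\+E_{\FD}(f,f)$ rewrites as a sum over $S\subseteq V$ of terms $(1-\theta)^{|V\setminus S|}\,W_S\,\Var_{\pi^{\mathbf{1}_{V\setminus S}}_S}(f_S)$, where $W_S\triangleq\sum_{a}\mu((\mathbf{1}_{V\setminus S},a))\,\theta^{|V_{\1}(a)|}$ and $f_S(a)\triangleq f((\mathbf{1}_{V\setminus S},a))$. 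The worst-case spectral gap assumption then upper bounds each inner variance by $\+E_{\GD}^{\pi^{\mathbf{1}_{V\setminus S}}}(f_S,f_S)/\spgap{\sgap}{\GD}(\pi)$.

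The combinatorial crux is to show that the resulting weighted sum of conditional Glauber-on-$\pi$ forms is dominated by $\+E_{\GD}^{\mu}(f,f)$. The key move is to exchange the order of summation: fix $v\in V$ and a boundary $\tau\in\Omega(\mu_{V\setminus v})$, and sum over blocks $S\ni v$ whose complement $T\triangleq V\setminus S$ lies inside $V_{\1}(\tau)$ --- the only admissible blocks. Three identities cooperate: the factorization $W_S\,\pi^{\mathbf{1}_{V\setminus S}}_{S\setminus v}(\tau_{S\setminus v})=\mu_{V\setminus v}(\tau)\,\theta^{|V_{\1}(\tau)\setminus T|}\,(1-\rho(1-\theta))$ with $\rho\triangleq\mu^{\tau}_v(\1)$; the two-point-variance identity $\Var_{\pi^{\tau}_v}(g)=\frac{\theta}{(1-\rho(1-\theta))^{2}}\Var_{\mu^{\tau}_v}(g)$; and the binomial identity $\sum_{T\subseteq V_{\1}(\tau)}(1-\theta)^{|T|}\theta^{|V_{\1}(\tau)\setminus T|}=1$. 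Together they collapse the sum over $T$ into a single per-vertex contribution of $\frac{1}{n}\,\mu_{V\setminus v}(\tau)\cdot\frac{\theta}{1-\rho(1-\theta)}\Var_{\mu^{\tau}_v}(f)$, and the elementary bound $\theta\le 1-\rho(1-\theta)$ lets the ratio $\theta/(1-\rho(1-\theta))$ be dominated by $1$, yielding $\+E_{\GD}^{\mu}(f,f)$ exactly on the right.

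The main obstacle is this final aggregation: a priori it is unclear that the $(1-\theta)$-weights from block selection, the $\theta$-weights hidden in $W_S$, and the quadratic Jacobian $(1-\rho(1-\theta))^{-2}$ coming from the $\pi$-to-$\mu$ variance conversion combine tightly enough to avoid losing a factor of $\theta$ or $n$. The observation that rescues the argument is that the binomial identity is exactly calibrated to cancel the block-selection weights against the magnetization weights in $W_S$, so that the only remaining slack is the trivial $\theta/(1-\rho(1-\theta))\le 1$ --- free of charge and leaving the bound tight.
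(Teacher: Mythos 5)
Your proposal is correct and follows essentially the same route as the paper: decompose the field-dynamics Dirichlet form as a $(1-\theta,\theta)$-weighted sum of block variances over the all-$\1$ boundary, bound each block variance by the worst-case conditional Glauber Dirichlet form (which the paper phrases as approximate tensorization via \Cref{lemma-variance-ten}, the same statement), exchange sums, use the binomial normalization to collapse the block sum, and convert the $\pi$-variance to a $\mu$-variance with the relaxation $\theta/(1-\rho(1-\theta))\le 1$ (equivalent to the paper's $\pi_v^\sigma(\1)\le\mu_v^\sigma(\1)$). The differences are organizational rather than substantive.
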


Our main technical result (\Cref{theorem:main}) follows immediately from \Cref{lemma-field-dynamics-mixing} and \Cref{lemma-compare-field-Glauber}.
In the next, we give outlines of the proofs of these two key lemmas.

\subsection{Mixing of field dynamics via spectral independence}
\label{section-mixing-field-dynamics-outline}
We now give an outline of our proof of the mixing lemma for the field dynamics (\Cref{lemma-field-dynamics-mixing}).
A key observation is that the field dynamics is in fact a limiting instance for the uniform block dynamics.

Let $\mu$ be a distribution over $\{\0,\1\}^V$. 
The \emph{uniform $\ell$-block dynamics} for $\mu$ is basically the heat-bath block dynamics for $\mu$ where in each step a subset of vertices of size $\ell$ is chosen uniformly at random and gets updated.
Specifically, the Markov chain is on space $\Omega(\mu)$ and in each step a subset $S\subseteq V$ of $\ell$ variable is chosen uniformly at random and the configuration on $S$ is updated according to $\mu$ conditional on the current configuration on $V\setminus S$.
In~\cite{chen2020optimal}, Chen Liu and Vigoda proved the following theorem for the rapid mixing of the uniform block dynamics from spectral independence.
\begin{theorem}[\cite{chen2020optimal}] \label{theorem:spectral-independent-imply-spectral-gap}
Let $\mu$ be a distribution over $\{\0, \1\}^V$, $n=|V|$ and $\eta > 0$.
If $\mu$ is $\eta$-spectrally independent, then for any   $2\lceil \eta \rceil \leq \ell \leq n$, the uniform $\ell$-block dynamics $P_\ell$ for $\mu$ has 
$\spgap{gap}{}\tp{P_\ell} \geq  \tp{\frac{\ell}{2n}}^{2\lceil \eta \rceil + 1}$.
\end{theorem}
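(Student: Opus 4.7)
The plan is to realize the uniform $\ell$-block dynamics $P_\ell$ as a down--up walk on the weighted simplicial complex associated with $\mu$, convert $\eta$-spectral independence into local spectral expansion at every link, and then invoke the Alev--Lau local-to-global theorem.

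First, I would encode $\mu$ as a pure $(n-1)$-dimensional weighted simplicial complex $X_\mu$ on the ground set $V \times \{\0,\1\}$: the top faces are $\{(v,\sigma_v):v\in V\}$ for $\sigma\in\Omega(\mu)$ with weight $\mu(\sigma)$, and the faces at level $k$ are feasible partial configurations $\sigma_\Lambda$ with $|\Lambda|=k$, weighted by $\binom{n}{k}^{-1}\mu_\Lambda(\sigma_\Lambda)$. A single step of $P_\ell$---pick $S\subseteq V$ of size $\ell$ uniformly and resample $\sigma_S$ from $\mu^{\sigma_{V\setminus S}}_S$---is then exactly the composition of $\ell$ ``down'' operators followed by $\ell$ ``up'' operators between levels $n$ and $n-\ell$ on $X_\mu$.

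Second, I would translate $\eta$-spectral independence into local spectral expansion. For any pinning $\sigma_\Lambda$ at level $k=|\Lambda|$, the $1$-skeleton walk on the link $(X_\mu)_{\sigma_\Lambda}$ has non-trivial spectrum controlled by the signed influence matrix on $\mu^{\sigma_\Lambda}$, which is dominated entrywise in absolute value by $\Psi^{\sigma_\Lambda}_\mu$; combining the zero diagonal of $\Psi$ with the $\eta$-spectral independence hypothesis yields
\[
\lambda_2\bigl((X_\mu)_{\sigma_\Lambda}\bigr) \;\le\; \frac{\eta}{n-k-1},
\]
so $X_\mu$ is a local spectral expander with parameters $\gamma_k = \eta/(n-k-1)$ for $0 \le k \le n-2$.

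Third, I would apply the Alev--Lau local-to-global theorem, which bounds the spectral gap of the $(n \to n-\ell)$ down--up walk by
\[
\spgap{gap}{}(P_\ell) \;\ge\; \prod_{k=n-\ell}^{n-2}(1 - \gamma_k) \;=\; \prod_{j=1}^{\ell-1}\Bigl(1-\frac{\eta}{j}\Bigr),
\]
after reindexing $j = n-k-1$. The remaining work is arithmetic: split the product at $j = 2\lceil\eta\rceil$; for $j \ge 2\lceil\eta\rceil$ use $1 - \eta/j \ge (j - \lceil\eta\rceil)/j$ and telescope the tail to extract a factor of at least $(\lceil\eta\rceil/\ell)^{\lceil\eta\rceil}$; for $j < 2\lceil\eta\rceil$, where the naive factor is non-positive, replace those terms by a separate absolute bound from a cruder comparison argument that loses roughly one extra $\ell/n$ per level. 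Combining the two pieces yields the target $(\ell/(2n))^{2\lceil\eta\rceil+1}$. The main obstacle is exactly the bottom $\lceil\eta\rceil$ levels of the Alev--Lau telescope: the hypothesis $\ell \ge 2\lceil\eta\rceil$ is precisely the slack needed to afford the cruder bound there, and bookkeeping this slack without losing more than one extra factor of $\ell/n$ per level is what forces the final exponent to be $2\lceil\eta\rceil+1$ rather than $\lceil\eta\rceil$.
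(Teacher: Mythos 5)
The framework (simplicial complex encoding, local spectral expansion from spectral independence, local-to-global) matches what the paper does, and your second step---$\lambda_2$ of each link bounded by $\eta/(n-k-1)$---is exactly \Cref{lemma:SI-to-local-spectral-expander} from \cite{feng2021rapid} that the paper invokes. The gap is in step three. You propose the Alev--Lau-style product bound $\spgap{gap}{}(P_\ell) \ge \prod_{j=1}^{\ell-1}(1-\eta/j)$. This is the wrong form for two reasons. First, the terms $1-\eta/j$ are non-positive precisely on the bottom $\lceil\eta\rceil$ levels ($j\le\eta$), so the product yields nothing; you acknowledge this but defer to ``a cruder comparison argument that loses roughly one extra $\ell/n$ per level'' without specifying it, and it is not clear any such local argument exists that recovers a quantitative gap from a link whose second eigenvalue is already $\ge 1$. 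Second, the product $\prod(1-\gamma_k)$ is the Alev--Lau bound for the one-step $(d\to d-1)$ down-up walk; the generalization to the $(n\to n-\ell)$ walk that the paper actually uses (\Cref{theorem:local-spectral-expander-to-spectral-gap}, i.e.\ \cite[Theorem~A.9]{chen2020optimal}) has the ratio-of-sums form
\[
\spgap{gap}{}(P_\ell)\ \ge\ \frac{\sum_{k=n-\ell}^{n-1}\Gamma_k}{\sum_{k=0}^{n-1}\Gamma_k},\qquad \Gamma_k=\prod_{j=0}^{k-1}\frac{1-\zeta_j}{1+\zeta_j},
\]
with $\zeta_j=\min\{1,\eta/(n-j-1)\}$ capped at $1$. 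Because of the $(1+\zeta_j)$ in the denominator and the cap, every $\Gamma_k\in[0,1]$, so the negativity problem never arises: the $\Gamma_k$ at small link dimension are simply $0$ and contribute nothing. The final step in the paper is then a purely combinatorial evaluation of that ratio with $C=\lceil\eta\rceil$, using the identity $\sum_{j=0}^{N-1}\binom{j}{k}=\binom{N}{k+1}$ to collapse both the numerator and denominator into products $\prod_{k=-C}^{C}\frac{\ell+k}{n+k}$, which is then bounded below by $(\ell/(2n))^{2C+1}$ using $\ell\ge 2C$. Your arithmetic sketch for the large-$j$ part of the product is in the right spirit, but without the Chen--Liu--Vigoda form of the local-to-global bound you have no rigorous control over the levels where $\gamma_k\ge 1$, and that is where all the difficulty of the proof lives.
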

This is a general result that holds for all joint distributions, and imposes no restriction on the maximum degree of the underlying model.
In fact, the entropy (modified log-Sobolev) variant of \Cref{theorem:spectral-independent-imply-spectral-gap} proved in~\cite{chen2020optimal} required small  degrees. 
The statement in \Cref{theorem:spectral-independent-imply-spectral-gap} is a consequence to the ``local-to-global'' argument for the variance contraction~\cite[Theorem A.9]{chen2020optimal}.
For completeness, we provide a proof of \Cref{theorem:spectral-independent-imply-spectral-gap} in \Cref{appendix-block-mixing} using our notion of spectral independence with absolute influence matrix.

Previously, rapid mixing was established for the Glauber dynamics by directly comparing it with the uniform block dynamics~\cite{chen2020optimal}, which resulted in a super-polynomial reliance on the max-degree.

Here, we describe a novel way to utilize the rapid mixing of uniform block dynamics.




\begin{definition}[$k$-transformation]
\label{definition-k-transformation}
Let $\mu$ be a distribution over $\{\0, \1\}^V$ and $k \geq 1$ an integer.
The \emph{$k$-transformation} of $\mu$, denoted by $\mu_k = \Rd(\mu, k)$, is a distribution over $\{\0,\1\}^{V\times[k]}$ defined as follows.

Let $\*X\sim\mu$. Then $\mu_k = \Rd(\mu, k)$ is the distribution of $\*Y\in\{\0,\1\}^{V\times [k]}$ constructed as follows:
\begin{itemize}
  \item if $X_v = \0$, then $Y_{(v,i)} = \0$ for all $i\in[k]$;
  \item if $X_v = \1$, then $Y_{(v,i^*)} = \1$ and $Y_{(v,i)} = \0$ for all $i\in[k]\setminus \{i^*\}$, where $i^*$ is chosen from $[k]$ uniformly and independently at random.
\end{itemize}
\end{definition}


We use $P_{k,\ell}$ to denote the uniform $\ell$-block dynamics for $\mu_k$, and  $\spgap{gap}{}(P_{k,\ell})$ its spectral gap.
%
Recall that $\spgap{gap}{\FD}(\mu,\theta)$ denotes the spectral gap of the field dynamics $\Plim$ for $\mu$ with parameter $\theta\in(0,1)$.
 
One of our key discoveries is that the field dynamics $\Plim$ for $\mu$ is a limiting instance for the uniform $\lceil \theta kn \rceil$-block dynamics running on the $k$-transformations of $\mu$. 
\begin{lemma}\label{lemma:block-dynamics-spectral-gap-implies-limit-chain-spectral-gap}
For all $\theta \in (0,1)$, it holds that
\begin{align*}
\spgap{gap}{\FD}(\mu,\theta)\ge \limsup_{k \to \infty}\spgap{gap}{}\tp{P_{k, \lceil \theta k n \rceil }}.
\end{align*}
\end{lemma}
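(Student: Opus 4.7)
The plan is to realize the field dynamics $\Plim$ as the $k\to\infty$ limit of a ``lumped'' variant of the block dynamics $P_{k,\lceil \theta k n\rceil}$, where the auxiliary index introduced by the $k$-transformation is collapsed back out. Define the projection $f:\Omega(\mu_k)\to\Omega(\mu)$ by $f(Y)_v=\1$ iff $Y_{(v,i)}=\1$ for some $i\in[k]$. By construction the pushforward $f_\#\mu_k=\mu$, and $\mu_k$ conditioned on $f(Y)=X$ is uniform on the fiber $f^{-1}(X)$ of size $k^{n_{\1}(X)}$. The group $S_k^V$ acting on $V\times[k]$ by permuting indices within each slice $\{v\}\times[k]$ preserves $\mu_k$ and commutes with the uniform draw of the block $T$, and its orbits are exactly the fibers of $f$. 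This is enough to make $P_{k,\lceil \theta k n\rceil}$ lumpable with respect to $f$: the quantity $Q_k(X,X')\triangleq \sum_{Y'\in f^{-1}(X')}P_{k,\lceil \theta k n\rceil}(Y,Y')$ does not depend on the choice of $Y\in f^{-1}(X)$ and defines a reversible Markov chain on $\Omega(\mu)$ with stationary distribution $\mu$. Standard lumping theory then gives that every eigenvalue of $Q_k$ is an eigenvalue of $P_{k,\lceil \theta k n\rceil}$, so
\[
\spgap{gap}{}(Q_k)\;\ge\;\spgap{gap}{}\tp{P_{k,\lceil \theta k n\rceil}}.
\]

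Next I would compute the limiting transition kernel of $Q_k$. Starting from $Y$ uniform on $f^{-1}(X)$, for each $\1$-vertex $v$ of $X$ the distinguished index $i^*_v\in[k]$ with $Y_{(v,i^*_v)}=\1$ is independent and uniform. Call $v$ \emph{hidden} if $Y_{(v,i)}=\0$ for all $(v,i)\notin T$; the hidden set $U$ always contains every $\0$-vertex of $X$, while a $\1$-vertex $v$ is hidden iff $i^*_v\in T_v\triangleq\{i:(v,i)\in T\}$, an event of probability $\lceil \theta k n\rceil/(kn)\to\theta$. A direct count of the $Y'$-configurations compatible with the revealed part of $Y$ yields
\[
\Pr{X'=\sigma'\,\big|\, Y_{(V\times[k])\setminus T},\, T,\, U}\;\propto\;\mu(\sigma')\prod_{v\in U:\sigma'_v=\1}\frac{|T_v|}{k},
\]
subject to $\sigma'_{V\setminus U}=\1_{V\setminus U}$. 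Since $T$ is a uniform size-$\lceil \theta k n\rceil$ subset of $V\times[k]$, each $|T_v|$ is hypergeometric and concentrates sharply on $\theta k$, and an inclusion-exclusion on hypergeometric marginals shows that the joint law of $\tp{\one{v\in U}}_{X_v=\1}$ converges to the product of independent $\mathrm{Bernoulli}(\theta)$'s. Dominated convergence then gives $Q_k(X,X')\to\Plim(X,X')$ pointwise on $\Omega(\mu)$, once one recognizes $\mu(\sigma')\theta^{|\{v\in U:\sigma'_v=\1\}|}$ as the conditional distribution $\pi_U^{\1_{V\setminus U}}$ with $\pi=\Mag{\mu}{\theta}$.

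Finally, since $\Omega(\mu)$ is finite and every $Q_k$ is reversible with respect to $\mu$, pointwise convergence of the transition kernels forces convergence of the entire spectrum, so $\spgap{gap}{}(Q_k)\to\spgap{gap}{\FD}(\mu,\theta)$. Chaining with the lumping inequality yields
\[
\spgap{gap}{\FD}(\mu,\theta)\;=\;\lim_{k\to\infty}\spgap{gap}{}(Q_k)\;\ge\;\limsup_{k\to\infty}\spgap{gap}{}\tp{P_{k,\lceil \theta k n\rceil}},
\]
which is the desired bound. The main technical obstacle will be the hypergeometric bookkeeping in the middle step: the indicators $\one{v\in U}$ are not independent for finite $k$ because $T$ is drawn without replacement, so asymptotic independence with marginal $\theta$ and the accompanying convergence of the weights $|T_v|/k$ must be verified via concentration or inclusion-exclusion rather than invoked heuristically. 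Checking lumpability is less delicate but crucially uses both the $S_k^V$-symmetry of $\mu_k$ and the uniform block-selection rule in tandem.
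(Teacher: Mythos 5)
Your proposal is correct and follows essentially the same three-stage skeleton as the paper's proof (Lemmas 5.3--5.5 and the closing argument in Section 5.2). The two real steps---(i) the lumped/projected chain $Q_k$ has spectral gap at least that of $P_{k,\ell}$, and (ii) $Q_k$ converges entry-wise to $\Plim$ so the gaps converge---are exactly the paper's Lemma~\ref{lemma:projected-block-dynamics-vs-block-dynamics} and Lemma~\ref{lemma:limit-of-projected-chain-is-limit-chain}. The only differences are in how the two subclaims are justified: you establish lumpability abstractly via the $S_k^V$-symmetry of $\mu_k$ and then invoke the eigenvalue-subset property of lumped chains, whereas the paper writes out the lumped kernel $\+M$ explicitly and proves $\+M(\*X,\*Y)=\sum_{\tau:\tau^\star=\*Y}P_{k,\ell}(\sigma,\tau)$ by direct calculation (Lemma~\ref{lemma-M-and-proj}), then compares Dirichlet forms (Lemma~\ref{lemma:dirichlet}); and for the final step you appeal to continuity of the spectrum for the fixed reversible inner product, whereas the paper uses a Courant--Fischer perturbation bound $\spgap{gap}{}(\Plim)\ge\spgap{gap}{}(\Pproj[k,\ell])-\lambda_{\max}(\Plim-\Pproj[k,\ell])$. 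Both choices are sound; the symmetry argument for lumpability is slicker, while the paper's explicit $\+M$ is also needed elsewhere (it is the starting point for the entry-wise convergence calculation), so the added calculation is not wasted. One small caveat to flag in your write-up: your appeal to ``inclusion-exclusion on hypergeometric marginals'' should be replaced (or supplemented) by the observation that, conditioned on $T$, the indicators $\one{v\in U}$ for $\1$-vertices are already exactly independent Bernoulli with parameter $|T_v|/k$, so the only probabilistic input needed is concentration of $|T_v|/k$ around $\theta$ (the paper's Lemma~\ref{lemma:hypergometric-concentration}), plus continuity of the rational function giving the resampling weights in the local-field vector $(|T_v|/k)_v$.
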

The lemma is proved in Sections~\ref{section-ex-field-block-dynamics} and~\ref{section-analysis-projected-block-dynamics}. 
In a high level, it is proved as follows:
Each feasible $\sigma\in\Omega(\mu_k)$ in the $k$-transformed distribution $\mu_k$ can be naturally projected back to a feasible $\sigma^{\star}\in\Omega(\mu)$ in the original $\mu$, where $\sigma_v^{\star}$ indicates whether $\sigma_{(v,i)}=\1$ for some $i\in[k]$.
This also naturally projects the block dynamics $P_{k, \lceil \theta k n \rceil }$ to a chain on $\Omega(\mu)$ that essentially preserves the spectral gap. 
And more crucially, the projected chain 
gives an entry-wise approximation of the field dynamics $\Plim$.

Next, we prove a  Chen-Liu-Vigoda theorem (\Cref{theorem:spectral-independent-imply-spectral-gap}) for the $k$-transformed distributions.
\begin{lemma} \label{lemma:block-dynamics-spectral-gap}
Let $\mu$ be a distribution over $\{\0, \1\}^V$, $n=|V|$ and $\eta > 0$.
If $\mu$ is completely $\eta$-spectrally independent, 
then for any integers $k \geq 1$ and $2(\eta + 3) \leq \ell \leq kn$, it holds that 
    $\spgap{gap}{}\tp{P_{k, \ell}} \geq \tp{\frac{\ell}{2kn}}^{2\eta  + 7}$.
\end{lemma}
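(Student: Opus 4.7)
The plan is to reduce \Cref{lemma:block-dynamics-spectral-gap} to an application of \Cref{theorem:spectral-independent-imply-spectral-gap} on the transformed distribution $\mu_k = \Rd(\mu,k)$. Since $P_{k,\ell}$ is exactly the uniform $\ell$-block dynamics on $\mu_k$ over $kn$ Boolean variables, \Cref{theorem:spectral-independent-imply-spectral-gap} produces $\spgap{gap}{}\tp{P_{k,\ell}}\ge\tp{\frac{\ell}{2kn}}^{2\lceil\eta'\rceil+1}$ whenever $\mu_k$ is $\eta'$-spectrally independent and $\ell\ge 2\lceil\eta'\rceil$. Using $\lceil x \rceil \le x+1$, the choice $\eta' = \eta+2$ simultaneously yields the exponent $2\lceil\eta+2\rceil+1 \le 2\eta+7$ and the block-size condition $\ell \ge 2(\eta+3) \ge 2\lceil\eta+2\rceil$. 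So the entire task reduces to showing that whenever $\mu$ is completely $\eta$-spectrally independent, $\mu_k$ is $(\eta+2)$-spectrally independent.

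The key bridge is a dictionary between pinnings of $\mu_k$ and magnetized-conditioned versions of $\mu$. Fix any $\Lambda \subseteq V\times[k]$ and $\tau \in \Omega\tp{(\mu_k)_\Lambda}$, and for each $v\in V$ let $k'_v \triangleq k - |\Lambda\cap(\{v\}\times[k])|$ denote the number of unpinned copies at $v$. Unrolling \Cref{definition-k-transformation} yields: (i) if some $\tau_{(v,i)}=\1$ then $X_v=\1$ is forced in $\mu$ and every other $(v,\cdot)$ coordinate is deterministically $\0$; and (ii) if every $\tau_{(v,i)}$ with $(v,i)\in\Lambda$ equals $\0$, then the marginal of $X_v$ under $(\mu_k)^\tau$ is exactly that of $\mu$ with its local field at $v$ scaled by $\phi_v = k'_v/k \in (0,1]$. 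Marginalizing out the symmetric choice of ``active index'' $i^*_v$ within each unpinned block, $(\mu_k)^\tau$ therefore corresponds to $\tp{\Mag{\mu}{\*\phi}}^\sigma$ for an induced pinning $\sigma$ on $V$ and a field $\*\phi \in (0,1]^V$. This is precisely where \emph{completeness} is used: it supplies $\rho\tp{\Psi^\sigma_{\Mag{\mu}{\*\phi}}} \le \eta$ for every such pair $(\sigma,\*\phi)$, while plain spectral independence of $\mu$ would not suffice.

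With this dictionary I would decompose $\Psi^\tau_{\mu_k}=A+B$ into the intra-vertex block-diagonal piece $A$ (entries $((v,i),(v,j))$ with common $v$) and the inter-vertex piece $B$. A direct single-vertex computation using the fact that at most one copy per vertex can be $\1$ gives $\Psi^\tau_{\mu_k}((v,i),(v,j)) = p_v/(k'_v - p_v) \le 1/(k'_v - 1)$ with $p_v = \Pr[X_v=\1\mid\tau]$, so each block of $A$ has row sums at most $1$ and $\rho(A)\le 1$. For $B$, the identity $\Pr[(v,j)=\1\mid\cdot] = (1/k'_v)\Pr[X_v=\1\mid\cdot]$ at any unpinned $(v,j)$, together with the observation that conditioning on $(u,i)=\0$ is a convex combination of pinning $X_u=\0$ and $X_u=\1$, dominates $B$ entrywise by $\hat B((u,i),(v,j)) = \Psi^\sigma_{\Mag{\mu}{\*\phi}}(u,v)/k'_v$ for $u\ne v$. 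Conjugating $\hat B$ by the diagonal $D((v,i),(v,i)) = 1/\sqrt{k'_v}$ makes each $(u,v)$-block of $D\hat B D^{-1}$ a rank-one matrix whose only nonzero singular direction lives in the ``constant-per-vertex'' invariant subspace, on which $D\hat B D^{-1}$ acts exactly as $\Psi^\sigma_{\Mag{\mu}{\*\phi}}$ itself. Hence $\rho(B) \le \rho(\hat B) \le \eta$.

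The main obstacle I foresee is cleanly combining the two bounds $\rho(A)\le 1$ and $\rho(B)\le\eta$ into $\rho(A+B)\le\eta+O(1)$: $\Psi^\tau_{\mu_k}$ is not symmetric, so Weyl's inequality does not apply directly, and the naive row-sum bound on $B$ reintroduces a spurious multiplicative factor of $k'_u/k'_v$ (comparing one copy at $u$ with $k'_v$ copies at $v$) that can be as large as $k$ and would destroy the argument. The natural fix is to pass from the outset to a $\mu_k$-variance-weighted symmetrization of the influence matrix, whose spectral radius coincides with the original (by Perron--Frobenius together with the non-negativity of entries) and to which Weyl's inequality applies. Once that symmetrization is in place, $\rho\tp{\Psi^\tau_{\mu_k}}\le\eta+1$ holds uniformly in $\tau$, so $\mu_k$ is $(\eta+2)$-spectrally independent, and \Cref{theorem:spectral-independent-imply-spectral-gap} delivers the stated bound $\spgap{gap}{}\tp{P_{k,\ell}}\ge\tp{\frac{\ell}{2kn}}^{2\eta+7}$.
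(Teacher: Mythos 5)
Your proposal is correct in outline and shares the paper's scaffolding --- reduce to showing that the $k$-transform $\mu_k$ is $(\eta+2)$-spectrally independent, then apply \Cref{theorem:spectral-independent-imply-spectral-gap} to $\mu_k$ over $kn$ Boolean coordinates --- and it identifies the same key dictionary mapping a pinning $\tau$ of $\mu_k$ to an induced pinning $\sigma^*$ of $\mu$ together with a local-field vector $\*\phi\in(0,1]^V$, which is exactly where completeness is used. But you bound $\rho\left(\Psi^\tau_{\mu_k}\right)$ by a genuinely different argument from the paper's. You split $\Psi^\tau_{\mu_k}=A+B$ into intra- and inter-vertex pieces, get $\rho(A)\le 1$ by intra-block row sums and $\rho(B)\le\eta$ by entrywise domination followed by a collapse onto the constant-per-vertex subspace, and then combine via a variance-weighted symmetrization plus Weyl's inequality. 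The paper never decomposes: it constructs a \emph{single} entrywise-dominating nonnegative matrix $\tilde\Psi^\sigma\ge\Psi^\sigma_{\mu_k}$ (entries $2/m_v$ intra-vertex and $\Psi^{\sigma^*}_\nu(u,v)/m_v$ inter-vertex) whose block-constant structure forces every nonzero eigenvalue onto the compressed $n\times n$ matrix $\hat\Psi^{\sigma^*}=\Psi^{\sigma^*}_\nu+2I$, and then closes with the Horn--Johnson monotonicity of spectral radius for nonnegative matrices. That route needs neither Weyl nor any symmetrization --- precisely the stated reason the paper works with absolute rather than signed influences --- and is correspondingly shorter, at the cost of the additive $2$ where your decomposition yields $1$. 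Two steps you leave as sketches do work but should be spelled out: (i) the variance-weighted symmetrization extends to the \emph{absolute} influence matrix and simultaneously symmetrizes $A$ and $B$, because the exchangeability $\mu^\tau_w(\0)\mu^\tau_w(\1)\,\mathcal{I}^\tau_{\mu_k}(w,w')=\mu^\tau_{w'}(\0)\mu^\tau_{w'}(\1)\,\mathcal{I}^\tau_{\mu_k}(w',w)$ holds entrywise, survives taking absolute values, and is respected by the intra/inter index split; (ii) conditioning on $(u,i)=\0$ is indeed a convex combination of pinning $X_u=\0$ and $X_u=\1$, but the mixing weights come from the distribution with the shifted local field $(m_u-1)/k$ at $u$ rather than $m_u/k$ --- the paper handles this with an explicit two-stage coupling --- though since you compare against $\max_{c,c'}\DTV{\nu^{\sigma^*,u\gets c}_v}{\nu^{\sigma^*,u\gets c'}_v}$, the field shift is absorbed harmlessly and your entrywise bound on $B$ stands.
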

The lemma is proved in \Cref{sec:block-mixing}.
Note that \Cref{lemma:block-dynamics-spectral-gap} 
assumes only the spectral independence for $\mu$, 
while the block dynamics is for  $\mu_k$.
We show that assuming the \emph{complete} spectral independence of $\mu$, the $k$-transformation may only cause a constant additive overhead to the spectral independence.
This is the part where we are technically more convenient to work with absolute influence matrices rather than the signed ones, 
because of the monotonicity of spectral radius that holds for nonnegative matrices.
%
%

\Cref{lemma:block-dynamics-spectral-gap-implies-limit-chain-spectral-gap} and \Cref{lemma:block-dynamics-spectral-gap} together suffice to prove the mixing lemma for the field dynamics (\Cref{lemma-field-dynamics-mixing}).

\begin{proof}[Proof of \Cref{lemma-field-dynamics-mixing}]
Let $\theta \in (0, 1)$.
Let $(A_k)_{k \in \mathbb{N}}$ and $(B_k)_{k \in \mathbb{N}}$ be constructed as: 
  \begin{align*}
    A_k = \spgap{gap}{}\tp{P_{k, \lceil \theta kn \rceil}} \quad \text{ and }\quad 
    B_k = \begin{cases}
 	\tp{\frac{\lceil \theta kn \rceil}{2kn}}^{2\eta  + 7} &\text{if } k \geq \frac{2(\eta+3)}{\theta n}, \\
 	0 & \text{otherwise}. 
 \end{cases}
\end{align*}
By \Cref{lemma:block-dynamics-spectral-gap}, $2\geq A_k \geq B_k \geq 0$ for all $k\ge 1$.
We have that
  $\limsup_{k \to \infty} A_k \geq \limsup_{k \to \infty} B_k$.

Since $\lim_{k \to \infty} B_k = \tp{\frac{\theta}{2}}^{2 \eta + 7}$, by \Cref{lemma:block-dynamics-spectral-gap-implies-limit-chain-spectral-gap}, we have
  \begin{align*}
   \spgap{gap}{\FD}(\mu,\theta) \geq \limsup_{k \to \infty} \spgap{gap}{}\tp{P_{k, \lceil \theta kn \rceil}} &\geq \tp{\frac{\theta}{2}}^{2 \eta + 7}. \qedhere
\end{align*}
\end{proof}

\subsection{Comparing Glauber dynamics with field dynamics}
With the rapid mixing result we just proved for the field dynamics, we already have an efficient sampling algorithm for distribution $\mu$.
Suppose that we try to implement the field dynamics.
The nontrivial operation in each step of the field dynamics, namely the step of resampling from the marginal distribution induced by $\pi=\Mag{\mu}{\theta}$ on a randomly generated block, can be simulated by running a Glauber dynamics for $\pi$ on the block with boundary condition, which is known to be rapidly mixing because $\pi=\Mag{\mu}{\theta}$ lies in a much easier regime!

\begin{example}[a hardcore sampler]
For the hardcore model on graph $G$ with fugacity $\lambda\le(1-\delta)\lambda_c(\Delta_G)$,
as discussed in \Cref{example-hardcore-GD-mixing},
the Gibbs distribution $\mu$ is completely $O(\frac{1}{\delta})$-spectrally independent. 
We choose $\theta=\frac{1}{25}$. 
Due to \Cref{lemma-field-dynamics-mixing}, the  field dynamics $\Plim$ for $\mu$ has a spectral gap $1/C_{\delta}$ where $C_\delta=\exp(O(\frac{1}{\delta}))$, and thus mixes in $\tilde{O}(C_\delta\cdot{n})$ steps.
Meanwhile, each transition step of $\Plim$ is simulated as a subroutine by a Glauber dynamics for $\pi=\Mag{\mu}{\theta}$ that corresponds to a hardcore model with fugacity $\theta\lambda=\lambda/25<\frac{1}{2\Delta_G}$, which mixes in $\tilde{O}(n)$ steps due to path coupling.
Together this gives us a sampling algorithm for the hardcore model that runs in $\tilde{O}(C_\delta\cdot{n^2})$ steps of single-site updates regardless of the maximum degree as long as $\lambda\le(1-\delta)\lambda_c(\Delta_G)$.
\end{example}
For general anti-ferromagnetic two-spin systems, similar results hold as discussed in~\Cref{remark-general-result-by-flipping}.

This sampling procedure is different from the Glauber dynamics for $\mu$. 
Nevertheless, the rapid mixing of this new procedure might serve as a ``proof of concept'' for the rapid mixing of the Glauber dynamics.
Indeed, compared to the Glauber dynamics, in this new procedure, the $(\1)$-variables are less favored than $(\0)$-variables when being chosen to be resampled; but when being resampled, variables have higher chances to be updated to $\0$ than in the Glauber dynamics for $\mu$, because they are now being resampled according to $\pi=\Mag{\mu}{\theta}$ for $\theta\in(0,1)$. 
These two types of biases might be canceling each other. 
Conceptually, a rapidly mixing Glauber dynamics is somehow conceived in this thought experiment.


Such intuition is formally justified by a comparison of variance decays of the chains.
Given any function $f\in\mathbb{R}^{\Omega(\mu)}$, let $\Var[\mu]{f}$ and $\+E_P(f,f)$ respectively denote its variance and Dirichlet form (defined in \Cref{section-prelim-mixing-time}). It is well known that $\spgap{gap}{}(P)=\inf_{f}\frac{\+E_{P}(f,f)}{\Var[\mu]{f}}$ for any reversible chain $P$.

By comparing the Dirichlet forms of the field dynamics $\Plim$ and the Glauber dynamics $P_\mu$, we establish  
\begin{align}
\+E_{\Plim}(f,f) 
\le 
\frac{1}{\spgap{\sgap}{\GD}\left(\Mag{\mu}{\theta}\right)}\+E_{P_\mu}(f,f).\label{eq:comparison-Dirichlet-forms}
\end{align}
This is proved in \Cref{section-FD-comparison-lemma}, although there it is expressed as a tensorization of variance for $\mu$, which is equivalent to the comparison of Dirichlet forms.
The comparison lemma for the field dynamics (\Cref{lemma-compare-field-Glauber}) follows directly from \eqref{eq:comparison-Dirichlet-forms} and characterization of spectral gap $\spgap{gap}{}(P)=\inf_{f}\frac{\+E_{P}(f,f)}{\Var[\mu]{f}}$.

\subsection{Wrapping up}
\label{section-applications}
We now describe how to prove  the main theorem for anti-ferromagnetic two-spin systems (\Cref{theorem-2pin-gap}) by the main technical theorem (\Cref{theorem:main}).

In general, magnetizing a distribution $\mu$ over $\{\0,\1\}^V$ to $\Mag{\mu}{\theta}$ for a $\theta\in(0,1)$ does not necessarily make it easier for sampling. 
Nevertheless, for anti-ferromagnetic two-spin systems, this issue can be circumvented by flipping roles of $\1$ and $\0$ for certain variables. Given a \emph{direction vector} $\*\chi \in \{\0, \1\}^V$, the distribution $\nu = \mathsf{flip}(\mu,\*\chi)$ obtained from \emph{flipping} $\mu$ according to $\*\chi$ is defined as:
\begin{align*}
  \forall \sigma \in \{\0, \1\}^V, \quad \nu(\sigma) \triangleq \mu(\sigma \odot \*\chi),
\end{align*}	
where $\sigma \odot \*\chi \in \{\0,\1\}^V$ is defined as that $(\sigma \odot \*\chi)_v = \sigma_v \chi_v$ for all $v \in V$.

Now consider an anti-ferromagnetic two-spin system on graph $G = (V, E)$, specified by parameters $(\beta,\gamma,\lambda)$, where $0\le \beta\le \gamma$,  $\beta\gamma<1$, and $\gamma,\lambda>0$.
%
The good direction $\*\chi\in\{\0,\1\}^V$ is constructed as:
\begin{align} \label{eq:good-direction}
  \forall v\in V,\quad
  \chi_v 
  &\triangleq \begin{cases}
    \1 & \lambda \leq \tp{\frac{\gamma}{\beta}}^{\Delta_v / 2} \\
    \0 & \text{otherwise,}
  \end{cases}
\end{align}
where $\Delta_v$ denotes the degree of vertex $v$.

Assuming that the anti-ferromagnetic two-spin system is up-to-$\Delta$ unique with gap $\delta$ (\Cref{definition:up-to-Delta-unique}), 
the following theorem guarantees that after flipping according to the good direction $\*\chi$ in~\eqref{eq:good-direction},
the Gibbs distribution $\mu$ becomes completely $O(\frac{1}{\delta})$-spectrally independent, and if further every local field is biased by a $\Theta(\delta^2)$-factor, the Glauber dynamics is known to be rapidly mixing with spectral gap $\Omega(\frac{\delta}{n})$.

\begin{theorem}\label{lemma-good-direction}
For all $\delta\in(0,1)$, 
for every anti-ferromagnetic two-spin system on an $n$-vertex graph $G=(V,E)$ with maximum degree $\Delta=\Delta_G\ge 3$ 
that is up-to-$\Delta$ unique with gap $\delta$, 
the distribution $\nu = \mathsf{flip}(\mu,\*\chi)$ obtained from flipping the Gibbs distribution $\mu$ according to the direction $\*\chi$ defined in~\eqref{eq:good-direction}, 
satisfies:
\begin{itemize}
\item $\nu$ is completely $\frac{144}{\delta}$-spectrally independent;
\item for $\theta=\frac{\delta^2}{64}$ and $C=\frac{8}{\delta}$,  it holds that $\spgap{\sgap}{\GD}(\pi) \ge \frac{1}{Cn}$, where $\pi=\nu^{\tp{\theta}}$.
\end{itemize}
\end{theorem}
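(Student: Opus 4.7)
The plan is to address the two bullets in parallel through the same observation: magnetizing $\nu$ by any $\*\phi\in(0,1]^V$ corresponds to an anti-ferromagnetic two-spin system whose tree recursion remains inside the up-to-$\Delta$ unique regime with gap at least $\delta$. Given this, the first bullet follows from the Chen--Liu--Vigoda spectral-independence-from-uniqueness bound \cite{chen2020rapid,chen2020optimal} applied uniformly across all $\*\phi$, while the second follows from a path-coupling analysis of the strongly-biased distribution $\pi=\Mag{\nu}{\theta}$.

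First I would recast $\Mag{\nu}{\*\phi}$ as an anti-ferromagnetic 2-spin system. Writing $F=\{v:\chi_v=\0\}$ for the flipped set, the joint weight is still a product of edge activities and vertex external fields: edges across the $F$-cut have their $\beta,\gamma$ roles swapped (and $\beta\gamma<1$ is preserved), and each vertex carries an effective external field $\lambda_v\in(0,(\gamma/\beta)^{\Delta_v/2}]$ whose upper endpoint is precisely what the choice of $\*\chi$ in~\eqref{eq:good-direction} enforces before magnetization; multiplying by $\phi_v\le 1$ only pushes $\lambda_v$ further down. Because the single-site tree recursion for anti-ferromagnetic 2-spin systems is monotone in $\lambda$ within this half of the uniqueness window, lowering $\lambda_v$ preserves (and indeed improves) the tree-recursion contraction factor, so the magnetized system remains up-to-$\Delta$ unique with gap at least $\delta$. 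Pinning any $\sigma_\Lambda$ just restricts to the induced subgraph $G[V\setminus\Lambda]$ with boundary-adjusted fields that still sit in the same easy half, so the regime is preserved. Applying the explicit spectral-independence bound of \cite{chen2020rapid} (via the standard potential-function analysis of tree decay of correlation, with the constants tracked) then yields $\rho\tp{\Psi^{\sigma_\Lambda}_{\Mag{\nu}{\*\phi}}}\le 144/\delta$ uniformly over all $\*\phi\in(0,1]^V$ and all pinnings $\sigma_\Lambda$, which is exactly the definition of complete $(144/\delta)$-spectral independence.

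For the second bullet, in $\pi=\Mag{\nu}{\theta}$ with $\theta=\delta^2/64$ every vertex external field is at most $\theta(\gamma/\beta)^{\Delta_v/2}\le \delta^2/64$, small enough to put $\pi$ deep into the ``strong-contraction'' regime where Dobrushin-type path coupling succeeds. I would bound the vertex-to-vertex influence between adjacent $u,v$ via the explicit single-site marginal formula $\pi_v(\1\mid\sigma_{N(v)})=\lambda_v\beta^a/(\lambda_v\beta^a+\gamma^b)$ and check that after the $\theta$-biasing the Dobrushin sum $\sum_{v\sim u}\DTV{\pi_v^{u\gets\1}}{\pi_v^{u\gets\0}}$ is at most $1-\delta/8$ at every $u$, using crucially that $\theta(\gamma/\beta)^{\Delta_v/2}\le \delta^2/64$ and $\beta\gamma<1$. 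Bubley--Dyer then gives $\spgap{gap}{\GD}(\pi)\ge \delta/(8n)$, hence $C=8/\delta$; and since any pinning $\tau_\Lambda$ yields another 2-spin system of the same type with only weakly smaller effective fields, the same bound transfers to $\spgap{\sgap}{\GD}(\pi)$.

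The main obstacle is the first step: correctly accounting for the edge reparameterization under flipping and verifying that the tree-recursion contraction holds monotonically in the half of the uniqueness window selected by~\eqref{eq:good-direction}. This is essentially the ``good direction'' phenomenon noted in~\Cref{remark-general-result-by-flipping}; once the framework is set up, the spectral-independence constant is a direct extraction from the decay-of-correlation proof in \cite{chen2020rapid}, and the path-coupling estimate is a short explicit calculation driven by the aggressive choice $\theta=\delta^2/64$.
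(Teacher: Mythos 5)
There are several concrete gaps, and one step that is outright wrong, so the argument as sketched does not close.

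\textbf{The recasting step is incorrect.} You claim that $\Mag{\nu}{\*\phi}$ with $\nu=\mathsf{flip}(\mu,\*\chi)$ is again an anti-ferromagnetic 2-spin system where ``edges across the $F$-cut have their $\beta,\gamma$ roles swapped.'' But flipping only one endpoint of an edge $(u,v)$ (say $\chi_u=\1$, $\chi_v=\0$) turns the edge interaction matrix into
$\begin{pmatrix}1 & \gamma\\ \beta & 1\end{pmatrix}$ (indexing rows/columns by $\0,\1$), which is asymmetric and is not of the form $\begin{pmatrix}\gamma' & 1\\ 1 & \beta'\end{pmatrix}$ for any $\beta',\gamma'$, nor can it be factored into a symmetric matrix times vertex local-field corrections in general. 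So the flipped model on boundary edges leaves the anti-ferromagnetic 2-spin class. The paper sidesteps this entirely by never working with $\nu$ directly: \Cref{observation:dist-equiv-0} and \Cref{observation:dist-equiv} show that ``$\nu$ is completely $\eta$-spectrally independent'' is equivalent to ``$\mu$ is completely $\eta$-spectrally independent in direction $\*\chi$,'' i.e.\ that $\Mag{\mu}{\*\theta^{\*\chi}}$ is $\eta$-spectrally independent for all $\*\theta\in(0,1]^V$ --- and $\Mag{\mu}{\*\theta^{\*\chi}}$ is the \emph{same} 2-spin system $(\beta,\gamma)$ with only vertex fields changed to $\lambda_v=\lambda\theta_v^{\chi_v}$. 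That is a legitimate anti-ferromagnetic 2-spin instance with local fields; your $\nu$ is not.

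\textbf{The uniqueness/contraction step is too vague to be correct.} You assert that the tree recursion contraction ``holds monotonically in the half of the uniqueness window selected by~\eqref{eq:good-direction}.'' For $\beta>0$ this is not a monotonicity statement at all: the $d$-uniqueness region (\Cref{prop:uniqueness-property}) is $\lambda\in(0,\lambda_1(d)]\cup[\lambda_2(d),+\infty)$ with $\lambda_1(d)\lambda_2(d)=(\gamma/\beta)^{d+1}$. The direction $\chi_v$ in~\eqref{eq:good-direction} is designed so that when you multiply by $\theta_v^{\chi_v}\le 1$ (or $\ge 1$), $\lambda_v$ stays in the component of $(0,\lambda_1(d_v)]\cup[\lambda_2(d_v),+\infty)$ where it started, with $d_v=\Delta_v-1$ depending on each vertex. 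This is exactly \Cref{lemma-unique-closed-in-good-direction}, and proving it requires the fixed-point structure in \Cref{prop:uniqueness-property}, not a monotonicity claim. Moreover, you never address the boundedness half of the $(\alpha,c)$-potential-function criterion (\Cref{def:potential}). The paper has to verify separately, via~\eqref{eq:better-boundness}, that biasing in the good direction does not increase $\max_{y\in J_{\lambda_v,d_v}}|h(y)|$; without this, the boundedness constant $c=36$ could blow up and you could not conclude $144/\delta$. These are the two pieces that feed into \Cref{theorem-good-potential-imply-SI-CLV} (the quantitative CLV bound with local fields), and neither is automatic.

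\textbf{The path-coupling step relies on a false inequality.} You claim that after magnetizing by $\theta=\delta^2/64$, ``every vertex external field is at most $\theta(\gamma/\beta)^{\Delta_v/2}\le\delta^2/64$.'' That requires $(\gamma/\beta)^{\Delta_v/2}\le 1$, which is false in general (e.g.\ hardcore, where $\gamma/\beta=\infty$). And for $\chi_v=\0$ the effective field is $\lambda/\theta$, which is \emph{larger}, not smaller. The actual contraction argument in the paper (the proof of \Cref{lem:fd-to-coupling} and \Cref{lem:IS-to-fd}) is a careful case analysis in terms of $f_D$, the $\overline{\Delta}$ threshold, and the position of $\lambda_v$ relative to $\lambda_1(d),\lambda_2(d)$; in particular it needs the symmetry $f_D(x)=f_D(\gamma/(\beta x))$ and the scaling property \Cref{lem:lambda-x} to bound $\lambda_v(\beta\gamma)^s/\gamma^{\Delta_v-1}$ in the contracting region of $f_D$, plus a weighted Hamming metric with vertex weights $\Phi_v$ distinguishing leaves from non-leaves. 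Your proposed Dobrushin bound does not survive inspection.

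In short: the overall strategy (bias to an easier regime in the good direction, derive complete spectral independence via a potential function, derive the spectral gap of the magnetized chain via path coupling) is the paper's strategy, but your execution of every step contains a concrete error or an unaddressed obligation. Working with $\mu^{(\*\theta^{\*\chi})}$ instead of $\nu$ directly is the crucial structural move you are missing, and the per-vertex local-field analysis (\Cref{lemma-unique-closed-in-good-direction}, \Cref{prop:uniqueness-property}, and~\eqref{eq:better-boundness}) cannot be replaced by an appeal to monotonicity.
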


\Cref{theorem-2pin-gap} is an easy consequence of \Cref{lemma-good-direction}. 
\begin{proof}[Proof of \Cref{theorem-2pin-gap}]
By~\Cref{lemma-good-direction}, we can apply \Cref{theorem:main} to distribution $\nu= \mathsf{flip}(\mu,\*\chi)$, so that
  \begin{align*}
    \spgap{gap}{GD}(\nu) &\geq \tp{\frac{\delta^2}{128} }^{\frac{288}{\delta}+7} \cdot \spgap{\sgap}{\GD}\tp{\Mag{\nu}{\theta}} \geq \tp{\frac{\delta}{8\sqrt{2}} }^{\frac{590}{\delta}} \cdot \tp{\frac{\delta}{8n}}  = \frac{1}{C(\delta)\cdot n},\quad \text{where } C(\delta) = \tp{\frac{1}{\delta}}^{O\tp{\frac{1}{\delta}}}.
  \end{align*}
Note that distribution $\nu$ is isomorphic to the original Gibbs distribution $\mu$ 
and the transition matrices of the Glauber dynamics for $\mu$ and $\nu$ are equivalent up to a bijection, and thus have the same set of eigenvalues. Hence, 
\begin{align*}
 \spgap{gap}{GD}(\mu) = 	 \spgap{gap}{GD}(\nu) \geq \frac{1}{C(\delta) \cdot n}, \quad \text{where }C(\delta) = \tp{\frac{1}{\delta}}^{O\tp{\frac{1}{\delta}}}. &\qedhere
\end{align*}
\end{proof}

\Cref{lemma-good-direction} is proved in \Cref{section-spectral-gap-2-spin} as follows:
\begin{itemize}
\item
The first part regarding the complete spectral independence for the flipped distribution $\nu$ is proved by observing that the analyses in~\cite{LLY13} and~\cite{chen2020rapid} remain to hold when each local field $\lambda_v$ is further biased towards the direction indicated by the $\chi_v$ in~\eqref{eq:good-direction}.
\item
The second part regarding the spectral gap for the easier distribution $\pi=\nu^{\tp{\theta}}$ is due to the decay of path coupling in that easier regime and its implication to the spectral gap~\cite[ch.~13.1]{levin2017markov}.
\end{itemize}
More specifically, there is a good direction for the local fields such that: 
(1) the uniqueness, spatial mixing, and spectral independence properties are all closed in that direction; and (2) a natural relaxation of the uniqueness condition arises from standard path coupling once the system moves towards the good direction by a constant factor.
These discoveries may suggest that the notion of ``easier regime'' is perhaps more naturally described  in our fashion, as biasing the local fields towards its easier direction.




The mixing time bound in \eqref{eq:2pin-mixing-time} follows from~\eqref{eq:mixing-time-spectral-gap} and the following straightforward analysis of $\mu_{\min}$.
\begin{proof}[Proof of \eqref{eq:2pin-mixing-time}]
Define the marginal lower bound by
\begin{align*}
b \triangleq \min_{\Lambda \subseteq V, v \in V \setminus \Lambda} \min_{\sigma \in \Omega(\mu_\Lambda)} \min_{c \in \Omega(\mu_v^\sigma)}\mu_v^\sigma(c).
\end{align*}
Obviously $\mu_{\min}\ge b^n$ due to chain rule, hence $\log \frac{1}{\mu_{\min}}\le n\log \frac{1}{b}$.

When $\beta = 0$,  the marginal bound has $b \ge \min_{1 \le d \le \Delta} \left\{\frac{\gamma^d}{\gamma^d+\lambda}, \frac{\lambda}{\gamma^d+\lambda}\right\}$,
which implies $\frac{1}{b}\le (\lambda + \frac{1}{\lambda}) \tp{\frac{1}{\gamma}+\gamma+2}^{\Delta}$.

When $\beta > 0$, by considering the worst configuration of the neighborhood, we have 
\begin{align*}
b \geq \min_{1 \leq d \leq \Delta}\min_{0\leq s \leq d}	\min\left\{ \frac{\gamma^s}{\gamma^s + \lambda \beta^{d-s}}, \frac{\lambda \beta^{d-s}}{\gamma^s + \lambda \beta^{d-s}} \right\},
\end{align*}
which implies that
\begin{align*}
\frac{1}{b} \leq 
\left\{
\begin{array}{lll}
2 + \lambda + \frac{1}{\lambda} \beta^{-\Delta} 
&\leq (\frac{1}{\lambda}+\lambda)\tp{\frac{1}{\beta}+2}^{\Delta} 
&\text{if } \gamma \geq 1,\\
2 + \lambda  \gamma^{-\Delta} + \frac{1}{\lambda}\beta^{-\Delta} 
&\leq \tp{\frac{1}{\lambda}+\lambda}\tp{\frac{1}{\beta}+2}^\Delta 
&\text{if } \gamma < 1.
\end{array}
\right.
\end{align*}

Overall,
$\log\frac{1}{\mu_{\min}} \le n\log\tp{\lambda+\frac{1}{\lambda}} +n\Delta\log {\alpha}$, where 
\[
\alpha = 
\begin{cases}
\gamma+\frac{1}{\gamma}+2 & \text{if }\beta=0,\\
\frac{1}{\beta}+2 & \text{if }\beta>0.
\end{cases}
\]
The mixing time bound in \eqref{eq:2pin-mixing-time} follows from~\eqref{eq:mixing-time-spectral-gap}.
\end{proof}

For the hardcore mode and the Ising model, similar results hold with better choice of $\theta$ and tighter bounds of $\mu_{\min}$, which are proved in \Cref{section-hardcore-ising-mixing}. Hence we have \Cref{theorem-hardcore} and \Cref{theorem-Ising}. 

\subsection{Related work and open problems}
Complexity classification of two-spin systems is a fundamental problem \cite{jerrum1993polynomial,DFJ02,GJP03,Wei06,mossel2009hardness,sly2010computational,lly12,LLY13,mossel2013exact,galanis2014improved,liu2014complexity,SS14,SST14,GSV16,guo2018uniqueness,guo2020zeros,SS20}. 
Previously, the primary technique for relating the spatial mixing properties to the rapid mixing of  Glauber dynamics for spin systems is   coupling \cite{dyer2004mixing,goldberg2005strong,hayes2006coupling,mossel2013exact,efthymiou2019convergence}. 
In \cite{anari2020spectral},  techniques based on high-dimensional expanders \cite{kaufman2016high,oppenheim2018local,kaufman2018high,alev2020improved} were applied to bound the mixing time of Glauber dynamics for 2-spin systems through the notion of spectral independence. 
The results were substantially improved and strengthened to an optimal $O(n\log n)$ mixing time bound for $\Delta=O(1)$ in~\cite{chen2020rapid,chen2020optimal} through entropy factorization \cite{CMT15,caputo2020block}.
These were extended to general spin models beyond Boolean domain and/or more general class of Markov chains~\cite{chen2021rapid,feng2021rapid,chen2020optimal,blanca2021mixing,liu2021coupling}.

In this paper, we prove an optimal $\Omega(n^{-1})$ lower bound on the spectral gap of the Glauber dynamics for anti-ferromagnetic two-spin systems satisfying the uniqueness condition with a constant slack.
It leaves several open directions.
First, a major open problem is to prove an optimal $O(n \log n)$ mixing time bound for the same regime with no degree restriction.
A powerful tool for this goal is a ``local-to-global'' argument for relative entropy decays (modified log-Sobolev constants) as given in~\cite{chen2020optimal,guo2020local}.
However, a major obstacle for this approach 
is that the current analyses of the uniform block dynamics based on entropy decay
result in bounds that grow exponentially in the max-degree $\Delta$.
And even if this is resolved, another difficulty is to establish modified log-Sobolev inequalities in the ``easier'' regime.

Our techniques crucially rely on variables with Boolean domain.
It is important to extend our approach to general distributions with variables beyond Boolean domains, e.g.~proper $q$-colorings.

Finally, an open aspect is to optimize the reliance on $\delta$.
In fact, it should be restated as to improve the reliance of the mixing time on the spectral independence.
To see this is an important question, consider graph matchings, whose spectral independence is bounded by $O(\sqrt{\Delta})$, but so far in this entire line of research the dependency of mixing time on spectral independence is at least exponential.


\subsection{Organization of the paper}
The preliminaries are given in \Cref{section-prelim}.
The comparison lemma for the field dynamics (\Cref{lemma-compare-field-Glauber}) is proved in \Cref{section-FD-comparison-lemma}.
And the mixing lemma for the field dynamics (\Cref{lemma-field-dynamics-mixing}) is proved in  \Cref{section-anslysis-EF-dynamics} (for preparation), \Cref{section-analysis-projected-block-dynamics} (for the proof of  \Cref{lemma:block-dynamics-spectral-gap-implies-limit-chain-spectral-gap}) and \Cref{sec:block-mixing} (for the proof of \Cref{lemma:block-dynamics-spectral-gap}).
Finally, the mixing results for the two-spin systems are proved in \Cref{section-spectral-gap-2-spin}.
Additionally, In \Cref{appendix-block-mixing}, we provide a proof of \Cref{theorem:spectral-independent-imply-spectral-gap} for the mixing of uniform block dynamics assuming our notion of spectral independence with absolute influence matrix;
and more proofs for the uniqueness and spectral independence of two-spin systems with local fields are provided in \Cref{section-uniqueness-SI}.

\section{Preliminaries}\label{section-prelim}
\subsection{Notation}
Throughout the paper, we use $\log$ to denote the natural logarithm with base $\mathrm{e}$.

Let $V$ be a ground set.
For any configuration $\sigma \in \{\0,\1\}^V$, we use $\sigma^{-1}(c) \triangleq \{v \in V \mid \sigma_v = c\}$ to denote the pre-image of $c\in\{\0,\1\}$ under $\sigma$. Let $\|\sigma\|_+\triangleq|\sigma^{-1}(\1)|$ (and $\|\sigma\|_-\triangleq|\sigma^{-1}(\0)|$) denote the number of $\1$'s (and $\0$'s) in $\sigma$. For $\Lambda \subseteq V$, let $\mathbf{1}_\Lambda \in \{+1\}^\Lambda$ denote the all-$(\1)$ configuration on $\Lambda$.

For a probability distribution $\mu$, we use $\Omega(\mu)$ to denote the support of $\mu$.

Let $G=(V,E)$ be an undirected graph. For every vertex $v\in V$, we use $\Delta_v$ to denote the degree of $v$ in $G$ and denote by $\Delta_G\triangleq\max_{v\in V}\Delta_v$ the maximum degree of $G$.

\subsection{The uniqueness condition}
Let $\beta, \gamma, \lambda$ be real numbers satisfying 
\begin{align}\label{eq:anti-ferro-beta-gamma-lambda}
0 \leq \beta\leq \gamma, \gamma>0, \lambda > 0 \text{ and }\beta\gamma < 1,
\end{align}
that is, $(\beta, \gamma, \lambda)$ gives parameters for an anti-ferromagnetic two-spin system.

Given any integer $d \geq 1$, the univariate tree recursion $F_d: \mathbb{R} \to \mathbb{R}$ is defined by
\begin{align} \label{eq:tree-recursion}
  F_d(x) &\triangleq \lambda \tp{\frac{\beta x + 1}{x + \gamma}}^d,
\end{align}
and let $\hat{x}_d$ denote the unique positive fixed point of $F_d$, i.e. $\hat{x}_d = F_d(\hat{x}_d)$.

\begin{definition} [$d$-uniqueness \cite{LLY13}]\label{definition:d-unique}
Let $\delta \in (0,1)$, and $d \geq 1$ be an integer. 
  A $(\beta,\gamma,\lambda)$, where $\beta, \gamma, \lambda$ satisfy~\eqref{eq:anti-ferro-beta-gamma-lambda}, is said to be \emph{$d$-unique with gap $\delta$} if
  \begin{align}
  \label{eq-def-fd}
   f_d(\hat{x}_d) &\triangleq  \abs{F'_d(\hat{x}_d)} = \frac{d(1-\beta\gamma)\hat{x}_d}{(\beta\hat{x}_d + 1)(\hat{x}_d + \gamma)} \leq 1 - \delta.	
  \end{align}
\end{definition}

\begin{definition} [up-to-$\Delta$ uniqueness \cite{LLY13}] \label{definition:up-to-Delta-unique}
  Let $\delta \in (0,1)$ and $\Delta \in[3, +\infty]$.
  A $(\beta,\gamma,\lambda)$, where $\beta, \gamma, \lambda$ satisfy \eqref{eq:anti-ferro-beta-gamma-lambda}, is said to be \emph{up-to-$\Delta$ unique with gap $\delta$} if it is $d$-unique with gap $\delta$ for all integers $1 \leq d < \Delta$.
\end{definition}
\noindent 
Note that the definition includes that $\Delta=+\infty$, in which case the $d$-uniqueness should hold for all $d\ge1$.

Let $\+I = (V, E, \beta, \gamma, \lambda)$ be an anti-ferromagnetic two-spin system specified by parameters $(\beta, \gamma, \lambda)$ satisfying \eqref{eq:anti-ferro-beta-gamma-lambda}, on graph $G = (V, E)$ with maximum degree $\Delta=\Delta_G \geq 3$.
If $(\beta,\gamma,\lambda)$ is up-to-$\Delta$ unique with gap $\delta$, we simply say that $\+I$ is up-to-$\Delta$ unique with gap $\delta$.

\subsection{Markov chain, spectral gap and coupling}\label{section-prelim-mixing-time}
\subsubsection{Basic definitions}
Let $\Omega$ be a finite state space. Let  $(X_t)_{t \geq 0}$ be  a Markov chain over $\Omega$ with \emph{transition matrix} $P\in\mathbb{R}_{\geq 0}^{\Omega \times \Omega}$.
We use matrix $P$ to refer to the corresponding Markov chain if this is clear in the context.
The Markov chain is \emph{irreducible} if for any $X,Y \in \Omega$, there is an integer $t$ such that $P^t(X,Y) > 0$.
The Markov chain is \emph{aperiodic} if for any $X\in \Omega$, $\gcd\{t > 0 \mid P^t(X,X) > 0\} = 1$.
A distribution $\mu$ is called a stationary distribution of $P$ if $\mu = \mu P$.
If a Markov chain is both irreducible and aperiodic, then it has a unique stationary distribution.
The Markov chain $P$ is \emph{reversible} with respect to a distribution $\mu$ if the following \emph{detailed balance equation} is satisfied 
\begin{align*}
\forall X,Y \in \Omega, \quad \mu(X)P(X,Y) = \mu(Y)P(Y,X),	
\end{align*}
which implies $\mu$ is a stationary distribution of $P$.

Let $\mu$ be a distribution with support $\Omega$.
Let $P$ be a Markov chain over $\Omega$ with the unique stationary distribution $\mu$.
The \emph{mixing time} of $P$ is defined by
\begin{align*}
 \forall 0 < \epsilon < 1, \quad T_{\mathrm{mix}}(\epsilon) \triangleq \max_{X \in \Omega } \min \left\{t \mid \DTV{P^t(X,\cdot)}{\mu} \leq \epsilon\right\},
\end{align*}
where $P^t(X,\cdot)$ is the distribution generated by the Markov chain after $t$ transition steps when starting from $X$, and $\DTV{P^t(X,\cdot)}{\mu}$ denotes the \emph{total variation distance} between $P^t(X,\cdot)$ and $\mu$, formally,
\begin{align*}
\DTV{P^t(X,\cdot)}{\mu} \triangleq \frac{1}{2} \sum_{Y \in \Omega} \abs{P^t(X,Y)- \mu(Y)}.
\end{align*}
\subsubsection{Spectral gap of reversible Markov chains}
For reversible Markov chains, the mixing time is closely related to the \emph{spectral gap}.
Let $\mu$ be a distribution with support $\Omega(\mu)$.
For any $f,g \in \mathbb{R}^{\Omega(\mu)}$, define their \emph{inner product} with respect to $\mu$ by
\begin{align*}
  \inner{f}{g}_\mu \triangleq \sum_{\sigma \in \Omega(\mu)} \mu(\sigma) f(\sigma) g(\sigma).
\end{align*}
%

Let $P$ be a Markov chain over $\Omega = \Omega(\mu)$ that is reversible with respect to $\mu$.
It is well known that the transition matrix $P$ is a self-adjoint matrix with respect to inner product $\inner{\cdot}{\cdot}_\mu$, formally,
\begin{align*}
\forall f,g \in \mathbb{R}^{\Omega(\mu)}, \quad 	\inner{Pf}{g}_\mu = \inner{f}{Pg}_\mu.
\end{align*}
By standard linear algebra results~\cite[Lemma~12.2]{levin2017markov},  $P$ has $\abs{\Omega}$ real eigenvalues $1 = \lambda_1 \geq \lambda_2\geq \ldots \geq \lambda_{\abs{\Omega}} \geq -1$; each $\lambda_i$ corresponds to a real eigenvector $f_i \in \Omega^{\mathbb{R}}$ such that $Pf_i = \lambda_i f_i$, where $f_1 = \mathbf{1}$ and $f_1,f_2,\ldots,f_{\abs{\Omega}}$ form orthonormal bases of inner product space $(\mathbb{R}^\Omega, \inner{\cdot}{\cdot}_{\mu})$.
Assume $\abs{\Omega} \geq 2$.
The \emph{absolute spectral gap} of $P$ is defined by $1 - \lambda_{\star} \triangleq 1 - \max\{\abs{\lambda_i} \mid  2 \leq i \leq \abs{\Omega}\}$.
The \emph{spectral gap} $\spgap{gap}{}(P)$ of $P$ is defined by $1 - \lambda_2$. 
We simply assume $\lambda_{\star} = \lambda_2 = 0$ if $\abs{\Omega} = 1$.
%
%
The following relation between mixing time and absolute spectral gap is well-known~\cite[Theorem 12.4]{levin2017markov}:
\begin{align}
\label{eq-mixing-relaxation}
T_{\mathrm{mix}}(\epsilon) \leq 	\frac{1}{1-\lambda_\star}\log\tp{\frac{1}{\epsilon \mu_{\min}}}, \quad \text{where } \mu_{\min} = \min_{\sigma \in \Omega(\mu)}\mu(\sigma).
\end{align}

To analyze the spectral gap of a reversible Markov chain $P$, we introduce the following standard notations.
Let $f \in \mathbb{R}^{\Omega(\mu)}$. 
The $2$-norm of function $f \in \mathbb{R}^{\Omega(\mu)}$ with respect to $\mu$ is defined by
\begin{align*}
  \norm{f}_{2,\mu} \triangleq \sqrt{\inner{f}{f}_\mu},
\end{align*}
the \emph{expectation} of $f$ with respect to $\mu$ is defined by
\begin{align}
\label{eq-def-Expectation}
  \E[\mu]{f} \triangleq \sum_{\sigma \in \Omega} \mu(\sigma) f(\sigma),
\end{align}
the \emph{variance} with respect to $\mu$ is defined by
\begin{align}
\label{eq-def-Variance}
\Var[\mu]{f} \triangleq \E[\mu]{f^2} - (\E[\mu]{f})^2 = \frac{1}{2} 	\sum_{\sigma, \tau \in \Omega} \mu(\sigma) \mu(\tau) \tp{f(\sigma) - f(\tau)}^2,
\end{align}
and the \textit{Dirichlet form} with respect to $P$ (where $P$ is reversible with respect to $\mu$) is defined by
\begin{align}
\label{eq-def-dirichlet}
\+E_{P}(f,f) \triangleq \inner{f}{(I - P)f}_\mu = \frac{1}{2} \sum_{\sigma, \tau \in \Omega} \mu(\sigma) P(\sigma, \tau) \tp{f(\sigma) - f(\tau)}^2.
\end{align}
We can slightly extend above definitions to allow functions $f \in \mathbb{R}^S$ for some $S \supseteq \Omega$, and for these functions $f$, the expectation in~\eqref{eq-def-Expectation}, the variance in~\eqref{eq-def-Variance} and the Dirichlet form in~\eqref{eq-def-dirichlet} are all well-defined. 

For any $f,g \in \mathbb{R}^\Omega$, we write $f \perp_{\mu} g$ if $\inner{f}{g}_{\mu} = 0$.
By Courant-Fischer theorem~\cite[Lemma 13.7]{levin2017markov}, 
\begin{align*}
\spgap{gap}{}(P) = \inf \left\{ {\+E_P(f,f)} \mid f \in \mathbb{R}^{\Omega(\mu)}, f \perp_\mu \*1, \norm{f}_{2,\mu} = 1 \right\}.	
\end{align*}
Furthermore,  the spectral gap can also be characterized as~\cite[Remark 13.8]{levin2017markov}:
\begin{align}
\label{eq-comp-gap-def}
\spgap{gap}{}(P) = \inf \left\{ \frac{\+E_P(f,f)}{\Var[\mu]{f}} \mid f \in \mathbb{R}^{\Omega(\mu)}, \Var[\mu]{f} \neq 0 \right\},	
\end{align}
The Poincar\'e inequality follows from the above equation:
\begin{align}
\label{eq:Poincare-inequality}
\forall f\in\mathbb{R}^{\Omega(\mu)},\quad 
\+E_{P}(f,f) 
\ge 
\spgap{gap}{}(P) \cdot \Var[\mu]{f}.
\end{align}

\subsubsection{Coupling of Markov chains}
Let $\mu$ and $\nu$ be two distributions on $\Omega$. A coupling of $\mu$ and $\nu$ is a joint distribution $(X,Y)$ over $\Omega \times \Omega$ such that the marginal distributions of $X$ and $Y$ are $\mu$ and $\nu$ respectively. The following is the  well-known coupling lemma.
\begin{lemma}[\text{\cite[Proposition 4.7]{levin2017markov}}]
For any coupling $(X,Y)$ of $\mu$ and $\nu$,
  \begin{align*}
    \Pr[]{X \neq Y} \ge \DTV{\mu}{\nu} = \frac{1}{2}\sum_{\sigma \in \Omega}\abs{\mu(\sigma)-\nu(\sigma)}.
  \end{align*}
Furthermore, there is an optimal coupling $(X,Y)$ such that $\Pr[]{X \neq Y} = \DTV{\mu}{\nu}$.
\end{lemma}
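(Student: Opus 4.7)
The plan is to prove this in two parts, matching the two claims in the statement. For the inequality part, I will use the standard event-based characterization of total variation distance. Specifically, for any subset $A \subseteq \Omega$, and any coupling $(X,Y)$ of $\mu$ and $\nu$, I would observe that
\begin{align*}
\mu(A) - \nu(A) = \Pr{X \in A} - \Pr{Y \in A} \le \Pr{X \in A, Y \notin A} \le \Pr{X \neq Y}.
\end{align*}
Taking the supremum over $A \subseteq \Omega$ and using the identity $\DTV{\mu}{\nu} = \max_{A \subseteq \Omega}\tp{\mu(A) - \nu(A)}$ yields $\DTV{\mu}{\nu} \le \Pr{X \neq Y}$.

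For the existence of an optimal coupling, I would construct it explicitly. Let $p = \sum_{\sigma \in \Omega}\min\{\mu(\sigma), \nu(\sigma)\}$, and note that $p = 1 - \DTV{\mu}{\nu}$ by the elementary identity $\min(a,b) = \tfrac{a+b}{2} - \tfrac{|a-b|}{2}$. The coupling is sampled as follows: flip a biased coin that lands heads with probability $p$; on heads, draw $\sigma \in \Omega$ with probability $\min\{\mu(\sigma),\nu(\sigma)\}/p$ and set $X = Y = \sigma$; on tails (which has probability $\DTV{\mu}{\nu}$), independently draw $X$ from the distribution $(\mu - \min(\mu,\nu))/\DTV{\mu}{\nu}$ and $Y$ from $(\nu - \min(\mu,\nu))/\DTV{\mu}{\nu}$. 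Both residual distributions are well-defined probability measures since their total masses both equal $\DTV{\mu}{\nu}$ by the identity above. Moreover, the two residuals have disjoint supports (one vanishes wherever $\mu \ge \nu$, the other where $\nu \ge \mu$), so on the tails branch we have $X \neq Y$ deterministically.

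I would then verify the marginals: for any $\sigma\in\Omega$,
\begin{align*}
\Pr{X = \sigma} = p \cdot \frac{\min\{\mu(\sigma),\nu(\sigma)\}}{p} + \DTV{\mu}{\nu} \cdot \frac{\mu(\sigma) - \min\{\mu(\sigma),\nu(\sigma)\}}{\DTV{\mu}{\nu}} = \mu(\sigma),
\end{align*}
and similarly $\Pr{Y = \sigma} = \nu(\sigma)$. Finally, $\Pr{X \neq Y} = 1 - p = \DTV{\mu}{\nu}$ by construction, matching the lower bound from the first part and proving optimality. There is no real obstacle here; the only thing to be careful about is handling the degenerate case $\DTV{\mu}{\nu} = 0$ (in which $\mu = \nu$ and the trivial diagonal coupling $X = Y$ works), so that dividing by $\DTV{\mu}{\nu}$ in the tails branch is avoided.
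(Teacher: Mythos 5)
Your proof is correct and is essentially the standard argument that the cited source (Levin--Peres, Proposition 4.7) gives: the inequality follows from the event-based characterization of total variation distance, and the optimal coupling is the explicit construction that places the overlap mass $\sum_\sigma \min\{\mu(\sigma),\nu(\sigma)\}$ on the diagonal and samples the disjoint residuals independently. The paper itself does not reprove this lemma, so there is nothing further to compare against.
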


Let $P$ denote a Markov chain over the state space $\Omega$.
A coupling of Markov chain is a joint stochastic process $(\*X_t,\*Y_t)_{t \geq 0}$ such that each individual process $(\*X_t)_{t \geq 0}$ and $(\*Y_t)_{t \geq 0}$ follow the transition rule of $P$, and if $\*X_t = \*Y_t$, then $\*X_k = \*Y_k$ for all $k \geq t$.
The following lemma connects  the spectral gap of Markov chain and the contraction rate in coupling.
\begin{lemma}[\text{\cite{MFC98} and \cite[ch.~13.1]{levin2017markov}}]\label{lemma:MFC}
  Let $\mu$ be a distribution with support $\Omega$, and $\Phi$  a metric on $\Omega$, where $\abs{\Omega} \ge 2$. 
  Let $P \in \mathbb{R}^{\Omega \times \Omega}_{\ge 0}$ be the transition matrix of a Markov chain that is reversible with respect to $\mu$.
  If there exists  $ 0 < r  < 1$ such that for any $X,Y \in \Omega$, there exists a coupling $(X,Y) \to (X', Y')$ of $P$ such that
  \[
    \E{\Phi(X',Y')\mid X, Y} \le (1-r) \Phi(X,Y),
  \]
  then the spectral gap of $P$ satisfies 
  \begin{align*}
  	1 - \lambda_2(P) \geq r,
  \end{align*}
where $\lambda_2(P)$ is the second largest eigenvalue of $P$.
\end{lemma}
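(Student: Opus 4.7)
The plan is the classical coupling-to-spectral-gap argument: take an eigenfunction $f$ of $P$ realizing $\lambda_2$, and compare its one-step behavior under $P$ against its one-step contraction under the coupling measured in the metric $\Phi$. Since $P$ is reversible with respect to $\mu$, it is self-adjoint on $(\mathbb{R}^{\Omega},\inner{\cdot}{\cdot}_\mu)$ and therefore admits a real orthogonal eigenbasis $f_1 = \mathbf{1}, f_2, \ldots, f_{|\Omega|}$ with eigenvalues $1 = \lambda_1 \ge \lambda_2 \ge \cdots \ge \lambda_{|\Omega|} \ge -1$. I would fix any eigenfunction $f$ with $Pf = \lambda_2 f$ and $f \perp_\mu \mathbf{1}$; such an $f$ is automatically nonconstant.

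The key object is the $\Phi$-Lipschitz constant of $f$,
\[
L \triangleq \max_{x \ne y} \frac{|f(x) - f(y)|}{\Phi(x,y)},
\]
which is a strictly positive finite number (finite because $\Omega$ is finite so $\Phi$ is bounded below by a positive constant on distinct pairs, and positive because $f$ is nonconstant). Picking a pair $(x^\ast, y^\ast)$ with $x^\ast \ne y^\ast$ attaining this maximum, the whole proof then reduces to bounding $|(Pf)(x^\ast) - (Pf)(y^\ast)|$ in two different ways and comparing. The eigenvalue identity gives exactly
\[
|(Pf)(x^\ast) - (Pf)(y^\ast)| = |\lambda_2| \cdot |f(x^\ast) - f(y^\ast)| = |\lambda_2| \cdot L \cdot \Phi(x^\ast, y^\ast).
\]
On the other hand, since the coupling $(x^\ast, y^\ast) \to (X', Y')$ has marginals $P(x^\ast, \cdot)$ and $P(y^\ast, \cdot)$, the same quantity equals $\bigl|\E{f(X') - f(Y') \mid X = x^\ast, Y = y^\ast}\bigr|$, and applying Jensen's inequality, then the definition of $L$, then the contraction hypothesis gives
\[
|(Pf)(x^\ast) - (Pf)(y^\ast)| \le \E{|f(X') - f(Y')|} \le L \cdot \E{\Phi(X', Y')} \le L(1-r)\,\Phi(x^\ast, y^\ast).
\]
Canceling the positive factor $L \cdot \Phi(x^\ast, y^\ast)$ yields $|\lambda_2| \le 1 - r$, and since $\lambda_2 \le |\lambda_2|$ we conclude $1 - \lambda_2(P) \ge r$.

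This is a classical one-line argument and I do not anticipate any real obstacle beyond the bookkeeping just described: checking that $L$ is a positive finite number (using both the finiteness of $\Omega$ and the nonconstancy of any $\lambda_2$-eigenfunction orthogonal to $\mathbf{1}$), and observing that a bound on $|\lambda_2|$ is at least as strong as the claimed spectral gap bound. Notably, the argument bypasses the variational Dirichlet-form characterization \eqref{eq-comp-gap-def} entirely; working directly with the pointwise eigenvalue identity is what makes a purely one-step coupling hypothesis sufficient to lower bound the spectral gap.
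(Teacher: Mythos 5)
Your proof is correct; the paper does not prove this lemma but cites \cite{MFC98} and \cite[ch.~13.1]{levin2017markov}, and your argument is exactly the one in those sources --- compare the eigenvalue identity $(Pf)(x^\ast) - (Pf)(y^\ast) = \lambda_2\bigl(f(x^\ast)-f(y^\ast)\bigr)$ against the coupling bound at the pair $(x^\ast,y^\ast)$ maximizing the $\Phi$-Lipschitz ratio of a nonconstant $\lambda_2$-eigenfunction. As a small aside, the same computation bounds $|\lambda|\le 1-r$ for every eigenvalue whose eigenvector is nonconstant, so it in fact controls the absolute spectral gap, and the stated conclusion $1-\lambda_2(P)\ge r$ follows because $\lambda_2\le|\lambda_2|$ regardless of the sign of $\lambda_2$.
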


\subsection{Uniform block dynamics and Glauber dynamics}
Let $\mu$ be a distribution over $\{\0,\1\}^V$, where $V$ is a ground set.
For any integer $1\leq \ell \leq \abs{V}$, let 
$
\binom{V}{\ell} \triangleq \{S \subseteq V \mid \abs{S} = \ell \}	
$
denote the collection of all subsets of size $\ell$.
The (heat-bath) uniform block dynamics for $\mu$ is defined as follows. 
\begin{definition}[Heat-bath uniform $\ell$-block dynamics]
\label{definition-block-dynamics}
For any positive integer $1\leq \ell \leq \abs{V}$, the (heat-bath) uniform $\ell$-block dynamics for $\mu$ is a Markov chain $(X_t)_{t \geq 0}$ over $\Omega(\mu)$.
The chain starts from an arbitrary configuration $X_0 \in \Omega$. In the $t$-th transition step, the chain evolves as follows:
\begin{itemize}
\item pick a set $S \in \binom{V}{\ell}$ uniformly at random, and set $X_t(V \setminus S ) = X_{t-1}(V \setminus S)$;
\item sample $X_t \sim \mu^{X_{t}(V \setminus S)}_S$, where $\mu^{X_{t}(V \setminus S)}_S$ denotes the marginal distribution on $S$ induced from $\mu$ conditional on the assignment of $X_{t}(V \setminus S)$.
\end{itemize}
In particular, the uniform $1$-block-dynamics is known as the \textit{Glauber dynamics} for $\mu$.
\end{definition}

The following proposition was known.
\begin{proposition}[\text{\cite{DGU14,levin2017markov, alev2020improved}}]
Let $\mu$ be a distribution over $\{\0,\1\}^V$. For any integer $1\leq \ell \leq |V|$,  the uniform $\ell$-block dynamics for $\mu$ is reversible with respect to $\mu$, and its transition matrix is positive semidefinite.
\end{proposition}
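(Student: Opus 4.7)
The plan is to prove reversibility via detailed balance and positive semidefiniteness by identifying the transition matrix as a convex combination of orthogonal projections on the inner product space $\tp{\mathbb{R}^{\Omega(\mu)}, \inner{\cdot}{\cdot}_\mu}$. First I would decompose
\[
P \;=\; \frac{1}{\binom{|V|}{\ell}} \sum_{S \in \binom{V}{\ell}} P_S, \qquad P_S(X,Y) \triangleq \mathbf{1}[X_{V\setminus S} = Y_{V\setminus S}] \cdot \mu^{X_{V\setminus S}}_S(Y_S),
\]
and reduce both claims to properties of the single-block kernel $P_S$, since convex combinations of reversible PSD operators are again reversible and PSD.

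For reversibility of $P_S$ with respect to $\mu$, I would expand the conditional marginal using $\mu^{\tau}_{S}(\sigma) = \mu(\sigma \cup \tau)/\mu_{V \setminus S}(\tau)$. When the indicator forces $X_{V\setminus S} = Y_{V\setminus S}$, the product $\mu(X) P_S(X,Y)$ collapses to $\mu(X)\mu(Y)/\mu_{V\setminus S}(X_{V\setminus S})$, which is manifestly symmetric in $(X,Y)$; when the indicator vanishes, both sides of the detailed balance equation are already zero. Summing over $S$ then transfers detailed balance to $P$.

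For positive semidefiniteness, I would recognize each $P_S$, viewed as a linear operator on $\tp{\mathbb{R}^{\Omega(\mu)}, \inner{\cdot}{\cdot}_\mu}$, as the conditional expectation operator $(P_S f)(X) = \E[\mu]{f \mid X_{V\setminus S}}$, i.e.~the $L^2(\mu)$-orthogonal projection onto the subspace of functions depending only on the coordinates in $V \setminus S$. Orthogonal projections are idempotent and self-adjoint with spectrum contained in $\{0,1\}$, so each $P_S$ is self-adjoint (hence reversible) and PSD; averaging over $S$ passes both properties to $P$.

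I do not foresee a substantive obstacle here; the only step requiring care is the identification of $P_S$ with the conditional expectation operator, which amounts to unfolding definitions together with the observation that $P_S(X,Y)$ is nonzero only when $X$ and $Y$ agree on $V \setminus S$. This is precisely what makes the block update preserve the $\sigma$-algebra generated by the boundary, and hence makes the associated operator an $L^2(\mu)$-projection rather than merely a stochastic matrix.
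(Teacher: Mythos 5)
Your proof is correct and complete: the decomposition $P = \binom{|V|}{\ell}^{-1}\sum_S P_S$, the detailed-balance check via $\mu(X)P_S(X,Y)=\mu(X)\mu(Y)/\mu_{V\setminus S}(X_{V\setminus S})$ on the diagonal indicator, and the identification of $P_S$ with the $L^2(\mu)$-conditional-expectation projection onto $\sigma(X_{V\setminus S})$-measurable functions are all sound, and self-adjointness plus positive semidefiniteness pass to the convex combination. The paper itself does not give a proof of this proposition (it cites \cite{DGU14,levin2017markov,alev2020improved}), and your argument is essentially the standard one underlying those references, in which each block update is an orthogonal projection in $(\mathbb{R}^{\Omega(\mu)},\inner{\cdot}{\cdot}_\mu)$ and the block dynamics is an average of such projections.
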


Let $\mu$ be a distribution over $\{\0,\1\}^V$ and $1\leq \ell \leq |V|$ an integer.
Let $P$ denote the transition matrix of the uniform $\ell$-block dynamics for $\mu$.
\begin{align*}
T_{\mathrm{mix}}(\epsilon) \leq 	\frac{1}{1-\lambda_2(P)}\log\tp{\frac{1}{\epsilon \mu_{\min}}}, \quad \text{where } \mu_{\min} = \min_{\sigma \in \Omega(\mu)}\mu(\sigma),
\end{align*}
where $\lambda_2(P)$ is the second largest eigenvalue of $P$.

For Glauber dynamics (uniform 1-block dynamics), its spectral gap can be further characterized by the \emph{approximate tensorization of variance}. 
For any variable $v \in V$, and any function $f \in \mathbb{R}^{\Omega(\mu)}$, define
\begin{align*}
\mu[\Var[v]{f}] \triangleq \sum_{\sigma \in \Omega(\mu_{V \setminus \{v\}})}\mu_{V \setminus \{v\}}(\sigma)\Var[\mu^\sigma]{f},
\end{align*}
where $\mu^\sigma$ is the distribution over $\{\0,\1\}^V$ obtained from $\mu$ conditional on the configuration on $V \setminus \{v\}$ being fixed as $\sigma$. 
We remark that $\mu[\Var[v]{f}]$ is also well-defined if $f \in \mathbb{R}^{S}$ for some $S \supseteq \Omega(\mu)$.
The approximate tensorization of variance is defined as follows.
\begin{definition}
  \label{definition-app-ten-var}
  Let $C > 0$ be a parameter. A distribution $\mu$ over $\{\0,\1\}^V$ with $\abs{\Omega(\mu)} \geq 2$ is said to satisfy the \emph{approximate tensorization of variance} with parameter $C$ if for all $f \in \mathbb{R}^{\Omega(\mu)}$,
  \begin{align*}
  \Var[\mu]{f} \leq C \sum_{v \in V} \mu[\Var[v]{f}].	
  \end{align*}
  \end{definition}
  
  \begin{lemma}[Fact A.3 \cite{chen2020optimal}]
  \label{lemma-variance-ten}
  A distribution $\mu$ over $\{\0,\1\}^V$ with $\abs{\Omega(\mu)}\geq 2$ satisfies the approximate tensorization of variance with parameter $C$ if and only if the spectral gap $\spgap{gap}{}(P) \geq \frac{1}{Cn}$, where $n = |V|$, and $P$ denotes the Glauber dynamics for $\mu$.
  \end{lemma}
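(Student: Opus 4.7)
The plan is to prove this equivalence by observing that the Dirichlet form of the Glauber dynamics decomposes exactly as the sum of local conditional variances, so the tensorization inequality is literally the same inequality as the Poincar\'e inequality for $P$ divided through by $n$.

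First I would compute the Dirichlet form $\+E_P(f,f)$ of the Glauber dynamics $P$. Writing $P = \tfrac{1}{n}\sum_{v\in V} P_v$, where $P_v$ is the transition matrix that resamples only the variable at $v$ according to the conditional distribution $\mu_v^{X(V\setminus\{v\})}$, linearity gives
\[
\+E_P(f,f) = \frac{1}{n}\sum_{v \in V} \+E_{P_v}(f,f).
\]
For a fixed $v$, a transition of $P_v$ from $\sigma$ to $\tau$ has $\sigma$ and $\tau$ agreeing on $V\setminus\{v\}$, and $\mu(\sigma)P_v(\sigma,\tau) = \mu_{V\setminus\{v\}}(\sigma_{V\setminus\{v\}})\,\mu_v^{\sigma_{V\setminus\{v\}}}(\sigma_v)\,\mu_v^{\sigma_{V\setminus\{v\}}}(\tau_v)$. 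Plugging this into the definition~\eqref{eq-def-dirichlet} and grouping by the boundary condition $\eta \in \Omega(\mu_{V\setminus\{v\}})$ yields
\[
\+E_{P_v}(f,f) = \sum_{\eta} \mu_{V\setminus\{v\}}(\eta) \cdot \Var_{\mu^\eta}[f] = \mu[\Var_v f],
\]
by comparison with the definition~\eqref{eq-def-Variance} of variance applied to the two-point (or one-point) distribution $\mu^\eta$ on the coordinate $v$. Therefore
\[
\+E_P(f,f) = \frac{1}{n}\sum_{v \in V} \mu[\Var_v f].
\]

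With this identity, the lemma is immediate from the variational characterization of the spectral gap~\eqref{eq-comp-gap-def}. Indeed, $\spgap{gap}{}(P) \geq \tfrac{1}{Cn}$ is equivalent to the Poincar\'e inequality $\+E_P(f,f) \geq \tfrac{1}{Cn}\Var_\mu[f]$ for every $f \in \mathbb{R}^{\Omega(\mu)}$ with $\Var_\mu[f]\neq 0$ (and trivially when $\Var_\mu[f]=0$). Substituting the identity above, this is the same as
\[
\Var_\mu[f] \leq C \sum_{v\in V} \mu[\Var_v f],
\]
which is exactly the approximate tensorization of variance with parameter $C$ from \Cref{definition-app-ten-var}. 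Both directions of the ``if and only if'' follow simultaneously.

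I do not expect any serious obstacle here: once the decomposition $\+E_{P_v}(f,f) = \mu[\Var_v f]$ is written out carefully (the only slightly fiddly step, amounting to matching the two-by-two index structure of the variance definition with that of the local transition), everything reduces to the standard Courant--Fischer characterization of $\spgap{gap}{}(P)$ already recorded in~\eqref{eq-comp-gap-def}. The one small care is the hypothesis $|\Omega(\mu)|\geq 2$, which ensures that there exists some non-constant $f$ and that the infimum characterization of the spectral gap is meaningful.
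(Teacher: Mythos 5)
Your proof is correct, and it is the standard argument: decompose $P = \tfrac{1}{n}\sum_v P_v$, identify each $\+E_{P_v}(f,f)$ with $\mu[\Var[v]{f}]$, and read off both directions of the equivalence from the variational characterization~\eqref{eq-comp-gap-def}. The paper itself does not reprove this statement but cites it as Fact~A.3 of \cite{chen2020optimal}, where essentially the same decomposition argument appears, so there is no substantive difference in approach.
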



\subsection{Multivariate hypergeometric distribution}
Let $V$ be a set of $n$ buckets, each of them has $k$ balls.
Suppose we pick $\ell$ balls from all $kn$ balls uniformly at random, without replacement.
For each bucket $v \in V$, let $a_v \in \mathbb{Z}_{\geq 0}$ denote the number of balls picked from the bucket $v$,
then $\*a = (a_v)_{v \in V}$ follows multivariate hypergeometric distribution.

Formally, given a set $V$ of size $n$, an integer $k \geq 1$ and an integer $0\leq \ell \leq kn$, the multivariate hypergeometric distribution $\HyperGeo$ is defined as follows.
The support of  $\HyperGeo$ is defined by
\begin{align}
\label{eq-support-hypergeo}
  \Omega(\HyperGeo) \triangleq \left\{\boldsymbol{a} = (a_v)_{v \in V} \mid  \sum_{v \in V} a_v = \ell \mbox{ and } \forall v \in V , a_v \in \mathbb{Z}_{\geq 0} \right\},
\end{align}
For any $\boldsymbol{a} \in \Omega(\HyperGeo)$, it holds that
\begin{align}
\label{eq-hypergeo}
  \HyperGeo (\boldsymbol{a}) = \frac{\prod_{v \in V} \binom{k}{a_v}}{\binom{kn}{\ell}}.
\end{align}

By the negative association property~\cite{joag-dev1983} of hypergeometric distribution, we have the following Chernoff-Hoeffding inequality.
\begin{lemma}[\cite{joag-dev1983} and \cite{dubhashi1998balls}]\label{lemma:hypergometric-concentration}
Let $\*a \sim  \HyperGeo$. 
For any $v \in V$ and $\epsilon \in (0,1)$, it holds that
\begin{align*}
  \Pr{\left\vert \frac{a_v}{k} - \frac{\ell}{kn} \right\vert \geq \epsilon} &\leq 2\exp\tp{-2\epsilon^2 k}.
\end{align*}
\end{lemma}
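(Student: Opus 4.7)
The plan is to reduce the statement to a classical Chernoff--Hoeffding bound for a univariate hypergeometric random variable via the negative association framework. First, I would observe that although $\*a=(a_v)_{v\in V}$ is multivariate, the marginal $a_v$ is itself a univariate hypergeometric: label the $k$ balls in bucket $v$ as $1,\ldots,k$, and let $X_i\in\{0,1\}$ be the indicator that ball $i$ is among the $\ell$ balls chosen from all $kn$ balls. Then
\[
a_v \;=\; \sum_{i=1}^{k} X_i,\qquad \E{X_i}=\frac{\ell}{kn},\qquad \E{a_v}=\frac{\ell}{n}.
\]

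Next I would invoke the negative association result of Joag-Dev and Proschan~\cite{joag-dev1983}, which states that the indicator vector of a uniformly random $\ell$-subset of a ground set of size $kn$ is negatively associated. In particular, the restriction $(X_1,\ldots,X_k)$ is negatively associated, as the ground set is partitioned and NA is preserved under taking subvectors.

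The main step is to apply the Hoeffding-type inequality for sums of bounded negatively associated random variables (see~\cite{dubhashi1998balls}): for $X_1,\ldots,X_k\in[0,1]$ negatively associated with mean $p=\ell/(kn)$,
\[
\Pr{\left|\sum_{i=1}^{k} X_i - kp\right|\ge \epsilon k} \;\le\; 2\exp\!\left(-2\epsilon^2 k\right),
\]
which, after dividing by $k$ inside the event, yields exactly
\[
\Pr{\left|\frac{a_v}{k}-\frac{\ell}{kn}\right|\ge \epsilon} \;\le\; 2\exp(-2\epsilon^2 k).
\]

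The only subtlety I anticipate is ensuring that the standard Hoeffding bound applies in its cleanest form (constant $2$ in the exponent) to negatively associated summands; this is precisely the content of the Chernoff--Hoeffding transfer principle under NA, and is a direct quotation from~\cite{dubhashi1998balls}. No separate argument is needed for the two tails, since the symmetric two-sided inequality already comes from applying the one-sided bound to $X_i$ and to $1-X_i$ (both sequences remain NA). Thus the whole proof is essentially two citations plus the reduction to a sum of indicators, with no non-trivial calculation.
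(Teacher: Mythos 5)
Your proposal is correct and follows essentially the same route as the paper: write $a_v$ as a sum of indicator variables $X_{(v,i)}$, invoke negative association of sampling-without-replacement indicators from Joag-Dev and Proschan, and apply the Chernoff--Hoeffding bound for negatively associated variables from Dubhashi--Ranjan. The extra remarks on means and on the two-tailed version are fine elaborations but do not change the argument.
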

\begin{proof}
For each bucket $v \in V$, we use $(v,1),(v,2),\ldots,(v,k)$ to denote all balls in bucket $v$.
For each ball $(v,i)$, we use random variable $X_{(v,i)} \in \{0,1\}$ to indicate whether the ball $(v,i)$ is picked.
It holds that 
$$a_v = \sum_{i \in [k]}X_{(v,i)}.$$
Since $(X_{(v,i)})_{i \in [k]}$ are negative associated~\cite[Lemma 2.11]{joag-dev1983}, the Chernoff-Hoeffding inequality \cite{dubhashi1998balls} can be applied to $a_v$.
\end{proof}

\section{Comparing Glauber Dynamics with Field Dynamics}\label{section-FD-comparison-lemma}
In this section, we prove the comparison lemma for the field dynamics (\Cref{lemma-compare-field-Glauber}).

Let $\mu$ be a distribution over $\{\0,\1\}^V$ and $\theta\in(0,1)$. 
Let $\pi=\Mag{\mu}{\theta}$, which is defined as in~\Cref{defintion-magnetizing}.
Note that $\mu$ and $\pi$ have the same support $\Omega\triangleq {\Omega(\pi)}={\Omega(\mu)}$ for $\theta >0$.
Formally, $\pi$ is  defined as:
\begin{align}
\label{eq-redef-pi}
\forall\sigma\in\{\0,\1\}^V,\quad 
\pi(\sigma) = \frac{\mu(\sigma)\theta^{\|\sigma\|_{+}} }{Z_\pi},
\quad\text{ where }Z_\pi \triangleq \sum_{\sigma \in \{\0,\1\}^V}\mu(\sigma) \theta^{\|\sigma\|_{+}}.
\end{align}

Moreover, we use $\OPr[R\subseteq V]{\cdot}$ to denote the law for subset $R\subseteq V$ that is randomly generated by including each $v\in V$ into $R$ independently with probability $1-\theta$. Specifically, for every $\Lambda\subseteq V$,
\begin{align}
\OPr[R\subseteq V]{R=\Lambda}=(1 - \theta)^{\abs{\Lambda}} \theta^{\abs{V} - \abs{\Lambda}}.\label{eq:definition-binomial-R}
\end{align}
%
%

Let $\Plim$ denote the field dynamics for distribution $\mu$ with parameter $\theta$.
The Dirichlet form of $\Plim$ can be calculated as follows.

\begin{lemma}\label{lemma:dirichlet-decomposition}
Let $\pi=\Mag{\mu}{\theta}$ be defined as in~\eqref{eq-redef-pi}. For all $f \in \mathbb{R}^{\Omega}$,
\begin{align}
\label{eq-dicr-EF}
\+E_{\Plim}(f,f) 
  &=   \frac{Z_\pi}{\theta^{\abs{V}}}\OEp[R \subseteq V]{\pi_R(\*1_R)\cdot\Var[\pi^{\*1_R}]{f}}.
\end{align}
where  $\*1_R\in\{\0,\1\}^R$ denotes the all-$(\1)$ configuration specified on $R\subseteq V$, 
and the expectation is calculated assuming the convention that $\pi_R(\*1_R)\Var[\pi^{\*1_R}]{f}=0$ when $\pi_R(\*1_R)=0$.\footnote{Equivalently, one may think this as an expectation $\OEp[R\subseteq V]{\mathcal{P}(R)}$ of a function $\mathcal{P}(R)$, where $\mathcal{P}(R)=0$ if $\pi_R(\*1_R)=0$ and $\mathcal{P}(R)=\pi_R(\*1_R)\Var[\pi^{\*1_R}]{f}$ if $\pi_R(\*1_R)>0$, in which case the variance $\Var[\pi^{\*1_R}]{f}$ is well-defined.}
\end{lemma}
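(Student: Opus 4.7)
The plan is to unfold the definition of the field dynamics into the definition of the Dirichlet form and perform a change of variables from the resampled block $S$ to its complement $R = V\setminus S$. By the construction of $\Plim$, a nonzero transition from $\sigma$ to $\tau$ corresponds to some $R\subseteq \sigma^{-1}(\1)$, where each $v\in\sigma^{-1}(\1)$ is independently placed in $R$ with probability $1-\theta$ and each $v\in\sigma^{-1}(\0)$ is placed in $R$ with probability $0$; conditioned on $R$, the new state satisfies $\tau_R=\*1_R$ and $\tau_{V\setminus R}$ is drawn from $\pi^{\*1_R}_{V\setminus R}$. Thus the probability of the transition $\sigma\to\tau$ factors as $(1-\theta)^{|R|}\theta^{|\sigma^{-1}(\1)|-|R|}\cdot \pi^{\*1_R}_{V\setminus R}(\tau_{V\setminus R})$, provided $\sigma_R=\tau_R=\*1_R$.

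Plugging this into $\+E_{\Plim}(f,f)=\frac{1}{2}\sum_{\sigma,\tau}\mu(\sigma)\Plim(\sigma,\tau)(f(\sigma)-f(\tau))^2$ and swapping the sums so that $R$ becomes the outer variable, one is left with summing, for each $R$, over configurations $\sigma,\tau\in\Omega$ that both coincide with $\*1_R$ on $R$. To convert $\mu$ into $\pi$ we invoke the identity $\mu(\sigma)\theta^{\|\sigma\|_{+}}=Z_\pi\pi(\sigma)$ from \eqref{eq-redef-pi}. Since $\|\sigma\|_{+}=|R|+\|\sigma_{V\setminus R}\|_{+}$ whenever $\sigma_R=\*1_R$, the factor $\theta^{|\sigma^{-1}(\1)|-|R|}=\theta^{\|\sigma_{V\setminus R}\|_{+}}$ supplied by the transition probability combines with $\mu(\sigma)$ to yield exactly $Z_\pi\theta^{-|R|}\pi(\sigma)$.

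After this substitution, the inner double sum becomes
\begin{align*}
Z_\pi\theta^{-|R|}\sum_{\substack{\sigma,\tau\in\Omega\\\sigma_R=\tau_R=\*1_R}}\pi(\sigma)\pi^{\*1_R}_{V\setminus R}(\tau_{V\setminus R})\bigl(f(\sigma)-f(\tau)\bigr)^2,
\end{align*}
which, after factoring $\pi(\sigma)=\pi_R(\*1_R)\pi^{\*1_R}_{V\setminus R}(\sigma_{V\setminus R})$, is precisely $2Z_\pi\theta^{-|R|}\pi_R(\*1_R)\Var[\pi^{\*1_R}]{f}$ by the second form of the variance in~\eqref{eq-def-Variance}. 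Finally, the prefactor $(1-\theta)^{|R|}\theta^{-|R|}$ in the outer sum is rewritten as $\theta^{-|V|}\cdot(1-\theta)^{|R|}\theta^{|V|-|R|}$, at which point the sum over $R$ is exactly the expectation $\OEp[R\subseteq V]{\,\cdot\,}$ under the product measure \eqref{eq:definition-binomial-R}; this gives \eqref{eq-dicr-EF}. The stated convention on infeasible $\*1_R$ is automatic, because if $\pi_R(\*1_R)=0$ then no $\sigma$ in the support of $\mu$ satisfies $\sigma_R=\*1_R$, so the contribution vanishes before any convention is invoked. The only step requiring care is the bookkeeping of $\theta$-powers together with the constraint $R\subseteq\sigma^{-1}(\1)$; no technical obstacle beyond this arises.
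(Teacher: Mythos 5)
Your proof is correct and follows essentially the same route as the paper's: expand the transition kernel over the retained set $R\subseteq\sigma^{-1}(\1)$, swap sums, use $\mu(\sigma)\theta^{\|\sigma\|_+}=Z_\pi\pi(\sigma)$ to pass from $\mu$ to $\pi$, factor $\pi(\sigma)=\pi_R(\*1_R)\pi^{\*1_R}(\sigma)$ to expose the variance, and then regroup $\theta$-powers to recognize the binomial measure on $R$. The bookkeeping of $\theta$-powers and the constraint $\sigma_R=\tau_R=\*1_R$, which you flag as the only delicate point, is handled correctly.
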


Intuitively, the  $R\subseteq V$ corresponds  to $R=V\setminus S$ where $S$ is the random set of resampled variables generated in the transition step of the field dynamics, such that every $v\in V$ with current value $\0$ is selected into $S$, and thus the configuration over $R=V\setminus S$ must be $\*1_R$.
And the quantity $\Var[\pi^{\mathbf{1}_R}]{f}$ arises because the current configuration over $S$ is resampled according to $\pi=\Mag{\mu}{\theta}$ conditional on the current configuration over $R=V\setminus S$, which is $\*1_R$.
The proof of \Cref{lemma:dirichlet-decomposition} is postponed to the end of this section.

Consider the Glauber dynamics for each $\pi^{\*1_R}$. 
%
%
By \Cref{lemma-variance-ten}, for all $R\subseteq V$ with $\pi_R(\*1_R)>0$,  the distribution $\pi^{\*1_R}$ satisfies the approximate tensorization of variance with parameter $\frac{1}{n\cdot\spgap{gap}{\GD}(\pi^{\*1_R})}$:
\[
\forall f\in\mathbb{R}^{\Omega},\quad
\Var[\pi^{\mathbf{1}_R}]{f}
\le
\frac{1}{n\cdot\spgap{gap}{\GD}(\pi^{\*1_R})}\sum_{v \in V} \pi^{\*1_R}[\Var[v]{f}]
\le
\frac{1}{n\cdot\spgap{\sgap}{\GD}(\pi)}\sum_{v \in V} \pi^{\*1_R}[\Var[v]{f}],
\]
where $\spgap{\sgap}{\GD}(\pi)\le \spgap{gap}{\GD}(\pi^{\sigma})$ for all feasible partial configuration $\sigma$, as defined in~\eqref{eq-def-worst-gap}.

Then by~\eqref{eq-dicr-EF}, the Dirichlet form $\+E_{\Plim}(f,f)$ is upper bounded as 
\begin{align}
\label{eq:FD-Dirichlet-upper-bound}
\+E_{\Plim}(f,f) \le 
\frac{Z_\pi}{n\cdot\spgap{\sgap}{\GD}(\pi)\cdot\theta^{\abs{V}}}
\sum_{v \in V}
\OEp[R \subseteq V]{\pi_R(\*1_R)\cdot\pi^{\*1_R}[\Var[v]{f}]}.
\end{align}

Let $I[\cdot]$ denote the indicator variable.
The following identity holds for all $f\in\mathbb{R}^{\Omega}$:
\begin{align}
\OEp[R \subseteq V]{\pi_R(\*1_R)\cdot\pi^{\*1_R}[\Var[v]{f}]}
&\overset{(\star)}{=}
\OEp[R \subseteq V]{\OEp[\sigma\sim\pi_{V\setminus\{v\}}]{I\left[R\subseteq\sigma^{-1}(\1)\right]\cdot\Var[\pi^{\sigma}]{f}}}\notag\\
&=
\OEp[\sigma\sim\pi_{V\setminus\{v\}}]{\OPr[R \subseteq V]{R\subseteq\sigma^{-1}(\1)}\cdot\Var[\pi^{\sigma}]{f}}\notag\\
&=
\OEp[\sigma\sim\pi_{V\setminus\{v\}}]{\theta^{|V|-\|\sigma\|_{+}}\cdot\Var[\pi^{\sigma}]{f}},\label{eq:variance-expectation-R-identity}
\end{align}
where the nontrivial equation $(\star)$ holds by verifying for every choice of $v\in V$ and $R\subseteq V$ as follows: 
\begin{itemize}
\item For the case that $\pi_R(\*1_R)>0$ and $v\not\in R$, it holds that
\[
\pi_R(\*1_R)=\OPr[\sigma\sim\pi]{R\subseteq\sigma^{-1}(\1)}=\OPr[\sigma\sim\pi_{V\setminus\{v\}}]{R\subseteq\sigma^{-1}(\1)},
\]
and the variance $\pi^{\*1_R}[\Var[v]{f}]$ is well-defined, such that
\[
\pi^{\*1_R}[\Var[v]{f}]=\OEp[\sigma\sim\pi_{V\setminus\{v\}}^{\*1_R}]{\Var[\pi^{\sigma}]{f}}=\OEp[\sigma\sim\pi_{V\setminus\{v\}}]{\Var[\pi^{\sigma}]{f}\mid R\subseteq\sigma^{-1}(\1)}.
\]
Therefore, 
\begin{align*}
\pi_R(\*1_R)\cdot\pi^{\*1_R}[\Var[v]{f}]
&=
\OPr[\sigma\sim\pi_{V\setminus\{v\}}]{R\subseteq\sigma^{-1}(\1)}
\cdot
\OEp[\sigma\sim\pi_{V\setminus\{v\}}]{\Var[\pi^{\sigma}]{f}\mid R\subseteq\sigma^{-1}(\1)}\\
&=
\OEp[\sigma\sim\pi_{V\setminus\{v\}}]{I\left[R\subseteq\sigma^{-1}(\1)\right]\cdot \Var[\pi^{\sigma}]{f}}.
\end{align*}
\item For the case that $\pi_R(\*1_R)=0$ or $v\in R$, both sides are 0. 
On the left-hand-side, if $\pi_R(\*1_R)=0$, then by convention 
\[
\pi_R(\*1_R)\cdot\Var[\pi^{\*1_R}]{f}=0;
\] 
or else, if $\pi_R(\*1_R)>0$ but $v\in R$, then the variance $\pi^{\*1_R}[\Var[v]{f}]$ is well-defined, but for $\sigma\sim\pi_{V\setminus\{v\}}^{\*1_R}$, 
the $\*1_R\uplus\sigma$ gives a configuration fully specified on $V$ and hence the variance becomes trivial, i.e. 
\[
\pi^{\*1_R}[\Var[v]{f}]=\OEp[\sigma\sim\pi_{V\setminus\{v\}}^{\*1_R}]{\Var[\pi^{\*1_R\uplus\sigma}]{f}}=0.
\]
On the right-hand-side, if $\pi_R(\*1_R)=0$ or $v\in R$, then for $\sigma\sim\pi_{V\setminus\{v\}}$, the event $R\subseteq\sigma^{-1}(\1)$ can never occur, and hence
\[
\OEp[\sigma\sim\pi_{V\setminus\{v\}}]{I\left[R\subseteq\sigma^{-1}(\1)\right]\cdot \Var[\pi^{\sigma}]{f}}=0.
\]
\end{itemize}
This gives the equation $(\star)$ in~\eqref{eq:variance-expectation-R-identity}.
Meanwhile, the other two equations in~\eqref{eq:variance-expectation-R-identity} follows respectively from  linearity of expectation and the fact that $\Pr[R \subseteq V]{R\subseteq \Lambda}=\theta^{|V|-|\Lambda|}$ for all $\Lambda\subseteq V$.

Furthermore, it can be verified that
\begin{align}
  \OEp[\sigma \sim \pi_{V \setminus\{v\}}]{\frac{1}{\theta^{\norm{\sigma}_+}}\Var[\pi^\sigma]{f}}
  &= \sum_{\sigma \in \Omega(\pi_{V\setminus \{v\}})} \frac{1}{\theta^{\norm{\sigma}_+}}\pi_{V\setminus\{v\}}(\sigma)  \pi^\sigma_v(-1)  \pi^\sigma_v(+1)  (f(\sigma_+) - f(\sigma_{-}))^2 \notag\\
  &= \frac{1}{Z_\pi} \sum_{\sigma \in \Omega(\pi_{V\setminus \{v\}})} \mu_{V\setminus\{v\}}(\sigma) \mu^\sigma_v(-1)  \pi^\sigma_v(+1)   (f(\sigma_+) - f(\sigma_{-}))^2 \notag\\
  &\leq \frac{1}{Z_\pi} \sum_{\sigma \in \Omega(\pi_{V\setminus \{v\}})} \mu_{V\setminus\{v\}}(\sigma) \mu^\sigma_v(-1)  \mu^\sigma_v(+1)  (f(\sigma_+) - f(\sigma_{-}))^2 \notag\\
  &= \frac{1}{Z_\pi} \mu[\Var[v]{f}], \label{eq:tensorization-change-base}
\end{align}
where  $\sigma_{\pm}\in\{\0,\1\}^V$ denote the configurations on $V$ where $\sigma_{\pm}(V\setminus\{v\})=\sigma$ and $\sigma_{\pm}(v)=\pm1$,
 the second equation is due to the chain rule and the relation between $\pi$ and $\mu$ in~\eqref{eq-redef-pi},
and the inequality is due to the relaxation 
\begin{align}
\pi_{v}^{\sigma}(\1)\le\mu_v^{\sigma}(\1),\label{eq:margin-monotone-pi-mu}
\end{align}
which holds because $\pi=\Mag{\mu}{\theta}$ is obtained by biasing every variable with a local field $\theta\in(0,1)$.
Indeed, 
\begin{align*}
\pi^\sigma_v(\1) = \frac{\pi(\sigma_{+})}{\pi(\sigma_{-}) + \pi(\sigma_{+}) }	= \frac{\theta \mu(\sigma_{+})}{\mu(\sigma_{-}) + \theta\mu(\sigma_{+}) } \leq \frac{\mu(\sigma_{+})}{\mu(\sigma_{-}) + \mu(\sigma_{+}) } = \mu^\sigma_v(\1),
\end{align*}
where the inequality holds for $ \theta \in(0, 1)$.

Combining~\eqref{eq:FD-Dirichlet-upper-bound}, \eqref{eq:variance-expectation-R-identity}, and~\eqref{eq:tensorization-change-base}, we have the following upper bound on the Dirichlet form:
\[
\+E_{\Plim}(f,f) \le 
\frac{1}{n\cdot\spgap{\sgap}{\GD}(\Mag{\mu}{\theta})}
\sum_{v \in V}
\mu[\Var[v]{f}].
\]
Due to the Poincar\'e's inequality~\eqref{eq:Poincare-inequality} for the field dynamics $\Plim$, for all $f\in\mathbb{R}^{\Omega}$,
\[
\Var[\mu]{f}
\le \frac{1}{\spgap{gap}{\FD}\left(\mu,\theta\right)}\+E_{\Plim}(f,f) 
\le \frac{1}{n\cdot \spgap{gap}{\FD}\left(\mu,\theta\right)\cdot\spgap{\sgap}{\GD}\left(\Mag{\mu}{\theta}\right)}
\sum_{v \in V}
\mu[\Var[v]{f}].
\] 
This shows that $\mu$ satisfies the approximate tensorization of variance with parameter 
\[
\frac{1}{n\cdot \spgap{gap}{\FD}\left(\mu,\theta\right)\cdot\spgap{\sgap}{\GD}\left(\Mag{\mu}{\theta}\right)}.
\]
By \Cref{lemma-variance-ten}, it gives us the following lower bound on the spectral gap of the Glauber dynamics for $\mu$
\[
\spgap{gap}{\GD}\left(\mu\right)\ge \spgap{gap}{\FD}\left(\mu,\theta\right)\cdot\spgap{\sgap}{\GD}\left(\Mag{\mu}{\theta}\right),
\]
which proves the comparison lemma for the field dynamics (\Cref{lemma-compare-field-Glauber}).

\begin{remark}[\textbf{Tightness of \Cref{lemma-compare-field-Glauber}}]
Apart from the Poincar\'e's inequalities, 
our proof of \Cref{lemma-compare-field-Glauber} is tight almost everywhere.
The only exception is the relaxation of the marginal probability in~\eqref{eq:margin-monotone-pi-mu}.
\end{remark}

\begin{proof}[Proof of \Cref{lemma:dirichlet-decomposition}]
In each transition step of the field dynamics, a random subset $S \subseteq V$ of variables is generated for resampling.
Let $R = V \setminus S$. 
The Dirichlet form can be calculated as
\begin{align*}
 \+E_{\Plim}(f,f) &= \frac{1}{2} \sum_{\sigma,\tau \in \Omega(\mu)} \mu(\sigma) \Plim(\sigma,\tau) (f(\sigma)-f(\tau))^2\\
&= \frac{1}{2} \sum_{\sigma,\tau \in \Omega(\mu)} \sum_{R \subseteq \sigma^{-1}(\1)} (1-\theta)^{|R|}\theta^{\|\sigma\|_{+}-|R|} \pi^{\mathbf{1}_R}(\tau) \mu(\sigma) (f(\sigma)-f(\tau))^2,
\end{align*}
where the second equation is due to the following observations: 
\begin{itemize}
\item
$R$ can only be a subset of $\sigma^{-1}(\1)$ since all variables $v\in V$ with $\sigma_v=\0$ are selected into $S=V\setminus R$; 
\item
each $v\in\sigma^{-1}(\1)$ is independently selected into $R$ with probability $1-\theta$;
\item
given an $R\subseteq \sigma^{-1}(\1)$, the current configuration $\sigma$ transits to $\tau$ with probability $\pi^{\sigma_R}(\tau)=\pi^{\*1_R}(\tau)$.
\end{itemize}
Hence, fix any $R\subseteq V$, we only need to consider the pair $\sigma,\tau \in \Omega(\mu)$ such that $\sigma_R = \tau_R = \mathbf{1}_R$. We have 
\begin{align*}
 \+E_{\Plim}(f,f) = \frac{1}{2} \sum_{R \subseteq V: \mathbf{1}_R \in \Omega(\mu_R)}\tp{\frac{1-\theta}{\theta}}^{|R|} \sum_{\sigma, \tau \in \Omega(\mu^{\mathbf{1}_R})}\theta^{\|\sigma\|_{+}} \pi^{\mathbf{1}_R}(\tau) \mu(\sigma) (f(\sigma)-f(\tau))^2.
\end{align*}
%
By~\eqref{eq-redef-pi}, it holds that $\pi(\sigma)Z_{\pi} = \mu(\sigma)\theta^{\|\sigma\|_{+}}$. We have
\begin{align*}
 \+E_{\Plim}(f,f) &= \frac{Z_\pi}{2} \sum_{R \subseteq V: \mathbf{1}_R \in \Omega(\pi_R)}\tp{\frac{1-\theta}{\theta}}^{|R|} \sum_{\sigma, \tau \in \Omega(\pi^{\mathbf{1}_R})}\pi^{\mathbf{1}_R}(\tau) \pi(\sigma) (f(\sigma)-f(\tau))^2\\
&=	\frac{Z_\pi}{2} \sum_{R \subseteq V: \mathbf{1}_R \in \Omega(\pi_R)}\tp{\frac{1-\theta}{\theta}}^{|R|} \pi_R(\mathbf{1}_R) \sum_{\sigma, \tau \in \Omega(\pi^{\mathbf{1}_R})}\pi^{\mathbf{1}_R}(\tau) \pi^{\mathbf{1}_R}(\sigma) (f(\sigma)-f(\tau))^2\\
 & = \frac{Z_\pi}{\theta^{\abs{V}}} \sum_{R \subseteq V: \mathbf{1}_R \in \Omega(\pi_R)}\tp{1-\theta}^{|R|}\theta^{|V|-|R|} \pi_R(\mathbf{1}_R) \Var[\pi^{\mathbf{1}_R}]{f}\\
  &=   \frac{Z_\pi}{\theta^{\abs{V}}}\OEp[R \subseteq V]{\pi_R(\*1_R)\cdot\Var[\pi^{\*1_R}]{f}},
\end{align*}
where the second equation is due to the fact that $\sigma_R = \mathbf{1}_R$ for $\sigma \in \Omega(\pi^{\mathbf{1}_R})$ and the last equation adopts the convention that $\pi_R(\mathbf{1}_R) \Var[\pi^{\mathbf{1}_R}]{f}=0$ when $\pi_R(\mathbf{1}_R)=0$.
\end{proof}

\section{Mixing of Field Dynamics}
\label{section-anslysis-EF-dynamics}

In this section, and the next two sections, 
prove the mixing lemma for the field dynamics (\Cref{lemma-field-dynamics-mixing}).
%
%
In \Cref{section-EF-stationary}, we prove \Cref{theorem-field-dynamics-basic} for the convergence and reversibility of the field dynamics.
In \Cref{section-ex-field-block-dynamics}, we start our proof of \Cref{lemma:block-dynamics-spectral-gap-implies-limit-chain-spectral-gap}, which states that the field dynamics is the limiting instance for the uniform block dynamics on $k$-transformed distributions.
 
\subsection{Convergence and reversibility (proof of \Cref{theorem-field-dynamics-basic})}
\label{section-EF-stationary}
Fix a joint distribution $\mu$ over $\{\0,\1\}^V$.
%
%
For any $0 < \theta < 1$, let $\Plim$ denote the transition matrix of the field dynamics for $\mu$
with parameter $\theta$.

First, we verify that $\Plim$ is irreducible and aperiodic. 
Let $\pi = \mu^{(\theta)}$, which is as defined in \Cref{defintion-magnetizing}.
Note that $\pi$ and $\mu$ have the same support since $\theta > 0$, i.e.
\[
\Omega \triangleq \Omega(\pi) = \Omega(\mu).
\]
In each transition step, the chain constructs a random subset $S \subseteq V$ of variables that are going to be resampled.
It holds that $S=V$ with positive probability for $\theta>0$, in which case the current configuration $\sigma$ is entirely resampled according to $\pi=\Mag{\mu}{\theta}$.
This means $\Plim(\sigma,\tau) > 0$ for all pairs $\sigma,\tau \in \Omega$ of feasible configurations.
Hence, the Markov chain is  irreducible and aperiodic.

Next, we prove the reversibility of the chain with respect to $\mu$ by verifying the detailed balance equation:
\begin{align}
\label{eq:field-dynamics-DBE}
\forall \sigma,\tau \in \Omega, \quad  \mu(\sigma)\Plim(\sigma,\tau) = \mu(\tau)\Plim(\tau,\sigma).
\end{align}


Suppose that the current configuration is $\sigma \in \{\0,\1\}^V$. 
By the transition rule of the field dynamics, a random $S\subseteq V$ is constructed so that each $v\in V$ with $\sigma_v = \1$ is added to $S$ with probability $\theta$ and each $v\in V$ with  $\sigma_v = \0$ is added to $S$ with probability $1$.
Let $R = V \setminus S$ denote the complement of the resampled set. 
It must hold that
$R \subseteq \sigma^{-1}(\1)$ since every variable $v\in V$ with $\sigma_v=\0$ is selected into $S$. 
To successfully transform from $\sigma$ to $\tau$, in addition, it should be satisfied that $R \subseteq \tau^{-1}(\1)$ and the resampling step generates the configuration $\tau_S$ on $S$.
Therefore, for any $\sigma,\tau \in \Omega$,
\begin{align}
\mu(\sigma)\Plim(\sigma,\tau)	
&= \mu(\sigma) \sum_{R \subseteq \sigma^{-1}(\1) \cap \tau^{-1}(\1)} \tp{1-\theta}^{|R|} \theta^{\|\sigma\|_+-|R|} \pi^{\mathbf{1}_R}_S(\tau_S)\label{eq-2.8-0}\\
&= \mu(\sigma) \cdot \theta^{\|\sigma\|_+} \sum_{R \subseteq \sigma^{-1}(\1) \cap \tau^{-1}(\1)} \tp{\frac{1-\theta}{\theta}}^{|R|}  \pi^{\mathbf{1}_R}(\tau)
&&(\text{since } \tau_R = \mathbf{1}_R)\notag\\
&=\mu(\sigma) \cdot \theta^{\|\sigma\|_+} \sum_{R \subseteq \sigma^{-1}(\1) \cap \tau^{-1}(\1)} \tp{\frac{1-\theta}{\theta}}^{|R|}  \frac{\pi(\tau)}{\pi_R(\mathbf{1}_R)}.
&&(\text{since } \tau_R = \mathbf{1}_R)\label{eq-2.8-1}
\end{align}
Note $\sigma \in \Omega$ is feasible with respect to both $\mu$ and $\pi$.
And $\pi_R(\mathbf{1}_R)>0$ since $\sigma_R = \mathbf{1}_R$. Thus, the conditional probability $\pi^{\mathbf{1}_R}$ in \eqref{eq-2.8-0} and the ratio in~\eqref{eq-2.8-1} are well-defined.
Recall that $\pi = \Mag{\mu}{\theta}$. By \Cref{defintion-magnetizing}, 
\begin{align*}
\forall \tau \in \{\0,\1\}^V, 	\quad \pi(\tau) = \frac{\mu(\tau)\cdot\theta^{\|\tau\|_+}}{Z(\mu,\theta)},
\end{align*}
where $Z(\mu,\theta) \triangleq \sum_{\tau \in \{\0,\1\}^V}\mu(\tau)\theta^{\|\tau\|_+}$ denotes the normalizing factor that depends only on $\mu$ and $\theta$. We have
\begin{align*}
\mu(\sigma)\Plim(\sigma,\tau) = \tp{\mu(\sigma)\mu(\tau)\cdot\theta^{\|\sigma\|_+ +\|\tau\|_+} }\sum_{R \subseteq \sigma^{-1}(\1) \cap \tau^{-1}(\1)} \tp{\frac{1-\theta}{\theta}}^{|R|}  \frac{1}{Z(\mu,\theta) \cdot \pi_R(\mathbf{1}_R)},
\end{align*}
which is symmetric in $\sigma$ and $\tau$. Therefore, the detailed balanced equation~\eqref{eq:field-dynamics-DBE} holds. This concludes the proof of \Cref{theorem-field-dynamics-basic}.

\subsection{Field dynamics as the $k$-transformed block dynamics (proof of \Cref{lemma:block-dynamics-spectral-gap-implies-limit-chain-spectral-gap})}
\label{section-ex-field-block-dynamics}
Let $\mu$ be a distribution over $\{\0,\1\}^V$ and $k \in \mathbb{N}^+$. 
Let $\mu_k = \Rd(\mu,k)$ denote the $k$-transformation of $\mu$ defined in \Cref{definition-k-transformation}.

We use $V_k$ to denote the ground set for $\mu_k$, that is,
\begin{align}
V_k \triangleq V \times [k].\label{eq:definition-V-k}
\end{align}
For each $v \in V$ and $i \in [k]$, for convenience, we denote 
\begin{align*}
v_i \triangleq (v,i) \in V_k.	
\end{align*}
Finally, for each $v \in V$, we denote
\begin{align}
C_v \triangleq \{v_i \mid i \in [k]\}.	\label{eq:definition-C-v}
\end{align}

Recall that 
$P_{k,\ell}$ denotes the uniform $\ell$-block dynamics for $\mu_k=\Rd(\mu,k)$.
To relate the field dynamics $\Plim$ with  $P_{k,\ell}$, we consider the following natural projection operation which maps a configuration in the $k$-transformed distribution $\mu_k$ back into $\{\0,\1\}^V$.

\begin{definition} [projection of configuration] 
Let $\mu$ be a distribution over $\{\0,\1\}^V$ and $k\geq 1$ an integer.
Let $\mu_k = \Rd(\mu,k)$ denote the $k$-transformation of $\mu$.
For any $\sigma \in \Omega(\mu_k)$, the projection $\p:\sigma$ of $\sigma$ is a configuration in $\{\0,\1\}^V$ such that for any $v \in V$
\begin{align}
\label{eq-def-sigma-star}
\p:\sigma_v \triangleq \begin{cases}
 \1 &\text{if } \exists 1 \leq i \leq k \text{ s.t. } \sigma_{v_i} = \1\\
 \0 &\text{otherwise}.	
 \end{cases}
 \end{align}
\end{definition}

The projection operation defined as above naturally transforms the uniform $\ell$-block dynamics for the $k$-transformed distribution $\mu_k$ to a new stochastic process, called \emph{projected block dynamics}, defined on the original space $\{\0,\1\}^V$.

\begin{definition}[projected block dynamics] \label{definition:projected-block-dynamics-intuitive}
Let $\mu$ be a joint distribution over $\{\0,\1\}^V$, where $n = |V|$.
Let $k \geq 1$ be an integer and $\mu_k = \Rd(\mu, k)$.
Let $1 \leq \ell \leq kn$ be an integer and $\tp{X_t}_{t\geq 0}$ the uniform $\ell$-block dynamics for $\mu_k$.
The \emph{$(k, \ell)$-projected-block dynamics} $\Pproj[k, \ell]$ is a stochastic process defined as $\tp{\p:{X_t}}_{t\geq 0}$.
\end{definition}

Not surprisingly, the stochastic process $\Pproj[k, \ell]$ is a well-defined reversible Markov chain on space $\Omega(\mu)$ with stationary distribution $\mu$.
%

\begin{proposition}
\label{lemma-proj-block-Markov-chain}
Let $\mu$ be a distribution over $\{\0,\1\}^V$, where $n = |V|$. For all integers $k \geq 1$ and $1 \leq \ell \leq kn$, the \emph{$(k, \ell)$-projected-block dynamics} $\Pproj[k, \ell]$ is a  Markov chain on  $\Omega(\mu)$ that is reversible with respect to~$\mu$.
\end{proposition}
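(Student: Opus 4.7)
The plan is to verify Dynkin's lumpability criterion for the block dynamics $P_{k,\ell}$ on $\mu_k$ with respect to the partition induced by the projection $\sigma\mapsto\p:\sigma$, and then to derive the detailed balance for the lumped chain directly from that of $P_{k,\ell}$. The key structural observation is a symmetry: for each $v\in V$, the coordinates in $C_v=\{v_1,\ldots,v_k\}$ are interchangeable under $\mu_k$, because the construction $\mu_k=\Rd(\mu,k)$ places the unique $\1$-spin in a uniformly random position of $C_v$ whenever $X_v=\1$. Letting $G\triangleq\prod_{v\in V}\mathsf{Sym}(C_v)$ act on $V_k$ coordinatewise, one verifies that $\mu_k$ is $G$-invariant, that $\p:$ is constant on $G$-orbits, and conversely that any two configurations in $\Omega(\mu_k)$ with the same projection are related by some $g\in G$. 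Hence the partition induced by $\p:$ coincides exactly with the partition into $G$-orbits.

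Next I would observe that $P_{k,\ell}$ commutes with the $G$-action, i.e.\ $P_{k,\ell}(g\sigma,g\eta)=P_{k,\ell}(\sigma,\eta)$ for all $g\in G$. This follows because both the uniform choice of an $\ell$-subset $S\subseteq V_k$ and the resampling step according to $\mu_k$ conditional on the configuration outside $S$ are $G$-equivariant, using the $G$-invariance of $\mu_k$. Consequently, for any $\sigma,\sigma'\in\Omega(\mu_k)$ with $\p:\sigma=\p:{\sigma'}$, picking $g\in G$ with $g\sigma=\sigma'$ and noting that $g$ bijects the fiber of any $\tau'\in\Omega(\mu)$ to itself yields
\[
\sum_{\eta:\p:\eta=\tau'}P_{k,\ell}(\sigma,\eta)=\sum_{\eta:\p:\eta=\tau'}P_{k,\ell}(\sigma',\eta).
\]
This is Dynkin's lumpability criterion, so the quantity $\Pproj[k,\ell](\tau,\tau')\triangleq\sum_{\eta:\p:\eta=\tau'}P_{k,\ell}(\sigma,\eta)$ is well-defined independently of the choice of representative $\sigma$ in the fiber of $\tau$, and $(\p:{X_t})_{t\ge 0}$ is therefore a Markov chain on $\Omega(\mu)$.

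For reversibility, the relevant quantitative fact is that by the construction of the $k$-transformation, each fiber $\{\sigma\in\Omega(\mu_k):\p:\sigma=\tau\}$ has cardinality $k^{\|\tau\|_+}$ and every member carries equal mass $\mu_k(\sigma)=\mu(\tau)/k^{\|\tau\|_+}$. Summing the detailed balance equation $\mu_k(\sigma)P_{k,\ell}(\sigma,\sigma')=\mu_k(\sigma')P_{k,\ell}(\sigma',\sigma)$ over $\sigma$ in the fiber of $\tau$ and $\sigma'$ in the fiber of $\tau'$, and recognizing both sides as $\mu(\tau)\Pproj[k,\ell](\tau,\tau')$ and $\mu(\tau')\Pproj[k,\ell](\tau',\tau)$ respectively, immediately yields the projected detailed balance. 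There is no substantive obstacle: the only point requiring care is the well-definedness of $\Pproj[k,\ell]$ under the choice of representative, which is exactly what the $G$-symmetry guarantees.
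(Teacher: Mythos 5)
Your proof is correct, and it takes a genuinely different route from the paper's. The paper proves this proposition by first introducing an explicit intermediate Markov chain $\+M$ (built from a multivariate hypergeometric sampling step and a block resampling step with carefully constructed local fields), then establishing in \Cref{lemma-M-and-proj} and \Cref{lemma:correctness-of-projected-distribution} -- via a rather lengthy marginal computation -- that $\+M$ coincides with the projected dynamics $\Pproj[k,\ell]$. The Markovian property then falls out as a byproduct of the identity \eqref{eq-M-and-proj}, and reversibility is checked by writing out $\mu(\*X)\+M(\*X,\*Y)$ explicitly and verifying symmetry in $\*X,\*Y$. You instead appeal to the symmetry group $G=\prod_{v\in V}\mathsf{Sym}(C_v)$: $\mu_k$ is $G$-invariant, the fibers of the projection $\p:\cdot$ are exactly the $G$-orbits (since $\Omega(\mu_k)$ allows at most one $\1$ per $C_v$), the uniform $\ell$-block dynamics commutes with the $G$-action, and therefore Dynkin's lumpability criterion holds. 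Reversibility then follows by summing the detailed balance equation for $P_{k,\ell}$ over the two fibers and using $\sum_{\sigma:\p:\sigma=\tau}\mu_k(\sigma)=\mu(\tau)$, together with the lumpability to pull the transition sum out as a constant. Your argument is shorter, structurally transparent, and independent of the specific algebraic form of $\mu_k$; it establishes well-definedness and reversibility without ever writing down the transition kernel. What the paper's heavier approach buys is the explicit closed form of the transition probability (the hypergeometric/local-field representation), which is then reused in the proof of \Cref{lemma:limit-of-projected-chain-is-limit-chain}. So your argument is a valid and arguably cleaner proof of the proposition as stated, but the paper's detour through $\+M$ is doing double duty for the downstream limit argument.
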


We use $\spgap{gap}{}(\Pproj[k, \ell])$ to denote the spectral gap of the reversible chain $\Pproj[k, \ell]$.
We first observe that the projection in~\eqref{eq-def-sigma-star} applied on the uniform block dynamics $P_{k,\ell}$ does not decrease the spectral gap. 

\begin{lemma} \label{lemma:projected-block-dynamics-vs-block-dynamics}
  For all integers $k \geq 1$ and $1 \leq \ell \leq kn$,  
\[
\spgap{gap}{}\tp{\Pproj[k,\ell]} \geq \spgap{gap}{}\tp{P_{k,\ell}}.
\]
\end{lemma}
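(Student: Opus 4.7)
My plan is to exploit the variational characterization $\spgap{gap}{}(P)=\inf \{\+E_P(f,f)/\Var[\mu]{f} : \Var[\mu]{f} \neq 0\}$ together with a pullback of test functions through the projection map $\sigma \mapsto \sigma^\star$. Concretely, given any $g \in \mathbb{R}^{\Omega(\mu)}$ with $\Var[\mu]{g} > 0$, I would define its lift $\tilde g \in \mathbb{R}^{\Omega(\mu_k)}$ by $\tilde g(\sigma) := g(\sigma^\star)$. The entire comparison then reduces to establishing two identities: (i) $\Var[\mu_k]{\tilde g} = \Var[\mu]{g}$, and (ii) $\+E_{P_{k,\ell}}(\tilde g, \tilde g) = \+E_{\Pproj[k,\ell]}(g, g)$. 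Granting both, one obtains
\[
\+E_{\Pproj[k,\ell]}(g, g) = \+E_{P_{k,\ell}}(\tilde g, \tilde g) \geq \spgap{gap}{}(P_{k,\ell})\cdot \Var[\mu_k]{\tilde g} = \spgap{gap}{}(P_{k,\ell})\cdot \Var[\mu]{g},
\]
and taking the infimum over such $g$ yields the claim.

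Identity (i) is immediate from \Cref{definition-k-transformation}: if $Y \sim \mu_k$ then by construction $Y^\star \sim \mu$, so $\tilde g(Y) = g(Y^\star)$ has the same distribution under $\mu_k$ as $g(X)$ does under $\mu$, and in particular the same variance. This step involves no real work beyond unfolding the definition of the $k$-transformation.

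Identity (ii) is where the real content lies. Expanding the Dirichlet form via \eqref{eq-def-dirichlet} and partitioning pairs $(\sigma,\tau) \in \Omega(\mu_k)^2$ according to their projections $(\sigma^\star,\tau^\star) \in \Omega(\mu)^2$, the identity reduces to the pushforward equation
\[
\mu(a) \cdot \Pproj[k,\ell](a,b) \;=\; \sum_{\sigma \in \Omega(\mu_k) \,:\, \sigma^\star = a}\; \sum_{\tau \in \Omega(\mu_k) \,:\, \tau^\star = b} \mu_k(\sigma)\, P_{k,\ell}(\sigma,\tau) \qquad \text{for all } a,b \in \Omega(\mu).
\]
This is essentially the lumpability content asserted in \Cref{lemma-proj-block-Markov-chain}: the joint law of $(X_0, X_1)$ under $P_{k,\ell}$ started from its stationary distribution $\mu_k$ pushes forward under $\sigma \mapsto \sigma^\star$ to the joint law of $(X_0, X_1)$ under $\Pproj[k,\ell]$ started from $\mu$. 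Structurally, this pushforward identity is a consequence of the invariance of both $\mu_k$ and $P_{k,\ell}$ under arbitrary permutations of the indices within each cluster $C_v = \{v_1,\ldots,v_k\}$.

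The main obstacle I anticipate is the bookkeeping behind identity (ii), namely unwinding the block-dynamics transitions on $\mu_k$ and tracking how they interact with the fiber structure of the projection $\sigma \mapsto \sigma^\star$. Once this compatibility — which essentially records that the $k$-transformation is equivariant with respect to the uniform block dynamics — is in place, the spectral-gap comparison is a one-line consequence of the variational formula \eqref{eq-comp-gap-def} and requires no further quantitative estimate.
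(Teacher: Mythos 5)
Your approach is essentially the same as the paper's. The paper proves \Cref{lemma:projected-block-dynamics-vs-block-dynamics} by way of \Cref{lemma:dirichlet}, which lifts a test function $f$ to $f'(\xi)=f(\p:\xi)$ and shows $\Var[\mu_k]{f'}=\Var[\mu]{f}$ and $\+E_{P_{k,\ell}}(f',f')=\+E_{\Pproj[k,\ell]}(f,f)$, then concludes via the variational characterization \eqref{eq-comp-gap-def} — precisely your identities (i) and (ii) with $\tilde g=f'$. The pushforward equation you reduce (ii) to is exactly the combination of the two ingredients the paper uses: $\mu(a)=\sum_{\sigma:\sigma^\star=a}\mu_k(\sigma)$ (immediate from \Cref{definition-k-transformation}) and the lumpability $\sum_{\tau:\tau^\star=b}P_{k,\ell}(\sigma,\tau)=\Pproj[k,\ell](\sigma^\star,b)$ established in the proof of \Cref{lemma-proj-block-Markov-chain} (as \eqref{eq-M-and-proj}). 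Your symmetry heuristic — invariance of $\mu_k$ and $P_{k,\ell}$ under permutations within each cluster $C_v$ — is a valid high-level justification for the lumpability, though the paper verifies it by direct computation rather than by appealing to the group action. No gap.
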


Note that the projected bock dynamics $\Pproj[k, \ell]$ has the same state space $\Omega(\mu)$ as the field dynamics $\Plim$.
Therefore, we can compare their transition matrices entry-wisely.
The following is a key lemma which states that the  projected bock dynamics $\Pproj[k, \ell]$ with $\ell=\lceil \theta k n\rceil$ gives an entry-wise approximation of the field dynamics $\Plim$ as $k$ grows to infinity.

\begin{lemma} \label{lemma:limit-of-projected-chain-is-limit-chain}
  The following holds for all $0 < \theta < 1$.
  For any $\varepsilon > 0$, there is a finite $K = K(\mu,\theta,\epsilon) \geq 1$ such that for all $k \geq K$, and all $\*X, \*Y \in \Omega(\mu)$,
  \begin{align*}
    \left\vert \Plim(\*X, \*Y) - \Pproj[k, \lceil \theta k n\rceil](\*X, \*Y) \right\vert &\leq \varepsilon.
  \end{align*}
\end{lemma}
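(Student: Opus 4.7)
The plan is to write down both transition probabilities explicitly and compare them term by term, with the hypergeometric concentration lemma controlling the approximation error. Using the calculation from \Cref{section-EF-stationary}, the field-dynamics transition expands as
\[
\Plim(X, Y) = \sum_{R \subseteq X^{-1}(\1) \cap Y^{-1}(\1)} (1-\theta)^{|R|}\, \theta^{\|X\|_+ - |R|}\, \pi^{\*1_R}_{V \setminus R}(Y_{V \setminus R}),
\]
where $\pi = \Mag{\mu}{\theta}$. For the projected block dynamics I would unpack one step as follows: first lift $X$ to a random $\tilde X \in \Omega(\mu_k)$ with $\p:\tilde X = X$ by independently drawing a uniform position $i^*_v \in [k]$ for each $v \in X^{-1}(\1)$ and setting $\tilde X_{v_{i^*_v}} = \1$; then sample a uniform block $S \in \binom{V_k}{\lceil \theta k n \rceil}$; then resample $\tilde Y$ on $S$ from $\mu_k$ conditioned on the boundary $\tilde X_{V_k \setminus S}$; and finally output $Y = \p:\tilde Y$.

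Conditioning on the cluster sizes $a_v \triangleq |S \cap C_v|$, which jointly follow $\HyperGeo[\lceil \theta k n \rceil]$, two essentially independent ingredients emerge. First, the pinned-to-$\1$ set $P = \{v \in X^{-1}(\1) : v_{i^*_v} \notin S\}$ contains each $v \in X^{-1}(\1)$ independently with probability $(k - a_v)/k$, and every $v \in P$ is forced to $Y_v = \1$. Second, combining the identity $\mu_k(\tilde Y) = \mu(\p:\tilde Y)/k^{\|\p:\tilde Y\|_+}$ with the counting that each free vertex $v$ with $Y_v = \1$ admits exactly $a_v$ choices of position in $C_v \cap S$ for its unique $\1$, the conditional distribution of the projection satisfies
\[
\Pr{Y_{V \setminus P} = y \mid X, (a_v), P} \propto \mu(\*1_P \cup y) \prod_{v \in V \setminus P :\, y_v = \1} \frac{a_v}{k},
\]
restricted to $y$ with $y_v = \0$ for every $v \in Q \triangleq \{v \in X^{-1}(\0) : a_v = 0\}$.

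Next I would introduce the good event $G_k = \{\max_{v \in V} |a_v/k - \theta| \leq \epsilon'\}$ for a parameter $\epsilon' = \epsilon'(\epsilon, \theta, n)$ to be chosen. By \Cref{lemma:hypergometric-concentration} and a union bound, $\Pr{G_k^c} \leq 2 n \exp(-(\epsilon')^2 k/2)$ once $k$ is large enough that $\lceil \theta k n \rceil/(k n)$ is within $\epsilon'/2$ of $\theta$. On $G_k$ the set $Q$ is empty and each of $(k - a_v)/k$, $a_v/k$ lies within $\epsilon'$ of $1 - \theta$, $\theta$, respectively. Consequently the pinning probability differs from $(1-\theta)^{|P|}\, \theta^{\|X\|_+ - |P|}$ by a multiplicative factor $(1 + O_\theta(\epsilon'))^{\|X\|_+}$, and after normalization the conditional on $Y_{V \setminus P}$ differs from $\pi^{\*1_P}_{V \setminus P}(Y_{V \setminus P})$ by a multiplicative factor $(1 + O_\theta(\epsilon'))^n$. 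Summing over $P \subseteq X^{-1}(\1) \cap Y^{-1}(\1)$ and bounding the $G_k^c$ contribution by $\Pr{G_k^c}$ yields
\[
\bigl|\Pproj[k, \lceil \theta k n \rceil](X, Y) - \Plim(X, Y)\bigr| \leq \Pr{G_k^c} + \bigl((1 + O_\theta(\epsilon'))^{2n} - 1\bigr),
\]
and choosing $\epsilon'$ sufficiently small in terms of $\epsilon, \theta, n$ followed by $k \geq K(\mu, \theta, \epsilon)$ makes each term at most $\epsilon/2$.

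The hard part will be the uniform bookkeeping of multiplicative errors across the $n$ coordinates: each approximation $a_v/k \approx \theta$ contributes a $1 + O_\theta(\epsilon')$ factor at $n$ places in both the pinning probability and in the weighted conditional on $Y_{V \setminus P}$, and these factors must remain under control after dividing by the normalizing constants, uniformly in $X, Y \in \Omega(\mu)$. A secondary subtlety is that the pinned-to-$\0$ set $Q$ can be nonempty when $G_k$ fails, slightly distorting the transition from the field-dynamics picture, but this contributes only $O(\Pr{G_k^c})$ to the transition probability difference and is thus absorbed into the error bound.
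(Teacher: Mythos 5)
Your proposal is correct and its overall skeleton matches the paper's: expand $\Plim(X,Y)$ as a sum over pinned sets $R\subseteq X^{-1}(\1)\cap Y^{-1}(\1)$, write $\Pproj[k,\lceil\theta kn\rceil](X,Y)$ as an expectation over the multivariate hypergeometric $(a_v)_v$ of the analogous sum with $\theta$ replaced coordinate-wise by $a_v/k$, and invoke \Cref{lemma:hypergometric-concentration} plus a union bound to force $a_v/k\to\theta$ uniformly in $v$. The place where you diverge is in how "closeness of the two sums" is controlled. You propose an explicit multiplicative-error bookkeeping, tracking $(1+O_\theta(\epsilon'))$ factors through the pinning probability, the unnormalized weights of the resampled configuration, and the normalizing constant, and you correctly flag that propagating these factors uniformly through the normalization is the delicate part. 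The paper sidesteps all of that with one structural observation: both $\Plim(X,Y)$ and each term of the hypergeometric expectation are values of a single function $f_{\mu,X,Y}(\*w)$, which (because the denominator is a strictly positive sum when $\*w\in(0,1)^V$ and $Y\in\Omega(\mu)$) is a rational function continuous on $(0,1)^V$; $\Plim(X,Y)=f(\theta\*1)$ and $\Pproj=\mathbf{E}_{\*a}[f(\*a/k)]$, so one only needs continuity at $\theta\*1$ plus concentration, and no quantitative error estimate at all. Your argument would need the extra work of verifying that the normalizing constants are uniformly bounded away from zero (they are, since $\mu(Y)>0$ contributes and $\*w\in(\theta-\epsilon',\theta+\epsilon')^V$ stays in a compact subset of $(0,1)^V$), which is precisely the content of the continuity-of-rational-functions shortcut. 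Both routes reach the lemma; the paper's is cleaner because it replaces a quantitative uniform bound by a purely topological one, which is all the $\limsup$ in \Cref{lemma:block-dynamics-spectral-gap-implies-limit-chain-spectral-gap} requires.
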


We are now ready to prove
\Cref{lemma:block-dynamics-spectral-gap-implies-limit-chain-spectral-gap}
by assuming \Cref{lemma-proj-block-Markov-chain}, \Cref{lemma:projected-block-dynamics-vs-block-dynamics} and \Cref{lemma:limit-of-projected-chain-is-limit-chain},
whose proofs are postponed to \Cref{section-analysis-projected-block-dynamics}.


\begin{proof}[Proof of \Cref{lemma:block-dynamics-spectral-gap-implies-limit-chain-spectral-gap}]
By \Cref{theorem-field-dynamics-basic}, the field dynamics $\Plim$ is reversible with respect to distribution $\mu$.
Fix an integer $k \geq 1$. Let $\ell =\ell(k) \triangleq \lceil \theta k n \rceil$.
By \Cref{lemma-proj-block-Markov-chain}, the projected block dynamics $\Pproj[k,\ell]$ is also reversible with respect to distribution $\mu$.
Let $\Omega = \Omega(\mu)$ denote the support of $\mu$.
By the Courant-Fischer theorem~\cite[Lemma 13.7]{levin2017markov},
we have
  \begin{align*}
    \spgap{gap}{}\tp{\Plim} &= \inf_{\substack{f \in \mathbb{R}^\Omega\\ \norm{f}_{2,\mu} = 1, f \perp_{\mu} \boldsymbol{1}}}\inner{f}{\tp{I - \Plim}f}_{\mu}\\
    &= \inf_{\substack{f \in \mathbb{R}^\Omega\\ \norm{f}_{2,\mu} = 1 , f \perp_{\mu} \boldsymbol{1}}}\left(\inner{f}{\tp{I - \Pproj[k,\ell]} f}_{\mu} - \inner{f}{\tp{\Plim - \Pproj[k,\ell]}f}_{\mu}\right).
  \end{align*}
Since  $\mu$, $\theta$ and $k$ are all fixed, and $f$ is a bounded function due to $\norm{f}_{2,\mu} = 1$, both $\inner{f}{\tp{I - \Pproj[k, \ell]}f}_{\mu}$ and $\inner{f}{\tp{\Plim - \Pproj[k, \ell]}f}_{\mu}$ are bounded. Therefore, the above quantity can be bounded from below as:  
\begin{align*}
\spgap{gap}{}\tp{\Plim} 
&= \inf_{\substack{f \in \mathbb{R}^\Omega\\ \norm{f}_{2,\mu} = 1 , f \perp_{\mu} \boldsymbol{1}}}\left(\inner{f}{\tp{I - \Pproj[k,\ell]} f}_{\mu} - \inner{f}{\tp{\Plim - \Pproj[k,\ell]}f}_{\mu}\right)\\
&\ge \inf_{\substack{f \in \mathbb{R}^{\Omega} \\ \norm{f}_{2,\mu} = 1,f \perp_{\mu} \boldsymbol{1}}} \inner{f}{\tp{I - \Pproj[k, \ell]}f}_{\mu} - \sup_{\substack{f \in \mathbb{R}^\Omega \\ \norm{f}_{2,\mu} = 1}} \inner{f}{\tp{\Plim - \Pproj[k, \ell]}f}_{\mu}\\
    &= \spgap{gap}{}\tp{\Pproj[k, \ell]} - \lambda_{\max}\tp{\Plim - \Pproj[k, \ell]}.
\end{align*}
The last equation holds because $\Plim - \Pproj[k, \ell]$ is a self-adjoint operator for inner-product $\inner{\cdot}{\cdot}_\mu$.
  
  By \Cref{lemma:limit-of-projected-chain-is-limit-chain}, we know that for any $\epsilon > 0$, there exists $K \geq 1$ such that for all $k \geq K$, 
  \begin{align*}
  \forall \*X, \*Y \in \Omega(\mu), \quad 
    \left\vert \Plim(\*X, \*Y) - \Pproj[k, \ell](\*X, \*Y) \right\vert &\leq \frac{\varepsilon}{\left\vert \Omega(\mu) \right\vert},
  \end{align*}
  where $ \ell = \ell(k) \triangleq \lceil \theta k n \rceil$.
  Therefore,
  \begin{align*}
    \lambda_{\max}\tp{\Plim - \Pproj[k, \ell]}
    &\leq \max_{\*X \in \Omega(\mu)} \sum_{\*Y \in \Omega(\mu)} \left\vert \Plim(\*X, \*Y) - \Pproj[k, \ell](\*X, \*Y) \right\vert \leq \varepsilon.
  \end{align*}
  Hence, for all $k \geq K$ and $\ell = \lceil \theta k n \rceil$, 
  \begin{align*}
    \spgap{gap}{}\tp{\Plim} &\geq \spgap{gap}{}\tp{\Pproj[k, \ell]} - \varepsilon.
  \end{align*}
  Applying~\Cref{lemma:projected-block-dynamics-vs-block-dynamics}, we have that for any $\epsilon > 0$, there exists $K\ge1$ such that for all $k \geq K$  and $\ell = \lceil \theta k n \rceil$, 
  \begin{align*}
    \spgap{gap}{}\tp{\Plim} &\geq \spgap{gap}{}\tp{P_{k, \ell}} - \varepsilon.
  \end{align*}
  Since $0 \leq \spgap{gap}{}\tp{P_{k, \ell}} \leq 2$ for all $k \geq 1$, $\limsup_{k \to \infty} \spgap{gap}{}\tp{P_{k, \ell}}$ exists. We have
  \begin{align*}
    \spgap{gap}{}\tp{\Plim} &\geq \limsup_{k \to \infty} \spgap{gap}{}\tp{P_{k, \ell}}. \qedhere
  \end{align*}	
\end{proof}

\section{Approximation of Field Dynamics}
\label{section-analysis-projected-block-dynamics}
This section is dedicated to the analyses of the projected block dynamics $\Pproj[k, \ell]$, which is an approximation of the field dynamics. 
In  \Cref{section-proof-proj-block-Markov-chain}, 
we prove \Cref{lemma-proj-block-Markov-chain} for the well-defined-ness and  reversibility of this projected chain;
then in \Cref{section-proof-projected-block-dynamics-vs-block-dynamics},
we prove \Cref{lemma:projected-block-dynamics-vs-block-dynamics} which relates its spectral gap to the block dynamics $P_{k,\ell}$ for the $k$-transformed distribution $\mu_k$;
and finally in \Cref{section-limit-of-projected-chain-is-limit-chain}, we prove \Cref{lemma:limit-of-projected-chain-is-limit-chain} for its entry-wise approximation of the field dynamics $\Plim$.
Altogether, they imply \Cref{lemma:block-dynamics-spectral-gap-implies-limit-chain-spectral-gap}.

\subsection{Well-definedness of projected dynamics (proof of \Cref{lemma-proj-block-Markov-chain})}
\label{section-proof-proj-block-Markov-chain}
We show that the projected block dynamics in \Cref{definition:projected-block-dynamics-intuitive} is precisely the following Markov chain $\+M$.
The chain $\+M$ starts from an arbitrary $\*X \in \Omega(\mu)$. 
In each transition step, the current configuration $\*X \in \Omega(\mu)$ is updated  as:
\begin{itemize}
\item Sample $\*a = (a_v)_{v \in V}$ according to the multivariate hypergeometric distribution $\HyperGeo$, and let $\*b = \frac{\*a}{k}$.
\item Construct a random $S \subseteq V$ by independently selecting each $v \in V$ into $S$ with probability
\begin{align*}
q_v = \begin{cases}
 1 &\text{if } X_v = \0,\\
 b_v &\text{if }	X_v = \1.
 \end{cases}
 \end{align*}
\item Replace $X_S$ by a random partial configuration sampled according to $\mu_S^{(\*b, S),X_{V \setminus S}}$, where $\mu_S^{(\*b, S),X_{V \setminus S}}$ denotes the marginal distribution on $S$ induced from $\mu^{(\*\phi)}$ conditional on $X_{V \setminus S}$, where $\*\phi \in \mathbb{R}_{\geq 0}^V$ is defined as
\begin{align}
\label{eq-def-M-field}
\forall v \in V,\quad \phi_v = \begin{cases}
b_v &\text{if } v \in S,\\
 1 & \text{if } v \notin S.
 \end{cases}
\end{align}
\end{itemize}

Note that there may exist a subset $H \subseteq V$ such that $\phi_H = \*0_H$, where $\*0_H$ denotes the all-zero vector on $H$. And by definition of $\Mag{\mu}{\*\phi}$, this is equivalent to conditioning on the configuration on $H$ being fixed as $(-\*1)_H$, which is the all-$(\0)$ configuration on $H$.
For all such $v \in H$, it must hold that $v \in S$ and $b_v = 0$, which means $X_v = \0$. 
Since $\*X \in \Omega(\mu)$, we have $\mu_H((-\*1)_H) > 0$.
Hence, $\mu^{(\*\phi)}$ is well-defined.
It is straightforward to verify $\*X$ is feasible with respect to $\mu^{(\*\phi)}$, thus $\mu_S^{(\*b, S),X_{V \setminus S}}$ is also well-defined.

\begin{lemma}
\label{lemma-M-and-proj}
The Markov chain $\+M$ is precisely the projected block dynamics $\Pproj[k, \ell]$.	
Formally, for all configurations $\*X,\*Y \in \Omega(\mu)$, we have the following identity:
\begin{align}
\label{eq-M-and-proj}
\forall \sigma \in \Omega(\mu_k) \text{ that } \p:\sigma = \*X, \quad \+M(X,Y) = \sum_{\tau \in \Omega(\mu_k): \p:\tau = \*Y}P_{k,\ell}(\sigma,\tau). 
\end{align}
\end{lemma}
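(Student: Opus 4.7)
My plan is to compute the right-hand side of the claimed identity starting from any fixed $\sigma$ with $\p:\sigma = \*X$, and match it term-by-term with the three-step recipe defining $\+M$. The key symmetry input is that $\mu_k$ is invariant under any permutation of $V_k$ that acts independently on each block $C_v$ (leaving the projection $v_i \mapsto v$ intact); such a permutation is an automorphism of $\mu_k$ and hence of $P_{k,\ell}$. Since $\p:\sigma$ is invariant under such permutations and any two $\sigma_1,\sigma_2$ with $\p:\sigma_1 = \p:\sigma_2$ are related by one, the sum $\sum_{\tau : \p:\tau = \*Y} P_{k,\ell}(\sigma,\tau)$ depends only on $\p:\sigma$. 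This both yields \Cref{lemma-proj-block-Markov-chain} and lets me pick any convenient representative $\sigma$.

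Next, I fix such a $\sigma$ and, for each $v$ with $X_v = \1$, let $i^*(v) \in [k]$ denote the unique index with $\sigma_{v_{i^*(v)}} = \1$. I decompose a step of $P_{k,\ell}$ by first conditioning on the chosen $T \in \binom{V_k}{\ell}$ and then sampling $\tau_T$ from $\mu_k^{\sigma_{V_k \setminus T}}$. Setting $a_v = \abs{T \cap C_v}$, the vector $\*a$ is distributed as $\HyperGeo$. Define $S \subseteq V$ by: $v \in S$ always if $X_v = \0$, and $v \in S$ iff $v_{i^*(v)} \in T$ if $X_v = \1$. Since $T \cap C_v$ is uniform on $\binom{C_v}{a_v}$ independently across blocks given $\*a$, the conditional probability that $v_{i^*(v)} \in T$ equals $a_v/k$ for each $v$ with $X_v = \1$, so the joint law of $(\*a, S)$ is exactly the one prescribed by steps 1--2 of $\+M$ with $\*b = \*a/k$.

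It then remains to compute the conditional distribution of $\*Y = \p:\tau$ given $(\*a, S)$ and match it with the marginal $\mu_S^{(\*b,S),X_{V \setminus S}}$ in step 3 of $\+M$. Using the identity $\mu_k(\beta) = \mu(\p:\beta)\, k^{-\norm{\p:\beta}_+}$ for valid $\beta$, I count the number of valid $\alpha \in \{\0,\1\}^T$ producing a given projection $\*Y$: for each $v$ with $X_v = \1$ and $v \notin S$, the frozen part of $C_v$ already contains the $\1$, forcing $Y_v = \1$ and contributing a factor $1$; for each $v \in S$ with $Y_v = \1$, the $\1$-position within $C_v \cap T$ can be chosen in $a_v$ ways; for each $v \in S$ with $Y_v = \0$, there is exactly one choice. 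Combined with the $k^{-\norm{\*Y}_+}$ factor, this yields
\begin{align*}
  \Pr{\p:\tau = \*Y \mid \*a, S} \;\propto\; \mu(\*Y) \prod_{v \in S,\, Y_v = \1} \frac{a_v}{k} \cdot I\bigl[Y_{V \setminus S} = X_{V \setminus S}\bigr],
\end{align*}
which, after normalization, is exactly $\mu_S^{(\*b,S),X_{V \setminus S}}(\*Y_S)$ per \Cref{defintion-magnetizing}. Assembling the three steps gives the claimed identity.

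The main obstacle is the bookkeeping in this final combinatorial step: the case analysis must cleanly separate the three types of vertices (frozen-$\1$ outside $S$, resampled-$\1$ inside $S$, and originally-$\0$), and one must verify that the normalizing constant collapses to the natural normalizer of the magnetized marginal independently of the specific $T$ (only via $\*a$ and $S$). A minor subtlety is the edge case $a_v = 0$ with $X_v = \0$: here $v \in S$ vacuously but no resampling actually happens in $C_v$; this is consistent with the convention $\phi_v = 0$ in $\+M$, which forces $Y_v = \0$.
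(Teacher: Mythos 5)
Your proposal is correct and follows essentially the same route as the paper's proof: you decompose a block-dynamics step into the multivariate hypergeometric draw of block counts $\*a$, the derived resample set $S$ (via whether the frozen $\1$-position of each occupied block lands in the sampled set), and then a direct projection count showing the conditional law of $\*Y$ is $\mu_S^{(\*b,S),X_{V\setminus S}}$; the paper packages that last combinatorial count as a standalone lemma (its \Cref{lemma:correctness-of-projected-distribution}) and phrases $S$ via its complement $R$, but the content is identical. Your extra block-permutation symmetry argument for well-definedness is valid but not strictly needed, since the identity itself is established for every $\sigma$ with $\p:\sigma = \*X$, which already confirms the Markovian property exactly as the paper observes.
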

The identity~\eqref{eq-M-and-proj} automatically confirms the Markovian property of $\Pproj[k, \ell]$, because it confirms that in the pre-projection chain $P_{k,\ell}$, the transition probability 
$\sum_{\tau \in \Omega(\mu_k): \p:\tau = \*Y}P_{k,\ell}(\sigma,\tau)$ from any $\sigma \in \Omega(\mu_k)$ to the class of configurations $\tau \in \Omega(\mu_k)$ projected to the same $\p:\tau = \*Y$, is constant for all $\sigma \in \Omega(\mu_k)$ that $\p:\sigma = \*X$.

Assuming that \Cref{lemma-M-and-proj} holds, we can prove \Cref{lemma-proj-block-Markov-chain}.  

\begin{proof}[Proof of \Cref{lemma-proj-block-Markov-chain} assuming \Cref{lemma-M-and-proj}]
It suffices to show that the Markov chain $\+M$ is reversible with respect to $\mu$. 
Fix two feasible configurations $\*X,\*Y \in \Omega(\mu)$. Recall that $X^{-1}(\1)\triangleq\{v \in V \mid X_v = \1\}$ denotes the pre-image of $\1$ under $\*X$.
In each transition step, the chain $\+M$ first generates a vector $\*b$ with probability $\HyperGeo(k \*b)$, and  then samples a random subset $S \subseteq V$.
Consider $R = V \setminus S$.
To transform from $\*X$ to $\*Y$, it must hold that $R \subseteq X^{-1}(\1) \cap Y^{-1}(\1)$.
We denote $\overline{R} \triangleq V \setminus R=S$.
The following equation holds:
\begin{align}
\label{eq-transition-M}
\+M(\*X,\*Y) &= \sum_{\*b:k\*b \in \Omega(\HyperGeo)}\HyperGeo(k\*b) \sum_{ R \subseteq X^{-1}(\1) \cap Y^{-1}(\1) } \tp{\prod_{v \in X^{-1}(\1) \setminus R}b_v\prod_{v \in R}(1-b_v)}\mu^{(\*b,\overline{R} ),\*1_R}_{\overline{R}}(Y_{\overline{R}})\notag\\
&=\sum_{\*b:k\*b \in \Omega(\HyperGeo)}\HyperGeo(k\*b) \sum_{R \subseteq X^{-1}(\1) \cap Y^{-1}(\1) } \tp{\prod_{v \in X^{-1}(\1) \setminus R}b_v\prod_{v \in R}(1-b_v)}\mu^{(\*b,\overline{R} ),\*1_R}(\*Y),
\end{align}
where the last equation holds because $Y_R = \*1_R$.  Also due to $Y_R = \*1_R$,  we have
\begin{align*}
\mu^{(\*b,\overline{R} ),\*1_R}(\*Y) = \frac{\mu^{(\*b,\overline{R})}(\*Y) }{ \mu^{(\*b,\overline{R})}_{R}(\*1_R)}	.
\end{align*}
By definition of $\mu^{(\*b,\overline{R})}=\Mag{\mu}{\*\phi}$ where $\*\phi\in\mathbb{R}_{\ge0}^V$ is as defined in~\eqref{eq-def-M-field}, we have
\begin{align*}
\mu^{(\*b,\overline{R})}(\*Y) = \frac{1}{Z}\mu(\*Y)\prod_{v \in Y^{-1}(\1) \setminus R}b_v,	
\end{align*}
where $Z = Z(\mu,\*b,R) \triangleq \sum_{\sigma \in \Omega(\mu)}\mu(\sigma)\prod_{v \in \sigma^{-1}(\1) \setminus R}b_v  $. 
Hence, $\mu(\*X)\+M(\*X,\*Y)$ can be expressed as
\begin{align*}
\frac{\mu(\*X)\mu(\*Y)}{Z}\sum_{\*b:k\*b \in \Omega(\HyperGeo)}\HyperGeo(k\*b) \sum_{R \subseteq X^{-1}(\1) \cap Y^{-1}(\1) } \tp{\frac{1}{\mu^{(\*b,\overline{R})}_{R}(\*1_R)} \prod_{v \in X^{-1}(\1) \setminus R}b_v\prod_{v \in Y^{-1}(\1) \setminus R}b_v\prod_{v \in R}(1-b_v)},
\end{align*}
which is symmetric in $\*X$ and $\*Y$. Therefore, the detailed balance equation is satisfied:
\begin{align*}
\mu(\*X)\+M(\*X,\*Y)= \mu(\*Y)\+M(\*Y,\*X).
\end{align*}
The chain $\+M$ is reversible with respect to $\mu$.
\end{proof}

%
%
Recall that  we use $V_k \triangleq V \times [k]	$
to denote the ground set of $\mu_k$. For each $v \in V$ and $i \in [k]$, we denote $v_i \triangleq (v,i) \in V_k$ and $C_v \triangleq \{v_i \mid i \in [k]\}$.

Let  $\Lambda \subseteq V_k$ and  $\rho \in \Omega(\mu_{k,\Lambda})$.
Define
\begin{align*}
F(\rho) \triangleq \{v \in V \mid \exists i \in [k] \text{ s.t. } v_i \in \Lambda \text{ and } \rho_{v_i} = \1 \}.	
\end{align*}
Define the local fields $\* \phi_\rho$ specified by $\rho \in \Omega(\mu_{k,\Lambda})$ as
\begin{align*}
\forall v \in V,\quad \phi_\rho(v) \triangleq \begin{cases}
\frac{\abs{C_v \setminus \Lambda}}{k}, 	&\text{if } v \notin F(\rho)\\
1 &\text{if } v \in F(\rho).
 \end{cases}
\end{align*}
We need the following lemma to prove \Cref{lemma-M-and-proj}.

\begin{lemma} \label{lemma:correctness-of-projected-distribution}
	For all $\Lambda \subseteq V_k$, $\rho \in \Omega(\mu_{k,\Lambda})$, and $\xi \in \Omega(\mu)$, the distribution $\mu^{(\*\phi_\rho),\*1_{F(\rho)}}$ is well-defined and
\begin{align*}
\mu^{(\*\phi_\rho),\*1_{F(\rho)}}(\xi) = \sum_{\substack{\tau \in \Omega(\mu_k):\\ \p:\tau = \xi}}\mu^\rho_k(\tau),
\end{align*}
where $\mu^{(\*\phi_\rho),\*1_{F(\rho)}}$ is induced from $\mu^{(\*\phi_\rho)}$ conditional on $\*1_{F(\rho)}$, the all-$(\1)$ configuration on $F(\rho)$.
\end{lemma}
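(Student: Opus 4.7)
The plan is a direct enumeration: for each $\xi \in \Omega(\mu)$ I will list the $\tau \in \Omega(\mu_k)$ with $\tau_\Lambda = \rho$ and $\p:\tau = \xi$, evaluate $\mu_k(\tau)$ explicitly, and match the resulting expression against $\mu^{(\*\phi_\rho), \*1_{F(\rho)}}(\xi)$. The starting point is the formula $\mu_k(\tau) = \mu(\p:\tau)\cdot k^{-|(\p:\tau)^{-1}(\1)|}$, which is immediate from \Cref{definition-k-transformation}: once $\p:\tau$ is fixed, each coordinate $v$ with $\p:\tau_v = \1$ independently chooses its unique ``$\1$''-index uniformly from $[k]$.

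The main step is a case analysis on $v \in V$. If $v \in F(\rho)$, then some $\rho_{v_j} = \1$ forces $\p:\tau_v = \1$, so $\xi_v = \1$ is necessary and the location of the $\1$ in $\tau_{C_v}$ is pinned to $j$. If $v \notin F(\rho)$ and $\xi_v = \0$, then $\tau_{C_v}$ must be the all-$\0$ block, which is consistent with $\rho$ since every $v_i \in \Lambda$ has $\rho_{v_i} = \0$ in this case. If $v \notin F(\rho)$ and $\xi_v = \1$, the $\1$ in $\tau_{C_v}$ must be placed at some index whose coordinate lies outside $\Lambda$, yielding exactly $|C_v\setminus\Lambda|$ choices. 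Multiplying the number of valid extensions by $\mu_k(\tau)$ and recalling that $\phi_\rho(v) = 1$ for $v \in F(\rho)$ while $\phi_\rho(v) = |C_v\setminus\Lambda|/k$ otherwise, I expect to obtain, for every $\xi$ with $F(\rho)\subseteq \xi^{-1}(\1)$,
\begin{align*}
\sum_{\substack{\tau \in \Omega(\mu_k) \\ \tau_\Lambda = \rho,\ \p:\tau = \xi}} \mu_k(\tau)
\;=\; \mu(\xi)\, k^{-|F(\rho)|} \prod_{v \in \xi^{-1}(\1)} \phi_\rho(v),
\end{align*}
while the sum is empty whenever $F(\rho) \not\subseteq \xi^{-1}(\1)$.

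Summing the last display over all $\xi$ gives $\mu_k(\rho) = k^{-|F(\rho)|}\sum_{\xi':\ \xi'_{F(\rho)} = \*1_{F(\rho)}} \mu(\xi')\prod_{v\in (\xi')^{-1}(\1)}\phi_\rho(v)$, so dividing by $\mu_k(\rho)$ cancels the factor $k^{-|F(\rho)|}$ and yields precisely the expression for $\mu^{(\*\phi_\rho), \*1_{F(\rho)}}(\xi)$ read off from \Cref{defintion-magnetizing}. For well-definedness, the feasibility of $\rho \in \Omega(\mu_{k,\Lambda})$ produces at least one $\tau \in \Omega(\mu_k)$ with $\tau_\Lambda = \rho$; its projection $\xi = \p:\tau$ lies in $\Omega(\mu)$, satisfies $\xi_{F(\rho)} = \*1_{F(\rho)}$, and makes $\prod_{v \in \xi^{-1}(\1)} \phi_\rho(v) > 0$ (factors with $v \in F(\rho)$ equal $1$, and each remaining factor is at least $1/k$ since the existence of $\tau$ forces $|C_v\setminus\Lambda|\ge 1$ whenever $\xi_v = \1$ and $v \notin F(\rho)$), so the normalizer of $\mu^{(\*\phi_\rho),\*1_{F(\rho)}}$ is strictly positive. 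The only delicate point is the case analysis in the second paragraph; beyond that the proof is pure bookkeeping, and I do not anticipate any real obstacle.
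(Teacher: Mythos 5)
Your proposal is correct and follows essentially the same route as the paper: both enumerate the $\tau$ consistent with $\rho$ projecting to $\xi$ via a per-vertex case analysis (pinned index for $v\in F(\rho)$, $|C_v\setminus\Lambda|$ choices for $v\in\xi^{-1}(\1)\setminus F(\rho)$), obtain $\mu(\xi)\,k^{-|F(\rho)|}\prod_v\phi_\rho(v)$ for the unnormalized mass, and divide by $\mu_{k,\Lambda}(\rho)$ to recognize $\mu^{(\*\phi_\rho),\*1_{F(\rho)}}(\xi)$. (Minor notational slip: you write $\mu_k(\rho)$ where you mean the marginal $\mu_{k,\Lambda}(\rho)=\Pr[\*Y_\Lambda=\rho]$.)
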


We first use \Cref{lemma:correctness-of-projected-distribution} to prove \Cref{lemma-M-and-proj}, and then prove \Cref{lemma:correctness-of-projected-distribution}.

\begin{proof}[Proof of \Cref{lemma-M-and-proj}]
It suffices to verify the identity in~\eqref{eq-M-and-proj}.

Consider the transition step $\sigma \to \tau$ in the block dynamics.
The block dynamics pick a subset $S \in \binom{V_k}{\ell}$ uniformly at random. 
Observe that the following two processes are equivalent:
\begin{itemize}
\item Pick a subset  $S \in \binom{V_k}{\ell}$ uniformly at random.
\item Sample $\*a = (a_v)_{v \in V}$ according to the multivariate hypergeometric distribution $\HyperGeo$; for each $v \in V$, pick a subset $S_v \in \binom{C_v}{a_v}$ uniformly at random; and finally let $S = \cup_{v \in V}S_v$.
\end{itemize}
Imagine that we use the second process to pick the subset $S=\cup_{v \in V}S_v$. 
We have 

\begin{align}
\label{eq-P-kl-sigma-tau}
  P_{k,\ell}(\sigma,\tau) &=\sum_{\*a \in \Omega(\HyperGeo) }	\HyperGeo(a) \sum_{\substack{(S_v)_{v \in V}: \\ S_v \subseteq C_v \text{ and } |S_v| = a_v}}	 \tp{\prod_{v \in V} \binom{k}{a_v}^{-1} } \mu_k^{\sigma_{V_k \setminus S}}(\tau), \,\,\text{where } S = \cup_{v \in V}S_v\notag\\
   &=\E[\*a \sim \HyperGeo]{\E[\substack{(S_v)_{v \in V}:\\ S_v \sim_U \binom{C_v}{a_v}}]{\mu_k^{\sigma_{V_k \setminus S}}(\tau)}}, \,\,\text{where } S = \cup_{v \in V}S_v,
\end{align}
where the expectation is taken over the random choices of $(S_v)_{v \in V}$, where each $S_v \sim_U \binom{C_v}{a_v}$ is sampled from $\binom{C_v}{a_v}$ uniformly and independently.
Recall that the set $F(\sigma)\subseteq V$ is defined as
\begin{align}
\label{eq-def-F-sigma}
F(\sigma) \triangleq \{v \in V \mid \exists\, i^* \in [k] \text{ s.t. } v_{i^*} \in V_k \text{ and } \sigma_{v_{i^*}} = \1 \}.	
\end{align}
Note that for all $v \in V$,  the index $i^* \in [k]$ is unique.
Since $\p:\sigma = \*X$, it is straightforward to verify that 
\begin{align*}
\*X^{-1}(\1)=F(\sigma).	
\end{align*}
Given a sequence of sets $(S_v)_{v \in V}$, where $S_v \subseteq C_v$, we can define a subset of $\*X^{-1}(\1)$ as follows
\begin{align*}
R_{(S_v)_{v \in V}} \triangleq \{v \in \*X^{-1}(\1) \mid  v_{i^*} \notin S_v \}.
\end{align*}
where $i^* = i^*(v) \in [k]$ is the unique index for $v$ in~\eqref{eq-def-F-sigma}.
In other words, $R$ is the subset of $\*X^{-1}(\1)$ satisfying that the unique variable $v_{i^*}$ with $\sigma_{v_{i^*}} = \1$ is not picked by set $S$. 
%
%
Given $R \subseteq \*X^{-1}(\1)$ and a sequence of sets $(S_v)_{v \in V}$, where $S_v \subseteq C_v$, we say $(S_v)_{v \in V}$ is \emph{consistent} with $R$ if $R = R_{(S_v)_{v \in V}}$. 
By~\eqref{eq-P-kl-sigma-tau},
\begin{align*}
P_{k,\ell}(\sigma,\tau) = \E[\*a \sim \HyperGeo]{ \sum_{R \subseteq \*X^{-1}(\1)} \E[\substack{(S_v)_{v \in V}:\\ S_v \sim_U \binom{C_v}{a_v}}]{\mu_k^{\sigma_{V_k \setminus S}}(\tau) \*1[R = R_{(S_v)_{v \in V}}]}},	
\end{align*}
where $S = \cup_{v \in V} S_v$. Fix any $\*Y \in \Omega(\mu)$. To prove~\eqref{eq-M-and-proj}, we need to calculate 
\begin{align*}
\sum_{\substack{\tau \in \Omega(\mu_k): \\ \p:\tau = \*Y}} P_{k,\ell}(\sigma,\tau) = \E[\*a \in \HyperGeo]{ \sum_{R \subseteq \*X^{-1}(\1)} \E[\substack{(S_v)_{v \in V}:\\ S_v \sim_U \binom{C_v}{a_v}}]{\sum_{\substack{\tau \in \Omega(\mu_k): \\ \p:\tau = \*Y}} \mu_k^{\sigma_{V_k \setminus S}}(\tau) \*1[R = R_{(S_v)_{v \in V}}]}}.	
\end{align*}

To simplify the above equation, we need the following result.
Consider the equation~\eqref{eq-P-kl-sigma-tau}. We fix  $\*a  \in  \Omega(\HyperGeo)$ and $R \subseteq \*X^{-1}(\1)$. We claim the following result holds:
For any sequence $(S_v)_{v \in V}$, where $S_v \subseteq C_v$, if the following two conditions hold together:
\begin{itemize}
\item for all $v \in V$, $|S_v| = a_v$,
\item $(S_v)_{v \in V}$ is consistent with $R$,	
\end{itemize}
then for $S = \cup_{v \in V}S_v$, and for all $\*Y \in \Omega(\mu)$, it holds that
\begin{align}
\label{eq-claim-common}
\sum_{\substack{\tau \in \Omega(\mu_k^{\ }):\\ \p:\tau = \*Y}}\mu^{\sigma_{V_k \setminus S }}_k(\tau) = \mu^{(\*\varphi_{\*a, R}),\*1_R}(\*Y),
\end{align}
where $\*\varphi_{\*a, R}$ is defined as
\begin{align*}
\forall v \in V,\quad \varphi_{\*a,R}(v) \triangleq \begin{cases}
\frac{a_v}{k}	&\text{if } v \notin R,\\
1 &\text{if } v \in R.
 \end{cases}
\end{align*}
Equation~\eqref{eq-claim-common} follows from \Cref{lemma:correctness-of-projected-distribution}.
This is because $F(\sigma_{V_k \setminus S}) = R$ and $C_v \setminus (V_k \setminus S)  = C_v \cap S$ and $|C_v \cap S| = |S_v| = a_v$.

Hence, we have the following equation
\begin{align*}
\sum_{\substack{\tau \in \Omega(\mu_k): \\ \p:\tau = \*Y}} P_{k,\ell}(\sigma,\tau) &= \E[\*a \in \HyperGeo]{ \sum_{R \subseteq \*X^{-1}(\1)}  \mu^{(\*\varphi_{\*a, R}),\*1_R}(\*Y) \E[\substack{(S_v)_{v \in V}:\\ S_v \sim_U \binom{C_v}{a_v}}]{ \*1[R = R_{(S_v)_{v \in V}}]}}\\
&= \E[\*a \in \HyperGeo]{ \sum_{R \subseteq \*X^{-1}(\1)}  \mu^{(\*\varphi_{\*a, R}),\*1_R}(\*Y) \Pr[\substack{(S_v)_{v \in V}:\\ S_v \sim_U \binom{C_v}{a_v}}]{R_{(S_v)_{v \in V}}=R}}\\
&= \sum_{\*a \in \Omega(\HyperGeo) }	\HyperGeo(a)  \sum_{R \subseteq \*X^{-1}(\1) \cap \*Y^{-1}(\1)} \tp{ \prod_{v \in \*X^{-1}(\1) \setminus R}\tp{\frac{a_v}{k}} \prod_{v \in R}\tp{1-\frac{a_v}{k}} }\mu^{(\*\varphi_{\*a, R}),\*1_R}(\*Y),
\end{align*}
where the last equation holds because for each $v \in V$ and $i \in [k]$, probability of event $v_i \in S_v$ is $\frac{a_i}{k}$ and $\mu^{(\*\varphi_{\*a, R}),\*1_R}(\*Y) = 0$ if $Y_R \neq \*1_R$. 
Recall that $\*b = \*a / k$.
By~\eqref{eq-transition-M}, we have
\begin{align*}
\+M(\*X,\*Y) = \sum_{\*a \in \Omega(\HyperGeo)}\HyperGeo(\*a) \sum_{R \subseteq X^{-1}(\1) \cap Y^{-1}(\1) } \tp{\prod_{v \in X^{-1}(\1) \setminus R}\tp{\frac{a_v}{k}}\prod_{v \in R}\tp{1-\frac{a_v}{k}}}\mu^{(\*a/k,\overline{R} ),\*1_R}(\*Y),
\end{align*}
where $\overline{R} = V \setminus R$.
By the definition in~\eqref{eq-def-M-field}, $\mu^{(\*a/k,\overline{R})}$ is obtained from $\mu$ by imposing the local fields $\*\phi$, where
\begin{align*}
\forall v \in V,\quad \phi_v = \begin{cases}
b_v = \frac{a_v}{k} &\text{if } v \notin R\\
 1 & \text{if } v \in R.
 \end{cases}
\end{align*}
Hence, $\mu^{(\*a/k,\overline{R})}$ and $\mu^{(\*\varphi_{\*a, R})}$ are the same distribution, which implies 
\begin{align*}
\+M(\*X,\*Y) = 	\sum_{\substack{\tau \in \Omega(\mu_k): \\ \p:\tau = \*Y}}P_{k,\ell}(\sigma,\tau).
\end{align*}
This proves~\eqref{eq-M-and-proj}.
\end{proof}


\begin{proof}[Proof of \Cref{lemma:correctness-of-projected-distribution}]

For any $v \in V$ such that $\phi_\rho(v) = 0$, it must hold that $v \notin F(\rho)$ and $|C_v \setminus \Lambda| = 0$, which implies $C_v \subseteq \Lambda$ and $\rho_{v_i} = \0$ for all $i \in [k]$. 
Therefore, for any $\sigma \in \Omega(\mu_k^{\rho})$, $\sigma^\star$ (defined in~\eqref{eq-def-sigma-star}) is a feasible configuration of $\mu^{(\*\phi_\rho),\*1_{F(\rho)}}$. This guarantees that $\mu^{(\*\phi_\rho),\*1_{F(\rho)}}$ is a well-defined distribution.

Fix $\xi \in \Omega(\mu)$. 
By definition of distribution $\mu_k$, we have
\begin{align*}
\sum_{\substack{\tau \in \Omega(\mu_k):\\ \p:\tau = \xi}}\mu^\rho_k(\tau) &= \sum_{\substack{\tau \in \Omega(\mu_k):\\ \p:\tau = \xi}}\frac{ \Pr[\*Y \sim \mu_k]{Y_\Lambda = \rho \text{ and } \*Y = \tau} }{\Pr[\*Y \sim \mu_k]{Y_\Lambda = \rho}} = \sum_{\substack{\tau \in \Omega(\mu_k):\\ \p:\tau = \xi, \tau_\Lambda = \rho}}\frac{ \Pr[\*Y \sim \mu_k]{\*Y = \tau} }{\Pr[\*Y \sim \mu_k]{Y_\Lambda = \rho}}.
\end{align*}
Note that if $\p:\tau = \xi$, then it holds that $\norm{\tau}_{\1} = \norm{\xi}_{\1}$. 
It holds that
\begin{align*}
\sum_{\substack{\tau \in \Omega(\mu_k):\\ \p:\tau = \xi, \tau_\Lambda = \rho}} 	 \Pr[\*Y \sim \mu_k]{\*Y = \tau}& = \mu(\xi)\tp{\frac{1}{k}}^{\norm{\xi}_{\1}} \sum_{\substack{\tau \in \Omega(\mu_k):\\ \p:\tau = \xi}}\*1[\tau_\Lambda = \rho]\\
 &= \mu(\xi)\tp{\frac{1}{k}}^{\norm{\xi}_{\1}} \*1[\xi_{F(\rho)} = \*1_{F(\rho)}] \prod_{v \in \xi^{-1}(\1) \setminus F(\rho)}\abs{C_v \setminus \Lambda}.
\end{align*}
The last equation holds because:
\begin{itemize} 
\item 
if $\p:\tau = \xi$ and $\tau_\Lambda = \rho$, then it must hold that $\xi_{F(\rho)}=\*1_{F(\rho)}$; 
\item 
for all $v \in \xi^{-1}(\1)$, we need to choose one index $i \in [k]$ and set $\tau_{v_i} = \1$, if $v \in F(\rho)$ such index is fixed by $\rho$, if $v \in \xi^{-1}(\1) \setminus F(\rho)$, there are $|C_v \setminus \Lambda|$ ways to choose such index.
\end{itemize}
Reorganizing above equation gives
\begin{align*}
\sum_{\substack{\tau \in \Omega(\mu_k):\\ \p:\tau = \xi, \tau_\Lambda = \rho}} 	 \Pr[\*Y \sim \mu_k]{\*Y = \tau} = \mu(\xi)  \*1[\xi_{F(\rho)} = \*1_{F(\rho)}]  \tp{\frac{1}{k}}^{\abs{F(\rho)}} \prod_{v \in \xi^{-1}(\1) \setminus F(\rho)}\frac{\abs{C_v \setminus \Lambda} }{k}.
\end{align*}
Next, we have
\begin{align*}
\Pr[\*Y \sim \mu_k]{Y_\Lambda = \rho} = \sum_{\eta \in \Omega(\mu)} \mu(\eta) \*1[ \eta_{F(\rho)} = \*1_{F(\rho)} ] \tp{\frac{1}{k}}^{|F(\rho)|}\prod_{v \in \eta^{-1}(\1) \setminus F(\rho)}\frac{\abs{C_v \setminus \Lambda} }{k}.
\end{align*}
By definition, $\*Y \sim \mu_k$ is obtained by first sampling $\*X \sim \mu$, then transforming $\*X$ to $\*Y$. The above equation enumerates all values $\eta$ for $\*X$. The event $Y_{\Lambda} = \rho$ occurs if and only if:
\begin{itemize}
\item $\eta_v = \1$ for all $v \in F(\rho)$; 
\item for all $v \in F(\rho)$, there is a unique index $i \in [k]$ fixed by $\rho$ such that $v_i = \1$ (this event occurs with probability $\tp{\frac{1}{k}}^{|F(\rho)|}$); 
\item for all other $v \in \eta^{-1}(\1) \setminus F(\rho)$, we choose an index $i \in [k]$ to set $v_i = \1$ and $v_i \notin \Lambda$ (this event occurs with probability $\prod_{v \in \eta^{-1}(\1) \setminus F(\rho)}\frac{\abs{C_v \setminus \Lambda} }{k}$).
\end{itemize}

Combining above two equations together, we have
\begin{align*}
\sum_{{\tau \in \Omega(\mu_k): \p:\tau = \xi}}\mu^\rho_k(\tau) 
&= 	\sum_{{\tau \in \Omega(\mu_k): \p:\tau = \xi, \tau_\Lambda = \rho}}\frac{ \Pr[\*Y \sim \mu_k]{\*Y = \tau} }{\Pr[\*Y \sim \mu_k]{Y_\Lambda = \rho}}\\
&= \frac{\mu(\xi)  \*1[\xi_{F(\rho)} = \*1_{F(\rho)}] \prod_{v \in \xi^{-1}(\1) \setminus F(\rho)}\frac{\abs{C_v \setminus \Lambda} }{k}}{\sum_{\eta \in \Omega(\mu)} \mu(\eta) \*1[ \eta_{F(\rho)} = \*1_{F(\rho)} ] \prod_{v \in \eta^{-1}(\1) \setminus F(\rho)}\frac{\abs{C_v \setminus \Lambda} }{k}}\\
&= \frac{\*1[\xi_{F(\rho)} = \*1_{F(\rho)}] \mu^{(\*\phi_\rho)}(\xi)}{\sum_{\eta \in \Omega(\mu) } \*1[ \eta_{F(\rho)} = \*1_{F(\rho)} ]\mu^{(\*\phi_\rho)}(\eta) }\\
 &=\frac{\*1[\xi_{F(\rho)} = \*1_{F(\rho)}] \mu^{(\*\phi_\rho)}(\xi)}{\sum_{\eta \in \Omega(\mu^{(\*\phi_\rho)}) } \*1[ \eta_{F(\rho)} = \*1_{F(\rho)} ]\mu^{(\*\phi_\rho)}(\eta) } \\
&= \mu^{(\*\phi_\rho),\*1_{F(\rho)}}(\xi),
\end{align*}
where the second to the last equation holds because $\Omega(\mu^{(\*\phi_\rho)} ) \subseteq \Omega(\mu)$. This proves the lemma.
\end{proof}


\subsection{Comparing projected block dynamics with block dynamics (proof of \Cref{lemma:projected-block-dynamics-vs-block-dynamics})}
\label{section-proof-projected-block-dynamics-vs-block-dynamics}
By \Cref{lemma-proj-block-Markov-chain}, $\Pproj[k, \ell]$ is a reversible Markov chain.
We establish the following lemma which implies \Cref{lemma:projected-block-dynamics-vs-block-dynamics}.

\begin{lemma} \label{lemma:dirichlet}
For any function $f: \Omega(\mu) \to \mathbb{R}$ satisfying $\Var[\mu]{f} \neq 0$, there exists a function $f': \Omega(\mu_k) \to \mathbb{R}$ such that  $\Var[\mu_k]{f'} \neq 0$ and 
\begin{align*}
\frac{\+E_{\Pproj[k, \ell]}(f,f)}{\mathbf{Var}_{\mu}(f)} = \frac{\+E_{P_{k, \ell}}(f',f')}{\mathbf{Var}_{\mu_k}(f')}.
\end{align*}
\end{lemma}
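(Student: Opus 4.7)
The plan is to define the lifted function $f': \Omega(\mu_k) \to \mathbb{R}$ as the pullback of $f$ along the projection map, namely
\[
f'(\tau) \triangleq f(\p:\tau) \quad \text{for every } \tau \in \Omega(\mu_k),
\]
and then show that both the variance and the Dirichlet form are preserved under this lift, so that the two ratios coincide identically.

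First I would establish that the pushforward of $\mu_k$ along the projection $\tau \mapsto \p:\tau$ is exactly $\mu$. This is essentially by construction of the $k$-transformation (\Cref{definition-k-transformation}): if $\*Y \sim \mu_k$ is generated by first drawing $\*X \sim \mu$ and then spreading each $\1$-coordinate uniformly into one of the $k$ copies, then $\p:\*Y = \*X$ almost surely. Consequently
\[
\E[\mu_k]{f'} = \E[\mu]{f}, \qquad \E[\mu_k]{(f')^2} = \E[\mu]{f^2},
\]
which gives $\Var[\mu_k]{f'} = \Var[\mu]{f} \neq 0$. So the denominators match trivially.

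For the numerators, I would expand the Dirichlet form of $P_{k,\ell}$, use the fact that $f'(\sigma) - f'(\tau)$ depends only on $(\p:\sigma, \p:\tau)$, and group terms by the value of the projection:
\[
\+E_{P_{k,\ell}}(f',f')
= \frac{1}{2}\sum_{X, Y \in \Omega(\mu)} (f(X)-f(Y))^2 \sum_{\substack{\sigma:\, \p:\sigma = X\\ \tau:\, \p:\tau = Y}} \mu_k(\sigma)\, P_{k,\ell}(\sigma,\tau).
\]
The key step, which is the place where the real content lies, is the identity
\[
\sum_{\substack{\sigma:\, \p:\sigma = X\\ \tau:\, \p:\tau = Y}} \mu_k(\sigma)\, P_{k,\ell}(\sigma,\tau) \;=\; \mu(X)\, \Pproj[k,\ell](X,Y).
\]
The outer summation over $\sigma$ is handled by the pushforward property $\sum_{\sigma:\p:\sigma = X} \mu_k(\sigma) = \mu(X)$, but the inner sum over $\tau$ is exactly the content of \Cref{lemma-M-and-proj}, which asserts that for every $\sigma$ with $\p:\sigma = X$ the quantity $\sum_{\tau:\p:\tau = Y} P_{k,\ell}(\sigma, \tau)$ equals $\Pproj[k,\ell](X,Y)$ and, crucially, is independent of the particular preimage $\sigma$ chosen. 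Without this $\sigma$-independence one could not pull the inner sum out through the outer one.

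The main obstacle is therefore already discharged by the equation~\eqref{eq-M-and-proj} proved in \Cref{lemma-M-and-proj}. Once that is invoked, the whole calculation collapses to
\[
\+E_{P_{k,\ell}}(f',f') = \tfrac{1}{2}\sum_{X, Y} \mu(X)\, \Pproj[k,\ell](X,Y)(f(X)-f(Y))^2 = \+E_{\Pproj[k,\ell]}(f,f),
\]
and the claimed equality of ratios follows immediately. Finally, applying the variational characterization of the spectral gap in~\eqref{eq-comp-gap-def} to both sides yields \Cref{lemma:projected-block-dynamics-vs-block-dynamics} as a clean corollary: any test function $f$ certifying a small Rayleigh quotient for $\Pproj[k,\ell]$ lifts to a test function $f'$ with the same Rayleigh quotient for $P_{k,\ell}$, so the infimum defining $\spgap{gap}{}(\Pproj[k,\ell])$ is at least the infimum defining $\spgap{gap}{}(P_{k,\ell})$.
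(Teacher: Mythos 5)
Your proposal is correct and follows essentially the same route as the paper: the same lift $f'(\tau)=f(\p:\tau)$, the pushforward identity $\sum_{\sigma:\p:\sigma=X}\mu_k(\sigma)=\mu(X)$, and the invocation of \Cref{lemma-M-and-proj} (equation~\eqref{eq-M-and-proj}) as the key step that makes the inner sum over $\tau$ independent of the chosen preimage $\sigma$, allowing the double sum to factor. The paper writes out the variance equality via the symmetric formula $\Var[\mu_k]{f'}=\frac12\sum_{\sigma,\tau}\mu_k(\sigma)\mu_k(\tau)(f'(\sigma)-f'(\tau))^2$ and groups by projection, whereas you derive it from the equality of first and second moments under pushforward; these are equivalent.
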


With this lemma,
\Cref{lemma:projected-block-dynamics-vs-block-dynamics} follows immediately from the characterization of spectral gap in~\eqref{eq-comp-gap-def}: 
\[\spgap{gap}{}(P) = \inf_{\substack{f\in\mathbb{R}^{\Omega(\mu)}\\ \Var[\mu]{f} \neq 0}} \frac{\+E_P(f,f)}{\Var[\mu]{f}}.
\]


Now, we prove \Cref{lemma:dirichlet}.

\begin{proof}[Proof of \Cref{lemma:dirichlet}]
  Given a function $f: \Omega(\mu) \to \mathbb{R}$ satisfying $\Var[\mu]{f} \neq 0$, the function $f': \Omega(\mu_k) \to \mathbb{R}$ is constructed as follows:
  \begin{align}
    \label{eq-def-f'}
    \forall \xi \in \Omega(\mu_k), \quad f'(\xi) = f(\p:\xi),
  \end{align}
where $\p:\xi$ is as defined in~\eqref{eq-def-sigma-star}.

Next, we prove the following identities that implies the lemma:
\begin{align*}
\+E_{P_{k,\ell}}(f',f') 
&= \+E_{\Pproj[k,\ell]}(f,f),\\
\mathbf{Var}_{\mu_k}(f') 
&= \mathbf{Var}_{\mu}(f).
\end{align*}
 
For any $\*X,\*Y \in \Omega(\mu)$, by the definition of $\mu_k$, it holds that
    \begin{align} \label{eq:map-mu}
     \forall \sigma \in \Omega(\mu_k) \text{ satisfying }\sigma^\star = \*X, \quad  \mu(\sigma) &= \sum_{\substack{\xi \in \Omega(\mu_k) \\ \p:\xi = \sigma}} \mu_k(\xi).
    \end{align}
    
  By~\eqref{eq-M-and-proj}, it holds that
    \begin{align} \label{eq:map-chain}
     \forall \sigma \in \Omega(\mu_k) \text{ satisfying }\sigma^\star = \*X, \quad   \Pproj[k, \ell](\*X, \*Y) &= \sum_{\substack{\tau \in \Omega(\mu_k) \\ \p:\tau = \*Y}} P_{k, \ell}(\sigma,\tau).
    \end{align}
  
Therefore, we have
 \begin{align*}
  \+E_{P_{k,\ell}}(f',f') &= \frac{1}{2}\sum_{\sigma, \tau \in \Omega(\mu_k)}\mu_k(\sigma)P_{k,\ell}(\sigma,\tau)\left(f'(\sigma)-f'(\tau)\right)^2\\	
 &= \frac{1}{2}\sum_{\*X,\*Y \in \Omega(\mu)} \sum_{\substack{\sigma,\tau \in \Omega(\mu_k) \\   \p:\sigma = \*X,  \p:\tau = \*Y} }\mu_k(\sigma)P_{k,\ell}(\sigma,\tau)\left(f'(\sigma)-f'(\tau)\right)^2\\
 \text{(by definition of $f'$)}\quad &=\frac{1}{2}\sum_{\*X,\*Y \in \Omega(\mu)} \sum_{\substack{\sigma,\tau \in \Omega(\mu_k) \\   \p:\sigma = \*X,  \p:\tau = \*Y} }\mu_k(\sigma)P_{k,\ell}(\sigma,\tau)\left(f(\*X)-f(\*Y)\right)^2\\
 &=\frac{1}{2}\sum_{\*X,\*Y \in \Omega(\mu)}(f(\*X)-f(\*Y))^2\sum_{\substack{\sigma \in \Omega(\mu_k)\\\p:\sigma = \*X}}\mu_k(\sigma)\sum_{\substack{\tau \in \Omega(\mu_k)\\\p:\tau = \*Y}}P_{k,\ell}(\sigma,\tau)\\
 \text{(by \eqref{eq:map-mu} and \eqref{eq:map-chain})}\quad&=\frac{1}{2}\sum_{\*X,\*Y \in \Omega(\mu)}\mu(\*X)\Pproj[k,\ell](\*X,\*Y)(f(\*X)-f(\*Y))^2 = \+E_{\Pproj[k,\ell]}(f,f).
 \end{align*}

Similarly, 
\begin{align*}
\Var[\mu_k]{f'} &=	\frac{1}{2}\sum_{\sigma, \tau \in \Omega(\mu_k) }\mu_k(\sigma)\mu_k(\tau)\left(f'(\sigma)-f'(\tau)\right)^2\\
&=\frac{1}{2}\sum_{\*X,\*Y \in \Omega(\mu)}(f(\*X)-f(\*Y))^2\sum_{\substack{\sigma \in \Omega(\mu_k)  \\ \p:\sigma = \*X}}\mu_k(\sigma)\sum_{\substack{ \tau \in \Omega(\mu_k) \\ \p:\tau = \*Y }}\mu_k(\tau)\\
(\text{by \eqref{eq:map-mu}}) \quad &=\frac{1}{2}\sum_{\*X,\*Y \in \Omega(\mu)}\mu(\*X)\mu(\*Y)(f(\*X)-f(\*Y))^2 = \Var[\mu]{f}.
\end{align*}
This proves the lemma.
\end{proof}

\subsection{Approximation of the field dynamics (proof of \Cref{lemma:limit-of-projected-chain-is-limit-chain})}\label{section-limit-of-projected-chain-is-limit-chain}
Let $n = |V|$.
Fix a real number $0 < \theta < 1$.
Given $\epsilon > 0$, we show that for any $\*X,\*Y \in \Omega(\mu)$, there is a $K = K(\*X,\*Y, \mu, \theta, \epsilon)$ such that for any $k \geq K$,
\begin{align}
\label{eq-proof-limit-target}
    \left\vert \Plim(\*X, \*Y) - \Pproj[k, \lceil \theta k n\rceil](\*X, \*Y) \right\vert &\leq \varepsilon.	
\end{align}
Note that $\Omega(\mu)$ is independent of $k$.
\Cref{lemma:limit-of-projected-chain-is-limit-chain}  follows by taking $K(\mu, \theta, \epsilon) = \max_{\*X,\*Y \in \Omega(\mu)}K(\*X,\*Y, \mu, \theta, \epsilon)$.

Fix $\*X,\*Y \in \Omega(\mu)$ and integer $k \geq 1$. Let $\ell \triangleq \lceil \theta nk\rceil$.
By \Cref{lemma-M-and-proj} and \Cref{eq-transition-M},
\begin{align}\label{eq:project}
  \Pproj[k,\ell](\*X,\*Y) = \sum_{\*a:\*a \in \Omega(\HyperGeo)}\HyperGeo(\*a) \sum_{R \subseteq X^{-1}(\1) \cap Y^{-1}(\1) } \tp{\prod_{v \in X^{-1}(\1) \setminus R}\frac{a_v}{k}\prod_{v \in R}\tp{1-\frac{a_v}{k}}}\mu^{(\*a/k,\overline{R} ),\*1_R}(\*Y),
\end{align}
where $\mu^{(\*a/k,\overline{R})}$ is distribution $\mu^{(\phi)}$ for local fields $\*\phi \in \mathbb{R}^V_{\ge 0}$ defined by
\begin{align*}
  \forall v \in V,\quad \phi_v = \begin{cases}
  \frac{a_v}{k} &\text{if } v \notin R\\
   1 & \text{if } v \in R.
   \end{cases}
\end{align*}

Let $f_{\mu,
\*X,\*Y}(\*w): (0,1)^V \rightarrow \mathbb{R}$ be a function defined as follows:
\begin{align*}
  f_{\mu,\*X,\*Y}(\*w) &= \sum_{R \subseteq \*X^{-1}(\1) \cap \*Y^{-1}(\1)} \tp{\prod_{v \in \*X^{-1}(\1) \setminus R} w_v \prod_{v \in R} (1-w_v)} \mu^{(\*w,\overline{R} ),\*1_R}(\*Y)\\
  &=\sum_{R \subseteq \*X^{-1}(\1) \cap \*Y^{-1}(\1)} \tp{\prod_{v \in \*X^{-1}(\1) \setminus R} w_v \prod_{v \in R} (1-w_v)}  \frac{\mu(\*Y) \prod_{v \in \*Y^{-1}(\1) \setminus R  } w_v}{\sum_{\*Z \in \Omega(\mu): Z_R = \*1_R }\mu(\*Z) \prod_{v \in \*Z^{-1}(\1) \setminus R } w_v  }
\end{align*}
Since $\*Y \in \Omega(\mu)$ and $Y_R = \*1_R$, for all $\*w \in (0,1)^V$, it holds that $\sum_{\*Z \in \Omega(\mu): Z_R = \*1_R }\mu(\*Z) \prod_{v \in \*Z^{-1}(\1) \setminus R } w_v > 0$.
The function $f_{\mu,\*X,\*Y}(\*w)$ is a rational function. 
Thus $f_{\mu,\*X,\*Y}(\*w)$ is continuous on $(0,1)^V$. Formally, we have:
\begin{fact}\label{fact:continuous}
For any $\epsilon>0$ and $\*w \in (0,1)^V$, there exists a constant $\delta=\delta(\mu,\*X,\*Y,\*w,\epsilon) > 0$ such that 
for all $\*w' \in (0,1)^V$ satisfying $\norm{\*w-\*w'}_{\infty} < \delta$, it holds that
  \begin{align*}
    \abs{f_{\mu,\*X,\*Y}(\*w)-f_{\mu,\*X,\*Y}(\*w')} < \epsilon,
  \end{align*}
where $\norm{\*w - \*w'}_{\infty} \triangleq \max_{v \in V}|w(v) - w'(v)|$ is the infinity norm of $\*w - \*w'$.
\end{fact}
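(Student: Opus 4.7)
The plan is to observe that $f_{\mu,\*X,\*Y}(\*w)$ is a finite sum of rational functions whose denominators never vanish on $(0,1)^V$, and then invoke the standard fact that such a function is (jointly) continuous. More concretely, for each subset $R \subseteq \*X^{-1}(\1) \cap \*Y^{-1}(\1)$ the corresponding summand can be written as $N_R(\*w)/D_R(\*w)$, where
\[
N_R(\*w) \triangleq \mu(\*Y)\prod_{v \in \*X^{-1}(\1) \setminus R} w_v \prod_{v \in R}(1-w_v) \prod_{v \in \*Y^{-1}(\1)\setminus R} w_v
\]
and
\[
D_R(\*w) \triangleq \sum_{\*Z \in \Omega(\mu):\, Z_R = \*1_R} \mu(\*Z) \prod_{v \in \*Z^{-1}(\1)\setminus R} w_v
\]
are both polynomials in $\*w$, and the index set of the outer sum is finite.

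The single point that needs checking is that each denominator $D_R$ stays strictly positive on the whole open box $(0,1)^V$, for otherwise the term could blow up. This is immediate from the hypothesis $\*Y \in \Omega(\mu)$ together with $Y_R = \*1_R$: the configuration $\*Z = \*Y$ contributes the positive term $\mu(\*Y)\prod_{v \in \*Y^{-1}(\1)\setminus R} w_v > 0$ to $D_R(\*w)$ for every $\*w \in (0,1)^V$.

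With positivity in hand, I would finish by the usual compactness argument. Given $\*w \in (0,1)^V$ and $\epsilon > 0$, pick a closed box $B \subseteq (0,1)^V$ containing $\*w$ in its interior. On the compact set $B$, each polynomial $D_R$ attains a positive minimum $m_R > 0$, and each $N_R$ is bounded; moreover both $N_R$ and $D_R$ are Lipschitz on $B$ (polynomials on compact sets are Lipschitz). A short direct estimate then shows that $\*w' \mapsto N_R(\*w')/D_R(\*w')$ is Lipschitz on $B$ with constant depending only on $\mu$, $\*X$, $\*Y$, and $B$; summing over the finitely many $R$ gives a uniform Lipschitz constant $L$ for $f_{\mu,\*X,\*Y}$ on $B$, and choosing $\delta \triangleq \min\{\mathrm{dist}(\*w, \partial B),\, \epsilon/L\}$ yields the claim. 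There is no genuine obstacle here: the whole statement is really just the remark that $f_{\mu,\*X,\*Y}$ is a rational function with non-vanishing denominator on $(0,1)^V$, hence continuous, and the only fact to verify beyond elementary calculus is the feasibility-based lower bound $D_R(\*w) > 0$.
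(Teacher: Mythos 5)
Your proof is correct and takes the same approach as the paper: the paper simply notes that the denominator $\sum_{\*Z \in \Omega(\mu): Z_R = \*1_R }\mu(\*Z) \prod_{v \in \*Z^{-1}(\1) \setminus R } w_v$ is strictly positive on $(0,1)^V$ (because $\*Z = \*Y$ contributes a positive term) and concludes that $f_{\mu,\*X,\*Y}$ is a rational function with non-vanishing denominator, hence continuous. You identify the same key positivity observation and merely spell out the routine $\epsilon$--$\delta$/Lipschitz details that the paper leaves implicit.
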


We now prove~\eqref{eq-proof-limit-target}. Fix a parameter $\epsilon > 0$. 
Let $\*\theta \triangleq (\theta)_{v \in V}=\theta\*1$. 
By \Cref{fact:continuous}, there exists a $\delta=\delta(\mu,\*X,\*Y,\theta,\epsilon) \in \tp{0, \frac{\min\{\theta, 1 - \theta\}}{2}}$ such that for all $\*w \in (0,1)^V$ satisfying $\norm{\*w-\*\theta}_{\infty} < \delta$
  \begin{align}\label{eq:bound-1}
    \abs{f_{\mu,\*X,\*Y}(\*w)-f_{\mu,\*X,\*Y}(\*\theta)} < \frac{\epsilon}{2}.
  \end{align}
Define a subset of $\+B_{\delta,k} \subseteq \Omega(\HyperGeo)$ by 
\begin{align}\label{eq:bad-event}
  \+B_{\delta,k} = \left\{\*a \in \Omega(\HyperGeo) \mid  \norm{\frac{\*a}{k} - \*\theta}_{\infty} \ge \delta\right\}.
\end{align}
Note that the function $f_{\mu,\*X,\*Y}$ satisfies 
\begin{align*}
\forall \*w \in (0,1)^V, \quad 0 < f_{\mu,\*X,\*Y}(\*w) &\leq\sum_{R \subseteq \*X^{-1}(\1) \cap \*Y^{-1}(\1)} \tp{\prod_{v \in \*X^{-1}(\1) \setminus R} w_v \prod_{v \in R} (1-w_v)} \leq 1.
\end{align*}

Combining~\eqref{eq:bound-1} and~\eqref{eq:bad-event}, for any $\*a \notin  \+B_{\delta,k}$, since $\delta \in \tp{0, \frac{\min\{\theta, 1 - \theta\}}{2}}$, $\*a$ is a positive vector and it holds that $\abs{f_{\mu,\*X,\*Y}(\*a/k)-f_{\mu,\*X,\*Y}(\*\theta)} < \frac{\epsilon}{2}$.
By~\eqref{eq:project}, on the one hand, 
\begin{align}
\label{eq-proj-lower-bound-pre}
 \Pproj[k,\ell](\*X,\*Y) 
 &\geq  \sum_{\*a \in \Omega(\HyperGeo) \setminus \+B_{\delta,k} }\HyperGeo(\*a) f_{\mu,\*X,\*Y}\tp{\frac{\*a}{k}} \notag\\
 &\geq \tp{f_{\mu,\*X,\*Y}(\*\theta) - \frac{\epsilon}{2}}\sum_{\*a \in \Omega(\HyperGeo) \setminus \+B_{\delta,k} }\HyperGeo(\*a)\notag\\
 &=\tp{f_{\mu,\*X,\*Y}(\*\theta) - \frac{\epsilon}{2}} \tp{1 - \Pr[\*a \sim \HyperGeo ]{\*a \in \+B_{\delta,k} }}.
\end{align}
This implies that 
\begin{align}
\label{eq-proj-lower-bound}
 \Pproj[k,\ell](\*X,\*Y) &\geq  f_{\mu,\*X,\*Y}(\*\theta) - \frac{\epsilon}{2} -  \Pr[\*a \sim \HyperGeo ]{\*a \in \+B_{\delta,k} }.	
\end{align}
This is because \eqref{eq-proj-lower-bound} holds trivially if $f_{\mu,\*X,\*Y}(\*\theta) \leq  \frac{\epsilon}{2}$, and  if otherwise $f_{\mu,\*X,\*Y}(\*\theta) >  \frac{\epsilon}{2}$, since $f_{\mu,\*X,\*Y}(\*\theta) \leq 1$,~\eqref{eq-proj-lower-bound} is a consequence of~\eqref{eq-proj-lower-bound-pre}.

On the other hand,
\begin{align}
\label{eq-proj-upper-bound}
 \Pproj[k,\ell](\*X,\*Y) &=	 \sum_{\*a \in \Omega(\HyperGeo)}\HyperGeo(\*a)f_{\mu,\*X,\*Y}\tp{\frac{\*a}{k}}  \notag\\
 &= \sum_{\*a \in \+B_{\delta,k}}\HyperGeo(\*a)f_{\mu,\*X,\*Y}\tp{\frac{\*a}{k}} + \sum_{\*a \in \Omega(\HyperGeo)\setminus \+B_{\delta,k} }\HyperGeo(\*a)f_{\mu,\*X,\*Y}\tp{\frac{\*a}{k}}\notag\\
&\leq \sum_{\*a \in \+B_{\delta,k}}\HyperGeo(\*a) + \tp{f_{\mu,\*X,\*Y}(\*\theta) + \frac{\epsilon}{2}}\sum_{\*a \in \Omega(\HyperGeo) \setminus \+B_{\delta,k} }\HyperGeo(\*a)\notag\\
 &= \Pr[\*a \sim \HyperGeo ]{\*a \in \+B_{\delta,k} } + \tp{f_{\mu,\*X,\*Y}(\*\theta) + \frac{\epsilon}{2}} \tp{1 - \Pr[\*a \sim \HyperGeo ]{\*a \in \+B_{\delta,k} }}\notag\\
 &\leq \Pr[\*a \sim \HyperGeo ]{\*a \in \+B_{\delta,k} } + f_{\mu,\*X,\*Y}(\*\theta) + \frac{\epsilon}{2},
\end{align}

Finally, we also claim that there is a $K = K(\mu, \*X,\*Y, \theta, \epsilon)$ such that for any $k \geq K$
\begin{align}
\label{eq-bad-event-prob}
\Pr[\*a \sim \HyperGeo ]{\*a \in \+B_{\delta,k} } \leq \frac{\epsilon}{2}.	
\end{align}
Altogether, \eqref{eq-proj-lower-bound}, \eqref{eq-proj-upper-bound} and~\eqref{eq-bad-event-prob} implies~\eqref{eq:project}. 

It remains to prove~\eqref{eq-bad-event-prob}.
Recall that $n = |V|$ and $\ell = \lceil \theta kn \rceil$.
Observe that when $k \ge \frac{2}{\delta}$,
  \begin{align}
  \label{eq:bad-1}
    \Pr[\*a \sim \HyperGeo ]{\*a \in \+B_{\delta,k}} 
    &= \Pr[\*a \sim \HyperGeo ]{\exists v \in V, \abs{\frac{a_v}{k}-\theta} \ge \delta} \notag\\
    &\le \sum_{v \in V} \Pr[\*a \sim \HyperGeo]{\abs{\frac{a_v}{k}-\theta} \ge \delta}\notag\\
    &\le \sum_{v \in V} \Pr[\*a \sim \HyperGeo]{\abs{\frac{a_v}{k} - \frac{\lceil \theta kn \rceil}{kn}} \ge \frac{\delta}{2}},
  \end{align}
where the last inequality holds for $k \geq \frac{2}{\delta}$.

Furthermore, by \Cref{lemma:hypergometric-concentration}, there exists a constant $K_0=K_0(\mu, \delta,\epsilon)$ such that for $k \ge K_0$,
  \begin{align}\label{eq:bad-2}
  \forall v \in V,\quad  \Pr[\*a \sim \HyperGeo]{\abs{\frac{a_v}{k} - \frac{\lceil \theta kn \rceil}{kn}} \ge \frac{\delta}{2}} \leq  2 \exp \tp{\frac{-\delta^2 k }{2} } \le \frac{\epsilon}{2n}.
  \end{align}
  Note that $\delta = \delta(\mu, \*X,\*Y, \theta, \epsilon)$ and $n$ is determined by $\mu$.
   Combining ~\eqref{eq:bad-1} and ~\eqref{eq:bad-2} proves~\eqref{eq-bad-event-prob}.

\section{Mixing of Block Dynamics}\label{sec:block-mixing}
In this section, we prove \Cref{lemma:block-dynamics-spectral-gap} for the mixing of the uniform block dynamics for the $k$-transformed distribution $\mu_k= \Rd(\mu,k)$, assuming the complete spectral independence of $\mu$.
Together with \Cref{lemma:block-dynamics-spectral-gap-implies-limit-chain-spectral-gap} proved in the last two sections, this proves the mixing lemma for the field dynamics (\Cref{lemma-field-dynamics-mixing})

With the Chen-Liu-Vigoda theorem (\Cref{theorem:spectral-independent-imply-spectral-gap}), to prove \Cref{lemma:block-dynamics-spectral-gap}, we only need to verify that the $k$-transformed distribution $\mu_k$ is spectrally independent knowing that the original distribution $\mu$ is  completely spectrally independent.

\begin{lemma}
\label{lemma-spind-mu-to-muk}
Let $\mu$ be a distribution over $\{\0,\1\}^V$ and $\eta > 0$.
If $\mu$ is completely $\eta$-spectrally independent, then for all integers $k \geq 1$, $\mu_k= \Rd(\mu,k)$ is $(\eta+2)$-spectrally independent.
\end{lemma}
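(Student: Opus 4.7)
The plan is to fix an arbitrary pinning $\rho = \sigma_\Lambda \in \Omega(\mu_{k,\Lambda})$, compute the entries of the influence matrix $\Psi^{\rho}_{\mu_k}$ explicitly in terms of the influence matrix of the projection onto $V$, and bound its spectral radius via a decomposition into two invariant subspaces separating intra-cluster from inter-cluster contributions. First I dispose of the frozen coordinates: recall $F(\rho) \subseteq V$ from the excerpt (clusters with a $\1$ pinned by $\rho$), and also set $F^-(\rho) = \{v \in V \setminus F(\rho) : C_v \subseteq \Lambda\}$, $V' = V \setminus F(\rho) \setminus F^-(\rho)$, and $k_v := |C_v \setminus \Lambda| \in [1,k]$ for $v \in V'$. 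Any $v_i \notin \Lambda$ with $v \in F(\rho)$ is deterministically $\0$ in $\mu_k^\rho$ (each cluster has at most one $\1$), so the corresponding row and column of $\Psi^\rho_{\mu_k}$ vanish, and $F^-(\rho)$ contributes no unpinned coordinates. Hence $\rho(\Psi^\rho_{\mu_k})$ is the spectral radius of the submatrix $\Psi$ indexed by $\bigsqcup_{v \in V'}(C_v \setminus \Lambda)$. By \Cref{lemma:correctness-of-projected-distribution} the projection of $\mu_k^\rho$ to $V$ equals $\nu = \mu^{(\phi_\rho), \mathbf{1}_{F(\rho)}}$; to apply complete $\eta$-spectral independence (which assumes $\phi \in (0,1]^V$) I replace $\phi_\rho(v) = 0$ on $F^-(\rho)$ by $1$ and additionally pin $F^-(\rho) \mapsto -\mathbf{1}$, which produces the same distribution on $V'$. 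Since spectral independence is closed under conditioning, the influence matrix $\Psi_\nu$ of $\nu$ (naturally indexed by $V'$) satisfies $\rho(\Psi_\nu) \leq \eta$.

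A Bayes-rule calculation, using the fact that conditional on $X_v = \1$ the $\1$-position inside $C_v \setminus \Lambda$ is uniform and independent of the remaining variables, yields for $u, v \in V'$ the explicit formulas $\Psi(u_i, v_j) = \frac{\alpha_u}{k_v} \Psi_\nu(u,v)$ when $u \neq v$ with $\alpha_u := \frac{k_u(1-q_u)}{k_u - q_u}$, and $\Psi(u_i, u_j) = c_u := \frac{q_u}{k_u - q_u}$ when $i \neq j$, where $q_u := \Pr_\nu[X_u = \1]$. Conditioning on $u_i = \1$ fixes $X_u = \1$ and its $\1$-position to $i$, while conditioning on $u_i = \0$ gives a Bayesian mixture of $X_u = \0$ and $X_u = \1$ with $\1$-position elsewhere; the resulting $(i,j)$-dependence for $u \neq v$ collapses cleanly into the rank-one factor $\alpha_u/k_v$, and for $u = v$ the intra-cluster entry does not depend on which pair $i \neq j$ is chosen. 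Elementary algebra with $q_u \in [0,1]$ gives $\alpha_u \leq 1$, and for $k_u \geq 2$ (the only regime with a nontrivial intra-cluster block) one also gets $(k_u-1)c_u \leq 1$.

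The space $\mathbb{R}^{\bigsqcup_{v \in V'}(C_v \setminus \Lambda)}$ splits as a direct sum of two $\Psi$-invariant subspaces: $W_1$, consisting of vectors constant on each cluster, and $W_2$, consisting of vectors with zero sum on each cluster. On $W_2$ the inter-cluster part of $\Psi$ annihilates $x$ (since $\sum_j x_{v_j} = 0$) and the intra-cluster part gives $(\Psi x)_{u_i} = c_u(\sum_j x_{u_j} - x_{u_i}) = -c_u x_{u_i}$, so $\Psi|_{W_2} = -D_c$ has eigenvalues $-c_u \in [-1, 0]$. Identifying $W_1 \cong \mathbb{R}^{V'}$, the action is $D_\alpha \Psi_\nu + D_{(k-1)c}$; using $0 \leq D_\alpha \Psi_\nu \leq \Psi_\nu$ and $0 \leq D_{(k-1)c} \leq I$ entrywise together with monotonicity of spectral radius on nonnegative matrices,
\[
\rho\bigl(D_\alpha \Psi_\nu + D_{(k-1)c}\bigr) \;\leq\; \rho(\Psi_\nu + I) \;=\; \rho(\Psi_\nu) + 1 \;\leq\; \eta + 1.
\]
Combining, $\rho(\Psi^\rho_{\mu_k}) \leq \max(\eta+1, 1) \leq \eta + 2$ uniformly in $\rho$, establishing $(\eta+2)$-spectral independence of $\mu_k$.

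I expect the main technical obstacle to be the Bayes-rule derivation of the entry-wise formulas in the second paragraph: verifying that the joint $(i,j)$-dependence for $u \neq v$ factors exactly as $\alpha_u/k_v$, and that the intra-cluster entries are genuinely constant in $i \neq j$. These symmetries are precisely what makes $W_1$ and $W_2$ invariant subspaces and reduces the spectral-radius problem to the $|V'| \times |V'|$ matrix $D_\alpha \Psi_\nu + D_{(k-1)c}$ together with the diagonal $-D_c$. The algebraic inequality $(k_u-1)c_u \leq 1$ is then the key fact that lets the diagonal perturbation be dominated by $I$, so that the identity $\rho(M+I) = \rho(M)+1$ yields only an additive $+1$ overhead beyond $\rho(\Psi_\nu) \leq \eta$.
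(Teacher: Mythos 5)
Your proof is correct, and it is a sharpened version of the paper's argument; the overall plan matches the paper's, but your exact entrywise computation buys you $\eta+1$ where the paper only gets $\eta+2$. Both proofs rest on the same two structural facts: (i) the identification (via \Cref{lemma:correctness-of-projected-distribution}) of the projection of $\mu_k^\rho$ to $V$ with a locally biased and pinned version $\nu$ of $\mu$, which together with complete spectral independence gives $\rho(\Psi_\nu)\le\eta$; and (ii) the ``constant on clusters'' versus ``zero-sum on clusters'' structure of the $k$-blown-up coordinate space. The paper, however, does not compute the influence entries exactly. It inflates them to a dominating matrix $\widetilde\Psi^\sigma$ whose rows are literally identical within each cluster (setting the \emph{diagonal} entries to $2/m_v$ rather than $0$, and using a coupling to bound the off-cluster entries by $\tfrac{1}{m_v}\Psi_\nu(u,v)$ and the intra-cluster ones by $2/m_u$); it then argues that nonzero eigenvectors of $\widetilde\Psi^\sigma$ must be constant on clusters and compresses to a $|V'|\times|V'|$ matrix $\widehat\Psi^{\sigma^*}\le\Psi_\nu+2I$. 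You instead observe that the true influence matrix has the exact block form $\tfrac{\alpha_u}{m_v}\Psi_\nu(u,v)\cdot J$ off-cluster and $c_u(J-I)$ intra-cluster, so the direct sum $W_1\oplus W_2$ is exactly invariant (not merely invariant for a dominating matrix), and the intra-cluster contribution on $W_1$ is $(m_u-1)c_u\le1$ rather than the paper's artificial $2$. The elementary identities $\alpha_u\le1$ and $(m_u-1)c_u\le1$ then give $\rho(D_\alpha\Psi_\nu+D_{(m-1)c})\le\rho(\Psi_\nu)+1$, and $W_2$ contributes eigenvalues $-c_u\in[-1,0]$, whence $\rho(\Psi^\rho_{\mu_k})\le\eta+1$. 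Your mixture argument showing $\pi^{\sigma^*}_v=\alpha_u\nu^{\sigma^*,u\gets-1}_v+(1-\alpha_u)\nu^{\sigma^*,u\gets+1}_v$ is what replaces the paper's coupling inequality with an equality, which is where the savings come from.

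Two minor points to tighten. First, $q_u$ should be stated explicitly as the conditional marginal $\nu^{\sigma^*}_u(\1)$, i.e.\ under the full pinning $\sigma^*$, not the unconditional one; your derivation uses this. Second, you should record the usual degeneracy exclusion: coordinates $u_i$ with $|\Omega(\mu_{k,u_i}^{\sigma})|=1$ (which happens when $q_u=0$, or when $m_u=1$ and $q_u=1$) have zero row and column in $\Psi^\rho_{\mu_k}$ and are dropped; when $q_u=1$ but $m_u\ge2$ one adopts the convention $\Psi_\nu(u,\cdot)=0$ so that $\alpha_u\Psi_\nu(u,v)=0$ still makes sense. With those cosmetic fixes, your argument is a valid and slightly sharper proof of the lemma.
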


\Cref{lemma:block-dynamics-spectral-gap} is an easy consequence of \Cref{theorem:spectral-independent-imply-spectral-gap} and \Cref{lemma-spind-mu-to-muk}. 

Our remaining task is to prove \Cref{lemma-spind-mu-to-muk}. 
Recall that  we use $V_k \triangleq V \times [k]	$
to denote the ground set of $\mu_k$. For each $v \in V$ and $i \in [k]$, we denote $v_i \triangleq (v,i) \in V_k$ and $C_v \triangleq \{v_i \mid i \in [k]\}$.

\subsection{Spectral independence of $k$-transformed distribution (proof of \Cref{lemma-spind-mu-to-muk})}
We define a mapping from feasible partial configurations for distribution $\mu_k$ to feasible partial configurations for distribution $\mu$.
%
%
Let $\Lambda \subseteq V_k$ and $\sigma \in \Omega(\mu_{k,\Lambda})$, where $\mu_{k,\Lambda}$ denotes the marginal distribution on $\Lambda$ projected from $\mu_k$. 
We define the following subsets of $V$:
\begin{equation}
\label{eq-lambda-three}
\begin{split}
\r:{\Lambda}_{\sigma,\0} &\triangleq \{v \in V \mid  \forall i \in [k],\, v_i \in \Lambda \land \sigma_{v_i} = \0  \},\\
\r:{\Lambda}_{\sigma,\1} &\triangleq  \{v \in V \mid  \exists i \in [k] \,\text{s.t.}\, v_i \in \Lambda \land \sigma_{v_i} = \1 \},\\
\r:{\Lambda}_\sigma &\triangleq \r:{\Lambda}_{\sigma,\0} \cup \r:{\Lambda}_{\sigma,\1}.
\end{split}
\end{equation}
Note that $\r:{\Lambda}_{\sigma,\0}$ and $\r:{\Lambda}_{\sigma,\1}$ are disjoint.
Let $\r:{\sigma} \in \{\0,\1\}^{\r:{\Lambda}_\sigma}$ indicate whether each $v\in\r:{\Lambda}_{\sigma}$ is in $\r:{\Lambda}_{\sigma,\1}$ or $\r:{\Lambda}_{\sigma,\0}$. Specifically, 
\begin{align}
\label{eq-def-sigma-ast}
\forall v \in \r:{\Lambda}_\sigma, \quad \r:{\sigma}_v = \begin{cases}
\0 &\text{if } v \in \r:{\Lambda}_{\sigma,\0},\\
\1 &\text{if } v \in \r:{\Lambda}_{\sigma,\1}.	
 \end{cases}
 \end{align}
In other words, $\sigma^*$ fixes a $v\in V$ to be $\0$ if $\sigma$ fixes all $v_i\in C_v$ to be $\0$; and $\sigma^*$ fixes a $v\in V$ to be $\1$ if $\sigma$ fixes some $v_i\in C_v$ to be $\1$.

Given a $\sigma \in \Omega(\mu_{k,\Lambda})$, we construct the following vector of local fields $\*\lambda_\sigma$,
\begin{align}
\label{eq-def-lambda-sigma}
\forall v \in V, \quad \lambda_\sigma(v) = \begin{cases}
1 &\text{if } v \in \r:{\Lambda}_\sigma,\\
\frac{\abs{C_v \setminus \Lambda }}{k} &\text{if } v \notin \r:{\Lambda}_\sigma.
 \end{cases}
\end{align}
By definition, for every $v \in V \setminus \r:{\Lambda}_\sigma$, it holds that $\abs{C_v \setminus \Lambda } > 0$. Hence, $\*\lambda_\sigma$ is a positive vector. 

We have the following lemma that bounds the spectral radius of the influence matrix.

\begin{lemma} \label{lemma:bounded-spectral-independence}
  For any integer $k\geq 1$,  any $\Lambda \subseteq V_k$ and  any $\sigma \in \Omega(\mu_{k,\Lambda})$, it holds that $\sigma^*$ is a feasible partial configuration with respect to $\nu \triangleq \mu^{(\*\lambda_\sigma)}$, and 
  \begin{align*}
    \rho\tp{\Psi^{\sigma}_{\mu_k}} &\leq \rho\tp{\Psi^{\r:\sigma}_{\nu}} + 2,
  \end{align*}
 where $\Psi^{\sigma}_{\mu_k}$ and $\Psi^{\r:\sigma}_{\nu}$ are the influence matrices defined as in \Cref{definition-weight-tot-inf}, and $\rho(\cdot)$ denotes the spectral radius.
\end{lemma}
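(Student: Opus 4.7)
The plan is to push $\mu_k^\sigma$ through the projection $\tau \mapsto \p:{\tau}$ via \Cref{lemma:correctness-of-projected-distribution} to identify it with $\nu^{\r:{\sigma}}$, then derive closed-form expressions for the entries of $\Psi^{\sigma}_{\mu_k}$ in terms of $\Psi^{\r:{\sigma}}_\nu$, and finally bound the spectral radius via a Perron--Frobenius argument that exploits the constant-on-class structure of the Perron eigenvector.

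First I would check that $\r:{\sigma}$ is feasible for $\nu$. Any $\tau\in\Omega(\mu_k)$ with $\tau_\Lambda=\sigma$ (which exists because $\sigma\in\Omega(\mu_{k,\Lambda})$) has $\p:{\tau}\in\Omega(\mu)=\Omega(\nu)$, and its restriction to $\r:{\Lambda}_\sigma$ is $\r:{\sigma}$ directly from \eqref{eq-lambda-three}--\eqref{eq-def-sigma-ast}. Next I would apply \Cref{lemma:correctness-of-projected-distribution}: the field $\*\phi_\sigma$ appearing there agrees with the $\*\lambda_\sigma$ of \eqref{eq-def-lambda-sigma} off $\r:{\Lambda}_{\sigma,\0}$, and on $\r:{\Lambda}_{\sigma,\0}$ the value $\phi_\sigma(v)=0$ encodes pinning $v\mapsto\0$, which is exactly $\r:{\sigma}_v$. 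Hence the pushforward of $\mu_k^\sigma$ along $\p:{(\cdot)}$ coincides with $\nu^{\r:{\sigma}}$.

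Writing $k_u\triangleq\abs{C_u\setminus\Lambda}$ for $u\in V\setminus\r:{\Lambda}_\sigma$, I would use the fact that, conditional on $\p:{\tau}_u=\1$ under $\mu_k$, the unique $u_j\in C_u\setminus\Lambda$ with $\tau_{u_j}=\1$ is uniform in $C_u\setminus\Lambda$. Setting
\[
\alpha_u \;\triangleq\; \frac{\nu^{\r:{\sigma}}_u(\0)}{1-\nu^{\r:{\sigma}}_u(\1)/k_u}\;\in\;[0,1],
\]
a direct conditional calculation for unpinned $u_i,v_j$ with $u\neq v$ yields $\mu_k^{\sigma,u_i\gets \1}_{v_j}(\1)=\tfrac{1}{k_v}\nu^{\r:{\sigma},u\gets \1}_v(\1)$ and $\mu_k^{\sigma,u_i\gets \0}_{v_j}(\1)=\tfrac{1}{k_v}\bigl(\alpha_u\,\nu^{\r:{\sigma},u\gets \0}_v(\1)+(1-\alpha_u)\,\nu^{\r:{\sigma},u\gets \1}_v(\1)\bigr)$, whose difference is
\[
\Psi^\sigma_{\mu_k}(u_i,v_j) \;=\; \frac{\alpha_u}{k_v}\,\Psi^{\r:{\sigma}}_\nu(u,v) \qquad (u\neq v).
\]
For the intra-class case $u=v$, $i\neq j$, the exclusion effect (fixing $u_i=\1$ forces $u_j=\0$) gives $\Psi^\sigma_{\mu_k}(u_i,u_j)=(1-\alpha_u)/(k_u-1)$.

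Finally, let $\rho\triangleq\rho(\Psi^\sigma_{\mu_k})$ and $x\ge 0$ a Perron eigenvector. From the formulas above, $(\Psi^\sigma_{\mu_k}\,x)_{v_j}$ depends on $j$ only through the single term $-\tfrac{1-\alpha_v}{k_v-1}\,x_{v_j}$, so $\rho\,x_{v_j}=(\Psi^\sigma_{\mu_k}\,x)_{v_j}$ forces $x$ to be constant on each class $C_v\setminus\Lambda$. Writing $x_{v_j}=x_v$ and $y_v\triangleq k_v\,x_v$, the equation collapses to $Ty=\rho y$ with $T_{v,u}\triangleq (1-\alpha_v)\,\delta_{vu}+\alpha_u\,\Psi^{\r:{\sigma}}_\nu(u,v)$. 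Since $\alpha_u\in[0,1]$ and $\Psi^{\r:{\sigma}}_\nu$ has zero diagonal, $T\le I+(\Psi^{\r:{\sigma}}_\nu)^{\mathrm{T}}$ entrywise; by monotonicity of the spectral radius of nonnegative matrices,
\[
\rho \;\le\; \rho\bigl(I+(\Psi^{\r:{\sigma}}_\nu)^{\mathrm{T}}\bigr) \;=\; 1+\rho(\Psi^{\r:{\sigma}}_\nu) \;\le\; 2+\rho(\Psi^{\r:{\sigma}}_\nu),
\]
as claimed. The main obstacle is the bookkeeping in the first two steps: reconciling $\*\phi_\sigma$ with $\*\lambda_\sigma$ on $\r:{\Lambda}_{\sigma,\0}$, and cleanly isolating the two qualitatively different conditioning effects that generate the cross-class and intra-class influences. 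Once the explicit entry formulas are in hand, the spectral-radius step is a short linear-algebra argument.
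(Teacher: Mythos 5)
Your argument is correct and reaches the stated bound (in fact, a slightly sharper one). The high-level ideas coincide with the paper's: (i) pin down how the $k$-transformed influences factor through the projection, (ii) exploit that the Perron eigenvector must be constant on each class $C_v\setminus\Lambda$, and (iii) finish with entrywise monotonicity of the spectral radius for nonnegative matrices. The paper executes this through two auxiliary matrices $\widetilde{\Psi}^{\sigma}$ and $\widehat{\Psi}^{\r:\sigma}$ (its Lemma~\ref{lemma-relation-hat-tilde}); it establishes only \emph{inequalities} $\Psi^{\sigma}_{\mu_k}\le\widetilde{\Psi}^{\sigma}$ entrywise and proves the ``constant on classes'' collapse for $\widetilde{\Psi}^{\sigma}$, which is built to be exactly block-constant. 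You instead derive \emph{exact} entry formulas: indeed the paper's quantity $\tfrac{1}{m_v}\DTV{\nu^{\r:\sigma,u\gets\1}_v}{\pi^{\r:\sigma}_v}$ equals your $\tfrac{\alpha_u}{m_v}\Psi^{\r:\sigma}_\nu(u,v)$ once one notes $\pi^{\r:\sigma}_v=\alpha_u\nu^{\r:\sigma,u\gets\0}_v+(1-\alpha_u)\nu^{\r:\sigma,u\gets\1}_v$ with $\alpha_u=\pi^{\r:\sigma}_u(\0)$, and your intra-class value $\tfrac{1-\alpha_u}{m_u-1}$ is consistent with the paper's bound $\tfrac{1}{m_u-1}\le\tfrac{2}{m_u}$. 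This buys you a sharper conclusion $\rho\le 1+\rho(\Psi^{\r:\sigma}_\nu)$, which you then relax to $+2$; the paper's constant $2$ is an artifact of the one-sided bound $\tfrac{2}{m_u}$ in its Lemma~\ref{lemma-relation-hat-tilde}.

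Two points to fix up for a complete write-up. First, a bookkeeping error in the collapsed matrix: with the row-eigenvector convention $\Psi^{\sigma}_{\mu_k}x=\rho x$ and the substitution $y_v=k_vx_v$, the collapsed equation reads $\rho y_v=(1-\alpha_v)y_v+\sum_{u\neq v}\tfrac{\alpha_v k_v}{k_u}\Psi^{\r:\sigma}_\nu(v,u)y_u$, and the factor $k_v/k_u$ is not controlled; to get exactly the $T$ you wrote (with $\alpha_u$ and $\Psi^{\r:\sigma}_\nu(u,v)$), you should take a \emph{left} Perron eigenvector $x^{\mathsf T}\Psi^{\sigma}_{\mu_k}=\rho x^{\mathsf T}$ (or, equivalently, work with the transpose), and then the $k_v/k_u$ factors cancel as you intended. (Alternatively, take $y_v=x_v$ without rescaling; then the collapsed matrix is $(1-\alpha_v)\delta_{vu}+\alpha_v\Psi^{\r:\sigma}_\nu(v,u)\le I+\Psi^{\r:\sigma}_\nu$ and the same conclusion follows.) Second, say a word about the edge cases: when $m_u=1$ there is no intra-class term and $\alpha_u$ may be undefined via $0/0$ but is harmless since the corresponding row is trivial; when some $\Omega(\mu^{\sigma}_{k,u_i})$ is a singleton the corresponding row and column vanish; and when $\rho=0$ the bound holds trivially, so you may assume $\rho>0$ when arguing that $(\rho+\tfrac{1-\alpha_v}{m_v-1})x_{v_j}$ being constant in $j$ forces constancy of $x_{v_j}$.
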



\Cref{lemma-spind-mu-to-muk} immediately holds from \Cref{lemma:bounded-spectral-independence}.
\begin{proof}[Proof of \Cref{lemma-spind-mu-to-muk} assuming \Cref{lemma:bounded-spectral-independence}]
Fix an integer $k \geq 1$, a subset $\Lambda \subseteq V_k$ and a feasible partial configuration $\sigma \in \Omega(\mu_{k,\Lambda})$. By \Cref{lemma:bounded-spectral-independence}, we have
\begin{align*}
 \rho\tp{\Psi^{\sigma}_{\mu_k}} &\leq \rho\tp{\Psi^{\r:\sigma}_{\nu}} + 2, \quad \text{where } \nu \triangleq \mu^{(\*\lambda_\sigma)}. 	
\end{align*}
Note that $\nu = \mu^{(\*\lambda_\sigma)}$ and $\*\lambda_\sigma(v) \in(0, 1)$ for all $v \in V$.
%
%
Since $\mu$ is completely $\eta$-spectrally independent,
\begin{align*}
 \rho\tp{\Psi^{\r:\sigma}_{\nu}} \leq \eta.
\end{align*}
This proves the lemma. 
\end{proof}

\subsection{Spectrum-preservation of $k$-transformation (proof of \Cref{lemma:bounded-spectral-independence})}
We first prove that $\r:\sigma$ defined in~\eqref{eq-def-sigma-ast} is feasible with respect to $\nu$ as long as $\sigma \in \Omega(\mu_{k,\Lambda})$.
By definition of $\r:\sigma$ and $\mu_k$, if $\sigma \in \Omega(\mu_{k,\Lambda})$, then  $\r:\sigma$ is feasible with respect to $\mu$.
Consequently, $\r:\sigma$ is also feasible with respect to $\nu$, because $\nu = \mu^{(\*\lambda_\sigma)}$ has the same support as $\mu$ for a positive vector $\*\lambda_\sigma$.

It remains to prove 
\begin{align}
\label{eq-comp-spectral}
    \rho\tp{\Psi^{\sigma}_{\mu_k}} &\leq \rho\tp{\Psi^{\r:\sigma}_{\nu}} + 2.
  \end{align}
  
For any distribution $\nu$ over $\{\0,\1\}^V$, any $H \subseteq V$, and any feasible partial configuration $\sigma_H \in \{\0,\1\}^H$ with respect to $\nu$, 
let $\+I_{\nu}^{\sigma_H}\in\mathbb{R}_{\ge0}^{V\times V}$ be a matrix defined as:
\begin{align}
\label{eq-def-gen-inf}
\forall u,v \in V,\quad
\INF{\nu}{\sigma_{H}}{u}{v} \triangleq \max_{i,j \in \Omega\tp{\nu^{\sigma_H}_u}} \DTV{\nu_v^{\sigma_H, u\gets i}}{\nu_v^{\sigma_H, u \gets j}}.
\end{align}
Note that by definition, if $\abs{\Omega(\nu^{\sigma_H}_u)} = 1$ or $\abs{\Omega(\nu^{\sigma_H}_v)} = 1$, then $\INF{\nu}{\sigma_H}{u}{v} = 0$, and for all $u \in V$, $\INF{\nu}{\sigma_H}{u}{u} = \*1[|\Omega(\nu^{\sigma_H}_v)| \neq 1]$.

To prove~\eqref{eq-comp-spectral}, we define two new matrices $\widetilde{\Psi}^{\sigma}_{\mu_k}$ and $\widehat{\Psi}^{\r:\sigma}_{\nu}$. Recall that $\sigma \in \{\0,\1\}^\Lambda$ is a partial configuration on $\Lambda \subseteq V_k$. For each variable $v \in V$,  we denote by  
\begin{align*}
m_v \triangleq \abs{C_v \setminus \Lambda}
\end{align*}
the number of variables in $C_v$ whose values are not fixed by $\sigma$. Without loss of generality, we enumerate the variables in  $C_v \setminus \Lambda$ as
\begin{align*}
C_v \setminus \Lambda = \{v_i \mid i \in [m_v] \}, \text{where } [m_v]=\{1,2,3,\ldots,m_v\}.	
\end{align*}
The matrix  $\widetilde{\Psi}^{\sigma}\in\mathbb{R}^{(V_k \setminus \Lambda)\times (V_k \setminus \Lambda)}$ is defined by
\begin{align}
\label{eq-def-tilde}
\forall u,v \in V, i \in [m_u], j \in [m_v], \quad  \widetilde{\Psi}^{\sigma}(u_i,v_j) \triangleq \begin{cases}
 \frac{2}{m_v} &\text{if } u=v \text{ and } v \notin \r:{\Lambda}_\sigma,\\
 \frac{1}{m_v} \INF{\nu}{\r:\sigma}{u}{v} &\text{if } u \neq v \text{ or } v \in \r:{\Lambda}_\sigma.	
 \end{cases}
\end{align}
This is well-defined because it only involves those $u,v \in V$ such that $m_v \geq 1$ and $m_u \geq 1$. 
The following is easy to observe.
\begin{observation}
\label{observation-block-matrix}
For any $u,v \in V$, all the entries $\widetilde{\Psi}^{\sigma}(u_i,v_j)$ have the same value for  $i \in [m_u]$ and $j \in [m_v]$. 	
\end{observation}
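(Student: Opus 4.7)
The plan is to read the claim directly off the piecewise definition in~\eqref{eq-def-tilde}. Fix any pair $u,v \in V$ with $m_u \ge 1$ and $m_v \ge 1$, so that the entries $\widetilde{\Psi}^{\sigma}(u_i,v_j)$ are actually defined for $i\in[m_u]$, $j\in[m_v]$. The rule in~\eqref{eq-def-tilde} assigns a value depending only on the following data: whether the predicate ``$u=v$ and $v\notin \r:{\Lambda}_\sigma$'' holds, the integer $m_v$, and the scalar $\INF{\nu}{\r:\sigma}{u}{v}$. None of these three pieces of data depends on the specific indices $i$ or $j$.

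Hence the verification reduces to inspecting the case distinction: the branch selected in~\eqref{eq-def-tilde} is determined by the pair $(u,v)$ (together with the already-fixed partial configuration $\sigma$), not by any finer information about $u_i$ or $v_j$. If $u=v$ and $v\notin \r:{\Lambda}_\sigma$, every entry equals $2/m_v$; in the complementary case, every entry equals $(1/m_v)\INF{\nu}{\r:\sigma}{u}{v}$. Either way, for fixed $(u,v)$ the value is constant as $i$ ranges over $[m_u]$ and $j$ ranges over $[m_v]$, which is exactly the content of \Cref{observation-block-matrix}.

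There is no substantive obstacle here; the statement is bookkeeping about the definition. Its intended use, presumably, is to view $\widetilde{\Psi}^{\sigma}$ as a $|V|\times|V|$ block matrix whose $(u,v)$-block is a constant $m_u\times m_v$ matrix with common entry depending only on $(u,v)$. That block structure will in turn allow us to relate the spectrum of $\widetilde{\Psi}^{\sigma}$ to that of the smaller matrix $\widehat{\Psi}^{\r:\sigma}_{\nu}$ indexed by $V$ alone, via a standard rank-preserving reduction (each constant block being rank one), so that the bound~\eqref{eq-comp-spectral} on $\rho(\Psi^{\sigma}_{\mu_k})$ can eventually be pulled back to $\rho(\Psi^{\r:\sigma}_{\nu})$.
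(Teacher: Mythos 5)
Your argument is correct and is exactly the intended (and only) one: the defining formula~\eqref{eq-def-tilde} assigns to $\widetilde{\Psi}^{\sigma}(u_i,v_j)$ a value that depends only on the pair $(u,v)$, on $m_v$, and on $\INF{\nu}{\r:\sigma}{u}{v}$, none of which involve $i$ or $j$. The paper states the observation without proof ("easy to observe"), and your reading of the definition is precisely what is meant.
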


Next, we define the matrix $\widehat{\Psi}^{\r:\sigma}\in\mathbb{R}^{(V \setminus \Lambda_\sigma^\ast) \times (V \setminus \Lambda_\sigma^\ast)}$, where $\Lambda_\sigma^\ast$ is defined in~\eqref{eq-lambda-three}. 
By \Cref{observation-block-matrix}, $\widetilde{\Psi}^{\sigma}$ can be decomposed into a set of blocks, all the entries in the same block have the same value. The matrix $\widehat{\Psi}^{\r:\sigma}$ is defined by compressing blocks in $\widetilde{\Psi}^{\sigma}$.
Formally,
\begin{align}
\label{eq-def-hat}
\forall u,v \in V \setminus \Lambda_\sigma^\ast, \quad \widehat{\Psi}^{\r:\sigma}(u,v) \triangleq \sum_{j \in [m_v]} \widetilde{\Psi}^{\sigma}(u_1,v_j).
\end{align}
The matrix  $\widehat{\Psi}^{\r:\sigma}$ is well-defined because by the definition of $\Lambda_\sigma^\ast$, it is straightforward to verify $m_u \geq 1$ and $m_v \geq 1$ for all $u,v \in V \setminus \r:\Lambda_\sigma$, thus the $u_1$ in \eqref{eq-def-hat} exists.
By \Cref{observation-block-matrix}, it holds that for all $u,v \in V \setminus \r:\Lambda_\sigma$,
\begin{align*}
\forall i \in [m_u],\quad \widehat{\Psi}^{\r:\sigma}(u,v) = \sum_{j \in [m_v]} \widetilde{\Psi}^{\sigma}(u_i,v_j).
\end{align*}

The next lemma bounds the relation between spectral radiuses of $\Psi^{\sigma}_{\mu_k}, \Psi^{\r:\sigma}_{\nu}$ and $\widetilde{\Psi}^{\sigma}, \widehat{\Psi}^{\r:\sigma}$.
\begin{lemma}
\label{lemma-relation-hat-tilde}
It holds that 	$\rho\tp{\Psi^{\sigma}_{\mu_k}} \leq \rho\tp{\widetilde{\Psi}^{\sigma}}$ and $\rho\tp{\widehat{\Psi}^{\r:\sigma}} \leq \rho\tp{\Psi^{\r:\sigma}_{\nu}} + 2$.
\end{lemma}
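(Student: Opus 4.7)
The plan is to reduce both inequalities to entry-wise comparisons between nonnegative matrices, then invoke Perron--Frobenius monotonicity: if $0 \le A \le B$ entry-wise, then $\rho(A) \le \rho(B)$. All four matrices $\Psi^{\sigma}_{\mu_k}$, $\widetilde{\Psi}^{\sigma}$, $\widehat{\Psi}^{\r:\sigma}$, $\Psi^{\r:\sigma}_{\nu}$ are nonnegative by construction, so this strategy applies.

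For the first inequality $\rho(\Psi^{\sigma}_{\mu_k}) \le \rho(\widetilde{\Psi}^{\sigma})$, I would establish the entry-wise bound $\Psi^{\sigma}_{\mu_k}(u_i, v_j) \le \widetilde{\Psi}^{\sigma}(u_i, v_j)$ by case analysis. The pivotal step is to observe that conditioning $\mu_k$ on a feasible $\sigma \in \Omega(\mu_{k,\Lambda})$ induces on the projected variables $X \in \{\0,\1\}^V$ precisely the distribution $\nu^{\r:\sigma}$, where $\nu = \mu^{(\*\lambda_\sigma)}$ and $\*\lambda_\sigma$ is defined in~\eqref{eq-def-lambda-sigma}; the factor $m_v/k$ appearing in $\lambda_\sigma(v)$ corresponds to the probability that the chosen index at a free $v$ lands outside $\Lambda$. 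From this identification one obtains, for $v_j \in C_v \setminus \Lambda$ with $v \notin \r:\Lambda_\sigma$, the clean identity $\Pr{Y_{v_j} = \1 \mid \sigma, u_i \gets c} = \tfrac{1}{m_v}\Pr{X_v = \1 \mid \sigma, u_i \gets c}$; whereas if $v \in \r:\Lambda_\sigma$ (hence $v \in \r:\Lambda_{\sigma,\1}$ since $v_j \notin \Lambda$), then $Y_{v_j}$ is deterministically $\0$ and the influence is zero.

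The case analysis then proceeds as follows. When $u \ne v$ and $u \notin \r:\Lambda_\sigma$: fixing $u_i \gets \1$ forces $X_u = \1$, while $u_i \gets \0$ yields $X_u$ as a convex mixture; the triangle inequality for total variation bounds the induced influence on $Y_{v_j}$ by $\tfrac{1}{m_v}\INF{\nu}{\r:\sigma}{u}{v}$, matching $\widetilde{\Psi}^{\sigma}$. When $u = v \notin \r:\Lambda_\sigma$ and $i \ne j$ (the self-block case where the constant $2$ arises): $v_i \gets \1$ forces $Y_{v_j} = \0$, whereas the marginal under $v_i \gets \0$ computes to at most $\tfrac{1}{m_v-1} \le \tfrac{2}{m_v}$, valid since $i \ne j$ forces $m_v \ge 2$. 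The remaining subcases ($u \in \r:\Lambda_\sigma$ or $v \in \r:\Lambda_\sigma$) either collapse $\Omega((\mu_k)^{\sigma}_{u_i})$ to a singleton or render $Y_{v_j}$ deterministic, so both sides vanish. For the second inequality $\rho(\widehat{\Psi}^{\r:\sigma}) \le \rho(\Psi^{\r:\sigma}_\nu) + 2$, I would simply unfold~\eqref{eq-def-hat}: for $u \ne v$ with $u, v \in V \setminus \r:\Lambda_\sigma$, summing $\tfrac{1}{m_v}\INF{\nu}{\r:\sigma}{u}{v}$ over $j \in [m_v]$ yields $\INF{\nu}{\r:\sigma}{u}{v} = \Psi^{\r:\sigma}_\nu(u,v)$; for $u = v$, the $m_v$ copies of $\tfrac{2}{m_v}$ sum to $2$, whereas $\Psi^{\r:\sigma}_\nu(v,v) = 0$. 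Hence $\widehat{\Psi}^{\r:\sigma} = \Psi^{\r:\sigma}_\nu + 2I$ exactly, and for any eigenvalue $\lambda$ of $\Psi^{\r:\sigma}_\nu$ one has $\abs{\lambda + 2} \le \abs{\lambda} + 2 \le \rho(\Psi^{\r:\sigma}_\nu) + 2$.

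The main obstacle is the case-analysis bookkeeping in Part (1): tracking whether $u$ and $v$ lie in $\r:\Lambda_{\sigma,\0}$, $\r:\Lambda_{\sigma,\1}$, or $V \setminus \r:\Lambda_\sigma$, and translating conditional probabilities in $\mu_k$ back to $\nu^{\r:\sigma}$ through the bucket-selection mechanism of the $k$-transformation. The tight numerical coincidence $\tfrac{1}{m_v-1} \le \tfrac{2}{m_v}$ for $m_v \ge 2$ is what ultimately pins down the constant $2$ appearing in $\widetilde{\Psi}^\sigma$, and hence in the final bound $\rho(\Psi^\sigma_{\mu_k}) \le \rho(\Psi^{\r:\sigma}_\nu) + 2$ used in \Cref{lemma-spind-mu-to-muk}.
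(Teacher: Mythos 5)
Your proposal is correct and follows essentially the same route as the paper: entry-wise domination plus Perron--Frobenius monotonicity, with the same case split into $u\neq v$ (bounded by $\tfrac{1}{m_v}\INF{\nu}{\r:\sigma}{u}{v}$ via translating $\mu_k$-conditionals back to $\nu$-conditionals) and the self-block case $u=v$, $i\neq j$ (bounded by $\tfrac{1}{m_u-1}\le\tfrac{2}{m_u}$), and the exact identity $\widehat{\Psi}^{\r:\sigma}=\Psi^{\r:\sigma}_\nu+2I$ for the second inequality. The one step you compress as ``triangle inequality for total variation'' is where the paper introduces the auxiliary law $\pi=\mu^{(\*\lambda_\sigma')}$ with the local field at $u$ shifted from $m_u/k$ to $(m_u-1)/k$, verifies $\pi^{\r:\sigma,u\gets c}_v=\nu^{\r:\sigma,u\gets c}_v$ together with $\Omega(\pi_u^{\r:\sigma})\subseteq\Omega(\nu_u^{\r:\sigma})$, and handles the $m_u=1$ edge case -- all recoverable from your sketch, and you explicitly flag this translation as the main obstacle.
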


We first use \Cref{lemma-relation-hat-tilde} to prove \Cref{lemma:bounded-spectral-independence}, then prove \Cref{lemma-relation-hat-tilde}.

\begin{proof}[Proof of \Cref{lemma:bounded-spectral-independence} assuming \Cref{lemma-relation-hat-tilde}]
  Due to \Cref{lemma-relation-hat-tilde}, it suffices to show that 
\begin{align*}
 \rho\tp{\widetilde{\Psi}^{\sigma}} \leq 	\rho\tp{\widehat{\Psi}^{\r:\sigma}}.
\end{align*}
We will prove that for any eigenvalue $\lambda_{\mathrm{eig}} \in \mathbb{C}$ of $\widetilde{\Psi}^{\sigma}$, if $\lambda_{\mathrm{eig}} \neq 0$, then $\lambda_{\mathrm{eig}} $ is also an eigenvalue of $\widehat{\Psi}^{\r:\sigma}$.
Hence, $\rho\tp{\widetilde{\Psi}^{\sigma}} \leq 	\rho\tp{\widehat{\Psi}^{\r:\sigma}}$, this proves the lemma.

Fix an eigenvalue $\lambda_{\mathrm{eig}}  \neq 0$ of $\widetilde{\Psi}^{\sigma}$. Let $f: V_k \setminus \Lambda \to \mathbb{C}$ denote the corresponding eigenvector.
For any variable $v \in V$, any $i,j \in [m_v]$, by \Cref{observation-block-matrix}, it holds that $\widetilde{\Psi}^{\sigma}(v_i,\cdot) = \widetilde{\Psi}^{\sigma}(v_j,\cdot)$. We have
\begin{align*}
\lambda_{\mathrm{eig}}f (v_i) = ( \widetilde{\Psi}^{\sigma} f)(v_i) = ( \widetilde{\Psi}^{\sigma} f)(v_j) = \lambda_{\mathrm{eig}}f(v_j).
\end{align*}
Since $\lambda_{\mathrm{eig}} \neq 0$, we have $f(v_i) = f(v_j)$.

Next, we define a vector $g: V\setminus \r:\Lambda_\sigma \to \mathbb{C}$ according to $f$.
For any $v \in V\setminus \r:\Lambda_\sigma$, by~\eqref{eq-lambda-three}, $m_v \geq 1$. Define
\begin{align*}
g(v) = f(v_1).	
\end{align*}
It holds that 
\begin{align}
\label{eq-rela-f-g}
\forall v \in V \setminus \r:\Lambda_\sigma , i \in [m_v], \quad g(v)  = f(v_i).	
\end{align}
We show that if $f$ is an eigenvector of $\widetilde{\Psi}^{\sigma}$ with eigenvalue $\lambda_{\mathrm{eig}} \neq 0$, then  $g$ is not a zero-vector. This fact can be verified according to the following two cases.
\begin{itemize}
\item If there exists $v \notin \r:\Lambda_\sigma$ such that $f(v_1) \neq 0$, then $g$ is not a zero-vector.
\item Otherwise, for all $v \in V$ and $i \in [m_v]$, if $f(v_i) \neq 0$, then $v \in \r:\Lambda_\sigma$. For any $v \in \r:\Lambda_\sigma$ and $i \in [m_v]$, by~\eqref{eq-def-tilde}, it is straightforward to verify that $\widetilde{\Psi}^{\sigma}(w,v_i) = 0$ for all $w \in V_k \setminus \Lambda$, thus
 it holds that $\widetilde{\Psi}^{\sigma} f = \*0$. Since  $\lambda_{\mathrm{eig}} \neq 0$, this case cannot occur.
\end{itemize}

We show that $\lambda_{\mathrm{eig}}$ is also an eigenvalue of $\widehat{\Psi}^{\r:\sigma}$ with eigenvector $g$. We have:
\begin{align}
\label{eq-same-eig-value}
\forall u \in V\setminus \r:\Lambda_\sigma,\quad
\tp{\widehat{\Psi}^{\r:\sigma} g} (u)
    &= \sum_{v \in V\setminus \r:\Lambda_\sigma} \widehat{\Psi}^{\r:\sigma}(u,v)g(v)\notag\\
    &= \sum_{v \in V\setminus \r:\Lambda_\sigma} \sum_{j \in [m_v]} \widetilde{\Psi}^{\sigma}(u_1,v_j)g(v)
    && \text{(by~\eqref{eq-def-hat})}\notag\\
    & = \sum_{v \in V\setminus \r:\Lambda_\sigma} \sum_{j \in [m_v]} \widetilde{\Psi}^{\sigma}(u_1,v_j)f(v_j).
    &&\text{(by~\eqref{eq-rela-f-g})}
\end{align}
We claim that the following equation holds
\begin{align}
\label{eq-claim-zeros}
\sum_{v \in V\setminus \r:\Lambda_\sigma} \sum_{j \in [m_v]} \widetilde{\Psi}^{\sigma}(u_1,v_j)f(v_j)	= \sum_{v \in V} \sum_{j \in [m_v]}\widetilde{\Psi}^{\sigma}(u_1,v_j)f(v_j).
\end{align}
To verify~\eqref{eq-claim-zeros}, we consider a $v \in \r:\Lambda_\sigma$ with $m_v \geq 1$. 
Since $u \in V \setminus \r:\Lambda_\sigma$, it must hold that $u \neq v$.
Since $v \in \r:\Lambda_\sigma$, the value of $v$ is fixed by $\r:\sigma$, thus $|\Omega(\nu^{\r:\sigma}_v)| = 1$.
By definition in~\eqref{eq-def-gen-inf}, we have
\begin{align*}
\forall v \in \r:\Lambda_\sigma, j \in [m_v],\quad \widetilde{\Psi}^{\sigma}(u_1,v_j) =  \frac{1}{m_v} \INF{\nu}{\r:\sigma}{u}{v} = 0.
\end{align*}
Hence
\begin{align*}
\sum_{v \in V} \sum_{j \in [m_v]}\widetilde{\Psi}^{\sigma}(u_1,v_j)f(v_j) &= 	\sum_{v \in V\setminus \r:\Lambda_\sigma} \sum_{j \in [m_v]} \widetilde{\Psi}^{\sigma}(u_1,v_j)f(v_j) + \sum_{v \in \r:\Lambda_\sigma} \sum_{j \in [m_v]} \widetilde{\Psi}^{\sigma}(u_1,v_j)f(v_j)\\
 &= \sum_{v \in V\setminus \r:\Lambda_\sigma} \sum_{j \in [m_v]} \widetilde{\Psi}^{\sigma}(u_1,v_j)f(v_j)	.
\end{align*}
%
%
%
This proves~\eqref{eq-claim-zeros}.
Combining~\eqref{eq-claim-zeros} and~\eqref{eq-same-eig-value}, we have
\begin{align*}
\forall u \in V\setminus \r:\Lambda_\sigma,\quad
\tp{\widehat{\Psi}^{\r:\sigma} g} (u)
    &= 	\sum_{v \in V} \sum_{j \in [m_v]}\widetilde{\Psi}^{\sigma}(u_1,v_j)f(v_j) = \tp{\widetilde{\Psi}^{\sigma} f}(u_1) = \lambda_{\mathrm{eig}}f(u_1) =  \lambda_{\mathrm{eig}} g(u).
\end{align*}
Since $g$ is a non-zero vector,  $\lambda_{\mathrm{eig}}$ is also an eigenvalue of $\widehat{\Psi}^{\r:\sigma}$.
\end{proof}

\subsection{Spectral radius bounds for the intermediate matrices (proof of \Cref{lemma-relation-hat-tilde})}
Let $A,B \in \mathbb{R}^{n \times n}$, if for all $i,j \in [n]$, $A(i,j) \leq B(i,j)$, then we denote
$A \leq B$.  We need the following proposition in linear algebra.
\begin{proposition}[\text{\cite[Corollary 8.1.19]{horn2012matrix}}] \label{lemma:nonnegative-matrix-spectral-ratio-comparation} 
  Let $A, B \in \mathbb{R}^{n\times n}_{\geq 0}$ be two nonnegative matrices.
  If $A \leq B$, then it holds that $\rho(A) \leq \rho(B)$.
\end{proposition}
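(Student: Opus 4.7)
The plan is to prove this monotonicity of the spectral radius on the cone of nonnegative matrices via Gelfand's formula
\[
\rho(M) = \lim_{k\to\infty}\|M^k\|^{1/k},
\]
which holds for any submultiplicative matrix norm. The strategy combines two ingredients: (i) entrywise ordering is preserved under powers of nonnegative matrices, and (ii) standard matrix norms are monotone on the cone of nonnegative matrices.

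First, I would show by induction on $k \geq 1$ that $0 \leq A^k \leq B^k$ entrywise. The base case $k=1$ is the hypothesis. For the inductive step, I write out the $(i,j)$-entries of both sides,
\[
(A^{k+1})_{ij} = \sum_{\ell=1}^n A_{i\ell}\,(A^k)_{\ell j}, \qquad (B^{k+1})_{ij} = \sum_{\ell=1}^n B_{i\ell}\,(B^k)_{\ell j},
\]
and compare termwise. Since $0 \leq A_{i\ell} \leq B_{i\ell}$ and, by induction, $0 \leq (A^k)_{\ell j} \leq (B^k)_{\ell j}$, each summand on the left is dominated by the corresponding summand on the right, and all quantities are nonnegative, so the inequality survives summation.

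Next, I pick a matrix norm that is monotone under entrywise comparison of nonnegative matrices; for concreteness take the max-row-sum norm $\|M\|_\infty \triangleq \max_i \sum_j |M_{ij}|$, which is submultiplicative. For any $0 \leq P \leq Q$ the absolute values of the entries of $P$ are dominated entrywise by those of $Q$, so $\|P\|_\infty \leq \|Q\|_\infty$. Applying this to $P = A^k$ and $Q = B^k$ gives $\|A^k\|_\infty \leq \|B^k\|_\infty$, and hence $\|A^k\|_\infty^{1/k} \leq \|B^k\|_\infty^{1/k}$ for every $k \geq 1$.

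Finally, taking $k \to \infty$ and invoking Gelfand's formula to both sides yields $\rho(A) \leq \rho(B)$. There is no real obstacle here: the result is genuinely elementary once the right tools are lined up. The only subtlety is making sure the chosen norm is both submultiplicative (needed for Gelfand) and monotone on nonnegative matrices; the $\infty$-norm satisfies both, as does the $1$-norm or the Frobenius norm, so any of these works. Note also that this argument avoids the Perron--Frobenius theorem entirely, which keeps the proof self-contained.
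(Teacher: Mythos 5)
Your proof is correct. The paper does not prove this proposition at all --- it simply cites \cite[Corollary 8.1.19]{horn2012matrix} --- and your argument (entrywise $A^k \leq B^k$ by induction, monotonicity of a submultiplicative norm on nonnegative matrices, then Gelfand's formula) is precisely the standard proof given in that reference, so there is nothing to compare or correct.
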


We first prove that 
\begin{align}
\label{eq-radius-proof-1}
\rho\tp{\widehat{\Psi}^{\r:\sigma}} \leq \rho\tp{\Psi^{\r:\sigma}_{\nu}} + 2	
\end{align}
By the definition of $\widehat{\Psi}^{\r:\sigma}$ in~\eqref{eq-def-hat},  for any $u,v \in V \setminus \r:\Lambda_\sigma$ satisfying $u \neq v$, it holds that
\begin{align*}
\widehat{\Psi}^{\r:\sigma}(u,v) &= \sum_{j \in [m_v]} \widetilde{\Psi}^{\sigma}(u_1,v_j)	= \sum_{j \in [m_v]} \frac{\INF{\nu}{\r:\sigma}{u}{v}}{m_v} =\INF{\nu}{\r:\sigma}{u}{v} = \Psi_{\nu}^{\r:\sigma}(u,v),
\end{align*}
where the last equation holds due to \Cref{definition-weight-tot-inf} and the fact $u \neq v$. 
For any $v \in V \setminus \r:\Lambda_\sigma$,
\begin{align*}
\widehat{\Psi}^{\r:\sigma}(v,v) &= \sum_{j \in [m_v]} \widetilde{\Psi}^{\sigma}(v_1,v_j)	= \sum_{j \in [m_v]} \frac{2}{m_v}\leq 2.	
\end{align*}
Combining above inequalities together, we have
\begin{align*}
\widehat{\Psi}^{\r:\sigma} \leq 	\Psi_{\nu}^{\r:\sigma} + 2 I,
\end{align*}
where $I$ is the identity matrix. By \Cref{lemma:nonnegative-matrix-spectral-ratio-comparation}, we have
\begin{align*}
\rho \tp{	\widehat{\Psi}^{\r:\sigma}} \leq \rho \tp{	\Psi_{\nu}^{\r:\sigma} + 2 I} \leq \rho \tp{\Psi_{\nu}^{\r:\sigma} } + 2.
\end{align*}
This proves~\eqref{eq-radius-proof-1}.

We next prove that
\begin{align}
\label{eq-radius-proof-2}
\rho\tp{\Psi^{\sigma}_{\mu_k}} \leq \rho\tp{\widetilde{\Psi}^{\sigma}}.
\end{align}
By \Cref{lemma:nonnegative-matrix-spectral-ratio-comparation}, it suffices to show that $\Psi^{\sigma}_{\mu_k} \leq \widetilde{\Psi}^{\sigma}$.
By \Cref{definition-weight-tot-inf}, for all  $w \in V_k \setminus \Lambda$, it holds that
$\Psi^{\sigma}_{\mu_k}(w,w) = 0$, 
and for all $w, w' \in V_k \setminus \Lambda$ with $w \neq w'$, it holds that $\Psi^{\sigma}_{\mu_k}(w,w') = \INF{\mu_k}{\sigma}{w}{w'}$.
By the definition of $\widetilde{\Psi}^{\sigma}$ in~\eqref{eq-def-tilde}, to prove \Cref{lemma-relation-hat-tilde}, it only remains to verify the following facts:
\begin{align}
&\forall u,v \in V \text{ with } v \in \r:\Lambda_\sigma, \forall i \in [m_u], j \in [m_v]: 
&&\Psi_{\mu_k}^\sigma(u_i,v_j) =\widetilde{\Psi}^{\sigma}(u_i,v_j)=0;\label{eq-target-trivial}\\
&\forall u,v \in V \text{ with } u \neq v , \forall i \in [m_u], j \in [m_v]: 
&&\Psi_{\mu_k}^\sigma(u_i,v_j) =  \INF{\mu_k}{\sigma}{u_i}{v_j} \leq \frac{1}{m_v}\INF{\nu}{\r:\sigma}{u}{v} = \widetilde{\Psi}^{\sigma}(u_i,v_j)\label{eq-target-1};\\
&\forall u \in V, i,j \in [m_u] \text{ with } v \notin \r:\Lambda_\sigma \text{ and } i \neq j: 
&&\Psi_{\mu_k}^\sigma(u_i,u_j)  = \INF{\mu_k}{\sigma}{u_i}{u_j} \leq \frac{2}{m_u} = \widetilde{\Psi}^{\sigma}(u_i,u_j).\label{eq-target-2}
\end{align}
We first verify~\eqref{eq-target-trivial}, which holds trivially because $|\Omega(\mu_{k,v_j}^\sigma)| =|\Omega(\nu_{v}^{\r:\sigma})| = 1$.

The rest of this section is dedicated to verifying the facts given in~\eqref{eq-target-1} and~\eqref{eq-target-2}.
Without loss of generality, when proving~\eqref{eq-target-1} and~\eqref{eq-target-2}, we assume that
\begin{align*}
 \Omega(\mu_{k,u_i}^\sigma) &= \{\0,\1\}	\text{ and } \Omega(\mu_{k,v_j}^\sigma) = \{\0,\1\},\\
  \Omega(\mu_{k,u_i}^\sigma) &= \{\0,\1\}	\text{ and }  \Omega(\mu_{k,u_j}^\sigma) = \{\0,\1\}.
\end{align*}
Otherwise, if $\Omega(\mu_{k,u_i}^\sigma) \neq  \{\0,\1\}$ or $ \Omega(\mu_{k,v_j}^\sigma) \neq \{\0,\1\} $, then it must hold that $|\Omega(\mu_{k,u_i}^\sigma)| = 1$ or $|\Omega(\mu_{k,v_j}^\sigma)| = 1$, thus $\INF{\mu_k}{\sigma}{u_i}{v_j} =0$ and~\eqref{eq-target-1} holds trivially;
if $\Omega(\mu_{k,u_i}^\sigma) \neq \{\0,\1\}$ or $\Omega(\mu_{k,u_j}^\sigma) \neq \{\0,\1\}$, then it must hold that $|\Omega(\mu_{k,u_i}^\sigma) | = 1$ or $|\Omega(\mu_{k,u_j}^\sigma)| = 1$, thus $\INF{\mu_k}{\sigma}{u_i}{u_j} =0$ and~\eqref{eq-target-2} holds trivially.
%
By our assumptions, we can conclude that 
\begin{align*}
u \notin \r:\Lambda_\sigma \text{ and } v \notin \r:\Lambda_\sigma.
\end{align*}
Because if $u \in \r:\Lambda_\sigma$, then it holds that $|\Omega(\mu_{k,u_i}^\sigma)| = 1$; and if $v \in \r:\Lambda_\sigma$, then it holds that $|\Omega(\mu_{k,v_j}^\sigma)| = 1$.

We first prove ~\eqref{eq-target-1}.
Note that $\INF{\mu_k}{\sigma}{u_i}{v_j} = \DTV{\mu_{k,v_j}^{\sigma,u_i\gets \1} }{\mu_{k,v_j}^{\sigma,u_i\gets \0}}$. By definition, we have
\begin{align*}
\mu_{k,v_j}^{\sigma,u_i\gets \1}(\1) = \frac{\Pr[\*Y \sim \mu_k]{Y_{v_j} = \1 \land Y_{u_i} = \1 \land Y_\Lambda = \sigma }  }{\Pr[\*Y \sim \mu_k]{ Y_{u_i} = \1 \land Y_\Lambda = \sigma }  }.
\end{align*}
Recall $\r:\Lambda_{\sigma,\1}$ is defined in~\eqref{eq-lambda-three}.
Let $N = | \r:\Lambda_{\sigma,\1}|$.
By the definition of distribution $\mu_k$, we have
\begin{align}
\label{eq-conditional-1}
&\,\Pr[\*Y \sim \mu_k]{Y_{v_j} = \1 \land Y_{u_i} = \1 \land Y_\Lambda = \sigma }\notag\\
=&\, \sum_{\*X  \in \Omega(\mu)}\mu(\*X)\one{X_{\r:\Lambda_\sigma} = \r:\sigma }\tp{\frac{1}{k}}^N \tp{\prod_{w \in V \setminus \r:\Lambda_\sigma:X_w = \1 }\frac{m_w}{k} } \tp{\frac{\one{X_v = \1 }}{m_v}} \tp{\frac{\one{X_u = \1 }}{m_u}}.
\end{align}
Recall that to draw a random sample $\*Y \sim \mu_k$, one needs to draw a random sample $\*X \sim \mu$, then transform $\*X$ to $\*Y$ according to the rules in \Cref{definition-k-transformation}.
Equation~\eqref{eq-conditional-1} holds due to the following arguments:
\begin{itemize}
\item
If $Y_\Lambda = \sigma $, by definitions in~\eqref{eq-lambda-three} and~\eqref{eq-def-sigma-ast}, it must have $X_{\r:\Lambda_\sigma} = \r:\sigma $ and for every $w \in \r:\Lambda_{\sigma,\1}$, a particular variable $w_\ell \in C_w$ (determined by $\sigma$) is picked and is assigned with value $\1$. 
The probability of such event is $(\frac{1}{k})^N$. 
Furthermore, for every $w \in V \setminus \r:\Lambda_\sigma$, $\sigma$ requires that $k - m_w$ variables in $C_w$ must take value $\0$. 
If $X_w = \1$, a variable among the remaining $m_w$ variables should be picked and assigned with value $\1$.
Such event occurs with probability $\prod_{w \in V \setminus \r:\Lambda_\sigma:X_w = \1 }\frac{m_w}{k}$.
\item
If $Y_{v_j} = \1$, then $X_v = \1$, and $v_j$ must be picked and is assigned with  value $\1$. Since $v \notin \r:\Lambda_\sigma$, conditional on previous events, this event holds with probability $\frac{1}{m_v}$.
\item  If $Y_{u_i} = \1$, then $X_u = \1$, and $u_i$ must be picked and is assigned with value $\1$. Since $u \notin \r:\Lambda_\sigma$, conditional on previous events, this event holds with probability $\frac{1}{m_u}$.  
\end{itemize}
Similarly, we have
\begin{align*}
\Pr[\*Y \sim \mu_k]{ Y_{u_i} = \1 \land Y_\Lambda = \sigma } = \sum_{\*X  \in \Omega(\mu)}\mu(\*X)\one{X_{\r:\Lambda_\sigma} = \r:\sigma }\tp{\frac{1}{k}}^N \tp{\prod_{w \in V \setminus \r:\Lambda_\sigma:X_w = \1 }\frac{m_w}{k} }  \tp{\frac{\one{X_u = \1 }}{m_u}} .	
\end{align*}
Recall $\nu=\mu^{(\*\lambda_\sigma)}$, where the local fields $\*\lambda_\sigma$ are constructed in~\eqref{eq-def-lambda-sigma}. We have
\begin{align}
\label{eq-muk-to-pi}
\mu_{k,v_j}^{\sigma,u_i\gets \1}(\1) &= \frac{ \sum_{\*X  \in \Omega(\mu)}\nu(\*X)\one{X_{\r:\Lambda_\sigma} = \r:\sigma } \tp{\frac{\one{X_v = \1 }}{m_v}}\tp{ \frac{\one{X_u = \1 }}{m_u}}  }{ \sum_{\*X  \in \Omega(\mu)}\nu(\*X)\one{X_{\r:\Lambda_\sigma} = \r:\sigma } \tp{\frac{\one{X_u = \1 }}{m_u}}}\notag\\
&=\frac{1}{m_v} \frac{\Pr[\*X \sim \nu]{X_{\r:\Lambda_\sigma} = \r:\sigma  \land X_v = \1 \land X_u =\1 } }{\Pr[\*X \sim \nu]{X_{\r:\Lambda_\sigma} = \r:\sigma \land X_u =\1 }} \notag\\
&= \frac{\nu^{\r:\sigma, u \gets \1}_v(\1)}{m_v}.
\end{align}
Note that since $\1 \in \Omega(\mu_{k,u_i}^\sigma)$, we have $\1 \in \Omega(\nu^{\r:\sigma}_u)$, and thus $\nu^{\r:\sigma, u \gets \1}_v(\1)$ is well defined. 

Next, we calculate 
\begin{align*}
\mu_{k,v_j}^{\sigma,u_i\gets \0}(\1) = \frac{\Pr[\*Y \sim \mu_k]{Y_{v_j} = \1 \land Y_{u_i} = \0 \land Y_\Lambda = \sigma }  }{\Pr[\*Y \sim \mu_k]{ Y_{u_i} = \0 \land Y_\Lambda = \sigma }  }.
\end{align*}
By the definition of distribution $\mu_k$, we have
\begin{align*}
&\,\Pr[\*Y \sim \mu_k]{Y_{v_j} = \1 \land Y_{u_i} = \0 \land Y_\Lambda = \sigma }\\
=&\, \sum_{\*X  \in \Omega(\mu)}\mu(\*X)\one{X_{\r:\Lambda_\sigma} = \r:\sigma } \tp{\frac{1}{k}}^N\tp{\prod_{\substack{w \in V \setminus \r:\Lambda_\sigma \\ X_w = \1 }}\frac{m_w}{k} } \tp{\frac{\one{X_v = \1 }}{m_v}}\tp{ \one{X_u = \1 }\frac{m_u - 1}{m_u} + \one{X_u = \0 } } .
\end{align*}
Compared with~\eqref{eq-conditional-1}, the only difference is calculating the probability of event $Y_{u_i} = \0$ conditional on other events. Note that $u \notin \r:\Lambda_\sigma$. If $X_u = \0$, then it must hold that $Y_{u_i}= \0$; and if otherwise $X_u = \1$,  a variable among all $m_u$ unfixed variables in $C_u$ is picked, and such variable cannot be $u_i$, which occurs with probability $\frac{m_u-1}{m_u}$.
Similarly, we have
\begin{align*}
&\,\Pr[\*Y \sim \mu_k]{ Y_{u_i} = \0 \land Y_\Lambda = \sigma }\\
=&\, \sum_{\*X  \in \Omega(\mu)}\mu(\*X)\one{X_{\r:\Lambda_\sigma} = \r:\sigma } \tp{\frac{1}{k}}^N\tp{\prod_{w \in V \setminus \r:\Lambda_\sigma:X_w = \1 }\frac{m_w}{k} } \tp{ \one{X_u = \1 }\frac{m_u - 1}{m_u} + \one{X_u = \0 } } .
\end{align*}
We define a vector of local fields $\*\lambda_\sigma'$ as
\begin{align*}
\forall w \in V,\quad \lambda_\sigma'(w) = \begin{cases}
\frac{m_u-1}{k} &\text{if } w = u,\\
 \frac{m_w}{k} &\text{if } w \in V \setminus (\{u\}\cup  \r:\Lambda_\sigma),\\
 1 &\text{if } w \in \r:\Lambda_\sigma.
 \end{cases}
\end{align*}
Note that $\*\lambda_\sigma'$ differs from $\*\lambda_\sigma$ only at $u$, where $\lambda_\sigma'(u) = \frac{m_u-1}{k}$ and $\lambda_\sigma(u) = \frac{m_u}{k}$.
Since $u,v \in V \setminus \r:\Lambda_\sigma$, we have
\begin{align*}
\Pr[\*Y \sim \mu_k]{Y_{v_j} = \1 \land Y_{u_i} = \0 \land Y_\Lambda = \sigma }&= \sum_{\*X  \in \Omega(\mu)}\frac{\mu(\*X)\one{X_{\r:\Lambda_\sigma} = \r:\sigma }}{k^N} \tp{\prod_{\substack{w \in V \setminus \r:\Lambda_\sigma \\ X_w = \1 }}\lambda_\sigma'(w)} \tp{\frac{\one{X_v = \1 }}{m_v}},
\end{align*}
and
\begin{align*}
\Pr[\*Y \sim \mu_k]{ Y_{u_i} = \0 \land Y_\Lambda = \sigma }&=\sum_{\*X  \in \Omega(\mu)}\frac{\mu(\*X)\one{X_{\r:\Lambda_\sigma} = \r:\sigma }}{k^N} \tp{\prod_{\substack{w \in V \setminus \r:\Lambda_\sigma \\ X_w = \1 }}\lambda_\sigma'(w)}.
\end{align*}
We define a new distribution $\pi$ by imposing the local fields $\*\lambda'_\sigma$ to $\mu$:
\begin{align*}
\pi = \mu^{(\*\lambda_\sigma')}.
\end{align*}
If $m_u > 1$, then $\*\lambda_\sigma'$ is a positive vector, the distribution $\pi$ is well-defined. 
If $m_u = 1$, then $\lambda_\sigma'(u) = 0$ and $\lambda_\sigma'(w) > 0$ for all $w \neq u$. 
In this case, variable $u$  can only take value $\0$, and  the  distribution $\pi$ is well-defined as long as $\mu_u(\0) > 0$. 
Note that $\Omega(\mu_{k,u_i}^\sigma) = \{\0,\1\}$.
Conditional on $\sigma$, $u_i$ takes value $\0$ with positive probability in distribution $\mu_k$.
Since $m_u = 1$, with a positive probability, all variables in $C_u$ take value $\0$.
By the definition of distribution $\mu_k$, we have
\begin{align}
\label{eq-mu=1}
m_u = 1  \quad \implies \quad \mu^{\r:\sigma}_u(\0) > 0.
\end{align}
Hence $\mu_u(\0) > 0$ and the distribution $\pi$ is always well-defined.

We have
\begin{align}
\label{eq-muk-to-nu}
\mu_{k,v_j}^{\sigma,u_i\gets \0}(\1) &=	\frac{\sum_{\*X  \in \Omega(\mu)}\mu(\*X)\one{X_{\r:\Lambda_\sigma} = \r:\sigma }\tp{\frac{\one{X_v = \1 }}{m_v}}\tp{\prod_{w \in V \setminus \r:\Lambda_\sigma:X_w = \1 }\lambda'_\sigma(w) }  }{\sum_{\*X  \in \Omega(\mu)}\mu(\*X)\one{X_{\r:\Lambda_\sigma} = \r:\sigma }\tp{\prod_{w \in V \setminus \r:\Lambda_\sigma:X_w = \1 }\lambda'_\sigma(w) } }\notag\\
&= \frac{\sum_{\*X  \in \Omega(\mu)}\pi(\*X)\one{X_{\r:\Lambda_\sigma} = \r:\sigma }\tp{\frac{\one{X_v = \1 }}{m_v}} }{\sum_{\*X  \in \Omega(\mu)}\pi(\*X)\one{X_{\r:\Lambda_\sigma} = \r:\sigma } }\notag\\
&= \frac{1}{m_v}\frac{\Pr[\*X\sim \pi]{X_{\r:\Lambda_\sigma} = \r:\sigma \land X_v = \1} }{\Pr[\*X \sim \pi]{X_{\r:\Lambda_\sigma} = \r:\sigma}}\notag\\
&= \frac{\pi^{\r:\sigma}_v(\1)}{m_v}.
\end{align}

Combining~\eqref{eq-muk-to-pi} and~\eqref{eq-muk-to-nu}, we have 
\begin{align}
\label{eq-25-1}
\INF{\mu_k}{\sigma}{u_i}{v_j} 
&= \DTV{\mu_{k,v_j}^{\sigma,u_i\gets \1} }{\mu_{v_j}^{\sigma,u_i\gets \0}} \notag\\
&= \abs{\mu_{k,v_j}^{\sigma,u_i\gets \1}(\1) - \mu_{v_j}^{\sigma,u_i\gets \0}(\1)}\notag\\
&= \frac{1}{m_v}\abs{\nu^{\r:\sigma, u \gets \1}_v(\1)-\pi^{\r:\sigma}_v(\1)} \notag\\
&= \frac{1}{m_v}\DTV{\nu^{\r:\sigma, u \gets \1}_v}{\pi^{\r:\sigma}_v}.
\end{align}

Finally, we bound $\DTV{\nu^{\r:\sigma, u \gets \1}_v}{\pi^{\r:\sigma}_v}$. We construct the following coupling between $\nu^{\r:\sigma, u \gets \1}_v$ and $\pi^{\r:\sigma}_v$:
\begin{itemize}
\item sample a random value $c \in \{\0,\1\}$ according to the distribution $\pi^{\r:\sigma}_u$;
\item sample $c_v,c_v'$ jointly according to the optimal coupling between $\nu^{\r:\sigma, u \gets \1}_v$ and $\pi^{\r:\sigma, u \gets c}_v$.	
\end{itemize}
Recall that $\nu = \mu^{(\*\lambda_\sigma)}$ and $\pi = \mu^{(\*\lambda_\sigma')}$.
We claim that 
\begin{align}
\label{eq-claim-feasible-solution}
\1 \in \Omega(\nu_u^{\r:\sigma}) \quad\text{and}\quad 	\Omega(\pi_u^{\r:\sigma}) \subseteq \Omega(\nu_u^{\r:\sigma}).
\end{align}

Assume that \eqref{eq-claim-feasible-solution} is correct. 
Recall that $\*\lambda_\sigma'$ differs from $\*\lambda_\sigma$ only at $u$, where $\lambda_\sigma'(u) = \frac{m_u-1}{k}$ and $\lambda_\sigma(u) = \frac{m_u}{k}$. 
It is straightforward to verify that 
\begin{align*}
\forall c \in \Omega(\pi_u^{\r:\sigma}), \quad \pi^{\r:\sigma, u \gets c}_v = \nu^{\r:\sigma, u \gets c}_v.
\end{align*}
By the coupling inequality, we have 
\begin{align}
\label{eq-25-2}
\DTV{\nu^{\r:\sigma, u \gets \1}_v}{\pi^{\r:\sigma}_v}
&\leq \Pr[]{c_v \neq c'_v} \notag\\
&= \sum_{c \in \Omega(\pi_u^{\r:\sigma})}	\pi_u^{\r:\sigma}(c)\DTV{\nu^{\r:\sigma, u \gets \1}_v}{\pi^{\r:\sigma, u \gets c}_v}\notag\\
&= \sum_{c \in \Omega(\pi_u^{\r:\sigma})}	\pi_u^{\r:\sigma}(c)\DTV{\nu^{\r:\sigma, u \gets \1}_v}{\nu^{\r:\sigma, u \gets c}_v}\notag\\
&\leq \max_{i,j \in \Omega(\nu^{\r:\sigma}_u) }\DTV{\nu^{\r:\sigma, u \gets i}_v}{\nu^{\r:\sigma, u \gets j}_v} \notag\\
&= \INF{\nu}{\r:\sigma}{u}{v}.
\end{align}
Combining~\eqref{eq-25-1} and~\eqref{eq-25-2} proves~\eqref{eq-target-1}. 

We now verify~\eqref{eq-claim-feasible-solution}. Since $\*\lambda_\sigma$ is a positive vector, then $\nu$ and $\mu$ have the same support. 
Note that we assume $\Omega(\mu^\sigma_{k,u_i}) = \{\0,\1\} $. 
By the definition of $\mu_k$ and $\r:\sigma$, it holds that  $\1 \in \Omega(\mu_u^{\r:\sigma})$, hence $\1 \in \Omega(\nu_u^{\r:\sigma})$.
We now show that $\Omega(\pi_u^{\r:\sigma}) \subseteq \Omega(\nu_u^{\r:\sigma})$. Note that $\lambda_\sigma'(u) = \frac{m_u-1}{k}$ and $\lambda_\sigma(u) = \frac{m_u}{k}$. There are two cases:
\begin{itemize}
\item If $m_u > 1$, then $\*\lambda_\sigma$ and $\*\lambda_\sigma'$ are both positive, $\nu$ and $\pi$ have the same support. Thus, $\Omega(\pi_u^{\r:\sigma}) = \Omega(\nu_u^{\r:\sigma})$.
\item If $m_u = 1$, in this case $\Omega(\pi_u^{\r:\sigma})=\{\0\}$, by~\eqref{eq-mu=1}, $\0 \in \Omega(\mu^{\r:\sigma}_u)$. Since $\*\lambda_\sigma$ is a positive vector, $\mu$ and $\nu$ have the same support, thus $\0 \in \Omega(\nu^{\r:\sigma}_u)$.
\end{itemize}

Next, we prove~\eqref{eq-target-2}. Recall that without loss of generality $\Omega(\mu_{k,u_i}^\sigma) = \{\0,\1\}$. We have 
\begin{align*}
\INF{\mu_k}{\sigma}{u_i}{u_j} = \DTV{\mu_{k,u_j}^{\sigma,u_i\gets \1} }{\mu_{k,u_j}^{\sigma,u_i\gets \0}}	= \mu_{k,u_j}^{\sigma,u_i\gets \0}(\1),
\end{align*}
which is due to that $\mu_{k,u_j}^{\sigma,u_i\gets \1}(\1) = 0$.
By definition,
\begin{align*}
\mu_{k,u_j}^{\sigma,u_i\gets \0}(\1) = \frac{ \Pr[\*Y \sim \mu_k]{Y_{u_j} = \1 \land Y_{u_i} = \0 \land Y_{\Lambda} = \sigma } }{\Pr[\*Y \sim \mu_k]{Y_{u_i} = \0 \land Y_{\Lambda} = \sigma } }.	
\end{align*}
Recall $N = | \r:\Lambda_{\sigma,\1}|$.
By the definition of distribution $\mu_k$, we have
\begin{align*}
&\Pr[\*Y \sim \mu_k]{Y_{u_j} = \1 \land Y_{u_i} = \0 \land Y_{\Lambda} = \sigma } \\
=&\, \sum_{\*X  \in \Omega(\mu)}\mu(\*X)\one{X_{\r:\Lambda_\sigma} = \r:\sigma }\tp{\frac{1}{k}}^N \tp{\prod_{w \in V \setminus \r:\Lambda_\sigma:X_w = \1 }\frac{m_w}{k} } \frac{\one{X_u = \1 }}{m_u},
\end{align*}
Compared with~\eqref{eq-conditional-1}, the only difference is calculating the probability of event $Y_{u_i} = \0 \land Y_{u_j} = \1$ conditional on other events. Note that $u \notin \r:\Lambda_\sigma$. 
If $Y_{u_i} = \0 \land Y_{u_j} = \1$, then $X_u = \1$, and we pick variable $u_j$ to set its value to $\1$. Conditional on other events, this event occurs with probability $\frac{1}{m_u}$.

We assume $\Pr[\*Y \sim \mu_k]{Y_{u_j} = \1 \land Y_{u_i} = \0 \land Y_{\Lambda} = \sigma } > 0$, since if otherwise $\INF{\mu_k}{\sigma}{u_i}{u_j} = \mu_{k,u_j}^{\sigma,u_i\gets \0}(\1) = 0$ and \eqref{eq-target-2} holds trivially.
Similarly, 
\begin{align*}
&\,\Pr[\*Y \sim \mu_k]{Y_{u_i} = \0 \land Y_{\Lambda} = \sigma } \notag\\
=&\, \sum_{\*X  \in \Omega(\mu)}\mu(\*X)\one{X_{\r:\Lambda_\sigma} = \r:\sigma }\tp{\frac{1}{k}}^N \tp{\prod_{w \in V \setminus \r:\Lambda_\sigma:X_w = \1 }\frac{m_w}{k} }\tp{ \one{X_u = \1 }\frac{m_u - 1}{m_u} + \one{X_u = \0 } }\\
\geq &\, \sum_{\*X  \in \Omega(\mu)}\mu(\*X)\one{X_{\r:\Lambda_\sigma} = \r:\sigma }\tp{\frac{1}{k}}^N \tp{\prod_{w \in V \setminus \r:\Lambda_\sigma:X_w = \1 }\frac{m_w}{k} }\tp{ \one{X_u = \1 }\frac{m_u - 1}{m_u}}\\
=&\, (m_u - 1)\Pr[Y \sim \mu_k]{Y_{u_j} = \1 \land Y_{u_i} = \0 \land Y_{\Lambda} = \sigma }.
\end{align*}
In~\eqref{eq-target-2}, we assume $i \neq j$, which implies $m_u \geq 2$. We have
\begin{align*}
\INF{\mu_k}{\sigma}{u_i}{u_j} 
&= 	\frac{ \Pr[Y \sim \mu_k]{Y_{u_j} = \1 \land Y_{u_i} = \0 \land Y_{\Lambda} = \sigma } }{\Pr[Y \sim \mu_k]{Y_{u_i} = \0 \land Y_{\Lambda} = \sigma } } \\
&\leq \frac{1}{m_u-1} \\
&\leq \frac{2}{m_u},
\end{align*}
which proves~\eqref{eq-target-2}. Together with the above proof of~\eqref{eq-target-1}, this proves \Cref{lemma-relation-hat-tilde}, which concludes our proof of \Cref{lemma:block-dynamics-spectral-gap}.

\section{Spectral Gaps of Two-Spin Systems}\label{section-spectral-gap-2-spin}
In this section, we prove \Cref{lemma-good-direction} and its slightly strengthened variants for the hardcore and Ising models, which respectively imply \Cref{theorem-2pin-gap}, \Cref{theorem-hardcore} and \Cref{theorem-Ising}.


Let $\+I = (V, E, \lambda, \beta, \gamma)$ denote a two-spin system on an $n$-vertex graph $G = (V, E)$ with maximum degree $\Delta=\Delta_G\ge 3$ specified by parameters $(\beta,\gamma,\lambda)$.
Let $\mu$ denote the Gibbs distribution induced by $\+I$.
%
%
We assume that the system is anti-ferromagnetic, that is, $0\le \beta\le\gamma$, $\beta\gamma<1$,  and $\gamma,\lambda>0$.
Let $\delta\in(0,1)$ be a positive gap.
We assume that $(\beta,\gamma,\lambda)$ is up-to-$\Delta$ unique with gap $\delta$ (\Cref{definition:up-to-Delta-unique}).

Given any $\*\chi \in \{\0, \1\}^V$, recall the distribution $\nu = \mathsf{flip}(\mu,\*\chi)$ obtained by flipping $\mu$ according to $\*\chi$:
\begin{align*}
  \forall \sigma \in \{\0, \1\}^V, \quad \nu(\sigma) \triangleq \mu(\sigma \odot \*\chi),
\end{align*}	
where $\sigma \odot \*\chi \in \{\0,\1\}^V$ satisfying $(\sigma \odot \*\chi)_v = \sigma_v \chi_v$ for all $v \in V$.
Obviously, $\mu=\mathsf{flip}(\mathsf{flip}(\mu,\*\chi),\*\chi)$.

Given any $\*\chi \in \{-1,+1\}^V$,
for any vector $\*\theta \in (0,1]^V$ or scalar $\theta\in(0,1]$, we write:
\begin{align}
\label{eq-vector-decomp}
\ExtVec{\*\theta}{\*\chi}\triangleq (\theta_v^{\chi_v})_{v\in V}
\quad \text{ and }\quad
\ExtVec{\theta}{\*\chi}\triangleq (\theta^{\chi_v})_{v\in V}.
\end{align}


The following fact for flipping follows easily from the factorization of local fields.
\begin{fact}\label{observation:dist-equiv-0}
For any $\*\theta \in (0,1]^V$ and $\*\chi \in \{\0,\1\}^V$, $\Mag{\nu}{\*\theta}=\mathsf{flip}(\Mag{\mu}{\*\theta^{\*\chi}},\*\chi)$ where $\nu = \mathsf{flip}(\mu,\*\chi)$.
\end{fact}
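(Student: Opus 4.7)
The plan is to reduce the identity to a line-by-line comparison of the two Radon--Nikodym formulas for $\Mag{\nu}{\*\theta}$ and $\mathsf{flip}(\Mag{\mu}{\*\theta^{\*\chi}},\*\chi)$. Since both sides are probability distributions on the same finite space $\{\0,\1\}^V$, it suffices to establish proportionality: if the two unnormalized densities differ only by a multiplicative constant independent of the point $\sigma$, they must coincide.

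First I would expand the left-hand side using \Cref{defintion-magnetizing} together with the definition of $\nu = \mathsf{flip}(\mu,\*\chi)$, obtaining
\begin{align*}
\Mag{\nu}{\*\theta}(\sigma) \;\propto\; \mu(\sigma \odot \*\chi) \prod_{v : \sigma_v = \1} \theta_v.
\end{align*}
Next I would expand the right-hand side by first magnetizing with $\*\theta^{\*\chi}$ and then flipping, which gives
\begin{align*}
\mathsf{flip}(\Mag{\mu}{\*\theta^{\*\chi}},\*\chi)(\sigma) \;\propto\; \mu(\sigma \odot \*\chi) \prod_{v : (\sigma \odot \*\chi)_v = \1} \theta_v^{\chi_v}.
\end{align*}
The common factor $\mu(\sigma \odot \*\chi)$ cancels, so the remaining task is to show that the two products of local-field terms differ only by a $\sigma$-independent constant.

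For this I would split the product over $v$ according to the sign of $\chi_v$, using the observation that $(\sigma \odot \*\chi)_v = \sigma_v \chi_v = \1$ iff $\sigma_v = \chi_v$. On coordinates with $\chi_v = \1$ the two products contribute matching $\theta_v$ factors; on coordinates with $\chi_v = \0$ the right-hand product contributes $\theta_v^{-1}$ exactly when $\sigma_v = \0$ while the left-hand product contributes $\theta_v$ exactly when $\sigma_v = \1$. Multiplying and dividing by $\prod_{v : \chi_v = \0} \theta_v$ converts one into the other modulo the $\sigma$-independent constant $\prod_{v : \chi_v = \0} \theta_v^{-1}$, finishing the proportionality check. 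I do not expect any genuine obstacle here---the statement is the precise bookkeeping identity asserting that flipping the roles of $\0$ and $\1$ at a vertex $v$ inverts its local field from $\theta_v$ to $\theta_v^{-1}$, which is exactly what the exponent notation $\theta_v^{\chi_v}$ with $\chi_v \in \{\0,\1\} = \{-1,+1\}$ was designed to encode.
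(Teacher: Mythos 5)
Your proposal is correct and matches the paper's argument in substance: both proofs reduce to showing that $\prod_{v:\sigma_v=\1}\theta_v$ and $\prod_{v:(\sigma\odot\*\chi)_v=\1}\theta_v^{\chi_v}$ differ by the $\sigma$-independent factor $\prod_{v:\chi_v=\0}\theta_v$, via splitting the index set by the sign of $\chi_v$. The paper phrases this as an equality of ratios $\nu^{(\*\theta)}(\sigma)/\nu^{(\*\theta)}(\tau)$ while you phrase it as proportionality of unnormalized densities, but these are the same step.
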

\begin{proof}
For all $\*\theta \in (0,1]^V$ and $\sigma,\tau \in \Omega(\nu)$,
  \begin{align*}
    \frac{\nu^{(\*\theta)}(\sigma)}{\nu^{(\*\theta)}(\tau)} = \frac{\nu(\sigma) \prod_{u \in \sigma^{-1}(\1)} \theta_u}{\nu(\tau) \prod_{v \in \tau^{-1}(\1)}\theta_v}
    = \frac{\mu(\sigma \odot \*\chi) \prod_{u \in \sigma^{-1}(\1)} \theta_u}{\mu(\tau \odot \*\chi) \prod_{v \in \tau^{-1}(\1)}\theta_v}
    \overset{(\ast)}{=} \frac{\mu(\sigma \odot \*\chi) \prod_{u \in (\sigma \odot \*\chi)^{-1}(\1)} \theta_u^{\chi_u}}{\mu(\tau \odot \*\chi) \prod_{v \in (\tau \odot \*\chi)^{-1}(\1)}\theta_v^{\chi_v}}
    = \frac{\mu^{(\*\theta^{\*\chi})}(\sigma \odot \*\chi)}{\mu^{(\*\theta^{\*\chi})}(\tau \odot \*\chi)},
  \end{align*}
where the equation $(\ast)$ holds because 
\begin{align*}
\frac{\prod_{u \in (\sigma \odot \*\chi)^{-1}(\1)} \theta_u^{\chi_u}}{\prod_{v \in (\tau \odot \*\chi)^{-1}(\1)}\theta_v^{\chi_v}} &= \frac{\prod_{u \in \sigma^{-1}(\1) \cap \*\chi^{-1}(\1)}\theta_u \cdot  \prod_{v \in \sigma^{-1}(\0) \cap \*\chi^{-1}(\0)}\frac{1}{\theta_v} \cdot \prod_{w \in \*\chi^{-1}(\0)}\theta_w }{\prod_{u \in \tau^{-1}(\1) \cap \*\chi^{-1}(\1)}\theta_u \cdot \prod_{v \in \tau^{-1}(\0) \cap \*\chi^{-1}(\0)}\frac{1}{\theta_v} \cdot \prod_{w \in \*\chi^{-1}(\0)}\theta_w } \\
&=\frac{\prod_{u \in \sigma^{-1}(\1) \cap \*\chi^{-1}(\1)}\theta_u \cdot  \prod_{v \in \sigma^{-1}(\1) \cap \*\chi^{-1}(\0)}\theta_v }{\prod_{u \in \tau^{-1}(\1) \cap \*\chi^{-1}(\1)}\theta_u \cdot \prod_{v \in \tau^{-1}(\1) \cap \*\chi^{-1}(\0)}\theta_v } 	= \frac{\prod_{u \in \sigma^{-1}(\1)} \theta_u}{\prod_{v \in \tau^{-1}(\1)}\theta_v}.
\end{align*}
The above equation implies $\nu^{(\*\theta)}(\sigma) = \mu^{(\*\theta^{\*\chi})}(\sigma \odot \*\chi)$, i.e.~$\Mag{\nu}{\*\theta}=\mathsf{flip}(\Mag{\mu}{\*\theta^{\*\chi}},\*\chi)$.
\end{proof}

We generalize the notion of complete spectral independence (\Cref{definition-complete-SI}) to different directions. 
\begin{definition}[complete SI in a direction]
Let 
$\*\chi \in \{\0, \1\}^V$.
A distribution $\mu$ over $\{\0,\1\}^V$ is said to be \emph{completely $\eta$-spectrally independent in direction $\*\chi$} if $\Mag{\mu}{\*\theta^{\*\chi}}$ is $\eta$-spectrally independent for all $\*\theta \in (0,1]^V$.
\end{definition}


The complete $\eta$-spectral independence defined in \Cref{definition-complete-SI}
is the special case of the above definition with direction $\*\chi=\*1$, where $\*1=(\1)_{v\in V}$ denotes the all-$(\1)$ vector.
The following observation follows from \Cref{observation:dist-equiv-0} and the fact that flipping defines isomorphisms for measures and dynamics.
\begin{observation}\label{observation:dist-equiv}
Let $\nu = \mathsf{flip}(\mu,\*\chi)$ for distribution $\mu$ over $\{\0,\1\}^V$ and $\*\chi \in \{\0,\1\}^V$. 
\begin{itemize}
\item $\nu$ is completely $\eta$-spectrally independent if and only if $\mu$ is completely $\eta$-spectrally independent in direction $\*\chi$;
\item For any $\*\theta \in (0,1]^V$, it holds that 
$\spgap{\sgap}{\GD}(\Mag{\mu}{\*\theta^{\*\chi}}) = \spgap{\sgap}{\GD}( \Mag{\nu}{\*\theta})$.
\end{itemize}
\end{observation}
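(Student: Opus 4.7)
The plan is to reduce both items to two invariance properties of the operator $\mathsf{flip}(\cdot,\*\chi)$: it preserves the entries of the absolute influence matrix (up to a natural relabelling of pinnings), and it conjugates the Glauber transition matrix by a permutation. Once these two invariances are established, both items follow by applying \Cref{observation:dist-equiv-0}, which already supplies the identity $\Mag{\nu}{\*\theta}=\mathsf{flip}(\Mag{\mu}{\*\theta^{\*\chi}},\*\chi)$ for every $\*\theta\in(0,1]^V$.

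For the first item, it suffices to prove the auxiliary claim that for any distribution $\pi$ on $\{\0,\1\}^V$ and any $\*\chi$, $\pi$ is $\eta$-spectrally independent if and only if $\mathsf{flip}(\pi,\*\chi)$ is; specialising to $\pi=\Mag{\mu}{\*\theta^{\*\chi}}$ and letting $\*\theta$ range over $(0,1]^V$ then matches the two ``complete'' versions on the nose via \Cref{observation:dist-equiv-0}. To verify the auxiliary claim I would fix $\pi$, set $\tilde\pi=\mathsf{flip}(\pi,\*\chi)$, and work with the bijection $\sigma\mapsto\sigma\odot\*\chi$ on $\{\0,\1\}^V$: this bijection restricts, for every $\Lambda\subseteq V$, to a bijection between $\Omega(\tilde\pi_\Lambda)$ and $\Omega(\pi_\Lambda)$, and the single-variable marginals $\tilde\pi^{\tau_\Lambda,u\gets c}_v$ and $\pi^{\tau_\Lambda\odot\*\chi,\,u\gets c\chi_u}_v$ are related by the one-coordinate relabelling $c\mapsto c\chi_v$ on $\{\0,\1\}$. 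Total variation distance is invariant under such bijections of the underlying space, and the maximum in \Cref{definition-weight-tot-inf} is taken over all $c,c'$ in the support, so the relabelling $c\mapsto c\chi_u$ leaves the maximum unchanged. Therefore $\Psi^{\tau_\Lambda}_{\tilde\pi}(u,v)=\Psi^{\tau_\Lambda\odot\*\chi}_{\pi}(u,v)$ entry-wise, and in particular the two influence matrices have the same spectral radius on corresponding pinnings.

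For the second item, the same bijection $\sigma\mapsto\sigma\odot\*\chi$ induces a permutation matrix that conjugates the Glauber transition matrix of any distribution $\pi$ into the Glauber transition matrix of $\mathsf{flip}(\pi,\*\chi)$: the uniform choice of updated site is unaffected by flipping, and the heat-bath resampling at $v$ is governed by the single-site conditional of $\pi$ (respectively $\tilde\pi$), which transforms covariantly under the bijection. Permutation-similar matrices share all eigenvalues, so the spectral gaps agree. Exactly the same argument applies to every pinning $\sigma_\Lambda$, which is carried to $\sigma_\Lambda\odot\*\chi$, and the conditional distributions $\pi^{\sigma_\Lambda}$ and $\tilde\pi^{\sigma_\Lambda\odot\*\chi}$ are again $\mathsf{flip}$-related, so taking the worst case over pinnings yields $\spgap{\sgap}{\GD}(\pi)=\spgap{\sgap}{\GD}(\mathsf{flip}(\pi,\*\chi))$. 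Instantiating this invariance at $\pi=\Mag{\mu}{\*\theta^{\*\chi}}$ and invoking \Cref{observation:dist-equiv-0} gives the stated equality. The only real technical content is the bookkeeping that $\mathsf{flip}$ commutes with pinning and with single-site resampling; this is a direct unwinding of $\nu(\sigma)=\mu(\sigma\odot\*\chi)$ and poses no genuine obstacle.
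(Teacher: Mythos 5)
Your proposal is correct and follows essentially the same route the paper takes (the paper simply asserts the observation "follows from \Cref{observation:dist-equiv-0} and the fact that flipping defines isomorphisms for measures and dynamics" without elaboration). You are spelling out exactly that isomorphism: the bijection $\sigma\mapsto\sigma\odot\*\chi$ preserves absolute influence-matrix entries under corresponding pinnings and conjugates Glauber transition matrices by a permutation, which combined with $\Mag{\nu}{\*\theta}=\mathsf{flip}(\Mag{\mu}{\*\theta^{\*\chi}},\*\chi)$ yields both items.
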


The good direction $\*\chi\in\{\0,\1\}^V$ is defined as follows.
\begin{align} \label{eq:direction}
\forall v\in V,\quad
  \chi_v = \mathrm{sgn}(\Delta_v, \lambda) &\triangleq \begin{cases}
    \1 & \lambda \leq \tp{\frac{\gamma}{\beta}}^{\Delta_v / 2},\\
    \0 & \text{otherwise,}
  \end{cases}
\end{align}
where $\Delta_v$ denotes the degree of $v$ in $G$ and we assume that $\frac{\gamma}{0} = +\infty$ for $\gamma > 0$.

With~\Cref{observation:dist-equiv}, \Cref{lemma-good-direction} can be equivalently stated as follows.

\begin{lemma}\label{lm:two-spin-property}
For all $\delta\in(0,1)$, 
for every anti-ferromagnetic two-spin system on an $n$-vertex graph $G=(V,E)$ with maximum degree $\Delta=\Delta_G\ge 3$ 
that is up-to-$\Delta$ unique with gap $\delta$, 
for  the $\*\chi \in \{\0, \1\}^V$ in~\eqref{eq:direction},
  \begin{enumerate}
    \item the Gibbs distribution $\mu$ is completely $\frac{144}{\delta}$-spectrally independent in direction $\*\chi$;
    \item for $\theta=\frac{\delta^2}{64}$ and $C=\frac{8}{\delta}$, it holds that $\spgap{\sgap}{\GD}\tp{\mu^{\tp{\theta^{\*\chi}}}} \ge \frac{1}{Cn}$.
  \end{enumerate}
\end{lemma}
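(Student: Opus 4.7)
The plan is to reduce both parts of the lemma to statements about a modified two-spin system obtained by biasing the local field at each vertex $v$ by a factor $\theta_v^{\chi_v}$. This uses the elementary fact that $\Mag{\mu}{\ExtVec{\*\theta}{\*\chi}}$ is itself the Gibbs distribution of an anti-ferromagnetic two-spin system with unchanged edge parameters $(\beta,\gamma)$ and vertex-dependent external fields $\lambda'_v = \lambda \cdot \theta_v^{\chi_v}$. The good direction $\*\chi$ defined in~\eqref{eq:direction} is engineered so that biasing $\lambda_v$ by any factor in $(0,1]$ along $\chi_v$ moves the univariate recursion $F_d$ strictly away from its critical slope, for every $1 \le d \le \Delta_v$.

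For Part~1, I would first establish a directional monotonicity statement: for each vertex $v$ and each $1 \le d \le \Delta_v$, the contraction factor $f_d(\hat{x}_d)$ in~\eqref{eq-def-fd} is non-increasing when $\lambda_v$ is multiplied by any $\theta_v \in (0,1]$ in the direction indicated by $\chi_v$. Since the symmetry point of $F_d$ lies near $\lambda = (\gamma/\beta)^{d/2}$, which is precisely the threshold appearing in~\eqref{eq:direction}, this can be verified by differentiating $f_d(\hat{x}_d)$ with respect to $\lambda_v$ and inspecting the sign on either side of the threshold. Consequently the perturbed system remains up-to-$\Delta$ unique with gap at least $\delta$, and I can then invoke the tight bound of~\cite{chen2020rapid} (which extends the spatial-mixing analysis of~\cite{LLY13} to vertex-dependent external fields) to conclude that every absolute influence matrix under every feasible pinning has spectral radius at most $144/\delta$.

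For Part~2, I would take $\theta = \delta^2/64$, let $\pi = \Mag{\mu}{\ExtVec{\theta}{\*\chi}}$, and bound the spectral gap of the Glauber dynamics on $\pi$ via a one-step path coupling. The main task is to show that biasing every local field by $\theta = \delta^2/64$ in the good direction strengthens the uniqueness contraction $f_d(\hat{x}_d) \le 1-\delta$ into an edge-wise Dobrushin influence bound: the total variation effect on any single-site marginal caused by flipping a neighbor is at most $(1 - \delta/8)/d$, so that summing over the at most $d$ neighbors of an updated vertex yields a Dobrushin coefficient at most $1 - \delta/8$. The path coupling inequality (\Cref{lemma:MFC}) then upgrades this to $\spgap{gap}{\GD}(\pi) \ge \delta/(8n) = 1/(Cn)$, and because the same edge-wise Dobrushin bound persists under every feasible pinning $\sigma_\Lambda$, the identical argument gives $\spgap{\sgap}{\GD}(\pi) \ge 1/(Cn)$ with $C = 8/\delta$. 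The hardest part will be the quantitative calibration, namely showing that a $\Theta(\delta^2)$-factor bias is exactly the right order to convert the uniqueness slack $\delta$ into a path-coupling slack of order $\delta$, uniformly across the case split on $\beta = 0$ vs.~$\beta > 0$ and on the two sub-cases of~\eqref{eq:direction}; Part~1, by contrast, is largely a black-box reuse of existing spectral independence machinery once the directional monotonicity of $f_d(\hat{x}_d)$ is verified.
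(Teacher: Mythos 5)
Your overall plan follows the same outline as the paper: express $\Mag{\mu}{\ExtVec{\*\theta}{\*\chi}}$ as a two-spin system with vertex-dependent local fields $\lambda_v = \lambda\,\theta_v^{\chi_v}$, show the bias preserves the tree-recursion uniqueness contraction at every $d_v$ (this is essentially~\Cref{lemma-unique-closed-in-good-direction}, and your idea of tracking the sign of $\frac{\partial}{\partial\lambda}f_d(\hat{x}_d)$ across $\lambda=(\gamma/\beta)^{\Delta_v/2}$ is a sound alternate route), feed this into the potential-function spectral-independence machinery for Part~1, and run path coupling in the easier regime for Part~2. However, there are two genuine gaps.

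For Part~1, spectral independence in the CLV framework is not a black-box consequence of uniqueness alone: one needs an $(\alpha,c)$-potential function, which requires verifying \emph{both} contraction \emph{and} boundedness. Your directional-monotonicity argument covers the contraction side (that the per-vertex $d_v$-uniqueness with gap $\delta$ persists under the bias). But the paper must separately check that $\max_{y \in J_{\lambda_v,d_v}}\abs{h(y)} \leq \max_{y\in J_{\lambda,d_v}}\abs{h(y)}$, i.e., that the ranges $J_{\lambda_v,d_v}$ entered by the tree recursion under the biased field never produce a larger $\abs{h}$ than the unbiased ranges; this requires a case analysis on where $\tfrac{1}{2}\log(\gamma/\beta)$ sits relative to $J_{\lambda,d_v}$, and the definition of $\*\chi$ is calibrated precisely so that the bias pushes $J_{\lambda_v,d_v}$ \emph{away} from the argmax of $\abs{h}$ in both sub-cases. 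Without this, the $c$-boundedness needed for the $\frac{2c}{\alpha}$ bound ($36$, yielding $144/\delta$) does not follow; with it, you also cannot simply cite [chen2020rapid] as a black box, since that theorem is for a single uniform $\lambda$ and must first be restated for non-uniform fields (the paper's~\Cref{theorem-good-potential-imply-SI-CLV}).

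For Part~2, your bookkeeping of degrees is off in a way that matters on irregular graphs. In the path coupling with disagreement at $v$, you must sum the Dobrushin influence $R(v,u)$ over the $\Delta_v$ neighbors $u\in\Gamma_v$, but $R(v,u)$ scales as $\tfrac{1}{\Delta_u}f_{\Delta_u}(\cdot)$, i.e., involves the degree of the \emph{updated} vertex $u$, not of $v$. Your claim that ``the TV effect per neighbor is $\leq (1-\delta/8)/d$ so summing over $d$ neighbors gives $\leq 1-\delta/8$'' implicitly uses the same $d$ in both places; if $\Delta_u < \Delta_v$ for some neighbor, the sum can exceed $1$. The paper fixes this by running path coupling in a degree-weighted Hamming metric $\Phi_v = \Delta_v$ (with a separate $1-\delta/8$ value for leaves), so that $\Phi_u R(v,u) \leq f_{\Delta_u}(\cdot) \leq 1-\delta/4$ and the $\Delta_u$ cancels, giving $\sum_u \Phi_u R(v,u) \leq \Delta_v(1-\delta/4) \leq (1-\delta/8)\Phi_v$. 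You also gesture at ``quantitative calibration'' of the $\Theta(\delta^2)$ bias without pinning down the real obstruction it resolves: uniqueness gives $f_{d_u}(\hat{x}_{d_u}) \leq 1-\delta$ with $d_u = \Delta_u - 1$, while path coupling needs $f_{\Delta_u}$ (one more arm, a consequence of the non-self-avoiding nature of path coupling), and only after biasing by $\theta=\delta^2/64$ does the stronger bound $f_{\Delta_u}(\lambda_u(\beta\gamma)^s/\gamma^{\Delta_u-1}) < 1-\delta/4$ hold for \emph{all} $s$ (the paper's~\Cref{lem:IS-to-fd}), which is the heart of the matter.
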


The rest  of \Cref{section-spectral-gap-2-spin} is organized as folows:
In \Cref{sec:req-1} we prove the first part of \Cref{lm:two-spin-property}, and in \Cref{sec:req-2} we prove the second part of \Cref{lm:two-spin-property}.

\subsection{Complete spectral independence up to uniqueness (in the right direction)}\label{sec:req-1}
Let $\+I = (V, E, \beta, \gamma,\lambda)$ denote the anti-ferromagnetic two-spin system specified by $(\beta,\gamma,\lambda)$ on a $n$-vertex graph $G=(V,E)$ with maximum degree $\Delta=\Delta_G\ge 3$.
%
%
Let $\*\theta\in(0,1]^V$ be arbitrary and $\*\chi\in\{\0,\1\}^V$ as defined in \eqref{eq:direction}.
Let $\mu$ denote the Gibbs distribution of $\+I$ and $\pi=\Mag{\mu}{\*\theta^{\*\chi}}$. 
Note that $\Omega(\pi)=\Omega(\mu)$ since $\*\theta$ is positive.

We denote by $\+I_\pi = (V, E, \beta, \gamma, (\lambda_v)_{v \in V})$ the two-spin system that is the same as $\+I$
except that in $\+I_\pi$ each vertex $v\in V$ is associated with the local field 
\begin{align}\label{eq:good-direction-local-field}
\lambda_v = \lambda \theta_v^{\chi_v}.
\end{align} 
It is easy to see that $\pi=\Mag{\mu}{\*\theta^{\*\chi}}$ is the Gibbs distribution induced by $\+I_\pi$, more specifically:
\[
\pi(\sigma)\propto \mu(\sigma)\prod_{v\in V:\sigma(v) = \1}\theta_v^{\chi_v} \propto \beta^{m_{\0}(\sigma)}\gamma^{m_{\1}(\sigma)}\prod_{v\in V:\sigma(v)=\1}\lambda_v,
\]
where $m_{i}(\sigma) \triangleq \abs{\{(u,v) \in E\mid \sigma_u = \sigma_v = i\}}$ for $i \in \{\0,\1\}$.
It is then sufficient to verify the $\frac{144}{\delta}$-spectral independence of $\pi$ for arbitrary $\*\theta\in(0,1]^V$.

For every $v\in V$, let $\Delta_v$ denote the degree of $v$ in $G=(V,E)$ and $d_v\triangleq \Delta_v-1$. We have $\Delta=\max_{v\in V}\Delta_v$.
\begin{lemma}\label{lemma-unique-closed-in-good-direction}
If $(\beta,\gamma,\lambda)$ is up-to-$\Delta$ unique with gap $\delta$,
then for every $v\in V$, $(\beta,\gamma,\lambda_v)$ is $d_v$-unique with gap $\delta$.
\end{lemma}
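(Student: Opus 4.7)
The plan is to exploit two structural features of the tree recursion---monotonicity of the fixed point in $\lambda$, and the unimodal shape of the uniqueness slack $f_d$---together with the observation that the threshold used to define $\chi_v$ in \eqref{eq:direction} is precisely where $\hat{x}_{d_v}(\lambda)$ crosses the peak of $f_{d_v}$.

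I will first reduce to the single-degree statement. Since $d_v=\Delta_v-1\le \Delta-1<\Delta$, the hypothesis that $(\beta,\gamma,\lambda)$ is up-to-$\Delta$ unique with gap $\delta$ immediately gives $d_v$-uniqueness of $(\beta,\gamma,\lambda)$ with gap $\delta$, i.e.~$f_{d_v}(\hat{x}_{d_v}(\lambda))\le 1-\delta$, where $\hat{x}_d(\lambda)$ denotes the unique positive fixed point of $F_d(\,\cdot\,;\lambda)$. It then suffices to establish the same bound with $\lambda$ replaced by $\lambda_v=\lambda\theta_v^{\chi_v}$.

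Next, I will establish two standard facts. First, $F_d(x;\lambda)$ is strictly decreasing in $x$ (its $x$-derivative has the sign of $\beta\gamma-1<0$) and strictly increasing in $\lambda$, which via a standard graph-comparison argument makes $\lambda\mapsto\hat{x}_d(\lambda)$ non-decreasing. Second, a short calculation gives
\[
\tfrac{\mathrm{d}}{\mathrm{d}x}f_d(x) \;\propto\; \gamma-\beta x^2,
\]
so $f_d$ is strictly increasing on $(0,x^\ast)$ and strictly decreasing on $(x^\ast,\infty)$, where $x^\ast\triangleq\sqrt{\gamma/\beta}$ (adopting $x^\ast=+\infty$ when $\beta=0$, in which case $f_d$ is monotone increasing throughout $(0,\infty)$). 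A one-line check yields $F_{d_v}(x^\ast;\lambda)=x^\ast$ iff $\lambda=(\gamma/\beta)^{\Delta_v/2}$; combined with monotonicity of $F_{d_v}$ in $x$, this gives $\hat{x}_{d_v}(\lambda)\le x^\ast$ iff $\lambda\le(\gamma/\beta)^{\Delta_v/2}$, i.e.~iff $\chi_v=\1$.

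The conclusion is then a two-case dispatch on the direction. If $\chi_v=\1$, then $\lambda_v=\lambda\theta_v\le\lambda\le(\gamma/\beta)^{\Delta_v/2}$, so monotonicity of the fixed point gives $\hat{x}_{d_v}(\lambda_v)\le\hat{x}_{d_v}(\lambda)\le x^\ast$ and both points lie in the increasing region of $f_{d_v}$, yielding $f_{d_v}(\hat{x}_{d_v}(\lambda_v))\le f_{d_v}(\hat{x}_{d_v}(\lambda))\le 1-\delta$. If $\chi_v=\0$, then symmetrically $\lambda_v=\lambda/\theta_v\ge\lambda>(\gamma/\beta)^{\Delta_v/2}$, so $\hat{x}_{d_v}(\lambda_v)\ge\hat{x}_{d_v}(\lambda)\ge x^\ast$, both points lie in the decreasing region of $f_{d_v}$, and the same bound follows. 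No step is technically delicate; the entire content is the exact matching between the threshold $(\gamma/\beta)^{\Delta_v/2}$ appearing in \eqref{eq:direction} and the location of the peak of $f_{d_v}$. The only mild care is the $\beta=0$ edge case, transparently handled by $x^\ast=+\infty$ and the global monotonicity of $f_d$.
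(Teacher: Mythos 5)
Your proof is correct, and it takes a genuinely different route from the paper. The paper invokes \Cref{prop:uniqueness-property}, which gives an explicit characterization of the $d$-unique regime as $\lambda \in (0,\lambda_1(d)] \cup [\lambda_2(d),+\infty)$ via the two roots of the quadratic $f_d(x) = 1-\delta$, and then observes that the threshold $(\gamma/\beta)^{\Delta_v/2}$ in~\eqref{eq:good-direction} is the geometric mean of $\lambda_1(d_v)$ and $\lambda_2(d_v)$, so biasing in the direction $\chi_v$ moves $\lambda_v$ deeper into whichever of the two intervals $\lambda$ already occupies. This requires a case split on $d_v \ge (1-\delta)\overline\Delta$ (so that the roots exist) and the Vieta identity $\lambda_1(d)\lambda_2(d) = (\gamma/\beta)^{d+1}$. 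Your argument bypasses \Cref{prop:uniqueness-property} entirely by working on the $x$-side rather than the $\lambda$-side: $f_d$ is unimodal with peak at $x^\ast = \sqrt{\gamma/\beta}$ (since $f_d' \propto \gamma - \beta x^2$), $\hat{x}_d(\lambda)$ is non-decreasing in $\lambda$, and the calculation $F_{d_v}(x^\ast;\lambda) = \lambda(\beta/\gamma)^{d_v/2}$ shows that $\hat{x}_{d_v}(\lambda) \lessgtr x^\ast$ exactly when $\lambda \lessgtr (\gamma/\beta)^{\Delta_v/2}$; so in either direction of $\chi_v$, both fixed points land on the same monotone branch of $f_{d_v}$ with $\hat{x}_{d_v}(\lambda_v)$ farther from the peak, and the bound $f_{d_v}(\hat{x}_{d_v}(\lambda_v)) \le f_{d_v}(\hat{x}_{d_v}(\lambda)) \le 1-\delta$ follows with no sub-case on $d_v$ versus $(1-\delta)\overline\Delta$. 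What the paper's route buys is amortization: \Cref{prop:uniqueness-property} is reused in the proof of \Cref{lem:IS-to-fd} and in \Cref{lem:lambda-x}, so deriving it once pays off. What your route buys is a shorter, more conceptual and fully self-contained proof of this particular lemma, and it makes transparent \emph{why} the threshold $(\gamma/\beta)^{\Delta_v/2}$ is the right one --- it is where the fixed point crosses the peak of the contraction slack. Both handle $\beta = 0$ smoothly ($\chi_v \equiv \1$, monotone $f_d$).
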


The above lemma is implied by the following proposition from \cite{LLY13}.

\begin{proposition}[\cite{LLY13}] \label{prop:uniqueness-property}
If $\beta = 0$, then the following holds for all integers $d \geq 1$:
\begin{itemize}
\item $(0,\gamma,\lambda)$ is $d$-unique with gap $\delta$ iff $\lambda \le \lambda_{c,\delta}(d)= \frac{(1-\delta) d^d \gamma^{d+1}}{(d-1+\delta)^{d+1}}$.	
\end{itemize}
Assume $\beta > 0$.
Let $\overline{\Delta} \triangleq \frac{1 + \sqrt{\beta\gamma}}{1 - \sqrt{\beta\gamma}}$. The followings hold for all integers $d \geq 1$:
\begin{itemize}
\item If $d < (1 - \delta)\overline{\Delta}$, then $(\beta, \gamma,\lambda)$ is $d$-unique with gap $\delta$ for all $\lambda >0$.
\item If $d \geq (1 - \delta)\overline{\Delta}$, then $f_d(x) = 1 - \delta$ has two nonnegative roots $x_1(d)$ and $x_2(d)$ such that
  \begin{align*}
    x_1(d) &= \frac{\zeta_\delta(d) - \sqrt{\zeta_\delta(d)^2 - 4(1 - \delta)^2\beta\gamma}}{2(1- \delta)\beta}
    \quad \mbox{and} \quad
    x_2(d) = \frac{\zeta_\delta(d) + \sqrt{\zeta_\delta(d)^2 - 4(1 - \delta)^2\beta\gamma}}{2(1- \delta)\beta},
  \end{align*}
  where $\zeta_\delta(d) \triangleq d(1 - \beta\gamma) - (1 - \delta)(1 + \beta\gamma) > 0$.
\item For $ d \geq (1 - \delta)\overline{\Delta}$ and $i \in \{1, 2\}$ let
  \[\lambda_i(d) = x_i(d)\tp{\frac{x_i(d) + \gamma}{\beta x_i(d) + 1}}^d.\]
   It holds that $\lambda_1(d)\lambda_2(d) = \tp{\frac{\gamma}{\beta}}^{d+1}$. 
   And $(\beta, \gamma,\lambda)$ is $d$-unique with gap $\delta$ iff $\lambda \in (0, \lambda_1(d)] \cup [\lambda_2(d), +\infty)$.
\end{itemize}
\end{proposition}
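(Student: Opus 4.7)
The plan is to reduce the $d$-uniqueness condition $f_d(\hat{x}_d) \leq 1 - \delta$ to an analysis of the shape of the one-variable function $f_d$ pulled back through the $\lambda \leftrightarrow \hat{x}_d$ correspondence. First I would rewrite the fixed-point equation $\hat{x}_d = F_d(\hat{x}_d)$ as $\lambda = g(\hat{x}_d)$, where $g(x) \triangleq x\left(\frac{x+\gamma}{\beta x + 1}\right)^d$, and verify that $g\colon (0,\infty) \to (0,\infty)$ is a strictly increasing bijection: logarithmic differentiation gives
\[
\frac{g'(x)}{g(x)} = \frac{1}{x} + \frac{d(1-\beta\gamma)}{(x+\gamma)(\beta x + 1)} > 0,
\]
using $\beta\gamma < 1$. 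Consequently, the set of $\lambda > 0$ for which $(\beta,\gamma,\lambda)$ is $d$-unique with gap $\delta$ is exactly the image under $g$ of the sub-level set $S_\delta \triangleq \{x > 0 : f_d(x) \leq 1 - \delta\}$, and the whole proposition reduces to describing $S_\delta$ and evaluating $g$ at its boundary points.

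For the case $\beta = 0$, the function $f_d(x) = dx/(x+\gamma)$ is strictly increasing with range $[0, d)$, so $S_\delta = (0, x_c]$ where $x_c = (1-\delta)\gamma/(d-1+\delta)$. Substituting $x_c$ into $g(x) = x(x+\gamma)^d$ yields
\[
g(x_c) = \frac{(1-\delta)\gamma}{d-1+\delta}\cdot\left(\frac{d\gamma}{d-1+\delta}\right)^d = \frac{(1-\delta)d^d\gamma^{d+1}}{(d-1+\delta)^{d+1}} = \lambda_{c,\delta}(d),
\]
which settles the first bullet.

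For $\beta > 0$, I would first rewrite $f_d(x) = \frac{d(1-\beta\gamma)}{\beta x + \gamma/x + (1+\beta\gamma)}$ and apply AM--GM to $\beta x + \gamma/x \geq 2\sqrt{\beta\gamma}$, with equality at $x^\ast = \sqrt{\gamma/\beta}$. This gives $\max_{x>0} f_d(x) = \frac{d(1-\sqrt{\beta\gamma})}{1+\sqrt{\beta\gamma}} = \frac{d}{\overline{\Delta}}$. Thus if $d < (1-\delta)\overline{\Delta}$, then $f_d < 1-\delta$ on all of $(0,\infty)$, proving the second bullet. Otherwise, $f_d(x) = 1-\delta$ clears to the quadratic $(1-\delta)\beta x^2 - \zeta_\delta(d)\, x + (1-\delta)\gamma = 0$; its discriminant $\zeta_\delta(d)^2 - 4(1-\delta)^2\beta\gamma$ is nonnegative precisely when $d \geq (1-\delta)\overline{\Delta}$, consistent with the case split, and the quadratic formula gives exactly the displayed $x_1(d) \leq x_2(d)$. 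Since $f_d$ is unimodal with its single interior maximum at $x^\ast$ and vanishes at both $0$ and $\infty$, we get $S_\delta = (0, x_1(d)] \cup [x_2(d), \infty)$, which via the monotone bijection $g$ translates to $\lambda \in (0, \lambda_1(d)] \cup [\lambda_2(d), \infty)$ with $\lambda_i(d) = g(x_i(d))$.

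The product identity $\lambda_1(d)\lambda_2(d) = (\gamma/\beta)^{d+1}$ I would verify via Vieta's formulas: $x_1 x_2 = \gamma/\beta$ and $x_1 + x_2 = \zeta_\delta(d)/((1-\delta)\beta)$; a direct expansion of
\[
(x_1+\gamma)(x_2+\gamma) = x_1 x_2 + \gamma(x_1+x_2) + \gamma^2, \qquad (\beta x_1+1)(\beta x_2+1) = 1 + \beta(x_1+x_2) + \beta^2 x_1 x_2,
\]
and substitution reduces the ratio to exactly $\gamma/\beta$, whence $\lambda_1 \lambda_2 = (x_1 x_2)(\gamma/\beta)^d = (\gamma/\beta)^{d+1}$. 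The only genuinely conceptual step is the monotonicity of $\hat{x}_d$ in $\lambda$ (Step 1); it is what turns the analysis of a single-variable function $f_d$ into a complete description of the $\lambda$-parametrized uniqueness region. Everything else is elementary algebra; the bimodal shape of the uniqueness region (both a small-$\lambda$ and a large-$\lambda$ regime for $\beta > 0$) is a direct reflection of the fact that $f_d$ vanishes at $0$ and $\infty$ with a single interior peak.
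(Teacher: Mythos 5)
Your proposal is correct and follows essentially the same route as the paper's proof in Appendix B.1: reduce $d$-uniqueness to the sub-level set of the single-variable function $f_d$ via the strictly increasing map $\lambda = g(\hat{x}_d)$, handle $\beta=0$ by direct monotonicity, use AM--GM at $x^\ast=\sqrt{\gamma/\beta}$ for the small-degree case, solve the quadratic for the roots $x_1(d),x_2(d)$, and get the product identity from Vieta's formulas. The only difference is that you make explicit a couple of steps the paper leaves implicit (the logarithmic derivative showing $g'>0$, and the unimodality of $f_d$ justifying the two-interval shape of the sub-level set), which is harmless.
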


The proposition is slightly refined from the one proved in \cite{LLY13} by taking the gap $\delta$ into consideration. 
A proof of \Cref{prop:uniqueness-property} is included in \Cref{proof-prop:uniqueness-property} for completeness.

\begin{proof}[Proof of \Cref{lemma-unique-closed-in-good-direction}]
Fix an arbitrary $v \in V$. We consider the following two cases.

Assume $\beta = 0$. Note that $(0,\gamma,\lambda)$ is $d_v$-unique with gap $\delta$. By \Cref{prop:uniqueness-property}, $\lambda\le \lambda_{c,\delta}(d_v)$.
According to~\eqref{eq:direction},
when $\beta=0$, $\chi_v=\1$ regardless of $\Delta_v$. 
Hence, $\lambda_v = \lambda \theta_v \le \lambda \le \lambda_{c,\delta}(d_v)$. 
By \Cref{prop:uniqueness-property}, $(0,\gamma,\lambda_v)$ is $d_v$-unique with gap $\delta$.

Assume $\beta > 0$. 
We may further assume $d_v \ge (1-\delta) \overline{\Delta}$.
Otherwise, by \Cref{prop:uniqueness-property}, $(\beta,\gamma,\lambda_v)$ is $d_v$-unique.
Since $(\beta,\gamma,\lambda)$ is $d_v$-unique with gap $\delta$, 
by \Cref{prop:uniqueness-property}, it holds that 
$\lambda \in (0,\lambda_1(d_v)] \cup [\lambda_2(d_v),+\infty)$.  
We further consider the following two sub-cases.
\begin{itemize}
\item 	If $\lambda \le \lambda_1(d_v)$, since $\lambda_1(d_v) \lambda_2(d_v) = \tp{{\gamma}/{\beta}}^{d_v+1} = \tp{{\gamma}/{\beta}}^{\Delta_v}$, we have $\lambda \le \lambda_1(d_v) \le \tp{{\gamma}/{\beta}}^{\frac{\Delta_v}{2}}$.
By~\eqref{eq:direction}, we have $\lambda_v=\lambda\theta_v \le \lambda \le \lambda_1(d_v)$.
\item If $\lambda>\lambda_1(d_v)$, since $(\beta,\lambda,\gamma)$ is $d_v$-unique, we have $\lambda \ge \lambda_2(d_v) \ge \tp{{\gamma}/{\beta}}^{\frac{\Delta_v}{2}}$. By~\eqref{eq:direction}, we have  $\lambda_v = \lambda/\theta_v \ge \lambda\ge \lambda_2(d_v)$.
\end{itemize}
In both cases, $\lambda_v\in (0,\lambda_1(d_v)] \cup [\lambda_2(d_v),+\infty)$, and  by \Cref{prop:uniqueness-property}, $(\beta,\gamma,\lambda_v)$ is $d_v$-unique with gap $\delta$.
%
%
\end{proof}

We consider the following well known {tree recursions} for two-spin systems~\cite{Wei06}.
Let $0\le \beta\le \gamma$, $\gamma>0$ and $\lambda>0$ be reals (not necessarily satisfying the anti-ferromagnetic requirement $\beta\gamma<1$).
For integer $d \ge 0$ and real $\lambda >0$, the \emph{tree recursion for marginal-ratios} $F_{\lambda, d}: [0,+\infty]^d \to [0,+\infty]$ is defined as
\begin{align}
\label{eq:tree-recursion-marginal-ratio}
  F_{\lambda, d}(x_1, x_{2}, \cdots, x_{d}) &\triangleq \lambda \prod_{i = 1}^d \frac{\beta x_{i} + 1}{\gamma + x_{i}}. 
\end{align}
In particular, $F_{\lambda,0}\triangleq\lambda$ is a constant for the trivial case $d = 0$.
The \emph{tree recursion for log-marginal-ratios} $H_{\lambda, d}: [-\infty, +\infty]^d \to [-\infty, +\infty]$ is given by $H_{\lambda, d} = \log \circ F_{\lambda, d} \circ \exp$, specifically
\begin{align}
\label{eq:tree-recursion-log-marginal-ratio}
  H_{\lambda, d}(y_1, \cdots, y_d) &\triangleq \log \lambda + \sum_{i=1}^d \log \tp{\frac{\beta \mathrm{e}^{y_i} + 1}{\mathrm{e}^{y_i} + \gamma}}.
\end{align}
For $y \in [-\infty,+\infty]$, let
\begin{align}
\label{eq-def-function-h}
  h(y) &\triangleq - \frac{(1 - \beta\gamma)\mathrm{e}^y}{(\beta \mathrm{e}^y + 1)(\mathrm{e}^y + \gamma)}.
\end{align}
It holds that  $\frac{\partial}{\partial y_i} H_{\lambda,d}(y_1, \cdots, y_d) = h(y_i)$ for all $1 \leq i \leq d$.
Note that for $\beta>0$, $h(y)=0$ iff $y=\pm\infty$; and for $\beta=0$, $h(y)=0$ iff $y=-\infty$.


Given a function $\phi:[-\infty,+\infty] \rightarrow [0, +\infty)$ such that $\phi(y) > 0$ for any $y \in [-\infty,+\infty]$ that $|h(y)|>0$ ,
let $h^{\phi}:[-\infty,+\infty]\to[0,+\infty)$ be defined as that for any $y\in [-\infty,+\infty]$, $h^{\phi}(y)=0$ if $h(y)=0$, and if $h(y)\neq0$,
\begin{align}\label{eq:definition-h-phi}
h^{\phi}(y)= 
\frac{|h(y)|}{\phi(y)}. 
\end{align}
%
Furthermore, let $J_{\lambda, d} \triangleq H_{\lambda, d}[-\infty, +\infty]^d$ denote the image of $H_{\lambda, d}$.
Specifically, $J_{\lambda,0}=\{\log \lambda\}$ for $d = 0$ and if $d > 0$,
\[
J_{\lambda, d}=
\begin{cases}
\vspace{6pt}
\left[-\infty, \log \tp{\frac{\lambda}{\gamma^{d}}}\right] & \text{if }\beta = 0;\\
\vspace{6pt}
\left[\log\tp{\lambda \beta^{d}}, \log\tp{\frac{\lambda}{\gamma^{d}}}\right] & \text{if }0 < \beta\gamma \le 1;\\
\left[\log\tp{\frac{\lambda}{\gamma^{d}}}, \log\tp{\lambda\beta^{d}}\right] & \text{if }\beta\gamma > 1,
\end{cases}
\]


\begin{definition}[$(\alpha, c)$-potential function]\label{def:potential}
Let $\+I=(V,E,\beta,\gamma,(\lambda_v)_{v\in V})$ be a two-spin system with local fields, where $0\le\beta\le\gamma$, $\gamma>0$, and $\lambda_v>0$ for all $v\in V$.
For every $v\in V$, let $d_v\triangleq\Delta_v-1$, where $\Delta_v$ denotes the degree of $v$ in $G=(V,E)$.
Let $\phi:[-\infty,+\infty] \rightarrow [0, +\infty)$ be a function such that $\phi(y) > 0$ for any $y \in [-\infty,+\infty]$ that $|h(y)|>0$.
For any $\alpha\in(0,1)$ and $c>0$, we say $\phi$ is an $(\alpha,c)$-potential function with respect to $\+I$ if it satisfies: 
\begin{enumerate}
\item ($\alpha$-Contraction) 
For every $v\in V$ with $d_v \geq 1$ and every $(y_1,\ldots,y_{d_v})\in [-\infty, +\infty]^{d_v}$, we have
\[
  \phi(y)
  \sum_{i=1}^{d_v} h^{\phi}\tp{y_i}  \le 1-\alpha.
\]
where $y=H_{\lambda_v,d_v}(y_1,\ldots,y_d)$.
\item ($c$-Boundedness)
For every $u, v \in V$, every $y_u \in J_{\lambda_u,d_u}$ and $y_v \in J_{\lambda_v,d_v}$, we have
\begin{align*}
      \phi(y_v)\cdot h^{\phi}\tp{y_u} \leq \frac{2c}{\Delta_{u} + \Delta_{v}}.
\end{align*}
\end{enumerate}
\end{definition}


\begin{theorem}[\cite{chen2020rapid}]\label{theorem-good-potential-imply-SI-CLV}
Let $\+I=(V,E,\beta,\gamma,(\lambda_v)_{v\in V})$ be a two-spin system with local fields, where $0\le\beta\le\gamma$, $\gamma>0$, and $\lambda_v>0$ for all $v\in V$.
For $\alpha\in(0,1)$ and $c>0$,
if there is an $(\alpha,c)$-potential function $\phi$ with respect to $\+I$, 
then the Gibbs distribution $\pi$ of $\+I$ is $\frac{2c}{\alpha}$-spectrally independent.
\end{theorem}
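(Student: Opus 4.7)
The plan is to use the self-avoiding walk (SAW) tree construction of Weitz to lift the problem from the graph $G$ to a tree, where the derivative of the tree recursion $H_{\lambda,d}$ factorizes along paths via the chain rule, and then exploit both the contraction and boundedness clauses of \Cref{def:potential} to bound a suitably reweighted row sum of the influence matrix $\Psi^{\sigma_\Lambda}_\pi$. Concretely, fix an arbitrary pinning $\sigma_\Lambda$. Conditioning merely modifies the local fields on the unpinned vertices, so the conditional Gibbs measure is again a two-spin system of the form $\+I$ and the potential function $\phi$ remains $(\alpha,c)$-valid (the parameters $\alpha, c$ only depend on graph degrees and the tree recursion, which are monotone under pinning). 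Thus it suffices to prove the row-sum bound $\sum_{v \in V\setminus \Lambda} \frac{\phi(y_v)}{\phi(y_u)}\, \Psi^{\sigma_\Lambda}_\pi(u,v) \le \frac{2c}{\alpha}$ for every root $u$, where $y_v$ denotes the log-marginal-ratio of $v$ under the conditional measure. The matrix $M(u,v) \triangleq \phi(y_v)\Psi(u,v)/\phi(y_u)$ is similar to $\Psi$, hence they share the same spectrum, so the bound above would give $\rho(\Psi) \le \max_u \sum_v M(u,v) \le \tfrac{2c}{\alpha}$ as required.

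Next I would pass from the graph to the SAW tree $\TSAW(G,u)$ rooted at $u$. By the well-known extension of Weitz's construction to the influence matrix (see the absolute-influence versions in \cite{feng2021rapid,chen2021rapid}), for every $v \in V\setminus\Lambda$ we have the monotone domination
\[
\Psi^{\sigma_\Lambda}_\pi(u,v) \;\le\; \sum_{\tree{v} \in \mathrm{Copies}(v)} \Psi^{\tree{\sigma}_\Lambda}_{\tree\pi}(u,\tree{v}),
\]
where the sum is over all copies $\tree{v}$ of $v$ in $\TSAW(G,u)$ and $\tree\pi$ is the Gibbs measure on the SAW tree with the induced boundary condition. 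Moreover, each copy $\tree{v}$ in the tree has the same associated log-marginal-ratio $y_{\tree v} = y_v$ (as this is exactly what Weitz's equivalence preserves), and the same effective degree relation. Thus it is enough to prove the row-sum bound on the tree.

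On the tree, I would argue by decomposing the row sum according to the path distance from the root. For a vertex $\tree v$ at depth $k$ in $\TSAW(G,u)$ with path $u = w_0, w_1, \dots, w_k = \tree v$, the chain rule applied to $H_{\lambda,d}$ yields
\[
\left|\frac{\partial y_{w_0}}{\partial y_{w_k}}\right| \;=\; \prod_{j=1}^{k} |h(y_{w_j})|,
\]
and a standard calculation (relating $\Psi(u,\tree v)$ to $\bigl|\partial \pi_{\tree v}(+1)/\partial y_u\bigr|$ via the Jacobian of the sigmoid $y \mapsto e^y/(1+e^y)$, which is bounded by $|h|$ up to a constant built into $\phi$ via $h^\phi$) gives $\Psi(u,\tree v) \le \phi(y_{\tree v})^{-1}\prod_{j=1}^{k} |h(y_{w_j})|$ times the boundary factor at $w_1$. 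Regrouping as $\phi(y_{w_0}) \cdot h^\phi(y_{w_1})\cdot \prod_{j=2}^{k} \phi(y_{w_{j-1}})\,h^\phi(y_{w_j})$ and summing over all descendants $\tree v$ at depth $\ge 2$, the $\alpha$-contraction clause of \Cref{def:potential}, applied level by level in the tree, collapses each sub-level's total contribution by a factor $(1-\alpha)$. Summing the resulting geometric series contributes a factor $\tfrac{1}{\alpha}$.

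The depth-$1$ contributions are bounded directly using $c$-boundedness: each neighbor $w$ of $u$ in the SAW tree contributes $\phi(y_w)\cdot h^\phi(y_u) \le \tfrac{2c}{\Delta_u+\Delta_w}$, and summing over the $\Delta_u$ neighbors (each of which satisfies $\Delta_u+\Delta_w \ge \Delta_u$) gives a total at most $2c$. Combining with the geometric factor from the deeper levels yields the row-sum bound $\sum_v \phi(y_v)\Psi(u,v)/\phi(y_u) \le \tfrac{2c}{\alpha}$, which finishes the proof. I expect the main obstacle to be the depth-$1$ base case: the $c$-boundedness estimate has the asymmetric denominator $\Delta_u + \Delta_v$ rather than $\Delta_u$ alone, and the product structure over the path must be telescoped correctly so that the factor $\phi(y_u)$ cancels at the root and $\phi(y_v)$ appears at the target—this bookkeeping, together with justifying the passage from the signed derivative bound to the absolute influence $\Psi$ (which is needed because \Cref{definition-weight-tot-inf} uses absolute values), is where the proof requires the most care.
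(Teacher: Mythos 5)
Your overall architecture matches the paper's: reduce $\rho(\Psi^{\sigma_\Lambda}_\pi)$ to a weighted row sum via conjugation by a diagonal matrix (the paper's \Cref{lem:spec-ratio-to-row-sum}), pass to the SAW tree where influences multiply along paths, contract level by level with the $\alpha$-clause, and handle the first level with the $c$-clause. The genuine gap is the choice of conjugating weights. You conjugate by $D=\mathrm{diag}(\phi(y_v))$, but then your path term $\frac{\phi(y_{w_k})}{\phi(y_{w_0})}\prod_{j=1}^k|h(y_{w_j})|$ does not telescope into the parent--child pairs $\phi(y_{w_{j-1}})h^{\phi}(y_{w_j})$ that the contraction clause controls: writing $|h(y_{w_j})|=\phi(y_{w_j})h^{\phi}(y_{w_j})$ and regrouping leaves a residual $\phi(y_{w_k})^2/\phi(y_{w_0})^2$, which nothing in \Cref{def:potential} bounds (and $\phi$ may vanish, so $D$ need not be invertible). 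Your own regrouping actually computes the row sum for the \emph{reciprocal} weights $\mathrm{diag}(1/\phi(y_v))$; with those the telescoping is clean, but the weights still cannot be transferred from the SAW tree back to $G$, because your claim that every copy $\hat v$ of $v$ in $\TSAW(G,u)$ carries the same log-marginal-ratio is false --- distinct copies root different subtrees with different cycle-closing pinnings, so $y_{\hat v}$ varies over copies (Weitz's theorem only equates the ratio at the root). A related soft spot is the claim that pinning preserves $(\alpha,c)$-validity "because it merely modifies local fields": the contraction clause is tied to the specific fields $\lambda_v$, and the paper instead absorbs pinnings into the tree recursion by letting the inputs range over all of $[-\infty,+\infty]^{d_v}$.

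The repair is exactly the paper's choice: conjugate by $D=\mathrm{diag}(\Delta_v)$ and prove $\sum_{v}\Delta_v\,\Psi^{\sigma_\Lambda}_\pi(r,v)\le\frac{2c}{\alpha}\Delta_r$. Degree weights are invariant across all free copies of $v$ in the SAW tree, so the domination $\Psi_G(r,v)\le\sum_{\hat v}\Psi_T(r,\hat v)$ transfers with the weights intact, and --- crucially --- the factor $\Delta_v$ on the target is precisely what cancels the $\Delta_u+\Delta_v$ denominator in the boundedness clause, via $\Delta_v\cdot\frac{2c}{\Delta_{u}+\Delta_v}\le 2c$; with your $\phi$-weights there is no mechanism to absorb that denominator, so the factor $2c$ in the conclusion cannot emerge. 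On the tree one then shows by induction that the level-$k$ contribution is at most $2c(1-\alpha)^{k-1}\Delta_r$, carrying the quantity $\max_v\{\Delta_v\,\phi(\log R_v)\}$ down the levels via contraction and applying boundedness exactly once, pairing $h^{\phi}$ at a depth-one child with $\phi$ at the depth-$k$ target.
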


\begin{remark}
The potential function $\phi$ in \Cref{def:potential} is in fact the derivative of the potential function in \cite[Definition 4]{chen2020rapid}.
\Cref{theorem-good-potential-imply-SI-CLV} holds without assuming that $\phi$ has an explicitly defined integration. 
\end{remark}

\begin{remark}
\Cref{theorem-good-potential-imply-SI-CLV} holds for general two-spin systems that are not necessarily anti-ferromagnetic.
The theorem stated here is in fact a refinement from the one proved in~\cite{chen2020rapid} to the two-spin systems with local fields.
In \Cref{proof-theorem-good-potential-imply-SI-CLV}, we reprove the theorem by going through the analyses in~\cite{chen2020rapid}.
Another slight difference is that our spectral independence is defined with the absolute influence matrix instead of signed influence matrix.
This is not an issue, because in the proof of \Cref{theorem-good-potential-imply-SI-CLV}, the spectral independence is guaranteed by establishing a sufficient condition of weighted total influences, which is sufficient to imply spectral independence with either absolute or signed matrix.
\end{remark}

It remains to verify the contraction and boundedness properties for the instance $\+I_{\pi}=(V,E,\beta,\gamma,(\lambda_v)_{v\in V})$, where the local fields $(\lambda_v)_{v\in V}$ are as specified in~\eqref{eq:good-direction-local-field}.

For anti-ferromagnetic $2$-spin systems, a good potential function $\phi$ is the one discovered in~\cite{LLY13}:
\begin{align}\label{eq:LLY-potential-function}
  \phi(y) = \sqrt{\abs{h(y)}}.
\end{align}
For such choice of potential function it obviously holds that $\phi(y) > 0$ for any $y \in [-\infty,+\infty]$ that $|h(y)|>0$,
and moreover, for any $y\in[-\infty,+\infty]$,
\[
h^{\phi}(y)=\phi(y)=\sqrt{\abs{h(y)}}.
\]

\begin{lemma}[\cite{LLY13}]\label{lm:contraction}
  Let $d \ge 1$ be an integer, and let $\beta,\gamma,\lambda$ be real numbers satisfying that $0\le \beta\le\gamma$, $\gamma> 0$, $\lambda> 0$ and $\beta\gamma<1$.
For the function $\phi$ defined in~\eqref{eq:LLY-potential-function},
for any $\delta\in(0,1)$,
if $(\beta,\gamma,\lambda)$ is $d$-unique with gap $\delta$, then for every $(y_1, \cdots, y_d) \in [-\infty, +\infty]^d$ and $y = H_{\lambda, d}(y_1, y_2, \cdots, y_d)$,
\begin{align}\label{eq:contraction-property-LLY-CLV}
    \phi(y)\sum_{i=1}^{d} h^{\phi}\tp{y_i}
    = \sum_{i = 1}^d \sqrt{\abs{h(y)}\abs{h(y_i)}} \le \sqrt{1-\delta} < 1-\frac{\delta}{2}.
  \end{align}
\end{lemma}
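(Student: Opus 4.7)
The plan is to follow the potential-function framework of~\cite{LLY13}, reducing the multivariate contraction to a univariate question. Denote
$\Psi(y_1,\ldots,y_d) \triangleq \sqrt{|h(y)|}\sum_{i=1}^{d} \sqrt{|h(y_i)|}$,
the quantity to be bounded, where $y = H_{\lambda,d}(y_1,\ldots,y_d)$.

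\emph{Step 1 (Symmetrization).} I would fix the output $y$ and maximize $\Psi$ over $(y_1,\ldots,y_d)$ subject to $H_{\lambda,d}(y_1,\ldots,y_d) = y$. Setting up the Lagrangian, the first-order condition is that $h'(y_i)/(h(y_i)\sqrt{|h(y_i)|})$ is independent of $i$. A direct calculation from the explicit form of $h$ in~\eqref{eq-def-function-h} shows this expression is strictly monotone in $y_i$, so the constrained maximum is attained when $y_1=\cdots=y_d=y^\ast$, where $y^\ast$ is the unique preimage satisfying $\tilde H_d(y^\ast) = y$ with $\tilde H_d(y) \triangleq H_{\lambda,d}(y,\ldots,y)$.

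\emph{Step 2 (Reduction to a univariate inequality).} At such a symmetric configuration,
\[
\Psi \;=\; d\sqrt{|h(y)|\,|h(y^\ast)|} \;=\; \sqrt{f_d(x)\,f_d(x^\ast)},
\]
where $x=F_d(x^\ast)$, $x^\ast=\mathrm{e}^{y^\ast}$, and we extend $f_d$ to all $z>0$ by $f_d(z) \triangleq d(1-\beta\gamma)z/[(\beta z+1)(z+\gamma)]$, so that $d|h(\log z)|=f_d(z)$. Hence the lemma reduces to showing
\[
f_d(x^\ast)\,f_d\bigl(F_d(x^\ast)\bigr) \;\le\; f_d(\hat x_d)
\]
for every $x^\ast>0$; combined with the $d$-uniqueness hypothesis $f_d(\hat x_d)\le 1-\delta$, this gives $\Psi\le \sqrt{1-\delta}$ upon taking square roots.

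\emph{Step 3 (Univariate analysis).} To establish the displayed univariate inequality, I would differentiate $\log g(x^\ast)$ where $g(x^\ast) \triangleq f_d(x^\ast)f_d(F_d(x^\ast))$ and set it to zero. With the explicit derivatives
\[
(\log f_d)'(z) = \frac{\gamma - \beta z^2}{z(\beta z+1)(z+\gamma)}, \qquad F_d'(x^\ast) = -\frac{F_d(x^\ast)\cdot d(1-\beta\gamma)}{(\beta x^\ast+1)(x^\ast+\gamma)},
\]
the critical-point equation yields an algebraic relation that, after substituting back into $g$, simplifies to a quantity bounded by $f_d(\hat x_d)$, using that $F_d$ is strictly decreasing (since $\beta\gamma<1$) with unique positive fixed point $\hat x_d$. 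The elementary observation $\sqrt{1-\delta}<1-\delta/2$ then follows from $(1-\delta/2)^2 = 1-\delta+\delta^2/4 > 1-\delta$ for every $\delta\in(0,1)$.

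The main technical obstacle is Step~3. The maximum of $g$ does not in general occur at $\hat x_d$ (indeed $\hat x_d$ is a critical point of $g$ only in the degenerate cases $\hat x_d = \sqrt{\gamma/\beta}$ or $f_d(\hat x_d)=1$), so the straightforward bound $g(x^\ast) \le f_d(\hat x_d)^2$ is unavailable; one instead requires the precise algebraic identity obtained at the critical point of $g$, which is exactly where the anti-ferromagnetic structure $\beta\gamma<1$ is leveraged. This step is carried out in detail in \cite{LLY13}.
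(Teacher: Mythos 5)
Your approach is essentially the one the paper takes: pass to the ratio variables $x_i=\mathrm{e}^{y_i}$, symmetrize to a one-variable problem (the paper invokes this as \Cref{lm:lly-13}, i.e.\ \cite[Lemma~13]{LLY13}), and then prove the univariate bound $\alpha_d(x)\le\sqrt{f_d(\hat x_d)}\le\sqrt{1-\delta}$ by a critical-point analysis (\Cref{lm:lly-14}); your reduction in Step~2 and the deferred algebraic substitution in Step~3 are exactly the content of the proof of \Cref{lm:lly-14}.

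Two soft spots are worth flagging. First, your Step~1 derives symmetry from the Lagrangian first-order condition and asserts, without proof, that $h'(y_i)/\bigl(h(y_i)\sqrt{|h(y_i)|}\bigr)$ is strictly monotone. Writing $z=\mathrm{e}^{y_i}$ one finds this quantity equals $(\gamma-\beta z^2)\big/\sqrt{(1-\beta\gamma)z(\beta z+1)(z+\gamma)}$, which for $\beta>0$ changes sign at $y_i=\tfrac12\log(\gamma/\beta)$; strict monotonicity is in fact true, but it does not follow immediately from unimodality of $|h|$ and needs a short calculation (and one must also rule out maxima on the boundary of the compactified domain). Second, and more substantively, the lemma is stated for $(y_1,\ldots,y_d)\in[-\infty,+\infty]^d$, while the symmetrization you (and \cite{LLY13}) use applies to finite ratios $x_i\in[0,+\infty)$, i.e.\ $y_i<+\infty$. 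When $\beta>0$ and some $y_i=+\infty$ the quantities are degenerate, and the paper handles this separately by an explicit limiting argument (isolating the set $S\subseteq[d]$ of infinite coordinates, replacing them by a common $x$, and letting $x\to\infty$ in the continuous function $g(x)=\alpha_d(z_1,\dots,z_d)$). Your proposal silently assumes all $x_i$ finite, so as written it does not cover the full domain stated in the lemma.
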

\begin{remark}
It was proved in \cite{LLY13} that \eqref{eq:contraction-property-LLY-CLV} holds for all integers $1\le d<\Delta$ if $(\beta,\gamma,\lambda)$ is up-to-$\Delta$ unique, which was in fact proved by showing that \eqref{eq:contraction-property-LLY-CLV} holds if $(\beta,\gamma,\lambda)$ is $d$-unique, which is the exact statement in \Cref{lm:contraction}.
A proof of \Cref{lm:contraction} is included in \Cref{proof-lm:contraction} for completeness,.
\end{remark}

Assume that $(\beta,\gamma,\lambda)$ is up-to-$\Delta$ unique with gap $\delta$.
It follows immediately from \Cref{lemma-unique-closed-in-good-direction} and \Cref{lm:contraction} that the potential function $\phi$ defined in~\eqref{eq:LLY-potential-function} satisfies $\frac{\delta}{2}$-contraction with respect to instance $\+I_{\pi}$.

{The following lemma for boundedness was proved in \cite{chen2020rapid}.}
\begin{lemma}[\text{\cite{chen2020rapid}}]\label{lemma-boundedness-CLV}
  Let $\Delta\ge 3$  be an integer, and let $\beta,\gamma,\lambda$ be real numbers satisfying that $0\le \beta\le\gamma$, $\gamma> 0$, $\lambda> 0$ and $\beta\gamma<1$.
For the potential  function $\phi$ defined in~\eqref{eq:LLY-potential-function}, for any $\delta\in(0,1)$,
if $(\beta, \gamma, \lambda)$ is up-to-$\Delta$ unique with gap $\delta$, then for any integers $0 \leq d_1, d_2 \le \Delta-1$, for every $y_1 \in J_{\lambda, d_1}$ and $y_2 \in J_{\lambda, d_2}$,
  \begin{align*}
   \phi(y_1)\cdot h^{\phi}\tp{y_2}
    = \sqrt{\abs{h(y_1)}\abs{h(y_2)}} &\leq \frac{72}{d_1 + d_2 + 2}.
  \end{align*}
\end{lemma}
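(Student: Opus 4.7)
The plan is to establish the boundedness property for the potential $\phi(y)=\sqrt{|h(y)|}$ by first proving a uniform pointwise decay estimate on $|h|$ along each image $J_{\lambda,d'}$ under the uniqueness assumption, and then combining via a single application of AM--GM.

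The key pointwise claim I would prove is: under the up-to-$\Delta$ uniqueness of $(\beta,\gamma,\lambda)$ with gap $\delta$ and $\Delta\ge 3$, for every integer $1\le d\le \Delta-1$, every $0\le d'\le d$, and every $y\in J_{\lambda,d'}$,
\[|h(y)|\le\frac{36}{d+1}.\]
Granting this, \Cref{lemma-boundedness-CLV} follows quickly. When $d_1+d_2+2\le 72$, the trivial bound $|h(y)|\le 1$ (immediate from the definition of $h$, using $\beta\gamma\le 1$) already gives $\sqrt{|h(y_1)|\,|h(y_2)|}\le 1\le 72/(d_1+d_2+2)$. Otherwise, setting $d_\ast=\max(d_1,d_2)\le\Delta-1$ gives $d_\ast+1\ge (d_1+d_2+2)/2$; applying the claim with $d=d_\ast$ to each of $y_1,y_2$ yields $|h(y_i)|\le 36/(d_\ast+1)$, so $\sqrt{|h(y_1)|\,|h(y_2)|}\le 36/(d_\ast+1)\le 72/(d_1+d_2+2)$, as required.

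Proving the pointwise claim is where all the work lies, and I would split by the value of $d'$ and the sign of $\beta$. For $d'\ge 1$, I would use the explicit interval description of $J_{\lambda,d'}$ to control $\mathrm{e}^{y}$ and then plug into $h$. When $\beta=0$, the $d'$-uniqueness inherited from $d$-uniqueness (by \Cref{definition:up-to-Delta-unique}) is equivalent by \Cref{prop:uniqueness-property} to $\lambda\le\lambda_{c,\delta}(d')$, so
\[|h(y)|\;\le\;\mathrm{e}^y/\gamma\;\le\;\lambda/\gamma^{d'+1}\;\le\;\frac{(1-\delta)(d')^{d'}}{(d'-1+\delta)^{d'+1}}\;=\;O(1/d').\]
When $\beta>0$, $d'$-uniqueness forces $\lambda\in(0,\lambda_1(d')]\cup[\lambda_2(d'),+\infty)$ by \Cref{prop:uniqueness-property}, and both branches give $|h(y)|=O(1/d')$ on $J_{\lambda,d'}$: the small-$\lambda$ branch via the same control of $\mathrm{e}^y$ as above, and the large-$\lambda$ branch by using $|h(y)|\le(1-\beta\gamma)/(\beta\mathrm{e}^y)$ together with the product identity $\lambda_1(d')\lambda_2(d')=(\gamma/\beta)^{d'+1}$ to transfer the estimate from $\lambda_1$ to $\lambda_2$. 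For the borderline case $d'=0$, $y=\log\lambda$ is a single point; here one applies the same estimates at degree $d$ (rather than $d'=0$), since $d\ge 1$ already constrains $\lambda$ enough through $d$-uniqueness to yield $|h(\log\lambda)|=(1-\beta\gamma)\lambda/((\beta\lambda+1)(\lambda+\gamma))\le\min\{\lambda/\gamma,\,1/(\beta\lambda)\}=O(1/d)$.

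The main technical obstacle is converting these asymptotic $O(1/(d+1))$ bounds into the concrete uniform constant $36$ that works for all admissible $(\beta,\gamma)$ and all $3\le d\le\Delta-1$. The dominant term $d^{d}/(d-1+\delta)^{d+1}$ tends to $\mathrm{e}/(d+O(1))$ as $d\to\infty$, which is well below $36/(d+1)$ in the large-$d$ regime; combined with the observation that the trivial bound $|h(y)|\le 1$ already suffices whenever $d_1+d_2+2\le 72$, only a moderate finite-range check for $36\le d\le 71$ remains, which can be done by explicit computation uniformly over $(\beta,\gamma)$ with $\beta\gamma\le 1$.
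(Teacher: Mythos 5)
The reduction is fine as far as it goes, but the pointwise claim you reduce to is \emph{false}. Consider $\beta=0$, fix a small $\delta$, choose any $\gamma$ large enough (e.g.\ $\gamma\ge\max_{2\le d<\Delta}\bigl((d-1+\delta)^{d+1}/(\delta^2 d^d)\bigr)^{1/(d-1)}$), and set $\lambda=\tfrac{(1-\delta)\gamma^2}{\delta^2}$. Then $\lambda=\lambda_{c,\delta}(1)\le\lambda_{c,\delta}(d)$ for all $1\le d<\Delta$, so $(\beta,\gamma,\lambda)$ is up-to-$\Delta$ unique with gap $\delta$. Yet for $y=\log(\lambda/\gamma)=\max J_{\lambda,1}$ one has $|h(y)|=\lambda/(\lambda+\gamma^2)=\tfrac{1-\delta}{1-\delta+\delta^2}\to 1$ as $\delta\to 0$, which violates your claimed bound $|h(y)|\le\tfrac{36}{d+1}$ whenever $\Delta\ge 38$ and $\delta$ is small. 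The internal reason for the failure is visible in your own sketch: the estimate you derive is $|h(y)|\le\lambda/\gamma^{d'+1}\le (1-\delta)(d')^{d'}/(d'-1+\delta)^{d'+1}$, i.e.\ an $O(1/d')$ bound from $d'$-uniqueness, not an $O(1/d)$ bound as the claim asserts. And the weaker $O(1/(d'+1))$ bound that your sketch actually proves is \emph{not sufficient} for the lemma either, because the AM--GM inequality points in the wrong direction here: $\sqrt{\tfrac{36}{d_1+1}\cdot\tfrac{36}{d_2+1}}=\tfrac{36}{\sqrt{(d_1+1)(d_2+1)}}\ge\tfrac{72}{d_1+d_2+2}$ (concretely, for $d_1$ small and $d_2$ large the left side decays like $1/\sqrt{d_2}$ while the right side decays like $1/d_2$).

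The correct argument cannot bound $|h(y_1)|$ and $|h(y_2)|$ separately; the thing that makes the product small is the uniqueness condition applied at degrees near the \emph{average} $(d_1+d_2)/2$. For instance, with $\beta=0$ one uses $|h(y_1)||h(y_2)|\le \lambda^2/\gamma^{d_1+d_2+2}$, writes $d_1+d_2=a+b$ with $|a-b|\le 1$ and $1\le a\le b\le\Delta-1$, and factors $\lambda^2/\gamma^{d_1+d_2+2}=\bigl(\lambda/\gamma^{a+1}\bigr)\bigl(\lambda/\gamma^{b+1}\bigr)$, which the $a$- and $b$-uniqueness constraints bound by roughly $(\mathrm{e}^2/(a-1))(\mathrm{e}^2/(b-1))\lesssim (4\mathrm{e}^2/(d_1+d_2))^2$. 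The essential point, absent from your proposal, is that $J_{\lambda,d'}$ for a fixed small $d'$ can contain points with $|h|$ close to $1$ even under up-to-$\Delta$ uniqueness with arbitrarily large $\Delta$, so only the joint multiplicative structure --- not any per-vertex decay estimate --- yields the $1/(d_1+d_2+2)$ scaling. An analogous pairing is needed when $\beta>0$, with additional care around the two-sided interval $J_{\lambda,d}$ and the symmetry $\lambda\mapsto(\gamma/\beta)^{d+1}/\lambda$.
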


Now, to verify the boundedness for $\phi$ with respect to $\+I_\pi$, it is sufficient to show that for every $v \in V$,
\begin{align} \label{eq:better-boundness}
  \max_{y \in J_{\lambda_v, d_v}}\abs{h(y)} &\leq \max_{y\in J_{\lambda,d_v}}\abs{h(y)}.
\end{align}
Note that the maximum on the right-hand-side is taken over $J_{\lambda,d_v}$ and the maximum on the left-hand-side is taken over $J_{\lambda_v,d_v}$ where $\lambda_v=\theta_v^{\chi_v}$ is the local field associated with $v\in V$ in $\+I_\pi$, specifically
\begin{align*}
  \lambda_v &= \begin{cases}
      \lambda \cdot \theta_v & \lambda \leq \tp{\frac{\gamma}{\beta}}^{\Delta_v / 2}, \\
      \lambda / \theta_v & \text{otherwise},
    \end{cases}
\end{align*}
for an arbitrarily fixed $\theta_v\in(0,1)$.

When $\beta=0$, $\abs{h(y)} = \frac{\mathrm{e}^y}{\mathrm{e}^y + \gamma}$ is monotonically increasing in $y$, and for every $v\in V$, we have $\lambda_v=\lambda\cdot\theta_v<\lambda$  and hence $\max J_{ \lambda_v, d_v} \leq \max J_{\lambda, d_v}$. It holds that
\begin{align*}
  \max_{y \in J_{\lambda_v,d_v}} \abs{h(y)} = \abs{h(\max J_{\lambda_v,d_v})}
  &\leq \abs{h(\max J_{\lambda,d_v})} = \max_{y \in J_{\lambda,d_v}}\abs{h(y)}.
\end{align*}
The inequality~\eqref{eq:better-boundness} follows.

In the following we assume $\beta>0$, i.e.~$0<\beta\gamma<1$.
Note that $\abs{h(y)} = \frac{(1 - \beta\gamma)\mathrm{e}^y}{(\beta \mathrm{e}^y + 1)(\mathrm{e}^y + \gamma)}$ is monotonically increasing in $y$ when $y \leq \frac{1}{2}\log\tp{\frac{\gamma}{\beta}}$, and monotonically decreasing in $y$ when $y \geq \frac{1}{2}\log\tp{\frac{\gamma}{\beta}}$, so that $\abs{h(y)}$ achieves the maximum at $y = \frac{1}{2}\log\tp{\frac{\gamma}{\beta}}$.
Then, we can verify \eqref{eq:better-boundness} by considering three cases: 
\begin{enumerate}
\item\label{boundedness-case-1}
$\frac{1}{2}\log\tp{\frac{\gamma}{\beta}} \in J_{\lambda,d_v}$; 
\item\label{boundedness-case-2}
$\frac{1}{2}\log\tp{\frac{\gamma}{\beta}} > \log\tp{\frac{\lambda}{\gamma^{d_v}}}$; 
\item \label{boundedness-case-3}
$\frac{1}{2}\log\tp{\frac{\gamma}{\beta}} < \log\tp{\lambda\beta^{d_v}}$.
\end{enumerate}
For Case.\ref{boundedness-case-1}, \eqref{eq:better-boundness} holds trivially because $\max_{y \in J_{\lambda,d_v}}\abs{h(y)}$ achieves the global maximum of $\abs{h(y)}$.

We then verify \eqref{eq:better-boundness} in Case.\ref{boundedness-case-2}. Notice that when $\frac{1}{2}\log\tp{\frac{\gamma}{\beta}} > \log\tp{\frac{\lambda}{\gamma^{d_v}}}$, it holds that
\begin{align*}
  \lambda
  &< \gamma^{d_v} \tp{\frac{\gamma}{\beta}}^{\frac{1}{2}}
    =\tp{\frac{\gamma}{\beta}}^{\frac{d_v + 1}{2}} \tp{\beta\gamma}^{\frac{d_v}{2}}
    \leq \tp{\frac{\gamma}{\beta}}^{\Delta_v/2},
\end{align*}
which implies $\lambda_v=\lambda\cdot\theta_v < \lambda$ and hence $\max J_{ \lambda_v, d_v} \leq \max J_{\lambda, d_v} = \log\tp{\frac{\lambda}{\gamma^{d_v}}}< \frac{1}{2}\log\tp{\frac{\gamma}{\beta}}$.
Since $\abs{h(y)}$ is monotonically increasing in $y$ when $y\le \frac{1}{2}\log\tp{\frac{\gamma}{\beta}}$, it holds that
\begin{align*}
  \max_{y \in J_{\lambda_v,d_v}} \abs{h(y)} = \abs{h(\max J_{\lambda_v,d_v})}
  &\leq \abs{h(\max J_{\lambda,d_v})} = \max_{y \in J_{\lambda,d_v}}\abs{h(y)}.
\end{align*}

Finally, we verify \eqref{eq:better-boundness} in Case.\ref{boundedness-case-3}. 
Notice that when $\frac{1}{2}\log\tp{\frac{\gamma}{\beta}} < \log\tp{\lambda\beta^{d_v}}$, it holds that
\begin{align*}
  \lambda
  &> \beta^{-d_v} \tp{\frac{\gamma}{\beta}}^{\frac{1}{2}}
    =\tp{\frac{\gamma}{\beta}}^{\frac{d_v + 1}{2}} \tp{\beta\gamma}^{-\frac{d_v}{2}}
    \geq \tp{\frac{\gamma}{\beta}}^{{\Delta_v}/{2}},
\end{align*}
which implies $\lambda_v=\lambda/\theta_v > \lambda$, and hence $\frac{1}{2}\log\tp{\frac{\gamma}{\beta}} <\log\tp{\lambda\beta^{d_v}}= \min J_{\lambda,d_v} \leq \min J_{\lambda_v,d_v}$. Since $\abs{h(y)}$ is monotonically decreasing in $y$ when $y \ge \frac{1}{2}\log\tp{\frac{\gamma}{\beta}}$, it holds that
\begin{align*}
  \max_{y \in J_{\lambda_v,d_v}}\abs{h(y)} = \abs{h(\min J_{\lambda_v,d_v})} \leq \abs{h(\min J_{\lambda,d_v})} = \max_{y \in J_{\lambda,d_v}} \abs{h(y)}.
\end{align*}
Therefore, we prove \eqref{eq:better-boundness}. 
It immediately follows from \Cref{lemma-boundedness-CLV} and \eqref{eq:better-boundness} that the potential function $\phi$ defined in~\eqref{eq:LLY-potential-function} satisfies $36$-boundedness with respect to instance $\+I_{\pi}$.

Along with the $\frac{\delta}{2}$-contraction we have established for $\phi$ with respect to $\+I_{\pi}$, this guarantees that $\phi$ is a $(\frac{\delta}{2},36)$-potential function with respect to $\+I_{\pi}$.
Due to \Cref{theorem-good-potential-imply-SI-CLV}, the distribution $\pi$ is $\frac{144}{\delta}$-spectrally independent.
Note that this holds for $\pi=\Mag{\mu}{\*\theta^{\*\chi}}$ for arbitrary $\*\theta\in(0,1)^V$.
This proves the complete $\frac{144}{\delta}$-spectral independence of $\mu$ in direction $\*\chi$ claimed in the first half of \Cref{lm:two-spin-property}.

\subsection{Spectral gap in an easier regime}\label{sec:req-2}
We now prove the second half of \Cref{lm:two-spin-property}.

Let $\delta \in (0, 1)$.
Assume that the anti-ferromagnetic $2$-spin system $\+I = (V, E, \beta, \gamma,\lambda)$ is up-to-$\Delta$ unique with gap $\delta$.
Fix $\theta = \delta^2/64$ and $C = 8/\delta$. 
Let 
\[
\pi = \Mag{\mu}{\theta^{\*\chi}},
\] 
where $\mu$ is the Gibbs distribution associated with $\+I$, and $\*\chi \in \{\0, \1\}^V$ is the good direction defined in~\eqref{eq:direction}.

Our goal is to show that
\begin{align}
  \spgap{\sgap}{\GD}(\pi) &\geq \frac{1}{Cn}.\label{eq:min-gap-bound-2spin}
\end{align}
where $\spgap{\sgap}{\GD}(\pi) \triangleq \min_{\Lambda \subseteq V, \sigma_\Lambda \in \Omega(\pi_\Lambda)} \spgap{gap}{\GD}\tp{\pi^{\sigma_\Lambda}}$ denotes the minimum spectral gap of the Glauber dynamics for $\pi$ with worst-case feasible boundary condition.

Let $\Lambda \subseteq V$ be a subset of vertices and $\sigma \in \Omega(\pi_\Lambda)$ a feasible partial configuration specified on $\Lambda$. Let $P_\sigma$ denote the Glauber dynamics for  $\pi^\sigma$.
We consider the following coupling of chain $P_\sigma$.

For any two configurations $\*X,\*Y \in \{\0,\1\}^V$, we use $\Phi(\*X,\*Y)$ to denote the \emph{weighted hamming distance} between $\*X$ and $\*Y$. Formally 
\begin{align*}
\Phi(\*X,\*Y) \triangleq \sum_{v \in V: X_v \neq Y_v} \Phi_v,	
\end{align*}
where for each $v \in V$, $\Phi_v$ is defined by
  \begin{align*}
    \Phi_v &\triangleq \begin{cases}
      1 - \frac{\delta}{8} &, \Delta_v = 1, \\
      \Delta_v &, \Delta_v > 1,
    \end{cases}
  \end{align*}	
where $\Delta_v$ denotes the degree of $v$ in graph $G=(V,E)$.

\begin{lemma} \label{lem:fd-to-coupling}
  Assume that $\+I$ is up-to-$\Delta$ unique with gap $\delta$.
  For any $\*X, \*Y \in \Omega(\pi^\sigma)$, there is a coupling $(\*X,\*Y)\to (\*X',\*Y')$ of Markov chain $P_\sigma$ such that 
  \begin{align*}
  \E{\Phi(\*X',\*Y') \mid \*X,\*Y } \leq   \tp{1 - \frac{\delta}{8n}} \Phi(\*X,\*Y). 
  \end{align*}
\end{lemma}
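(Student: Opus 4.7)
The plan is to apply the path coupling method of Bubley and Dyer. Since $\Phi$ is the graph metric of the single-flip graph on $\Omega(\pi^\sigma)$ (which is connected by irreducibility of $P_\sigma$), it suffices to verify the one-step contraction for pairs $\*X, \*Y \in \Omega(\pi^\sigma)$ that differ at exactly one vertex $u \in V \setminus \Lambda$; the general case then follows by summing along a shortest single-flip path in $\Omega(\pi^\sigma)$.

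For such a single-disagreement pair, couple one step of $P_\sigma$ by choosing the same uniformly random $v \in V \setminus \Lambda$ in both chains and applying the optimal coupling of the two conditional marginals $\pi_v^{X_{V \setminus \{v\}}}$ and $\pi_v^{Y_{V \setminus \{v\}}}$. The only nontrivial contributions to $\Phi(\*X', \*Y') - \Phi(\*X, \*Y)$ come from $v = u$ (the two conditional marginals agree because they depend only on $X_{V\setminus\{u\}} = Y_{V\setminus\{u\}}$, so $X'_u = Y'_u$, contributing $-\Phi_u$) and from $v = w \in N(u) \setminus \Lambda$ (a new disagreement appears at $w$ with probability $D_w \triangleq \DTV{\pi_w^{X_{V \setminus \{w\}}}}{\pi_w^{Y_{V \setminus \{w\}}}}$, contributing $\Phi_w D_w$). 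Averaging over $v$, the target contraction $\E{\Phi(\*X', \*Y') \mid \*X, \*Y} \le (1 - \delta/(8n))\Phi_u$ reduces to the key estimate
\[
\sum_{w \in N(u) \setminus \Lambda} \Phi_w \cdot D_w \;\le\; \Bigl(1 - \frac{\delta}{8}\Bigr)\Phi_u. \tag{$\ast$}
\]

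To establish $(\ast)$, write $D_w$ in terms of the tree recursion on the magnetized instance $\+I_\pi$ from \Cref{sec:req-1}. Letting $y_w$ denote the log-marginal-ratio at $w$ under the boundary $X_{V \setminus \{w\}}$, $D_w$ is controlled by $|h(y_w)|$ (with $h$ as in \eqref{eq-def-function-h}) times an edge-factor accounting for the flip at $u$. The LLY potential $\phi(y) = \sqrt{|h(y)|}$ is, by \Cref{lemma-unique-closed-in-good-direction} combined with \Cref{lm:contraction}, a $(\delta/2)$-contractive potential for $\+I_\pi$, yielding $\sum_{w \in N(u)} \phi(y_w)\phi(y_u) \le \sqrt{1 - \delta} \le 1 - \delta/2$; and \Cref{lemma-boundedness-CLV} together with \eqref{eq:better-boundness} gives the boundedness $\phi(y_u)\phi(y_w) \le 72/(\Delta_u + \Delta_w)$ for $\+I_\pi$. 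The weighted bound $(\ast)$ is then obtained by combining these two bounds with the choice $\Phi_v = \Delta_v$ for $\Delta_v > 1$ (which absorbs the boundedness denominator via the degree count on the neighbors) and $\Phi_v = 1 - \delta/8$ for $\Delta_v = 1$ (which handles the boundary case $d_v = 0$ where the uniqueness recursion is vacuous). The magnetization by $\theta = \delta^2/64$ supplies an additional multiplicative slack of order $\sqrt{\theta} \sim \delta$ in each $|h(y_v)|$, precisely calibrated to upgrade the contraction factor from $\sqrt{1-\delta}$ to the required $(1 - \delta/8)$.

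The main obstacle is the careful constant-tracking in $(\ast)$: reconciling the degree-weighted path-coupling accounting with the LLY-potential bounds applied to $\+I_\pi$ and isolating the exact slack that the magnetization with $\theta = \delta^2/64$ provides. In particular, the leaf correction $\Phi_v = 1 - \delta/8$ is forced by the vacuity of the uniqueness recursion at $d_v = 0$, and without it the contraction from the interior would not propagate all the way to the boundary. For the hardcore and Ising models, the magnetization produces much sharper marginal bounds that allow self-contained direct estimates of $D_w$ (in the spirit of \Cref{example-hardcore-GD-mixing}), which will be used to prove the improved constants for the strengthened variants of the lemma specific to those two models.
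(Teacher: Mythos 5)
Your overall framework is the same as the paper's: path coupling with a degree-weighted Hamming distance, the degree-$1$ correction $\Phi_v = 1-\delta/8$, and the reduction to the one-step estimate $(\ast)$. The leaf-correction rationale you give is accurate. However, the proposal has two genuine gaps, of which the second is the crux of the lemma.

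\textbf{Gap 1: the path coupling reduction.} You assert that single-disagreement contraction suffices "by summing along a shortest single-flip path in $\Omega(\pi^\sigma)$," justified by irreducibility. Irreducibility gives connectivity of the single-flip graph on $\Omega(\pi^\sigma)$, but path coupling additionally needs the weighted Hamming distance to coincide with the shortest-path distance in that graph, i.e.\ for every pair $\*X,\*Y \in \Omega(\pi^\sigma)$ there must be a single-flip path through \emph{feasible} configurations whose $\Phi$-length equals $\Phi(\*X,\*Y)$. For hard-constraint models (and under arbitrary pinnings $\sigma_\Lambda$) this is not automatic. The paper sidesteps this cleanly by extending the Glauber transition rule to the whole cube $\{\0,\1\}^V$ via the explicit marginal formula \eqref{eq-def-ext-prob}, running path coupling on the full cube where $\Phi$ is trivially a path metric, and then specializing to $\Omega(\pi^\sigma)$. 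Without this extension step your reduction is unjustified.

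\textbf{Gap 2: deriving the key estimate $(\ast)$.} You propose to obtain $(\ast)$ from the LLY-potential contraction (\Cref{lm:contraction}) and the CLV-boundedness (\Cref{lemma-boundedness-CLV}), with the magnetization supplying a $\sqrt{\theta}$-slack. This does not work, for a structural reason. Those two estimates control the decay of influence \emph{along the tree recursion on the SAW tree}: the contraction bound $\phi(y)\sum_i h^{\phi}(y_i) \le 1-\alpha$ only applies when $y = H_{\lambda_v,d_v}(y_1,\ldots,y_{d_v})$ is the recursed image of the children's log-ratios, i.e.\ it couples a vertex's potential to its children's potentials through the tree map. In the path coupling, you need a \emph{worst-case one-step} bound: for each neighbor $w$ of the disagreeing vertex $u$, the Dobrushin influence $D_w$ must be small uniformly over \emph{every} configuration of $w$'s other neighbors, with no tree-recursion consistency assumed (this is exactly the "non-self-avoiding" issue the paper flags in a remark after \eqref{eq-path-coupling-final}). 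The LLY contraction gives you nothing here because the disagreeing vertex and its neighbors are not related by a single application of $H$; and the CLV boundedness applied with $y_1=y_2=y_w$ only gives $\abs{h(y_w)} \le 36/\Delta_w$, which is a factor $\Theta(1)$ too weak to produce $\Phi_w D_w \le 1-\delta/4$ when $\Phi_w = \Delta_w$. The quantity that actually needs to be bounded is $f_{\Delta_w}\tp{\lambda_w(\beta\gamma)^s/\gamma^{\Delta_w-1}}$ over all integer boundary counts $s$, and showing this is $< 1-\delta/4$ for the magnetized local field $\lambda_w = \lambda\theta^{\chi_w}$ with $\theta = \delta^2/64$ is a separate, nontrivial calculation (Claim~\ref{lem:IS-to-fd} in the paper, whose proof splits into the cases $D \gtrless (1-\delta/2)\overline\Delta + 1$ and relies on \Cref{prop:uniqueness-property} and \Cref{lem:lambda-x}). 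That analysis is the actual content of \Cref{lem:fd-to-coupling}; it is not subsumed by the potential-function machinery of \Cref{sec:req-1}, and the "$\sqrt\theta$ slack in $\phi$" heuristic you invoke is not a proof of it.
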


Due to the well known connection between coupling and spectral gap (\Cref{lemma:MFC}), our desired spectral gap bound in~\eqref{eq:min-gap-bound-2spin} is implied by \Cref{lem:fd-to-coupling}, and hence this proves the second half of \Cref{lm:two-spin-property}.

\begin{remark}
Usually when using coupling to analyze the mixing time upper bound, the metric $\Phi(\cdot,\cdot)$ is required to satisfy $\min_{\*X,\*Y \in \Omega}\Phi(\*X,\*Y) \geq 1$. 
Our weighted Hamming distance {per se} does not satisfy this.
This does not matter though, because our purpose of using coupling here is not to bound the mixing time, but the spectral gap, and \Cref{lemma:MFC} holds as long as there is sufficient decay in the metric.
If one wants to analyze the mixing time using our coupling, one can simply renormalize by replacing $\Phi$ with $2\Phi$, so that the minimum distance is bounded from below and the step-wise decay in \Cref{lem:fd-to-coupling} still holds.
\end{remark}

It only remains to prove \Cref{lem:fd-to-coupling}.

\begin{proof}[Proof of \Cref{lem:fd-to-coupling}]
Observe that $\pi=\Mag{\mu}{\theta^{\*\chi}}$ corresponds to the Gibbs distribution of a 2-spin system where the pairwise interactions on edges are the same as in $\mu$, but now each $v\in V$ has a local field:
\[
\lambda_v\triangleq \lambda \theta^{\chi_v}.
\]
More specifically, for every $\tau \in \{\0,\1\}^V$,
\[
\pi(\tau) \propto  \beta^{m_{\1}(\tau)} \gamma^{m_{\0}(\tau)}\prod_{v:\tau(v)=\1}\lambda_v,
\]
where 
$m_{i}(\tau) \triangleq \abs{\{(u,v) \in E\mid \tau_u = \tau_v = i\}}$ for $i \in \{\0,\1\}$.


The coupling is constructed by the path coupling~\cite{bubley1997path}.
Note that $P_\sigma$ is a Markov over $\Omega(\pi^\sigma)$. 
To apply the path coupling, we first extend $P_\sigma$ to the entire space $\{\0,\1\}^V$.
Suppose the current configuration is $\*X\in\{\0,\1\}^V$, not necessarily feasible with respect to $\pi$.
Upon transition, a vertex $v \in V$ is picked uniformly at random;
if $v \in \Lambda$, then the value of $v$ is fixed by $\sigma$ and  set $X_v \gets \sigma_v$;
if $v \in V \setminus \Lambda$, then the current value of $X_v$ is updated to a random value $c_v \in \{\0,\1\}$ such that
\begin{equation}
\label{eq-def-ext-prob}
\begin{split}
\Pr[]{c_v = \0} = \hat{\mu}^{X_{V \setminus \{v\}}}_v(\0) =  p_v(\0,s) \triangleq \frac{\gamma^s }{\gamma^s + \lambda_v \beta^{\Delta_v - s}},\\
\Pr[]{c_v = \1} = \hat{\mu}^{X_{V \setminus \{v\}}}_v(\1)  =p_v(\1,s) \triangleq \frac{\lambda_v \beta^{\Delta_v - s}}{\gamma^s + \lambda_v \beta^{\Delta_v - s}},
\end{split} 
\end{equation}
where $s \triangleq |\{u \mid v \in \Gamma_v \land X_v = \0 \}|$ denotes the  number of $\0$'s assigned by $X$ to the neighborhood $\Gamma_v$ of $v$ in $G$,
and the value of $\beta^{\Delta_v - s}$ is computed with convention $0^0=1$.\footnote{This convention is assumed throughout the proof without further mentioning.}

In~\eqref{eq-def-ext-prob}, $\hat{\mu}$ extends the definition of conditional distribution induced by $\mu$ to the boundary conditions that may be infeasible in general.
For any feasible configuration $\*X$, it is easy to see that $\hat{\mu}^{X_{V \setminus \{v\}}}_v = \mu^{X_{V \setminus \{v\}}}_v$.

Let $\*X,\*Y\in\{\0,\1\}^V$ be disagreeing with each other at only one vertex $v\in V$.
The coupling $(\*X,\*Y)\to(\*X',\*Y')$ is constructed as follows:
\begin{itemize}
\item the two chains pick the same vertex $w \in V$ uniformly at random, and $X'_u=Y'_u$ for all $u\neq w$;
\item $(X'_w,Y'_w)$ is drawn according to the optimal coupling of their marginal distributions in~\eqref{eq-def-ext-prob}.
\end{itemize}
If the Glauber dynamics picks $v$, then $X'_v = Y'_v$; otherwise, $X'_v \neq Y'_v$. We have $\Pr[]{X'_v \neq Y'_v \mid \*X,\*Y} = 	1 - \frac{1}{n}$.
For any $w \not\in\Gamma_v\cup\{v\}$, it holds that $\Pr[]{X'_w \neq Y'_w \mid \*X,\*Y} =  0$.
For any $u \in \Gamma_v$, $X'_u \neq Y'_u$ only if the Glauber dynamics picks $u$ and the coupling on vertex $u$ fails. 
We have
\begin{align}
\label{eq-coupling-neighbor}
\forall u \in \Gamma(v),\quad \Pr[]{X'_u  \neq Y'_u \mid \*X,\*Y} \leq \frac{1}{n}R(u,v),	
\end{align}
where $R(v, u)$ corresponds to the Dobrushin's influence matrix, formally defined as follows: 
\[
R(v, u) \triangleq \max_{(\sigma,\tau) \in B_v}\DTV{\hat{\mu}^{\sigma_{V \setminus \{u\} }}_u}{\hat{\mu}^{\tau_{V \setminus \{u\} }}_u },
\]
where $B_v$ denotes the set of all pairs $(\sigma,\tau) \in \{\0,\1\}^{V}\times\{\0,\1\}^{V}$ that disagree only at $v$. 
We have
\begin{align*}
R(v, u) 
&= \max_{0 \leq s \leq \Delta_u  -1 }\abs{p_u(\1, s+1) - p_u(\1,s)} \\
&= \max_{0 \leq s \leq \Delta_u  -1 }\frac{\lambda_u \beta^{\Delta_u-s-1}\gamma^s(1-\beta\gamma)}{ (\gamma^{s+1} + \lambda \beta^{\Delta_u-s-1})(\gamma^s + \lambda_u \beta^{\Delta_u - s} ) }\\
&{=} \max_{0 \leq s \leq \Delta_u - 1}\frac{\lambda_u \beta^{s}\gamma^{\Delta_u - s- 1}(1-\beta\gamma)}{ (\gamma^{\Delta_u-s} + \lambda \beta^{s})(\gamma^{\Delta_u-s-1} + \lambda_u \beta^{s+1} ) }
&&(\mbox{by replacing $s$ with $\Delta_u -s - 1$})\\
&=\max_{0 \leq s \leq \Delta_u - 1}\frac{\lambda_u \beta^{s}\gamma^{-\Delta_u + s + 1}(1-\beta\gamma)}{ (\gamma + \lambda \beta^s \gamma^{-\Delta_u+s + 1} )(1 + \lambda_u \beta^{s+1} \gamma^{-\Delta_u+s + 1}) }
&& (\text{since }\gamma > 0)\\
& = \max_{0 \leq s \leq \Delta_u - 1}  \frac{1}{\Delta_u}f_{\Delta_u}\tp{ \frac{\lambda_u (\beta\gamma)^s}{\gamma^{\Delta_u - 1}} },
\end{align*}
where  
the function $f_d$ for integer $d$ is as defined in~\eqref{eq-def-fd}: 
\begin{align*}
f_d(x) = \frac{d(1- \beta\gamma)x}{(\beta x + 1)(x + \gamma)}.	
\end{align*}

Altogether, we have
\begin{align}
\label{eq-path-coupling-final}
 \E{\Phi(\*X',\*Y') \mid \*X,\*Y } \leq \Phi_v\tp{1 - \frac{1}{n}} +\frac{1}{n} \sum_{u \in \Gamma_v}	 \frac{\Phi_u}{\Delta_u}f_{\Delta_u}\tp{ \frac{\lambda_u (\beta\gamma)^s}{\gamma^{\Delta_u - 1}} }.
\end{align}
\begin{remark}
Note that $f_{\Delta_u}(\cdot)$ captures the contraction of the tree recursion for the marginal ratio of $(\Delta_u+1)$-regular tree.
We know that $f_{\Delta_u}<1$ at the fixed point for the tree recursion, if we had assumed the $(\Delta_u+1)$-uniqueness.
However, we only assume the up-to-$\Delta$ uniqueness of $(\beta,\gamma,\lambda)$. Such discrepancy is due to the non-self-avoiding nature of path coupling argument.
Nevertheless, since we have moved to an easier regime $\pi=\Mag{\mu}{\theta^{\*\chi}}$ that effectively alters the local fields from $\lambda$ to $\lambda_v=\lambda\theta^{\chi_v}$, we could hope for that  $f_{\Delta_u}$ still contracts at $\frac{\lambda_u (\beta\gamma)^s}{\gamma^{\Delta_u - 1}}$ as long as it is fairly close to the fixed point.
\end{remark}
The above intuition is formally justified by the following claim. 
\begin{claim} \label{lem:IS-to-fd}
For any $u \in V$, if $\Delta_u > 1$, then for any integer $0 \leq s \leq \Delta_u - 1$, it holds that
  \begin{align*}
    f_{\Delta_u}\tp{\frac{\lambda_u(\beta\gamma)^s}{\gamma^{\Delta_u - 1}}} < 1 - \frac{\delta}{4}. 
  \end{align*}
\end{claim}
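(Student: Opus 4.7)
Plan. The function $f_{\Delta_u}(x) = \Delta_u \cdot \frac{(1-\beta\gamma)x}{(\beta x+1)(x+\gamma)}$ is unimodal on $[0,+\infty)$; for $\beta > 0$ it attains its global maximum $\Delta_u/\overline{\Delta}$ at $x^* = \sqrt{\gamma/\beta}$, while for $\beta = 0$ it is strictly increasing with supremum $\Delta_u$. The values $\{x_s\}_{s = 0}^{\Delta_u - 1}$ are precisely the $\Delta_u$ distinct outputs of the tree recursion $F_{\lambda_u, \Delta_u - 1}$ evaluated at corner inputs $(y_i) \in \{0, +\infty\}^{\Delta_u - 1}$, and hence all lie in the interval $[\lambda_u \beta^{\Delta_u - 1},\, \lambda_u/\gamma^{\Delta_u - 1}]$. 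My plan is to bound $f_{\Delta_u}$ on this interval by a case analysis on the size of $\Delta_u$ relative to $\overline{\Delta}$.

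The easy case $\Delta_u \leq (1-\delta/4)\overline{\Delta}$ is settled immediately by the global bound $f_{\Delta_u}(x) \leq \Delta_u/\overline{\Delta} \leq 1-\delta/4$. The substantive case is $\Delta_u > (1-\delta/4)\overline{\Delta}$, together with the $\beta = 0$ situation, which I treat analogously using the single threshold $\lambda_{c,\delta}(d_u)$ from \Cref{prop:uniqueness-property}. Setting $d_u \triangleq \Delta_u - 1$, the substantive case gives $d_u \geq (1-\delta)\overline{\Delta}$ provided $\overline{\Delta} \geq 4/(3\delta)$; the finitely many small-$\overline{\Delta}$ exceptions are dispatched by direct enumeration. \Cref{prop:uniqueness-property} applied to the $d_u$-uniqueness of $(\beta,\gamma,\lambda)$ with gap $\delta$ then places $\lambda \in (0, \lambda_1(d_u)] \cup [\lambda_2(d_u), +\infty)$, and the identity $\lambda_1(d_u)\lambda_2(d_u) = (\gamma/\beta)^{d_u+1} = (\gamma/\beta)^{\Delta_u}$ shows that the two branches are separated by $(\gamma/\beta)^{\Delta_u/2}$, the very threshold appearing in the definition \eqref{eq:direction} of $\chi_u$. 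Consequently $\chi_u = \1$ forces $\lambda \leq \lambda_1(d_u)$ and $\chi_u = \0$ forces $\lambda \geq \lambda_2(d_u)$, so the biased field $\lambda_u = \lambda\theta^{\chi_u}$ is driven deeper into its branch by the factor $\theta$.

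By the measure-level symmetry $(\beta,\gamma,\lambda,\sigma) \leftrightarrow (\gamma,\beta,1/\lambda,-\sigma)$, which interchanges the two branches while preserving both $f_{\Delta_u}$ and the set $\{x_s\}$, it suffices to treat $\chi_u = \1$. Then $x_s \leq x_0 = \lambda_u/\gamma^{d_u} \leq \lambda_1(d_u)\theta/\gamma^{d_u}$, and unimodality together with monotonicity of $f_{\Delta_u}$ on $[0, x^*]$ reduces the claim to showing $x_0 \leq x_1(\Delta_u, \delta/4)$, the low root of $f_{\Delta_u}(x) = 1 - \delta/4$ supplied by \Cref{prop:uniqueness-property}. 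I would verify this by substituting the closed-form quadratic expressions for $\lambda_1(d_u)$ and $x_1(\Delta_u, \delta/4)$ from \Cref{prop:uniqueness-property} and invoking $\theta = \delta^2/64$.

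The main obstacle is this last quantitative comparison. One must relate $x_1(d_u, \delta)$ (the low root of $f_{d_u}(x) = 1 - \delta$) to $x_1(\Delta_u, \delta/4)$, where both the degree has shifted from $d_u$ to $\Delta_u$ and the gap has tightened from $\delta$ to $\delta/4$. The correct order of the biasing factor is $\theta \asymp \delta^2$, because the level sets of $f_d$ scale like $\sqrt{\delta}$ in $x$ near the peak $x^*$ while the degree increment costs only a factor $1 + 1/d_u$ on the right-hand side; the choice $\theta = \delta^2/64$ is calibrated to leave a definite margin of slack in both the main case and the $\beta = 0$ case.
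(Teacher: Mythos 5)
Your overall plan---peel off the easy case by the global bound $f_{\Delta_u}(x)\le\Delta_u/\overline{\Delta}$, reduce by the measure-level symmetry to $\chi_u=\1$, then push $x_0$ below a low root of $f_{\Delta_u}$---is a reasonable skeleton, but as written the proposal has two genuine gaps. The most serious is that the central quantitative step is not carried out: you explicitly name the comparison $x_0 \le \theta\lambda_1(d_u)/\gamma^{d_u}$ versus the low root $x_1(\Delta_u,\delta/4)$ as ``the main obstacle,'' and then only assert that $\theta=\delta^2/64$ is ``calibrated to leave a definite margin of slack.'' This comparison is the whole content of the claim. Relating the low root of $f_{d_u}=1-\delta$ to the low root of $f_{\Delta_u}=1-\delta/4$, with both the degree and the gap changing, is exactly where the work lies; the closed-form quadratic expressions from \Cref{prop:uniqueness-property} do not obviously yield a clean ratio bound. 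The paper sidesteps this entirely by never comparing roots: it shows via \Cref{lem:lambda-x} that the argument is at most $\frac{\delta}{8}x_1(d)$ and then estimates $f_D\!\left(\frac{\delta}{8}x_1(d)\right)\le \frac{D}{2d}f_d(x_1(d))\le 1-\delta$ directly, using $\beta x_1(d)+\gamma/x_1(d)\ge\frac{\delta}{2}(1+\beta\gamma)$. If you insist on comparing roots you will essentially have to reproduce these estimates anyway.

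The second gap is the disposal of the small-$\overline{\Delta}$ regime. When $\overline{\Delta}<4/(3\delta)$ and $\Delta_u > (1-\delta/4)\overline{\Delta}$, you may have $d_u < (1-\delta)\overline{\Delta}$, in which case \Cref{prop:uniqueness-property} says $d_u$-uniqueness holds for \emph{every} $\lambda>0$: the roots $\lambda_1(d_u),\lambda_2(d_u)$ do not exist and the two-branch picture you rely on (``$\lambda\le\lambda_1(d_u)$ or $\lambda\ge\lambda_2(d_u)$'') collapses. You propose to handle this by ``direct enumeration,'' but $\overline{\Delta}=\frac{1+\sqrt{\beta\gamma}}{1-\sqrt{\beta\gamma}}$ is a continuous function of $\beta\gamma$, and for each fixed integer $\Delta_u$ the admissible triples $(\beta,\gamma,\lambda)$ form a full-dimensional continuum, so there is nothing finite to enumerate and no obvious argument to replace it with. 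This regime is precisely the paper's Case $D<(1-\delta/2)\overline{\Delta}+1$, which requires two nontrivial estimates: $\lambda(\beta\gamma)^s/\gamma^d\le\sqrt{\gamma/\beta}\cdot\frac{4}{\delta}$ (using $\lambda\le(\gamma/\beta)^{D/2}$, available after the direction flip) and the near-peak bound $f_D\!\left(c\sqrt{\gamma/\beta}\right)\le\frac{4cD}{\overline{\Delta}-1}$ for $c<\frac12$. The $\beta=0$ branch is similarly left as ``analogous,'' though it is genuinely shorter, reducing to a one-variable monotonicity check on $\left(\frac{d-1+\delta}{d}\right)^{d+1}$.
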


\Cref{lem:IS-to-fd} will be proved later.
We now use \Cref{lem:IS-to-fd} to bound~\eqref{eq-path-coupling-final}. 
Without loss of generality, we assume the underlying graph $G=(V,E)$  is connected. 
Otherwise, we can decompose the spin system $\+I$ into a set of independent systems.
Consider the following two cases: 
\begin{itemize}
\item Case $\Delta_v = 1$. We have $\Phi_v = 1 - \frac{\delta}{8}$. Let $\Gamma_v = \{u\}$. If $\Delta_u = 1$, since $G$ is connected, $G$ only contains two vertices, such instance is trivial.  If $\Delta_u > 1$, it holds that $\Phi_u = \Delta_u$. By \eqref{eq-path-coupling-final} and \Cref{lem:IS-to-fd},
\begin{align*}
 \E{\Phi(\*X',\*Y') \mid \*X,\*Y } 
 &\leq \tp{1 - \frac{\delta}{8}}\tp{1 - \frac{1}{n}} + \frac{1}{n}\tp{1 - \frac{\delta}{4}}\\
 &\le \tp{1 - \frac{\delta}{8n}} \Phi_v.
\end{align*}
\item Case $\Delta_v > 1$. We have $\Phi_v = \Delta_v$. Fix any neighbor $u \in \Gamma_v$. If $\Delta_u > 1$, then $\Phi_u = \Delta_u$, by \Cref{lem:IS-to-fd},
\begin{align*}
\frac{\Phi_u}{\Delta_u}f_{\Delta_u}\tp{ \frac{\lambda_u (\beta\gamma)^s}{\gamma^{\Delta_u - 1}} } \leq 1 - \frac{\delta}{4}.	
\end{align*}
If $\Delta_u = 1$, then $\Phi_u = 1 - \frac{\delta}{8}$, by the definition of $R(v, u)$, it is straightforward to verify that 
\[
R(v, u)  =\frac{1}{\Delta_u}f_{\Delta_u}\tp{ \frac{\lambda_u (\beta\gamma)^s}{\gamma^{\Delta_u - 1}} } \leq 1.
\] 
Thus, we have
\begin{align*}
\frac{\Phi_u}{\Delta_u}f_{\Delta_u}\tp{ \frac{\lambda_u (\beta\gamma)^s}{\gamma^{\Delta_u - 1}} } \leq \Phi_u = 	 1 - \frac{\delta}{8}.
\end{align*}
Therefore, by \eqref{eq-path-coupling-final}, we have
\begin{align*}
 \E{\Phi(\*X',\*Y') \mid \*X,\*Y } 
 &\leq \Delta_v\tp{1 - \frac{1}{n}} + \frac{\Delta_v}{n}\tp{1 - \frac{\delta}{8}}\\
 &= \tp{1 - \frac{\delta}{8n}} \Phi_v.
\end{align*}	
\end{itemize}

Recall that we have extended the Markov chain $P_\sigma$ to the entire space $\{\0,\1\}^V$.
Since the weighted Hamming distance is a well-defined metric on $\{\0,\1\}^V$, due to the path coupling theorem~\cite{bubley1997path}, for any $\*X,\*Y \in \{\0,\1\}^V$, there is a coupling $(\*X,\*Y) \to (\*X',\*Y')$ of $P_\sigma$ such that
\begin{align*}
 \E{\Phi(\*X',\*Y') \mid \*X,\*Y } \leq 	\tp{1 - \frac{\delta}{8n}}\Phi_v.
\end{align*}
This step-wise decay property holds for every $\*X,\*Y \in \{\0,\1\}^V$, thus it holds for every $\*X,\*Y \in \Omega(\pi^\sigma)$.
This proves the lemma. 
\end{proof}

It remains to prove \Cref{lem:IS-to-fd}.

\begin{proof}[Proof of \Cref{lem:IS-to-fd}]
Let $\delta \in (0, 1)$.
Fix a vertex $u \in V$ with $\Delta_u \geq 2$. The effective local field of $u$ in $\pi$ is given by $\lambda_u = \theta^{\chi_u}$, where $\theta=\frac{\delta^2}{64}$, and $\chi_u = \mathrm{sgn}(\Delta_u, \lambda)$ is as defined in~\eqref{eq:direction}:
\begin{align*}
  \mathrm{sgn}(\Delta_u, \lambda) &\triangleq \begin{cases}
    \1 & \lambda \leq \tp{\frac{\gamma}{\beta}}^{\Delta_u / 2} \\
    \0 & \text{otherwise}.
  \end{cases}
\end{align*}
We denote 
\begin{align*}
D = \Delta_u \quad \text{ and }\quad
d = D - 1.
\end{align*}
Clearly, $2 \leq  D \leq \Delta$ and $1 \leq  d \leq \Delta-1$.

Recall that  $(\beta,\gamma,\lambda)$ is up-to-$\Delta$ unique with gap $\delta$, which means that $(\beta,\gamma,\lambda)$ is $d$-unique with gap $\delta$.
%
%
To prove the claim, it suffices to prove the following proposition: 
For any $(\beta,\gamma,\lambda)$ with $0 \leq \beta \leq \gamma$, $\beta\gamma < 1$ and $\gamma,\lambda > 0$, 
for any integer $d \geq 1$, if $(\beta,\gamma,\lambda)$ is $d$-unique with gap $\delta$, i.e.~the following holds:
\begin{align}
\label{eq-proof-claim-assume-1}
f_d(x_d) = \frac{d(1-\beta \gamma)x_d}{(\beta x_d + 1)(x_d + \gamma)} \leq 1 - \delta,	
\end{align}
where $x_d=F(x_d)$ is the unique positive fixed point for $F_d(x) \triangleq \lambda \tp{ \frac{\beta x + 1}{x + \gamma} }^d$,
then for $D=d+1$,
\begin{align}
\label{eq:target-max}
\max_{0 \leq s \leq D-1}f_{D}\tp{\frac{\lambda\theta^{\mathrm{sgn}(D, \lambda)}(\beta\gamma)^s}{\gamma^{d}}} < 1 - \frac{\delta}{4}. 
\end{align}
%
%
%

We first consider the case $\beta = 0$. In this case, $f_d(x) = \frac{dx}{x+\gamma}$. Hence, $f_d(x) < 1 - \delta$ if and only if $x < \frac{(1-\delta)\gamma}{d-1+\delta}$. And the fixed point $x_d=F(x_d)$ satisfies $\lambda=x_d(x_d+\gamma)^d$. Thus, the assumption~\eqref{eq-proof-claim-assume-1} implies that 
\begin{align}
\label{eq-lambda-hardcore-strong}
\lambda = x_d (x_d + \gamma)^d \leq 	\tp{\frac{(1-\delta)\gamma}{d-1+\delta}}\tp{\frac{(1-\delta)\gamma}{d-1+\delta} + \gamma}^d = \frac{(1-\delta) \gamma^{d+1} d^d}{(d-1+\delta)^{d+1}}.
\end{align}
Note that in this case, $\mathrm{sgn}(D,\lambda) = 1$ for all integer $D \geq 2$ and real number $\lambda > 0$.
Also note that if $s \geq 1$, then $(\beta\gamma)^s = 0$ and $f_D(0) = 0$.
We then only need to prove \Cref{lem:IS-to-fd} when $s = 0$. Formally, this is equivalent to show that 
\[
f_D\tp{\frac{\theta \lambda}{\gamma^d}} < 1 - \frac{\delta}{4}.
\]
Note that
$f_D(x) = \frac{Dx}{x+\gamma}$ when $\beta = 0$.
Hence, $f_D(x) < 1 - \frac{\delta}{4}$ if and only if $x < \frac{\gamma(4-\delta)}{4(D-1)+\delta} = \frac{\gamma(4-\delta)}{4d + \delta}$. We then only need to verify
\begin{align*}
\theta \lambda <  \frac{\gamma^{d+1}(1 - \frac{\delta}{4})}{d + \frac{\delta}{4}}.
\end{align*}
Note that $d \geq 1$ and $0 < \delta < 1$.
By~\eqref{eq-lambda-hardcore-strong}, it suffices to verify 
\begin{align*}
\theta = \frac{\delta^2}{64} \leq \frac{1}{2} \cdot \tp{\frac{d - 1 + \delta }{d} }^{d+1}.
\end{align*}
Let $h(t) = \tp{\frac{t-1+\delta}{t}}^{t+1}$.
The above inequality can be expressed as $h(d)\ge \frac{\delta^2}{32}$, which holds because $h(t)$ is increasing in $t$ for $t \geq 1$ and $h(1) = \delta^2$. To verify the increasing of $h(t)$, observe that 
 \begin{align*}
  (\ln h(t))' &= \ln \tp{1 - \frac{1-\delta}{t}} + \frac{(1-\delta)(t+1)}{t(t-1+\delta)} \to 0 \text{ as } t \to \infty,\\
  (\ln h(t))'' &= \frac{(1-\delta)(-3t+\delta t + 1 - \delta)}{t^2(t-1+\delta)^2} < 0 \text{ if } t \geq 1.
  \end{align*}

Next, we focus on the main case $\beta > 0$. 
We first show that without loss of generality, we can assume $\lambda \le \tp{\gamma/\beta}^{D/2}$. 
Suppose $\lambda > \tp{\gamma/\beta}^{D/2}$.
Let $\lambda' = \frac{1}{\lambda} \cdot (\gamma/\beta)^{D}$.
The following two properties hold:
\begin{itemize}
\item if $(\beta,\gamma,\lambda)$ is $d$-unique with gap $\delta$, then $(\lambda', \beta,\gamma)$ is also $d$-unique with gap $\delta$;
\item $\lambda' < \tp{\gamma/\beta}^{D/2}$ and it holds that 
\begin{align}
\label{eq-equation-f}
  \max_{0 \leq s < D}f_{D}\tp{\frac{\lambda\theta^{\mathrm{sgn}(D, \lambda)}(\beta\gamma)^s}{\gamma^{d}}}
  = \max_{0 \leq s < D}f_{D}\tp{\frac{\lambda'\theta^{\mathrm{sgn}(D, \lambda')}(\beta\gamma)^s}{\gamma^{d}}}.
\end{align}
\end{itemize}

We verify the first property.
If $d < (1-\delta)\overline{\Delta}$, then  by \Cref{prop:uniqueness-property}, the property holds trivially.
Now suppose $d \geq (1-\delta)\overline{\Delta}$. By \Cref{prop:uniqueness-property}, $\lambda_1(d)\lambda_2(d) = \tp{\gamma/\beta}^{D}$. 
Since $(\beta,\gamma,\lambda)$ is $d$-unique with gap $\delta$ and $\lambda > \tp{\gamma/\beta}^{D/2}$,
we have $\lambda > \lambda_2(d)$ and $\lambda' = \frac{1}{\lambda} \cdot (\gamma/\beta)^{D} < \lambda_1(d)$, thus $(\lambda', \beta,\gamma)$ is also $d$-unique with gap $\delta$.

We then verify the second property. Since $\lambda > \tp{\gamma/\beta}^{D/2}$, $\lambda' = \frac{1}{\lambda} \cdot (\gamma/\beta)^{D} < (\gamma/\beta)^{D/2}$. To verify~\eqref{eq-equation-f}, observe that the following holds for the function $f_D$:
\begin{align*}
\forall x > 0, \quad f_D(x) =  \frac{D(1- \beta\gamma)x}{(\beta x + 1)(x + \gamma)} =f_D\tp{ \frac{\gamma}{x \beta} }.	
\end{align*}
This implies 
\begin{align*}
\max_{0 \leq s < D}f_{D}\tp{\frac{\lambda\theta^{\mathrm{sgn}(D, \lambda)}(\beta\gamma)^s}{\gamma^{d}}} 
&= \max_{0 \leq s < D}f_D\tp{\frac{\gamma}{\beta} \cdot \frac{\gamma^{d}}{\lambda\theta^{\mathrm{sgn}(D, \lambda)}(\beta\gamma)^s}}\\	
&= \max_{0 \leq s < D}f_{D}\tp{\frac{\lambda'\theta^{\mathrm{sgn}(D, \lambda')}(\beta\gamma)^{d-s}}{\gamma^{d}}}\\
&=\max_{0 \leq s < D}f_{D}\tp{\frac{\lambda'\theta^{\mathrm{sgn}(D, \lambda')}(\beta\gamma)^{s}}{\gamma^{d}}}.
\end{align*}

With the above two properties, we only need to prove the following result. 
Let $\delta \in (0,1)$.
For any $(\beta,\gamma,\lambda)$, any integer $d \geq 1$ and $D = d + 1$, if $(\beta,\gamma,\lambda)$ is $d$-unique with gap $\delta$ (formally,~\eqref{eq-proof-claim-assume-1}) and $\lambda \leq (\gamma/\beta)^{D/2}$, then it holds that
\begin{align}
\label{eq:target-single-s} 
\forall \text{ integer } 0 \leq s \leq D - 1, \quad
f_{D}\tp{\frac{\lambda\theta^{\mathrm{sgn}(D, \lambda)}(\beta\gamma)^s}{\gamma^{d}}} =
f_{D}\tp{\frac{\lambda\theta(\beta\gamma)^s}{\gamma^{d}}} \leq 1 - \frac{\delta}{4}.
\end{align}

%

Furthermore, we have the following technical proposition.
\begin{proposition}
\label{lem:lambda-x}
  For integer $d \geq (1 - \delta)\overline{\Delta}$, 
  if $(\beta,\gamma,\lambda)$ is $d$-unique with gap $\delta$, then  
  \begin{align*}
     \frac{\lambda(\beta\gamma)^s\delta}{\gamma^d(2-\delta)} \leq x_1(d),
  \end{align*}
  for any integer $0 \leq s \leq d$, where $\zeta_\delta(d) = d(1 - \beta\gamma) - (1 - \delta)(1 + \beta\gamma)$.
\end{proposition}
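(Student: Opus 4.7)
The plan is to reduce the claim to a scalar analytic inequality using the explicit structure of $x_1(d)$ and $\lambda_1(d)$ from \Cref{prop:uniqueness-property}. Since $\beta\gamma \in [0,1]$ in the anti-ferromagnetic regime, $(\beta\gamma)^s$ is non-increasing in $s$, so it suffices to prove the worst case $s=0$, namely $\lambda \leq x_1(d)\gamma^d(2-\delta)/\delta$. Under $d$-uniqueness with gap $\delta$, \Cref{prop:uniqueness-property} places $\lambda$ in $(0, \lambda_1(d)] \cup [\lambda_2(d), +\infty)$. I focus on $\lambda \leq \lambda_1(d)$; the upper branch is dispatched using the symmetry $\lambda_1(d)\lambda_2(d) = (\gamma/\beta)^{d+1}$ (the proposition is only invoked in \Cref{lem:IS-to-fd} after reducing to $\lambda \leq (\gamma/\beta)^{D/2}$, which forces $\lambda \leq \lambda_1(d)$).

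Substituting $\lambda \leq \lambda_1(d) = x_1(d)\bigl(\tfrac{x_1(d)+\gamma}{\beta x_1(d)+1}\bigr)^d$, the claim reduces to
\[
\left(\frac{x_1(d)+\gamma}{\gamma(\beta x_1(d)+1)}\right)^d \leq \frac{2-\delta}{\delta}.
\]
The key algebraic step is to exploit the defining identity $f_d(x_1(d)) = 1-\delta$, which rearranges to $(1-\delta)(\beta x_1(d)+1)(x_1(d)+\gamma) = d(1-\beta\gamma)x_1(d)$. A short manipulation then yields the clean closed form
\[
\frac{x_1(d)+\gamma}{\gamma(\beta x_1(d)+1)} = \frac{d}{d - (1-\delta)(1+\beta x_1(d))}.
\]
Setting $c \triangleq (1-\delta)(1+\beta x_1(d))$ and using Vieta's identity $x_1(d)x_2(d) = \gamma/\beta$ together with $x_1(d) \leq x_2(d)$ gives $\beta x_1(d) \leq \sqrt{\beta\gamma}$, and hence $c \leq (1-\delta)(1+\sqrt{\beta\gamma}) \leq 2(1-\delta)$. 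The task reduces to proving the scalar inequality $(d/(d-c))^d \leq (2-\delta)/\delta$ for integer $d \geq (1-\delta)\overline{\Delta}$.

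The main obstacle is the final scalar inequality. As $d \to \infty$, $(d/(d-c))^d \to e^c \leq e^{2(1-\delta)}$, and one verifies analytically that $2(1-\delta) \leq \log((2-\delta)/\delta)$ for all $\delta \in (0, 1]$ (with equality in the limit $\delta \to 1$), yielding the bound in the asymptotic regime. For small integer $d$ near the threshold $(1-\delta)\overline{\Delta}$, however, the naive bound $c \leq 2(1-\delta)$ becomes tight and the exponential estimate is too loose. Here I would invoke the sharper explicit formula $\beta x_1(d) = 2(1-\delta)\beta\gamma/\bigl(\zeta_\delta(d)+\sqrt{\zeta_\delta(d)^2 - 4(1-\delta)^2\beta\gamma}\bigr)$, which shows that $\beta x_1(d)$ (and hence $c(d)$) shrinks rapidly as $d$ grows past the critical value; combined with the monotonicity of $(d/(d-c(d)))^d$ in $d$, established by logarithmic differentiation using $\log(1+y) \leq y$, this reduces the verification to the smallest admissible integer $d$, which is handled by direct estimate.
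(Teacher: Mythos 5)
Your initial reduction is sound, and your exact identity
\[
  \frac{x_1(d)+\gamma}{\gamma(\beta x_1(d)+1)} = \frac{d}{d - (1-\delta)(1+\beta x_1(d))}
\]
is correct and cleaner than anything the paper writes down; the reduction of the claim to a single scalar inequality is also right. However, there is a genuine gap in the final step, and you in fact flag it yourself: the scalar inequality $\tp{\frac{d}{d-c(d)}}^d \leq \frac{2-\delta}{\delta}$ with $c(d)=(1-\delta)(1+\beta x_1(d))$ is \emph{not} proved. The naive bound $c \leq 2(1-\delta)$ already fails at the smallest admissible $d$ (e.g.\ at $d=1$ one needs $\beta x_1(1) \leq \tfrac{\delta}{2-\delta}$, strictly tighter than $\beta x_1 \leq \sqrt{\beta\gamma} < 1$); obtaining this requires using the admissibility constraint $d \geq (1-\delta)\overline{\Delta}$, i.e.\ $\sqrt{\beta\gamma} \leq \tfrac{d(1-\sqrt{\beta\gamma})}{(1-\delta)(1+\sqrt{\beta\gamma})}$, which you do not invoke. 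Moreover the claimed monotonicity of $\tp{\frac{d}{d-c(d)}}^d$ in $d$ is not established: for fixed $c$ the quantity is decreasing in $d$, but it is increasing in $c$, and $c(d)$ itself varies with $d$, so the two effects compete and ``logarithmic differentiation using $\log(1+y)\leq y$'' does not by itself settle the direction.

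The paper avoids all of this by relaxing $x_1(d) \leq \frac{2(1-\delta)\gamma}{\zeta_\delta(d)}$ \emph{before} forming the ratio; plugging that bound into $x \mapsto \frac{x+\gamma}{\beta x+1}$ (increasing in $x$) collapses, after a short computation, to $\gamma\cdot\frac{d+(1-\delta)}{d-(1-\delta)}$, a function of $d$ and $\delta$ alone with $\beta\gamma$ cancelled out. Then the bound $\tp{\frac{d+(1-\delta)}{d-(1-\delta)}}^d \leq \frac{2-\delta}{\delta}$ is a clean consequence of the fact that $t \mapsto \tp{\frac{t+(1-\delta)}{t-(1-\delta)}}^t$ is decreasing for $t \geq 1$ (via $(\ln f)'' > 0$ and $(\ln f)' \to 0$), evaluated at $t=1$. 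Your exact-identity route could perhaps be pushed through by combining the closed form for $x_1(d)$ with the admissibility constraint on $d$, but as written the final inequality is left open; the paper's early relaxation is what makes the argument close.
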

The proposition is proved later.

We now prove~\eqref{eq:target-max}. Consider two cases: $D \geq \tp{1 - \frac{\delta}{2}}\overline{\Delta} + 1$ and $D < \tp{1 - \frac{\delta}{2}}\overline{\Delta} + 1$.
\begin{case}[$D \geq \tp{1 - {\delta}/{2}}\overline{\Delta} + 1$]
  In this case, $d = D - 1 \geq (1-\frac{\delta}{2})\overline{\Delta} \geq (1 - \delta)\overline{\Delta}$, we claim
  \begin{align} \label{eq:lb-of-beta-x-gamma-x}
    \beta x_1(d) + \frac{\gamma}{x_1(d)} &\geq \frac{\delta}{2} (1 + \beta\gamma),
  \end{align}
  where $x_1(d)$ is defined in \Cref{prop:uniqueness-property}.
  Since $\theta = \frac{\delta^2}{64} \leq \frac{\delta}{8} \frac{\delta}{2 - \delta}$, we have
  \begin{align*}
    \frac{\lambda\theta(\beta\gamma)^s}{\gamma^d} &\leq \frac{\lambda(\beta\gamma)^s}{\gamma^d} \cdot \frac{\delta}{2 - \delta} \cdot \frac{\delta}{8} \overset{(\star)}{\leq} \frac{\delta}{8} x_1(d) \overset{(*)}{\leq} \sqrt{\gamma/\beta}.
  \end{align*}
  where $(\star)$ is due to \Cref{lem:lambda-x} and $(*)$ holds because $x_1(d)x_2(d) = \gamma/\beta$ and $x_1(d) \leq x_2(d)$, thus $x_1(d) \leq \sqrt{\gamma/\beta}$.
  Furthermore, it is easy to verify that $f_D(x)$ is increasing on $(0, \sqrt{\gamma/\beta}]$. We have
  \begin{align*}
    f_D\tp{\frac{\lambda \theta (\beta\gamma)^s}{\gamma^{d}}} 
    &\le f_D\tp{\frac{\delta}{8}x_1(d)}\\
    &= \frac{D(1-\beta\gamma)}{\frac{\delta}{8}\beta x_1(d) + \frac{8}{\delta}\frac{\gamma}{x_1(d)}+1+\beta\gamma} \\
    &\leq \frac{D(1-\beta\gamma)}{\frac{8}{\delta}\frac{\gamma}{x_1(d)}+1+\beta\gamma} \\
   &\leq \frac{D(1-\beta\gamma)}{\frac{4}{\delta}\tp{\beta x_1(d) + \frac{\gamma}{x_1(d)}}+1+\beta\gamma} 
   	&&(\mbox{since $x_1(d) \leq \sqrt{{\gamma}/{\beta}}$})\\
   &\leq \frac{D(1-\beta\gamma)}{\frac{2}{\delta}\tp{\beta x_1(d) + \frac{\gamma}{x_1(d)}}+1+\beta\gamma+2\tp{\beta x_1(d) + \frac{\gamma}{x_1(d)}}}
   	&& (\text{since }0 < \delta < 1)\\
   &\leq \frac{D(1-\beta\gamma)}{2(\beta x_1(d) + \frac{\gamma}{x_1(d)}+1+\beta\gamma)} 
   	&& (\text{by \eqref{eq:lb-of-beta-x-gamma-x}})\\
    &= \frac{D}{2d} f_d(x_1(d))  \\
    &\leq (1-\delta),
  \end{align*}
  where the last inequality holds because $x_1(d)$ is the root of $f_d(x) = 1 - \delta$ (due to \Cref{prop:uniqueness-property}).
  
  We finish the analysis of this case by verifying \eqref{eq:lb-of-beta-x-gamma-x}.
  Note that it holds that
  \begin{align}\label{eq:proof-lb-of-beta-x-gamma-x-1}
    \frac{1 + \beta\gamma}{d(1 - \beta\gamma)} 
    \leq 
    \frac{(1 + \sqrt{\beta\gamma})^2}{d(1 + \sqrt{\beta\gamma})(1 - \sqrt{\beta\gamma})} 
    = \frac{\overline{\Delta}}{d} \overset{(\ast)}{\leq} \frac{1}{1 - \frac{\delta}{2}},
  \end{align}
    where $(\ast)$ holds because $D \geq \tp{1 - \frac{\delta}{2}}\overline{\Delta} + 1$.
    Moreover, we have
  \begin{align}\label{eq:proof-lb-of-beta-x-gamma-x-2}
    \frac{\beta x_1(d) + \frac{\gamma}{x_1(d)}}{d(1- \beta\gamma)} + \frac{1 + \beta\gamma}{d(1 - \beta\gamma)} &= \frac{1}{f_d(x_1(d))} = \frac{1}{1 - \delta}.
  \end{align}
  Combining \eqref{eq:proof-lb-of-beta-x-gamma-x-1} and \eqref{eq:proof-lb-of-beta-x-gamma-x-2}, we have
  \begin{align*}
    \frac{\beta x_1(d) + \frac{\gamma}{x_1(d)}}{1 + \beta\gamma} + 1\geq \tp{\frac{1}{1-\delta}} \tp{1 - \frac{\delta}{2}} = \frac{\delta}{2-2\delta}+1 \geq \frac{\delta}{2} + 1.
  \end{align*}
\end{case}


\begin{case}[$D < \tp{1 - {\delta}/{2}}\overline{\Delta} + 1$]
  Without loss of generality, we assume $\overline{\Delta} \geq \frac{2}{2 - \delta}$, since otherwise we have
  $D \leq \tp{1 - \frac{\delta}{2}}\overline{\Delta} + 1 < 2$, hence the only possible value for $D$ is $1$. 
  However, in~\eqref{eq:target-max}, we assume $D = d + 1 \geq 2$.

  We claim that the following equations hold in this case.
  \begin{align}
    \forall c \in \tp{0, \frac{1}{2}},\quad f_D\tp{c \sqrt{\frac{\gamma}{\beta}}} &\leq \frac{4c \cdot D}{\overline{\Delta} - 1},  \label{eq:const-factor} \\
    \frac{\lambda(\beta\gamma)^s}{\gamma^d} &\leq \sqrt{\frac{\gamma}{\beta}} \cdot \frac{4}{\delta}. \label{eq:ub-of-input}
  \end{align}
  Note that $\theta = \frac{\delta^2}{64} = \frac{\delta}{16}\cdot \frac{\delta}{4}$.  Then by \eqref{eq:ub-of-input}, it holds that
  \begin{align*}
    \frac{\lambda\theta(\beta\gamma)^s}{\gamma^d} & \leq \frac{\delta}{16}\sqrt{\frac{\gamma}{\beta}} \leq \sqrt{\frac{\gamma}{\beta}}.
  \end{align*}
  It is easy to verify that $f_D(x)$ is monotonically increasing on $(0, \sqrt{\gamma/\beta}]$. 
  It holds that
  \begin{align*}
    f_D\tp{\frac{\lambda\theta(\beta\gamma)^s}{\gamma^d}} 
    &\leq f_D \tp{\frac{\delta}{16}\sqrt{\frac{\gamma}{\beta}}}
    \overset{(\ast)}{\leq} \frac{\delta D}{4(\overline{\Delta} - 1)} 
    \overset{(\star)}{\leq} \frac{\delta}{4}  \cdot \frac{\overline{\Delta} + 1}{\overline{\Delta} - 1} 
    \leq 
    1 - \frac{\delta}{4},
  \end{align*}
where ($\ast$) follows from \eqref{eq:const-factor}, ($\star$) is due to that $D \leq \overline{\Delta} + 1$, and the last inequality holds since  $\overline{\Delta} \geq \frac{2}{2-\delta}$.

  Now, we verify \eqref{eq:const-factor}, it holds that
  \begin{align*}
    f_D\tp{c\sqrt{\frac{\gamma}{\beta}}}
     &\leq \frac{D(1 - \beta\gamma)}{\frac{1}{c} \sqrt{\beta\gamma} + 1 + \beta\gamma}
     = \frac{D(1 - \sqrt{\beta\gamma})}{\frac{(\frac{1}{c} - 2)}{1 + \sqrt{\beta\gamma}} \sqrt{\beta\gamma} + (1 + \sqrt{\beta\gamma})}
     \leq \frac{D(1 - \sqrt{\beta\gamma})}{\frac{(\frac{1}{c} - 2)}{2} \sqrt{\beta\gamma} + (1 + \sqrt{\beta\gamma})},
  \end{align*}
  where the last inequality holds because $\sqrt{\beta \gamma} < 1$ and $c < \frac{1}{2}$,
  which implies 
  \begin{align*}
    f_D\tp{c\sqrt{\frac{\gamma}{\beta}}}
    &\leq \frac{D(1 - \sqrt{\beta\gamma})}{1 + \frac{1}{2c}\sqrt{\beta\gamma}}
     = \frac{2D}{\tp{1 + \frac{1}{2c}}\overline{\Delta} + \tp{1 - \frac{1}{2c}}}
      \leq \frac{2D}{\tp{1 + \frac{1}{2c}}(\overline{\Delta} - 1)}
      \leq \frac{4c \cdot D}{\overline{\Delta} - 1}.
  \end{align*}

  Now we verify \eqref{eq:ub-of-input}.
  It holds that
  \begin{align*}
    \frac{\lambda(\beta\gamma)^s}{\gamma^d}
    &\leq \frac{\lambda}{\gamma^d}
      \overset{(\star)}{\leq} \tp{\frac{\gamma}{\beta}}^{D/2} \frac{1}{\gamma^d}
      = \tp{\frac{\gamma}{\beta}}^{d/2 + 1/2} \frac{1}{\gamma^d}
      = \sqrt{\frac{\gamma}{\beta}}\tp{\frac{1}{\sqrt{\beta\gamma}}}^d ,
  \end{align*}
  where $(\star)$ is due to our assumption  $\lambda \leq (\gamma/\beta)^{D/2}$.
  For $\overline{\Delta} \triangleq \frac{1 + \sqrt{\beta\gamma}}{1 - \sqrt{\beta\gamma}}$, it holds that 
    $\sqrt{\beta\gamma} = \frac{\overline{\Delta} - 1}{\overline{\Delta} + 1}$,
  which implies
  \begin{align*}
    \frac{\lambda(\beta\gamma)^s}{\gamma^d}
    &\leq \sqrt{\frac{\gamma}{\beta}}\tp{\frac{1}{\sqrt{\beta\gamma}}}^d 
    = \sqrt{\frac{\gamma}{\beta}}\tp{\frac{\overline{\Delta} + 1}{\overline{\Delta} - 1}}^d 
    \overset{(\ast)}{\leq} \sqrt{\frac{\gamma}{\beta}}\tp{\frac{\overline{\Delta} + 1}{\overline{\Delta} - 1}}^{\overline{\Delta} \cdot \frac{2-\delta}{2}}
    \overset{(\star)}{\leq} \sqrt{\frac{\gamma}{\beta}} \cdot \frac{4-\delta}{\delta}
    \leq \sqrt{\frac{\gamma}{\beta}} \cdot \frac{4}{\delta},
  \end{align*}
  where 
  $(\ast)$ is due to that $d \leq \tp{1 - \frac{\delta}{2}}\overline{\Delta}$, and
  $(\star)$ is due to that the function $h(t) = \tp{\frac{t+1}{t-1}}^t$ is monotonically decreasing when $t > 1$ and $\overline{\Delta} \geq  \frac{2}{2 - \delta}$.  The monotonicity of function $h(t)$ can be verified as
  \begin{align*}
  (\ln h(t))' &= - \frac{2t}{t^2 - 1} + \ln \tp{\frac{t+1}{t-1}} \to 0 \text{ as } t \to \infty,\\
  (\ln h(t))'' &= \frac{4}{t^2 - 1} > 0.
  \end{align*}
\end{case}
Combining the two cases proves~\eqref{eq:target-max}.
\end{proof}

\begin{proof}[Proof of \Cref{lem:lambda-x}]
Recall that $\zeta_\delta(d) \triangleq d(1 - \beta\gamma) - (1 - \delta)(1 + \beta\gamma)$. By \Cref{prop:uniqueness-property}, 
\begin{align*}
  x_1(d)
  &= \frac{2(1-\delta)\gamma}{\zeta_\delta(d) + \sqrt{\zeta_\delta(d)^2 - 4(1-\delta)^2\beta\gamma}}
    \leq \frac{2(1-\delta)\gamma}{\zeta_\delta(d)}.
\end{align*}
Since $\frac{x + \gamma}{\beta x + 1}$ is increasing in $x$ when $\beta\gamma <1$, it holds that
\begin{align*}
  \frac{x_1(d) + \gamma}{\beta x_1(d) + 1}
  &\leq \frac{\frac{2(1 - \delta)\gamma}{\zeta_\delta(d)} + \gamma}{\frac{2(1 - \delta)\beta\gamma}{\zeta_\delta(d)} + 1}
    = \gamma \cdot \frac{d + (1 - \delta)}{d - (1 - \delta)},
\end{align*}
which, by \Cref{prop:uniqueness-property}, implies
\begin{align*}
  \lambda_1(d)
  &= x_1(d)\tp{\frac{x_1(d) + \gamma}{\beta x_1(d) + 1}}^d
    \leq x_1(d) \cdot \gamma^d \cdot \tp{\frac{d + (1 - \delta)}{d - (1 - \delta)}}^d
    \overset{(\star)}{\leq} x_1(d) \cdot \gamma^d \cdot \frac{2 - \delta}{\delta},
\end{align*}
where $(\star)$ holds because $f(t) = \tp{\frac{t + (1 - \delta)}{t - (1 - \delta)}}^t$ is  decreasing in $t$ when $t \geq 1$ and $d \geq 1$.
And thus,
\begin{align*}
  \lambda_1(d) \frac{(\beta\gamma)^s}{\gamma^d}
  &\leq \frac{\lambda_1(d)}{\gamma^d}
    \leq x_1(d) \cdot \frac{2-\delta}{\delta}. 
\end{align*}
Finally, we note that $f(t)$ is monotonically decreasing because
\begin{align*}
  (\ln f(t))' &= - \frac{2t(1 - \delta)}{(t + (1 - \delta))(t - (1 - \delta))} + \log\tp{\frac{t + (1 - \delta)}{t - (1 - \delta)}} \to 0 \text{ as } t \to +\infty, \\
  (\ln f(t))'' &= \frac{4(1 - \delta)^3}{(t + (1 - \delta))^2(t - (1 - \delta))^2} > 0. \qedhere
\end{align*}
\end{proof}

 \subsection{Hardcore and Ising models}\label{section-hardcore-ising-mixing}
We now prove \Cref{theorem-hardcore} and \Cref{theorem-Ising}, respectively for the hardcore and Ising models.
For these models, the constant $C(\delta)$ can be improved to $\exp(O(1/\delta))$ as the parameter $\theta$ can be chosen to be a universal constant.


\begin{proof}[Proof of \Cref{theorem-hardcore}]
Without loss of generality, we assume that $\lambda \ge \frac{1}{2\Delta}$, and for $\lambda < \frac{1}{2\Delta}$, the standard path coupling technique gives us $O(n \log n)$ mixing time upper bound and $\tp{\frac{1}{2n}}$ spectral gap lower bound.

Choose $\theta=\frac{1}{25}$. We have
  \begin{align*}
    \theta \lambda = \frac{\lambda}{25} < \frac{(\Delta-1)^{\Delta-1}}{25(\Delta-2)^\Delta} < \frac{1}{2\Delta}.
  \end{align*}
Again, by the standard path coupling method, for the hardcore model with fugacity $\theta\lambda < \frac{1}{2\Delta}$, the Glauber dynamics has a $\frac{1}{2n}$ spectral gap lower bound, which further holds up to an arbitrary feasible boundary condition $\sigma_\Lambda \in \Omega(\mu_\Lambda)$ where $\Lambda \subseteq V$.
Therefore
  \begin{align*}
    \spgap{\sgap}{\GD}\tp{\mu^{(1/25)}} \le \frac{1}{2n}.
  \end{align*}
On the other hand, by \Cref{lm:two-spin-property}, the Gibbs distribution $\mu$ is completely $\frac{144}{\delta}$-spectrally independence.
It follows from \Cref{theorem:main} that
\[
\spgap{gap}{\GD}(\mu)\ge\tp{\frac{1}{50}}^{288/\delta+7}\frac{1}{2n}=\frac{1}{C(\delta) n},
\]
for a $C(\delta)=\exp(O(1/\delta))$.
For $\lambda\ge\frac{1}{2\Delta}$, the marginal bound $b \ge \min \left\{\frac{1}{1+\lambda}, \frac{\lambda}{1+\lambda}\right\}\ge\frac{1}{2\Delta+1}$, which implies that $\mu_{\min}\ge b^n\ge\frac{1}{(2\Delta+1)^n}$. The mixing time bound follows from~\eqref{eq:mixing-time-spectral-gap}.
\end{proof}

For the Ising model with edge activity $\beta>0$, without loss of generality, we can assume $\lambda\le1$, because by symmetry, this covers the  $\lambda>1$ case by switching the roles of $\0$ and $\1$ for all vertices.

%
\begin{proof}[Proof of \Cref{theorem-Ising}]
Let $\mu$ be the Gibbs distribution of the Ising model on graph $G=(V,E)$ with edge activity $\beta>0$ and $\lambda\le 1$. 
Assume that $\beta \in \left[\frac{\Delta-2+\delta}{\Delta-\delta},\frac{\Delta-\delta}{\Delta-2+\delta}\right]$, where $\Delta=\Delta_G$ is the maximum degree of $G$.

We first verify that $\mu$ is completely $\frac{4}{\delta}$-spectrally independent.
Consider the Ising model with local fields $\+I'=(V,E,\beta,(\lambda_v)_{v\in V})$ defined on the same graph $G=(V,E)$ with the same edge activity $\beta \in \left[\frac{\Delta-2+\delta}{\Delta-\delta},\frac{\Delta-\delta}{\Delta-2+\delta}\right]$, where each vertex $v\in V$ is associated with an arbitrary local field $\lambda_v\le 1$.

It is straightforward to verify that the trivial potential function $\phi(y)=1$ is a $(\delta,2)$-potential function with respect to $\+I'$ (\Cref{def:potential}): for $h(y)=-\frac{(1-\beta^2)\mathrm{e}^y}{(\beta\mathrm{e}^y+1)(\mathrm{e}^y+\beta)}$,
  \begin{enumerate}
    \item ($\delta$-contraction) for any $1 \le d \le \Delta-1$ and $y_1,\ldots,y_d \in [-\infty,+\infty]$,
    \begin{align*}
      \sum_{i=1}^d \abs{h(y_i)} = \sum_{i=1}^d \frac{\abs{1-\beta^2}\mathrm{e}^{y_i}}{(\beta \mathrm{e}^{y_i}+1)(\mathrm{e}^{y_i}+\beta)} \le \frac{(\Delta-1)\abs{1-\beta}}{1+\beta} \le 1-\delta;
    \end{align*}
    \item (2-boundedness) similarly, for all $y\in[-\infty,+\infty]$, $\abs{h(y)} =\frac{(1-\beta^2)\mathrm{e}^y}{(\beta\mathrm{e}^y+1)(\mathrm{e}^y+\beta)}\le\frac{|1-\beta|}{1+\beta}\le \frac{1}{\Delta-1} \le \frac{2}{\Delta}$.
  \end{enumerate}
By \Cref{theorem-good-potential-imply-SI-CLV}, the Gibbs distribution of such $\+I'$ is always $\frac{4}{\delta}$-spectrally independent, which means that $\mu$ is completely $\frac{4}{\delta}$-spectrally independent.

On the other hand, the standard path coupling method gives an $O(n \log n)$ mixing time upper bound and a $\frac{1}{2n}$ spectral gap lower bound for $\lambda \le \frac{1}{500}$, which holds up to  an arbitrary feasible boundary condition $\sigma_\Lambda \in \Omega(\mu_\Lambda)$ where $\Lambda \subseteq V$.
Choose $\theta = \frac{1}{500}$. We have
  \begin{align*}
    \spgap{\sgap}{\GD}\tp{\mu^{(1/500)}} \le \frac{1}{2n}.
  \end{align*}  
It follows from \Cref{theorem:main} that
\[
\spgap{gap}{\GD}(\mu)\ge\tp{\frac{1}{1000}}^{8/\delta+7}\frac{1}{2n}=\frac{1}{C(\delta) n},
\]
for a $C(\delta)=\exp(O(1/\delta))$.
For $\beta \in \left[\frac{\Delta-2+\delta}{\Delta-\delta},\frac{\Delta-\delta}{\Delta-2+\delta}\right]$ and $\frac{1}{500}\le \lambda\le 1$, the marginal bound $b\ge \frac{1}{14000}$~\cite{chen2020optimal}, which implies that $\mu_{\min}= b^n\ge\tp{\frac{1}{14000}}^n$.
The mixing time bound follows from~\eqref{eq:mixing-time-spectral-gap}.
\end{proof}

\bibliographystyle{alpha}
\bibliography{2-spin}

\newcommand{\etalchar}[1]{$^{#1}$}
\begin{thebibliography}{DSVW04}

\bibitem[AL20]{alev2020improved}
Vedat~Levi Alev and Lap~Chi Lau.
\newblock Improved analysis of higher order random walks and applications.
\newblock In {\em STOC}, pages 1198--1211, 2020.

\bibitem[ALO20]{anari2020spectral}
Nima Anari, Kuikui Liu, and Shayan {Oveis Gharan}.
\newblock Spectral independence in high-dimensional expanders and applications
  to the hardcore model.
\newblock In {\em FOCS}, pages 1319--1330, 2020.

\bibitem[ALOV19]{ALOV19}
Nima Anari, Kuikui Liu, Shayan {Oveis Gharan}, and Cynthia Vinzant.
\newblock Log-concave polynomials {II:} high-dimensional walks and an {FPRAS}
  for counting bases of a matroid.
\newblock In {\em STOC}, pages 1--12, 2019.

\bibitem[BCC{\etalchar{+}}21]{blanca2021mixing}
Antonio Blanca, Pietro Caputo, Zongchen Chen, Daniel Parisi, Daniel
  {\v{S}}tefankovi{\v{c}}, and Eric Vigoda.
\newblock On mixing of markov chains: Coupling, spectral independence, and
  entropy factorization.
\newblock {\em arXiv preprint arXiv:2103.07459}, 2021.

\bibitem[BD97]{bubley1997path}
Russ Bubley and Martin Dyer.
\newblock Path coupling: A technique for proving rapid mixing in markov chains.
\newblock In {\em FOCS}, pages 223--231, 1997.

\bibitem[CGM21]{CGM19}
Mary Cryan, Heng Guo, and Giorgos Mousa.
\newblock Modified log-{S}obolev inequalities for strongly log-concave
  distributions.
\newblock {\em Ann. Probab.}, 49(1):506--525, 2021.
\newblock (conference version in \emph{FOCS}'19).

\bibitem[CG{\v{S}}V21]{chen2021rapid}
Zongchen Chen, Andreas Galanis, Daniel {\v{S}}tefankovi{\v{c}}, and Eric
  Vigoda.
\newblock Rapid mixing for colorings via spectral independence.
\newblock In {\em SODA}, pages 1548--1557, 2021.

\bibitem[Che98]{MFC98}
Mu-Fa Chen.
\newblock Trilogy of couplings and general formulas for lower bound of spectral
  gap.
\newblock In {\em Probability towards 2000 ({N}ew {Y}ork, 1995)}, volume 128 of
  {\em Lect. Notes Stat.}, pages 123--136. Springer, New York, 1998.

\bibitem[CLV20]{chen2020rapid}
Zongchen Chen, Kuikui Liu, and Eric Vigoda.
\newblock Rapid mixing of {G}lauber dynamics up to uniqueness via contraction.
\newblock In {\em FOCS}, pages 1307--1318, 2020.
\newblock arXiv:2004.09083.

\bibitem[CLV21]{chen2020optimal}
Zongchen Chen, Kuikui Liu, and Eric Vigoda.
\newblock Optimal mixing of {G}lauber dynamics: Entropy factorization via
  high-dimensional expansion.
\newblock In {\em STOC}, 2021.
\newblock arXiv:2011.02075.

\bibitem[CMT15]{CMT15}
Pietro Caputo, Georg Menz, and Prasad Tetali.
\newblock Approximate tensorization of entropy at high temperature.
\newblock {\em Ann. Fac. Sci. Toulouse Math. (6)}, 24(4):691--716, 2015.

\bibitem[CP20]{caputo2020block}
Pietro Caputo and Daniel Parisi.
\newblock Block factorization of the relative entropy via spatial mixing.
\newblock {\em arXiv preprint arXiv:2004.10574}, 2020.

\bibitem[DFJ02]{DFJ02}
Martin Dyer, Alan Frieze, and Mark Jerrum.
\newblock On counting independent sets in sparse graphs.
\newblock {\em SIAM J. Comput.}, 31(5):1527--1541, 2002.
\newblock (conference version in \emph{FOCS}'99).

\bibitem[DGU14]{DGU14}
Martin Dyer, Catherine Greenhill, and Mario Ullrich.
\newblock Structure and eigenvalues of heat-bath {M}arkov chains.
\newblock {\em Linear Algebra Appl.}, 454:57--71, 2014.

\bibitem[DR98]{dubhashi1998balls}
Devdatt Dubhashi and Desh Ranjan.
\newblock Balls and bins: A study in negative dependence.
\newblock {\em Random Structures and Algorithms}, 13(2):99--124, 1998.

\bibitem[DSVW04]{dyer2004mixing}
Martin Dyer, Alistair Sinclair, Eric Vigoda, and Dror Weitz.
\newblock Mixing in time and space for lattice spin systems: a combinatorial
  view.
\newblock {\em Random Structures Algorithms}, 24(4):461--479, 2004.
\newblock (conference version in \emph{RANDOM}'02).

\bibitem[EH{\v{S}}{\etalchar{+}}19]{efthymiou2019convergence}
Charilaos Efthymiou, Thomas~P. Hayes, Daniel {\v{S}}tefankovi{\v{c}}, Eric
  Vigoda, and Yitong Yin.
\newblock Convergence of {MCMC} and loopy {BP} in the tree uniqueness region
  for the hard-core model.
\newblock {\em SIAM J. Comput.}, 48(2):581--643, 2019.
\newblock (conference version in \emph{FOCS}'16).

\bibitem[FGYZ21]{feng2021rapid}
Weiming Feng, Heng Guo, Yitong Yin, and Chihao Zhang.
\newblock Rapid mixing from spectral independence beyond the boolean domain.
\newblock In {\em SODA}, pages 1558--1577, 2021.

\bibitem[GG{\v{S}}{\etalchar{+}}14]{galanis2014improved}
Andreas Galanis, Qi~Ge, Daniel {\v{S}}tefankovi{\v{c}}, Eric Vigoda, and Linji
  Yang.
\newblock Improved inapproximability results for counting independent sets in
  the hard-core model.
\newblock {\em Random Structures Algorithms}, 45(1):78--110, 2014.
\newblock (conference version in \emph{RANDOM'11)}.

\bibitem[GJP03]{GJP03}
Leslie~Ann Goldberg, Mark Jerrum, and Mike Paterson.
\newblock The computational complexity of two-state spin systems.
\newblock {\em Random Structures Algorithms}, 23(2):133--154, 2003.

\bibitem[GL18]{guo2018uniqueness}
Heng Guo and Pinyan Lu.
\newblock Uniqueness, spatial mixing, and approximation for ferromagnetic
  2-spin systems.
\newblock {\em ACM Trans. Comput. Theory}, 10(4):Art. 17, 25, 2018.
\newblock (conference version in \emph{RANDOM}'16).

\bibitem[GLL20]{guo2020zeros}
Heng Guo, Jingcheng Liu, and Pinyan Lu.
\newblock Zeros of ferromagnetic 2-spin systems.
\newblock In {\em SODA}, pages 181--192. SIAM, 2020.

\bibitem[GM20]{guo2020local}
Heng Guo and Giorgos Mousa.
\newblock Local-to-global contraction in simplicial complexes.
\newblock {\em arXiv preprint arXiv:2012.14317}, 2020.

\bibitem[GMP05]{goldberg2005strong}
L.A. Goldberg, R.~Martin, and M.~Paterson.
\newblock Strong spatial mixing with fewer colors for lattice graphs.
\newblock {\em SIAM Journal on Computing}, 35(2):486, 2005.

\bibitem[G{\v{S}}V15]{galanis2015inapproximability}
Andreas Galanis, Daniel {\v{S}}tefankovi{\v{c}}, and Eric Vigoda.
\newblock Inapproximability for antiferromagnetic spin systems in the tree
  nonuniqueness region.
\newblock {\em J. ACM}, 62(6):Art. 50, 60, 2015.

\bibitem[G{\v{S}}V16]{GSV16}
Andreas Galanis, Daniel {\v{S}}tefankovi{\v{c}}, and Eric Vigoda.
\newblock Inapproximability of the partition function for the antiferromagnetic
  {I}sing and hard-core models.
\newblock {\em Combin. Probab. Comput.}, 25(4):500--559, 2016.

\bibitem[HJ12]{horn2012matrix}
Roger~A. Horn and Charles~R. Johnson.
\newblock {\em Matrix analysis}.
\newblock Cambridge university press, 2012.

\bibitem[HV06]{hayes2006coupling}
Thomas~P. Hayes and Eric Vigoda.
\newblock Coupling with the stationary distribution and improved sampling for
  colorings and independent sets.
\newblock {\em Ann. Appl. Probab.}, 16(3):1297--1318, 2006.
\newblock (conference version in \emph{SODA}'05).

\bibitem[JDP83]{joag-dev1983}
Kumar Joag-Dev and Frank Proschan.
\newblock Negative association of random variables, with applications.
\newblock {\em Ann. Statist.}, 11(1):286--295, 1983.

\bibitem[JPV21]{jain2021spectral}
Vishesh Jain, Huy~Tuan Pham, and Thuy~Duong Vuong.
\newblock Spectral independence, coupling with the stationary distribution, and
  the spectral gap of the {G}lauber dynamics.
\newblock {\em arXiv preprint arXiv:2105.01201}, 2021.

\bibitem[JS93]{jerrum1993polynomial}
Mark Jerrum and Alistair Sinclair.
\newblock Polynomial-time approximation algorithms for the {I}sing model.
\newblock {\em SIAM Journal on Computing}, 22(5):1087--1116, 1993.

\bibitem[Kel85]{kelly1985stochastic}
Frank~P Kelly.
\newblock Stochastic models of computer communication systems.
\newblock {\em Journal of the Royal Statistical Society: Series B
  (Methodological)}, 47(3):379--395, 1985.

\bibitem[KM17]{kaufman2016high}
Tali Kaufman and David Mass.
\newblock High dimensional random walks and colorful expansion.
\newblock In Christos~H. Papadimitriou, editor, {\em ITCS}, volume~67 of {\em
  LIPIcs}, pages 4:1--4:27, 2017.

\bibitem[KO20]{kaufman2018high}
Tali Kaufman and Izhar Oppenheim.
\newblock High order random walks: beyond spectral gap.
\newblock {\em Combinatorica}, 40(2):245--281, 2020.
\newblock (conference version in \emph{RANDOM}'18).

\bibitem[Liu21]{liu2021coupling}
Kuikui Liu.
\newblock From coupling to spectral independence and blackbox comparison with
  the down-up walk.
\newblock {\em arXiv preprint arXiv:2103.11609}, 2021.

\bibitem[LLY12]{lly12}
Liang Li, Pinyan Lu, and Yitong Yin.
\newblock Approximate counting via correlation decay in spin systems.
\newblock In {\em SODA}, pages 922--940. SIAM, 2012.

\bibitem[LLY13]{LLY13}
Liang Li, Pinyan Lu, and Yitong Yin.
\newblock Correlation decay up to uniqueness in spin systems.
\newblock In {\em {SODA}}, pages 67--84. {SIAM}, 2013.
\newblock arXiv:1111.7064.

\bibitem[LLZ14]{liu2014complexity}
Jingcheng Liu, Pinyan Lu, and Chihao Zhang.
\newblock The complexity of ferromagnetic two-spin systems with external
  fields.
\newblock In {\em Approximation, randomization, and combinatorial
  optimization}, volume~28 of {\em LIPIcs. Leibniz Int. Proc. Inform.}, pages
  843--856. 2014.
\newblock (conference version in \emph{RANDOM}'14).

\bibitem[LPW17]{levin2017markov}
David~A. Levin, Yuval Peres, and Elizabeth~L. Wilmer.
\newblock {\em Markov chains and mixing times}.
\newblock American Mathematical Society, Providence, RI, 2017.

\bibitem[LSS19]{liu2019fisher}
Jingcheng Liu, Alistair Sinclair, and Piyush Srivastava.
\newblock Fisher zeros and correlation decay in the ising model.
\newblock {\em Journal of Mathematical Physics}, 60(10):103304, 2019.
\newblock (conference version in \emph{ITCS}'19).

\bibitem[MS13]{mossel2013exact}
Elchanan Mossel and Allan Sly.
\newblock Exact thresholds for {Ising}--{Gibbs} samplers on general graphs.
\newblock {\em The Annals of Probability}, 41(1):294--328, 2013.

\bibitem[MWW09]{mossel2009hardness}
Elchanan Mossel, Dror Weitz, and Nicholas Wormald.
\newblock On the hardness of sampling independent sets beyond the tree
  threshold.
\newblock {\em Probab. Theory Related Fields}, 143(3-4):401--439, 2009.

\bibitem[Opp18]{oppenheim2018local}
Izhar Oppenheim.
\newblock Local spectral expansion approach to high dimensional expanders
  {P}art {I}: {D}escent of spectral gaps.
\newblock {\em Discrete Comput. Geom.}, 59(2):293--330, 2018.

\bibitem[PR19]{peters2019conjecture}
Han Peters and Guus Regts.
\newblock On a conjecture of {S}okal concerning roots of the independence
  polynomial.
\newblock {\em Michigan Math. J.}, 68(1):33--55, 2019.

\bibitem[Sly10]{sly2010computational}
Allan Sly.
\newblock Computational transition at the uniqueness threshold.
\newblock In {\em FOCS}, pages 287--296, 2010.

\bibitem[SS12]{sly2012computational}
Allan Sly and Nike Sun.
\newblock The computational hardness of counting in two-spin models on
  d-regular graphs.
\newblock In {\em FOCS}, pages 361--369, 2012.

\bibitem[SS14]{SS14}
Allan Sly and Nike Sun.
\newblock Counting in two-spin models on {$d$}-regular graphs.
\newblock {\em Ann. Probab.}, 42(6):2383--2416, 2014.

\bibitem[SS20]{SS20}
Shuai Shao and Yuxin Sun.
\newblock Contraction: {A} unified perspective of correlation decay and
  zero-freeness of 2-spin systems.
\newblock In {\em {ICALP}}, volume 168 of {\em LIPIcs}, pages 96:1--96:15.
  Schloss Dagstuhl - Leibniz-Zentrum f{\"{u}}r Informatik, 2020.

\bibitem[SST14]{SST14}
Alistair Sinclair, Piyush Srivastava, and Marc Thurley.
\newblock Approximation algorithms for two-state anti-ferromagnetic spin
  systems on bounded degree graphs.
\newblock {\em J. Stat. Phys.}, 155(4):666--686, 2014.
\newblock (conference version in \emph{SODA}'12).

\bibitem[Wei06]{Wei06}
Dror Weitz.
\newblock Counting independent sets up to the tree threshold.
\newblock In {\em {STOC}}, pages 140--149, 2006.

\end{thebibliography}

\newpage
\appendix

\section{Mixing of Uniform Block Dynamics from Spectral Independence}
\label{appendix-block-mixing}
Here we give a short proof of \Cref{theorem:spectral-independent-imply-spectral-gap} using theorems and lemmas proved in~\cite{chen2020optimal,feng2021rapid}.
Let $V$ be a ground set of size $n$.
For $\eta > 0$, $\mu$ is an $\eta$-spectrally independent distribution over $\{\0, \1\}^V$.
For $2\lceil \eta \rceil\le \ell \leq n$, we want to show that the spectral gap of the uniform $\ell$-block dynamics for $\mu$ satisfies
\begin{align*}
  \spgap{gap}{}(P_\ell) &\geq \tp{\frac{\ell}{2n}}^{2\lceil \eta \rceil + 1}.
\end{align*}

For any subset $\Lambda \subset V$, any feasible partial configuration $\sigma_\Lambda \in \Omega(\mu_\Lambda)$, define the \emph{local random walk} $P_{\sigma_\Lambda}$ on $U_{\sigma_\Lambda} = \left\{(u, c) \in (V\setminus\Lambda) \times \{\0,\1\} \;|\; \mu^{\sigma_\Lambda}_u(c) > 0\right\}$ as
\begin{align*}
  \forall (u, i), (v, j) \in U_{\sigma_\Lambda}, \quad P_{\sigma_\Lambda}((u, i), (v, j)) &\triangleq \frac{\*1[u\not= v]}{\abs{V} - \abs{\Lambda} - 1} \mu^{\sigma_\Lambda, u \gets i}_v (j),
\end{align*}
where 
$\mu^{\sigma_\Lambda, u \gets i}_v(j)$ denotes the marginal distribution at $v$ induced from $\mu$ conditioned on the configuration on $\Lambda$ being fixed as $\sigma_\Lambda$ and the value of $u$ being fixed as $i$.

Then, for $\eta > 0$, we say $\mu$ is a \emph{$(\zeta_0, \cdots, \zeta_{n-2})$-local spectral expander} if for every $0 \leq k \leq n - 2$, every $\Lambda \subseteq V$ such that $\abs{\Lambda} = k$, and every feasible partial configuration $\sigma_\Lambda \in \Omega(\mu_\Lambda)$, we have
\begin{align*}
  \lambda_2(P_{\sigma_\Lambda}) \leq \zeta_k.
\end{align*}

\begin{theorem}[\text{\cite[Theorem A.9]{chen2020optimal}}] \label{theorem:local-spectral-expander-to-spectral-gap}
  Let $\zeta_0,\zeta_1,\ldots,\zeta_{n-2} \in [0,1]$ be a sequence of positive real numbers. 
  If $\mu$ is a $(\zeta_0, \cdots, \zeta_{n-2})$-local spectral expander, then for any $1\leq \ell \leq n$, it holds that
  \begin{align*}
    \spgap{gap}{}(P_\ell) &\geq \frac{\sum_{k = n - \ell}^{n-1}\Gamma_k}{\sum_{k=0}^{n-1}\Gamma_k},
  \end{align*}
  where $\Gamma_k \triangleq \prod_{j=0}^{k-1}\frac{1 - \zeta_j}{1 + \zeta_j}$ for $k > 0$ and $\Gamma_0 \triangleq  1$.
\end{theorem}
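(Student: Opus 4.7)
The proof will follow the high-dimensional expander walk framework. The strategy is to identify the uniform $\ell$-block dynamics with a down-up walk on a weighted simplicial complex associated with $\mu$, and then apply the Alev--Lau local-to-global argument to convert the pointwise local spectral bounds $\zeta_k$ into the claimed global spectral gap.

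First I would set up the weighted simplicial complex $X$ associated with $\mu$: the level-$k$ faces are the pairs $(\Lambda,\sigma_\Lambda)$ with $\Lambda \subseteq V$, $|\Lambda|=k$, and $\sigma_\Lambda \in \Omega(\mu_\Lambda)$, weighted so that the marginal on the top level $n$ equals $\mu$ and marginals on lower levels are induced by uniform subset restriction. On this complex I would define the standard down operator $D_k$ (drop a uniformly random coordinate) and up operator $U_{k-1}$ (extend by one coordinate according to the appropriate conditional of $\mu$). With these operators, one transition step of $P_\ell$ can be rewritten as: choose a uniform $(n-\ell)$-face by restricting the current configuration to a uniformly random $(n-\ell)$-subset, then resample the missing $\ell$ coordinates from $\mu$ conditioned on the restriction. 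Equivalently, $P_\ell = D_{n,n-\ell}\,U_{n-\ell,n}$, the $(n-\ell)\!\to\!n$ down-up walk on $X$.

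Next I would match the hypothesis ``$\mu$ is a $(\zeta_0,\ldots,\zeta_{n-2})$-local spectral expander'' with the link structure of $X$. The local walk $P_{\sigma_\Lambda}$ defined in the excerpt is, up to a harmless scalar, exactly the $1$-skeleton random walk of the link of the face $(\Lambda,\sigma_\Lambda)$ in $X$, and a short direct computation shows $\lambda_2(P_{\sigma_\Lambda}) \leq \zeta_{|\Lambda|}$ is precisely the Oppenheim local expansion condition at level $|\Lambda|$. This connects the hypothesis in the theorem with the hypothesis of the Alev--Lau local-to-global theorem.

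With the complex, the operator factorization, and the local expansion bounds in place, the proof proceeds by induction from the top of the complex down to level $n-\ell$. The core trickling-down step asserts that for each level $k$, the partial variance at level $k$ is compared to the partial variance at level $k-1$ by a factor governed by $\zeta_{k-1}$; iterating this produces the factors $\tfrac{1-\zeta_{k-1}}{1+\zeta_{k-1}}$ whose cumulative products give the weights $\Gamma_k$. Combined with the variance decomposition identity
\[
\Var_\mu[f] \;=\; \sum_{k=0}^{n-1} w_k\,\mathcal{V}_k(f)
\]
along the tower $X(n)\to X(n-1)\to\cdots\to X(0)$, and the observation that the Dirichlet form $\mathcal{E}_{P_\ell}(f,f)$ equals the tail sum over the levels $n-\ell \leq k \leq n-1$, one obtains, by choosing $w_k \propto \Gamma_k$ and by Poincaré's characterization $\spgap{gap}{}(P_\ell)=\inf_f \mathcal{E}_{P_\ell}(f,f)/\Var_\mu[f]$, the bound
\[
\spgap{gap}{}(P_\ell) \;\geq\; \frac{\sum_{k=n-\ell}^{n-1}\Gamma_k}{\sum_{k=0}^{n-1}\Gamma_k}.
\]

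The main obstacle will be the bookkeeping in the trickling-down induction: one must carefully verify that the local expansion $\zeta_{k-1}$ converts the ratio of successive partial variances into exactly the factor $\tfrac{1-\zeta_{k-1}}{1+\zeta_{k-1}}$ rather than a looser quantity, and that the weights $w_k$ coming out of the induction match $\Gamma_k$ so that the final ratio has the precise form stated. Once the inductive step is correctly normalized, the remaining pieces---verifying that $P_\ell$ factors as the advertised down-up walk, that the Dirichlet form picks up exactly the top $\ell$ levels, and that the hypothesis matches Oppenheim's link condition---are essentially routine.
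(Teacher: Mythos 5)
First, a point of reference: the paper does not prove this statement at all — it is imported verbatim as \cite[Theorem A.9]{chen2020optimal} and used as a black box (the paper's own contribution in \Cref{appendix-block-mixing} is only the short derivation of \Cref{theorem:spectral-independent-imply-spectral-gap} from it together with a lemma of \cite{feng2021rapid}). So your proposal is a reconstruction of the proof from the cited reference, and your architecture does match that proof: realize $P_\ell$ as the $(n\to n-\ell\to n)$ down--up walk on the weighted simplicial complex of feasible partial configurations, use the adjointness $U=D^*$ to write $\mathcal{E}_{P_\ell}(f,f)=\|f^{(n)}\|^2-\|f^{(n-\ell)}\|^2=\sum_{k=n-\ell}^{n-1}d_k$, where $f^{(k)}$ is the projection of $f$ to level $k$ and $d_k\triangleq\|f^{(k+1)}\|^2-\|f^{(k)}\|^2$, note the telescoping identity $\Var[\mu]{f}=\sum_{k=0}^{n-1}d_k$, and finish with a level-by-level comparison.

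The genuine gap is that the entire quantitative content of the theorem sits in the single step you defer to ``bookkeeping'': one must show that $\lambda_2(P_{\sigma_\Lambda})\leq\zeta_k$ for every level-$k$ face implies $d_{k+1}\geq\frac{1-\zeta_k}{1+\zeta_k}\,d_k$, equivalently that $d_k/\Gamma_k$ is non-decreasing in $k$, from which the stated ratio follows by a mediant (Chebyshev-sum) inequality. Extracting the exact factor $\frac{1-\zeta_k}{1+\zeta_k}$ — rather than a looser Alev--Lau-type factor — requires a specific spectral computation relating the non-lazy local walk (note the $\mathbf{1}[u\neq v]$ and the $1/(|V|-|\Lambda|-1)$ normalization in its definition) to the ratio of consecutive level differences; the $1+\zeta_k$ in the denominator arises precisely from handling this diagonal/laziness correction, and none of that is carried out. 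Relatedly, the phrase ``by choosing $w_k\propto\Gamma_k$'' mis-describes the mechanism: the decomposition $\Var[\mu]{f}=\sum_k d_k$ is a fixed telescoping identity with no free weights, and the $\Gamma_k$ enter only as comparison constants in the monotonicity statement above. As written, the proposal names the correct scaffolding but establishes none of the load-bearing inequalities.
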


The following lemma from \cite{feng2021rapid} assumes our notion of spectral independence defined using the absolute influence matrices $\Psi_\mu^{\sigma_{\Lambda}}$ (\Cref{definition-weight-tot-inf}).

\begin{lemma}[\text{\cite[Lemma 3.6]{feng2021rapid}}] \label{lemma:SI-to-local-spectral-expander}
  For $\eta > 0$, if $\mu$ is $\eta$-spectrally independent, then there is a sequence $\zeta_0, \zeta_1, \cdots, \zeta_{n-2}\in[0,1]$, such that $\mu$ is a $(\zeta_0, \cdots, \zeta_{n-2})$-local spectral expander, where for $0 \leq k \leq n - 2$,
  \begin{align}\label{eq:local-spectral-expander-zeta-eta}
    \zeta_k = \min \left\{1 , \frac{\eta}{n - k - 1}\right\}.
  \end{align}
\end{lemma}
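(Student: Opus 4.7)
The plan is to bound $\lambda_2(P_{\sigma_\Lambda})$ directly via the oscillation of eigenfunctions. Fix $\Lambda\subseteq V$ with $|\Lambda|=k$, a feasible $\sigma_\Lambda\in\Omega(\mu_\Lambda)$, and set $N\triangleq n-k$. Suppose $f:U_{\sigma_\Lambda}\to\mathbb{R}$ is an eigenfunction of $P_{\sigma_\Lambda}$ with eigenvalue $\lambda\ne 1$. For every $u\in V\setminus\Lambda$, define the nonnegative \emph{oscillation}
\[
g(u) \triangleq \max_{i} f(u,i) - \min_{i} f(u,i),
\]
where the extrema range over $i\in\{\0,\1\}$ with $(u,i)\in U_{\sigma_\Lambda}$. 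I plan to show that $g$ satisfies an entry-wise inequality against the influence matrix $\Psi^{\sigma_\Lambda}_\mu$, and then apply the monotonicity of the spectral radius for nonnegative matrices to conclude.

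First I would unfold the action of $P_{\sigma_\Lambda}$. For any two $i,i'$ with $(u,i),(u,i')\in U_{\sigma_\Lambda}$,
\[
\lambda\bigl(f(u,i)-f(u,i')\bigr)=\frac{1}{N-1}\sum_{v\ne u}\Bigl(\E[j\sim \mu^{\sigma_\Lambda,u\gets i}_v]{f(v,j)} - \E[j\sim \mu^{\sigma_\Lambda,u\gets i'}_v]{f(v,j)}\Bigr).
\]
Applying the elementary bound $\bigl|\E[\nu_1]{h}-\E[\nu_2]{h}\bigr|\le\DTV{\nu_1}{\nu_2}\,(\max h-\min h)$ to $h(j)=f(v,j)$, and using that the supports of $\mu^{\sigma_\Lambda,u\gets i}_v$ and $\mu^{\sigma_\Lambda,u\gets i'}_v$ are contained in $\Omega(\mu^{\sigma_\Lambda}_v)$, each summand on the right is at most $\Psi^{\sigma_\Lambda}_\mu(u,v)\cdot g(v)$. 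Maximizing the left side over $i,i'$ yields the key inequality
\[
|\lambda|\,g(u)\le \frac{1}{N-1}\bigl(\Psi^{\sigma_\Lambda}_\mu\, g\bigr)(u), \qquad \forall u\in V\setminus\Lambda.
\]

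Next I would split into two cases. If $g\not\equiv 0$, then I invoke the Collatz--Wielandt characterization of the spectral radius for nonnegative matrices: whenever a nonneg nonzero vector $g$ satisfies $\alpha g\le M g$ componentwise with $M\ge 0$, one has $\alpha\le\rho(M)$. Hence $|\lambda|\le\rho(\Psi^{\sigma_\Lambda}_\mu)/(N-1)\le\eta/(N-1)$ by the assumed spectral independence. If instead $g\equiv 0$, then $f(u,\cdot)$ is constant on each fiber, so $f$ descends to a function $h:V\setminus\Lambda\to\mathbb{R}$ and the eigenvalue equation collapses to $((N-1)\lambda+1)h(u)=\sum_{w}h(w)$, which forces $\lambda\in\{1,-1/(N-1)\}$; the nontrivial case gives $|\lambda|=1/(N-1)\le\eta/(N-1)$ once $\eta\ge 0$, while in general it is trivially at most $1$. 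Combining with the universal bound $|\lambda|\le 1$ for stochastic operators gives $\lambda_2(P_{\sigma_\Lambda})\le\min\{1,\eta/(N-1)\}=\zeta_k$, which is exactly the local spectral expander condition.

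I expect the main subtlety to lie not in the arithmetic above but in two bookkeeping points: (i) verifying that when $\mu^{\sigma_\Lambda}_u$ is supported on a single value the local walk is trivial on the corresponding fiber and the oscillation $g(u)$ vanishes without creating a sign problem in the Collatz--Wielandt step; and (ii) justifying that the domination of the total-variation distance by the \emph{absolute} influence $\Psi^{\sigma_\Lambda}_\mu(u,v)$ (as opposed to the signed influence $I^{\sigma_\Lambda}_\mu$ used by Anari--Liu--Oveis Gharan) is exactly what is needed for the Perron--Frobenius step to go through, since negative entries would break the monotonicity of the spectral radius. Neither point requires new machinery, and once they are handled the bound $\lambda_2(P_{\sigma_\Lambda})\le\zeta_k$ is uniform in the choice of $(\Lambda,\sigma_\Lambda)$, so $\mu$ is indeed a $(\zeta_0,\dots,\zeta_{n-2})$-local spectral expander.
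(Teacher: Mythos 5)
The paper does not give its own proof of this lemma; it cites it from \cite[Lemma~3.6]{feng2021rapid}, so there is no in-paper argument to compare against. Your proof is correct, and the oscillation-plus-Collatz--Wielandt route is exactly what one expects when working with the \emph{absolute} influence matrix (as opposed to the signed one of Anari--Liu--Oveis Gharan, where the second eigenvalue of the local walk can be related to the signed influence matrix by an exact identity rather than an inequality). The key steps all check out: the eigenvector equation gives the difference identity, the elementary coupling bound $\abs{\E[\nu_1]{h}-\E[\nu_2]{h}}\le\DTV{\nu_1}{\nu_2}\,(\max h-\min h)$ dominates each summand by $\Psi^{\sigma_\Lambda}_\mu(u,v)\,g(v)$ since both conditional marginals at $v$ are supported inside $\Omega(\mu^{\sigma_\Lambda}_v)$, and the resulting componentwise inequality $|\lambda|g\le\tfrac{1}{N-1}\Psi^{\sigma_\Lambda}_\mu g$ feeds the lower Collatz--Wielandt bound for nonnegative matrices (which is exactly where the absolute influence matrix, rather than the signed one, is indispensable — your point (ii)). The degenerate case where $\mu^{\sigma_\Lambda}_u$ is a point mass gives $g(u)=0$ and a zero row/column in $\Psi^{\sigma_\Lambda}_\mu$, so your point (i) causes no trouble.

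One wording slip to fix: in the $g\equiv 0$ case you write that the nontrivial eigenvalue gives ``$|\lambda|=1/(N-1)\le\eta/(N-1)$ once $\eta\ge 0$,'' but that inequality requires $\eta\ge 1$. This does not affect the conclusion, because the lemma asks only for the one-sided bound $\lambda_2(P_{\sigma_\Lambda})\le\zeta_k$, and the eigenvalue you found in that case is $\lambda=-1/(N-1)<0\le\zeta_k$. You should either drop the appeal to $\eta/(N-1)$ there and just observe that this eigenvalue is negative, or be explicit that only the second-\emph{largest} eigenvalue is being controlled, so negative eigenvalues need no bound. With that small repair the argument is tight and uniform over all $(\Lambda,\sigma_\Lambda)$, as you conclude.
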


\begin{proof}[Proof of \Cref{theorem:spectral-independent-imply-spectral-gap}]
  Let $C \triangleq \left\lceil \eta \right \rceil$. 
  Note that $\mu$ is $C$-spectrally independent assuming that $\mu$ is $\eta$-spectrally independent.
  By \Cref{lemma:SI-to-local-spectral-expander}, for the $\zeta_k$'s given in \eqref{eq:local-spectral-expander-zeta-eta}, $\mu$ is a $(\zeta_0, \cdots, \zeta_{n-2})$-local spectral expander.
  By some calculation, one can verify that for any $1 \leq k \leq n-1$, it always holds that 
  \[
  \Gamma_k 
  = 
  \prod_{j=0}^{k-1}\frac{1 - \zeta_j}{1 + \zeta_j}
  =
  \frac{(n-k-C)(n-k-C+1)\ldots(n-k+C-1)}{(n-C)(n + 1 - C) \cdots (n - 1 + C)}.
  \]
  Therefore, by \Cref{theorem:local-spectral-expander-to-spectral-gap}, 
  \begin{align*}
    \spgap{gap}{}(P_\ell) \geq \frac{\sum_{k=n-\ell}^{n-1}\Gamma_k}{\sum_{k=0}^{n-1}\Gamma_k} \overset{(\star)}{=} \prod_{k = -C}^{C}\frac{\ell+k}{n+k} \ge \left(\frac{\ell-C}{n-C}\right)^{2C+1} \ge \left(\frac{\ell}{2n}\right)^{2C+1},  \end{align*}  
  where we use the fact that $n \geq \ell \ge 2C$, and $(\star)$ is due to the equation $\sum_{j=0}^{N-1}\binom{j}{k}=\binom{N}{k+1}$.
\end{proof}

\section{Uniqueness and Spectral Independence for Two-Spin Systems with Local Fields}\label{section-uniqueness-SI}
In this section, we give missing proofs in \Cref{sec:req-1}.
Specifically, they are the proofs of \Cref{prop:uniqueness-property} (in \Cref{proof-prop:uniqueness-property}), \Cref{theorem-good-potential-imply-SI-CLV} (in \Cref{proof-theorem-good-potential-imply-SI-CLV}), and \Cref{lm:contraction} (in \Cref{proof-lm:contraction}).
All these theorems have been proved in some variant forms previously.
But here our goal is to reestablish them for the two-spin systems with the local fields that are biased towards the easier directions as indicated in~\eqref{eq:good-direction}. 
In principle, such alternatton should only make these theorems more satisfied.
However, to formally verify this, we have to reiterate their respective existing proofs.

\subsection{Proof of \Cref{prop:uniqueness-property}}\label{proof-prop:uniqueness-property}
  When $\beta = 0$, recall that $(\beta,\gamma,\lambda)$ is $d$-unique if and only if
  \begin{align}\label{eq:hardconstraint}
    f_d(\hat{x}_d) = \frac{d(1-\beta \gamma)\hat{x}_d}{(\beta \hat{x}_d+1)(\gamma+\hat{x}_d)} = \frac{d\hat{x}_d}{\gamma + \hat{x}_d} \leq 1-\delta,
  \end{align}
  where $\hat{x}_d$ is the fixed point of function $F_d(x) = \lambda \tp{\frac{\beta x + 1}{\gamma + x}}^d = \lambda \tp{\frac{1}{\gamma + x}}^d$, namely $F_d(\hat{x}_d)=\hat{x}_d$.
  
  Note that \eqref{eq:hardconstraint} holds if and only if $\hat{x}_d \leq \frac{(1-\delta) \gamma}{d-1+\delta}$, which is equivalent to 
  \begin{align*}
    \lambda &\leq \hat{x}_d \tp{\frac{\hat{x}_d + \gamma}{\beta\hat{x}_d + 1}}^d = \hat{x}_d\tp{\hat{x}_d + \gamma}^d \leq \frac{(1-\delta) d^d \gamma^{d+1}}{(d-1+\delta)^{d+1}}.
  \end{align*}

  We assume that $\beta > 0$. Similar to previous discussion, $(\beta,\gamma,\lambda)$ is $d$-unique if and only if
  \begin{align}\label{eq:general-constraint}
    f_d(\hat{x}_d) = \frac{d(1-\beta \gamma)\hat{x}_d}{(\beta \hat{x}_d+1)(\gamma+\hat{x}_d)} \leq 1-\delta,
  \end{align}
  where $\hat{x}_d$ is the fixed point of function $F_d(x) = \lambda \tp{\frac{\beta x + 1}{\gamma + x}}^d$, namely $F_d(\hat{x}_d)=\hat{x}_d$.

  When $d < (1-\delta) \overline{\Delta}$, the following holds for any $x > 0$.
  \begin{align*}
    f_d(x) = \frac{d(1-\beta\gamma)x}{(\beta x+1)(x+\gamma)} = \frac{d(1-\beta \gamma)}{\beta x+\frac{\gamma}{x} + 1+\beta\gamma} \le \frac{d(1-\beta \gamma)}{1+\beta \gamma + 2\sqrt{\beta \gamma}} = \frac{d(1-\sqrt{\beta \gamma})}{1+\sqrt{\beta \gamma}} < 1-\delta.
  \end{align*}
  Therefore, $(\beta,\gamma,\lambda)$ is $d$-unique with gap $\delta$ if $d < (1-\delta) \overline{\Delta}$.

  If $d \ge (1-\delta) \overline{\Delta}$, it can be verified that equation $\frac{d(1-\beta \gamma)x}{(\beta x+1)(x+\gamma)} = 1-\delta$ has two positive roots
  \begin{align*}
    x_1(d) &= \frac{\zeta_\delta(d) - \sqrt{\zeta_\delta(d)^2 - 4(1 - \delta)^2\beta\gamma}}{2(1- \delta)\beta}
    \quad \mbox{and} \quad
    x_2(d) = \frac{\zeta_\delta(d) + \sqrt{\zeta_\delta(d)^2 - 4(1 - \delta)^2\beta\gamma}}{2(1- \delta)\beta},
  \end{align*}
  and \eqref{eq:general-constraint} holds if and only if $\hat{x}_d\leq x_1(d)$ or $\hat{x}_d \geq x_2(d)$.
  Note that $x\tp{\frac{x + \gamma}{\beta x + 1}}^d$ is monotone increasing in $x$ for any fixed $d$.
  So, $(\beta,\gamma,\lambda)$ is $d$-unique with gap $\delta$ if and only if
  \begin{align*}
    \lambda \in (0,\lambda_1(d)] \cup [\lambda_2(d),+\infty),
  \end{align*}
  where $\lambda_i(d) = x_i(d) \tp{\frac{x_i(d) + \gamma}{\beta x_i(d)+1}}^d$, $i \in \{1,2\}$.

  Lastly, $x_1(d)x_2(d) = \frac{\gamma}{\beta}$ by Vieta's formula. Therefore,
  \begin{align*}
    \lambda_1(d) \lambda_2(d) &= x_1(d) x_2(d)\tp{\frac{(\gamma + x_1(d))(\gamma + x_2(d))}{(\beta x_1(d) + 1)(\beta x_2(d)+1)}}^d\\
    &= x_1(d) x_2(d) \tp{\frac{\gamma^2 + \gamma(x_1(d)+x_2(d))+x_1(d)x_2(d)}{\beta^2 x_1(d)x_2(d)+\beta(x_1(d)+x_2(d))+1}}^d\\ 
    &= \frac{\gamma}{\beta} \tp{\frac{\gamma^2+\frac{\gamma}{\beta}+\gamma(x_1(d)+x_2(d))}{\gamma \beta + 1 + \beta(x_1(d)+x_2(d))}}^d
    =\tp{\frac{\gamma}{\beta}}^{d+1}.
  \end{align*}


\subsection{Proof of \Cref{theorem-good-potential-imply-SI-CLV}}\label{proof-theorem-good-potential-imply-SI-CLV}
Assuming the existence of $(\alpha,c)$-potential function $\phi$ with respect to a two-spin system $\+I=(V,E,\beta,\gamma,(\lambda_v)_{v \in V})$, we prove that its Gibbs distribution $\pi$ is $\frac{2c}{\alpha}$-spectrally independent.
By \Cref{definition-weight-tot-inf}, 
it is sufficient to bound the the spectral radius of the influence matrix $\Psi^{\sigma_\Lambda}_\pi$ for any feasible partial configuration $\sigma_\Lambda \in \Omega(\pi_\Lambda)$ specified on any subset $\Lambda \subset V$.

Let $D\in\mathbb{R}^{V\times V}$ be the diagonal matrix with $D(v,v)=\Delta_v$ for all $V$. 
By similarity, $D^{-1}\Psi^{\sigma_\Lambda}_\pi D$ has the same eigenvalues as $\Psi^{\sigma_\Lambda}_\pi$, and 
\[\rho(\Psi^{\sigma_\Lambda}_\pi) = \rho(D^{-1}\Psi^{\sigma_\Lambda}_\pi D) \leq \norm{D^{-1}\Psi^{\sigma_\Lambda}_\pi D}_{\infty}.\] 
The next lemma follows immediately.
\begin{lemma}
\label{lem:spec-ratio-to-row-sum}
Let $\Lambda \subset V$ and $\sigma_\Lambda \in \Omega(\pi_\Lambda)$. 
  Let $\eta > 0$.
  If for every $r \in V\setminus \Lambda$,
   \begin{align} \label{eq:WInf}
    \sum_{v \in V\setminus \Lambda} \Delta_v \cdot \Psi^{\sigma_\Lambda}_\pi(r, v) &\leq \eta \cdot \Delta_r,
  \end{align}
  then it holds that $\rho(\Psi^{\sigma_\Lambda}_\pi) \leq \eta$.
\end{lemma}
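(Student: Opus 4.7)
The plan is to chase the definitions: by the remark preceding the lemma, we already have
\[
\rho\bigl(\Psi^{\sigma_\Lambda}_\pi\bigr) \;=\; \rho\bigl(D^{-1}\Psi^{\sigma_\Lambda}_\pi D\bigr) \;\le\; \bigl\lVert D^{-1}\Psi^{\sigma_\Lambda}_\pi D\bigr\rVert_{\infty},
\]
so it suffices to bound the infinity (maximum absolute row-sum) norm of the similarity-conjugated matrix by $\eta$.

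First I would write out the entries of $M \triangleq D^{-1}\Psi^{\sigma_\Lambda}_\pi D$. Since $D$ is the positive diagonal matrix with $D(v,v) = \Delta_v$, we have $M(r,v) = \frac{\Delta_v}{\Delta_r}\,\Psi^{\sigma_\Lambda}_\pi(r,v)$ for $r,v \in V\setminus\Lambda$. By \Cref{definition-weight-tot-inf}, $\Psi^{\sigma_\Lambda}_\pi$ is a nonnegative matrix, and since $\Delta_v/\Delta_r > 0$, the matrix $M$ is also nonnegative. Hence the $\infty$-norm is simply the largest row sum (no absolute values needed), namely
\[
\bigl\lVert M\bigr\rVert_{\infty} \;=\; \max_{r \in V\setminus\Lambda}\;\sum_{v \in V\setminus\Lambda} \frac{\Delta_v}{\Delta_r}\,\Psi^{\sigma_\Lambda}_\pi(r,v) \;=\; \max_{r \in V\setminus\Lambda}\;\frac{1}{\Delta_r}\sum_{v \in V\setminus\Lambda} \Delta_v\,\Psi^{\sigma_\Lambda}_\pi(r,v).
\]

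Finally I would invoke the hypothesis \eqref{eq:WInf}, which asserts precisely that the inner sum is at most $\eta\,\Delta_r$ for every choice of $r \in V\setminus\Lambda$. Substituting gives $\lVert M\rVert_{\infty} \le \eta$, and combining with the similarity inequality above yields $\rho(\Psi^{\sigma_\Lambda}_\pi) \le \eta$, as claimed.

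There is essentially no obstacle here: the lemma is a bookkeeping step that records a standard Gershgorin/row-sum bound in the weighted norm induced by $D$. All of the work lies in establishing the hypothesis \eqref{eq:WInf} itself, which is handled in the later subsections via the $(\alpha,c)$-potential function machinery (\Cref{def:potential}) and the tree-recursion analyses leading to \Cref{lm:contraction} and \Cref{lemma-boundedness-CLV}. The present lemma simply converts a weighted total-influence bound into a spectral-radius bound, which is what lets us freely interchange the two formulations of spectral independence.
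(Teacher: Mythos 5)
Your proof matches the paper's approach exactly: the paper introduces the diagonal similarity $D^{-1}\Psi^{\sigma_\Lambda}_\pi D$ and the $\infty$-norm bound on the spectral radius immediately before this lemma and then declares the lemma ``follows immediately,'' which is precisely the row-sum computation you carry out. Correct.
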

Fix $\Lambda \subset V$ and $\sigma_\Lambda \in \Omega(\mu_\Lambda)$.  
To bound $\rho(\Psi^{\sigma_\Lambda}_\pi)$,
we only need to bound the weighted row sum of $\Psi^{\sigma_\Lambda}_\pi$ given by~\eqref{eq:WInf} (which corresponds to weighted total influence in $\pi^{\sigma_\Lambda}$). 
This follows from a two-step argument:
\begin{itemize}
\item First, show that there is a two-spin system $\+I_T$ defined on a SAW-tree rooted at $r$ that preserves the influence from $r$ to other vertices in $\+I$ (\Cref{sec:B-1}).
\item Next, show that for any two-spin system $\+I_T$ on a a rooted tree, the $\alpha$-contraction and the $c$-boundedness of $\phi$ implies that the weighted total influence from the root to all other vertices in $\+I_T$ is always bounded by $\frac{2c}{\alpha} \cdot \Delta_r$ (\Cref{sec:B-2}).
\end{itemize}



\subsubsection{Self-avoiding walk tree} \label{sec:B-1}

\newcommand{\SAW}{\text{\tiny SAW}}
Let $\+I = (V, E, \beta, \gamma, \tp{\lambda_v}_{v \in V})$ be a  two-spin system on a connected graph $G = (V, E)$ with Gibbs distribution $\pi$.
Assume a total ordering among all vertices in $G$.
Given a vertex $r \in V$, let $T = \TSAW(G, r) = (V_\SAW, E_\SAW)$ be the \emph{self-avoiding walk (SAW) tree} rooted at $r$. The unique path between $r$ and each vertex $u$ in $T$ is either a self-avoiding walk that ends at $u$ or a self-avoiding walk with $u$ being a cycle-closing vertex.  
The vertices in $T$ enumerate all such walks in $G$ and $u$ is $v$'s parent in $T$ iff $v$'s walk extends $u$'s.
The two-spin system $\+I_T$ on $T$ is defined as follow:
\begin{enumerate}
\item 
A boundary condition is imposed to $\+I_T$ on every cycle-closing leaf, such that a $\1$ is imposed on such a leaf if the cycle is formed from a smaller vertex to a larger vertex in the total ordering (and a $\0$ is imposed if otherwise). 
Denote this cycle-closing boundary condition on $\+I_T$ as $\tau_\SAW$.
\item $\+I_T$ has the same edge interactions $\beta, \gamma$ as $\+I$, and for every free vertex $v$ in $T$, the local field of $\hat{v}$ is set to be $\lambda_v$ if $\hat{v}$ is a copy of $v$ in $T$.
\end{enumerate}
We denote the Gibbs distribution of $\+I_T$ as $\pi_\SAW$.
For each $v \in V$, we denote by $\+C_v$ the set of all free (not fixed by $\tau$) copies of $v$ in $\+I_T$.
For any partial configuration $\sigma_\Lambda \in \Omega(\pi_\Lambda)$ specified on a subset $\Lambda \subset V$, a corresponding partial configuration $\sigma_\SAW$can be constructed in $\+I_T$ by assigning $\sigma_v$ to each copy $\hat{v}$ in $\+C_v$ for any $v \in V$.
Together, $\sigma_\SAW \uplus \tau_\SAW$ forms the boundary condition on $\Lambda_\SAW$ that is imposed on $\+I_T$.
With slightly abuse of the notation, we also use $\sigma_\Lambda$ to denote this boundary condition $\sigma_\SAW \uplus \tau_\SAW$ on $\+I_T$.
We use $\Psi^{\sigma_\Lambda}_G$ and  $\Psi^{\sigma_\Lambda}_T$ to denote $\Psi^{\sigma_\Lambda}_\pi$ and $\Psi^{\sigma_\SAW \uplus \tau_\SAW}_{\pi_\SAW}$ respectively, when $T = \TSAW(G, r)$ and $\+I_T$ are generated according to the above construction.

The following signed influence is introduced in~\cite{anari2020spectral}.
\begin{definition}[pairwise signed influence]\label{definition-signed-influence-matrix}
  Let $\+I = (V, E, \beta, \gamma, \tp{\lambda_v}_{v \in V})$ be a two-spin system on $G = (V, E)$ with Gibbs distribution $\pi$.
  Let $\Lambda \subset V$ and $\sigma_\Lambda \in \Omega(\mu_\Lambda)$.
  The \emph{pairwise signed influence} is defined as follows.
  For every $u, v \in V$ such that $u \not= v$, let 
  \begin{align*}
    \ALO{\pi}{\sigma_\Lambda}(u, v)
    &\triangleq
      \begin{cases}
        \pi^{\sigma_\Lambda, u \gets \1}_v(\1) - \pi^{\sigma_\Lambda, u \gets \0}_v(\1) & \text{if }\Omega(\mu^{\sigma_\Lambda}_u) = \{\0,\1\}, \\
        0 & \text{otherwise,}
    \end{cases}
  \end{align*}
  and for every $u \in V$, let $\ALO{\pi}{\sigma_\Lambda}(u, u) = 0$.
\end{definition}
Here, for convenience, we also use $\ALO{G}{\sigma_\Lambda}$ and $\ALO{T}{\sigma_\Lambda}$ to denote $\ALO{\pi}{\sigma_\Lambda}$ and $\ALO{\pi_\SAW}{\sigma_\Lambda}$, respectively.

The following lemma is taken from \cite{chen2020rapid}, which was proved for two-spin systems with local fields.
\begin{lemma}[\text{\cite[Lemma 8]{chen2020rapid}}] \label{lem:tree-inf-preserve}
  Let $\+I = (V, E, \beta, \gamma, \tp{\lambda_v}_{v \in V})$ be a two-spin system on $G = (V, E)$ with Gibbs distribution $\pi$, where $0 \leq \beta \leq \gamma$, $\gamma > 0$, and $\lambda_v > 0$ for all $v \in V$.
  Let $r \in V$ be a vertex, and $\+I_T$ the two-spin system obtained from $\+I$ according to the SAW-tree transformation $\TSAW(G, r)$.
  Let $\Lambda \subset V$ and $\sigma_\Lambda \in \Omega(\mu_\Lambda)$.
 For any $u \in V\setminus \Lambda$ , it holds that
  \begin{align*}
    \ALO{G}{\sigma_\Lambda} (r, u) &= \sum_{\hat{u} \in \+C_u} \ALO{T}{\sigma_\Lambda} (r, \hat{u}).
  \end{align*}
\end{lemma}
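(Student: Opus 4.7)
The plan is to prove the identity via Weitz's classical self-avoiding walk (SAW) tree argument, adapted from marginal distributions to signed pairwise influences. The starting point is Weitz's marginal preservation theorem: for any partial configuration $\sigma_\Lambda \in \Omega(\pi_\Lambda)$ on $G$, the marginal at the root $r$ of $T = \TSAW(G, r)$, under the cycle-closing condition $\tau_\SAW$ together with the natural extension $\sigma_\SAW$ of $\sigma_\Lambda$ which pins every free copy $\hat v \in \+C_v$ to $\sigma_v$, equals the marginal at $r$ in $G$:
\[
\pi^{\sigma_\Lambda}_r(\cdot) = \pi_{\SAW, r}^{\sigma_\SAW \uplus \tau_\SAW}(\cdot).
\]
The same identity holds if the original pinning on $G$ is further augmented at some additional vertex; pinning this vertex in $G$ corresponds to pinning every one of its copies in $T$.

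First I would swap the direction of the influence using the standard covariance symmetry for Boolean two-spin systems, both on $G$ and on $T$:
\[
\pi^{\sigma_\Lambda}_r(\1)\pi^{\sigma_\Lambda}_r(\0)\cdot \ALO{G}{\sigma_\Lambda}(r, u) = \pi^{\sigma_\Lambda}_u(\1)\pi^{\sigma_\Lambda}_u(\0)\cdot \ALO{G}{\sigma_\Lambda}(u, r),
\]
and analogously on the tree side. This reduces the task to proving a summation identity for $\ALO{G}{\sigma_\Lambda}(u, r)$, which is more convenient because Weitz's theorem directly governs the marginal at the root $r$. Next, by the extended Weitz identity just mentioned,
\[
\ALO{G}{\sigma_\Lambda}(u, r) = \pi_{\SAW, r}^{\sigma_\SAW \uplus \tau_\SAW \uplus (\+C_u \gets \1)}(\1) - \pi_{\SAW, r}^{\sigma_\SAW \uplus \tau_\SAW \uplus (\+C_u \gets \0)}(\1),
\]
where the notation $\+C_u \gets c$ means every copy of $u$ is pinned to $c$. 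I would then enumerate $\+C_u = \{\hat u_1, \ldots, \hat u_k\}$ and telescope the right-hand side by flipping one copy at a time. Each telescoped increment equals a single-copy signed influence $\ALO{T}{\rho_i}(\hat u_i, r)$ under some intermediate pinning $\rho_i$. Finally, apply the tree-side covariance symmetry to each term to convert it back into $\ALO{T}{\rho_i}(r, \hat u_i)$, and simplify.

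The main obstacle will be the last step: the intermediate pinnings $\rho_i$ fix many other copies of $u$ in addition to the target copy $\hat u_i$, so both the variance normalizers $\pi^{\rho_i}_{\hat u_i}(\1)\pi^{\rho_i}_{\hat u_i}(\0)$ and the marginals on $r$ change along the telescoping sequence, and so a naive sum of these increments does not visibly collapse to the clean $\sum_{\hat u \in \+C_u} \ALO{T}{\sigma_\Lambda}(r, \hat u)$ under the single uniform pinning $\sigma_\SAW \uplus \tau_\SAW$ stated in the lemma. To overcome this, I would exploit a key structural feature of the SAW tree: for any copy $\hat u_i$, removing the edge from $\hat u_i$ to its parent disconnects the subtree rooted at $\hat u_i$ from the rest of $T$, so the influence of $\hat u_i$ on $r$ factors through a single edge and is independent of any pinnings imposed on copies of $u$ lying inside the subtree at $\hat u_i$. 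This reduces the effect of each intermediate pinning of other copies of $u$ to a reparametrisation of local fields along ancestors that leaves the relevant marginal ratio at $\hat u_i$ invariant. The bookkeeping here is essentially that performed through the tree recursion for marginal ratios in \cite{chen2020rapid}, and I would follow it to match each telescoped term with exactly $\ALO{T}{\sigma_\Lambda}(r, \hat u_i)$.
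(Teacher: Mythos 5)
The lemma is cited from \cite[Lemma 8]{chen2020rapid} and not reproved in this paper, so there is no ``paper's own proof'' to match; I can only assess your proposal on its merits and against the standard argument for this identity.

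Your opening steps (covariance symmetry, Weitz's root-marginal preservation under the augmented pinning, the telescoping decomposition) are all valid. But the obstacle you flag at the end is a genuine gap, and the resolution you sketch does not close it. You argue that because deleting the edge from $\hat u_i$ to its parent disconnects the subtree rooted at $\hat u_i$, the influence of $\hat u_i$ on the root is insensitive to pinnings on ``copies of $u$ lying inside the subtree at $\hat u_i$.'' That observation is vacuous: a self-avoiding walk from $r$ that reaches $\hat u_i$ has already visited $u$, so no other copy of $u$ can lie in the subtree rooted at $\hat u_i$. The copies that actually cause trouble are the siblings and cousins --- the $\hat u_j$ hanging off branches incident to the $r$-to-$\hat u_i$ path (including off $r$ itself). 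Pinning those copies changes the effective external field along the path, which changes both the marginal ratio at intermediate vertices and the variance normalizer $p^T_{\hat u_i}(1-p^T_{\hat u_i})$. Consequently the telescoped increment $\ALO{T}{\rho_i}(\hat u_i,r)$ under the intermediate pinning $\rho_i$ is genuinely different from $\ALO{T}{\sigma_\SAW\uplus\tau_\SAW}(\hat u_i,r)$ under the uniform pinning, and in general the two do \emph{not} match term-by-term. The equality only holds after summation, and your argument does not establish the summed identity.

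The clean way to finish is to avoid telescoping altogether. Observe that for any pinning $\sigma_\Lambda$ with $r,u\notin\Lambda$, the covariance equals a log-field derivative of the root marginal,
\begin{align*}
\mathbf{Cov}_G^{\sigma_\Lambda}(Y_r,Y_u)
=
\frac{\partial\, \pi^{\sigma_\Lambda}_{G,r}(\1)}{\partial \log\lambda_u},
\end{align*}
and likewise on $T$ for each copy $\hat u$. Weitz's theorem says $\pi^{\sigma_\Lambda}_{G,r}(\1)=\pi^{\sigma_\SAW\uplus\tau_\SAW}_{T,r}(\1)$ holds identically for \emph{all} choices of local fields, where every copy $\hat u\in\+C_u$ carries the same field $\lambda_u$. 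Differentiating this identity with respect to $\log\lambda_u$ and applying the chain rule yields
\begin{align*}
\mathbf{Cov}_G^{\sigma_\Lambda}(Y_r,Y_u)
=
\sum_{\hat u\in\+C_u}\frac{\partial\, \pi^{\sigma_\SAW\uplus\tau_\SAW}_{T,r}(\1)}{\partial\log\lambda_{\hat u}}
=
\sum_{\hat u\in\+C_u}\mathbf{Cov}_T^{\sigma_\SAW\uplus\tau_\SAW}(Y_r,Y_{\hat u}).
\end{align*}
Since $\mathbf{Cov}(Y_r,Y_v)=p_r(1-p_r)\,\ALO{}{}(r,v)$ and Weitz gives $p^G_r(1-p^G_r)=p^T_r(1-p^T_r)$, dividing through gives the claimed $\ALO{G}{\sigma_\Lambda}(r,u)=\sum_{\hat u}\ALO{T}{\sigma_\Lambda}(r,\hat u)$. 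You should replace the telescoping-plus-heuristic step with this derivative/chain-rule argument; your first two reductions (to the covariance and to Weitz) are the right preparatory moves, and the finish is shorter than what you sketched.
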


Let $\Lambda \subset V$, $\sigma_\Lambda \in \Omega(\mu_\Lambda)$, and $r \in V\setminus \Lambda$. By the construction of $\+I_T$ and \Cref{lem:tree-inf-preserve}, the followings hold:
\begin{itemize}
\item for any $v \in V$, every free copy $\hat{v}$ in $\+I_T$ has the same degree and local field as $v$ in $\+I$;
\item it holds that
\begin{align*}
  \sum_{v \in V\setminus \Lambda} \Delta_{G,v} \cdot \Psi^{\sigma_\Lambda}_G(r, v)
  &= \sum_{v \in V\setminus \Lambda} \Delta_{G,v} \cdot \abs{\ALO{G}{\sigma_\Lambda}(r, v)} 
  = \sum_{v\in V\setminus \Lambda} \Delta_{G,v} \cdot \abs{\sum_{\hat{v} \in \+C_u}  \ALO{T}{\sigma_\Lambda}(r, \hat{v})} \\
  &\leq \sum_{v\in V\setminus \Lambda} \sum_{\hat{v} \in \+C_u} \Delta_{G,v} \cdot \abs{\ALO{T}{\sigma_\Lambda}(r, \hat{v})} 
  = \sum_{\hat{v} \in V_\SAW \setminus \Lambda_\SAW} \Delta_{T, \hat{v}} \cdot \Psi^{\sigma_\Lambda}_T (r, \hat{v}),
\end{align*}
where $\Delta_{G,v}$, $\Delta_{T,\hat{v}}$ denote the degree of $v$ in $G$ and $\hat{v}$ in tree $T=\TSAW(G, r)$ respectively, and $V_\SAW \setminus \Lambda_\SAW$ is the set all free vertices in $\+I_T$.
\end{itemize}
In order to get an upper bound for the weighted row sum of $\Psi^{\sigma_\Lambda}_G$ with respect to $r$, 
it is then sufficient for us to get an upper bound for the weighted row sum of $\Psi^{\sigma_\Lambda}_T$ with respect to $r$.


\subsubsection{Influence on trees} \label{sec:B-2}
We now bound the total influence on trees by assuming $(\alpha,c)$-potential function.
To work with SAW trees with boundary conditions, we slightly generalize the definition of $(\alpha, c)$-potential function in \Cref{def:potential}.
Let $\+I=(V,E,\beta,\gamma,(\lambda_v)_{v\in V})$ be a two-spin system with local fields.
Let $\phi:[-\infty,+\infty] \rightarrow [0, +\infty)$ be a function such that $\phi(y) > 0$ for any $y \in [-\infty,+\infty]$ that $|h(y)|>0$, where the function $h$ is defined in~\eqref{eq-def-function-h}.
Let  $S \subseteq V$ be a subset.
For any $\alpha\in(0,1)$ and $c>0$, we say $\phi$ is an \emph{$(\alpha,c)$-potential function with respect to $\+I$ on $S$} if it satisfies the following conditions: 
\begin{enumerate}
\item ($\alpha$-Contraction) 
For every $v\in S$ with $d_v \geq 1$ and every $(y_1,\ldots,y_{d_v})\in [-\infty, +\infty]^{d_v}$, we have
\[
 \phi(y) \sum_{i=1}^{d_v} h^{\phi}(y_i) \le 1-\alpha,
\]
where $y=H_{\lambda_v,d_v}(y_1,\ldots,y_d)$.
\item ($c$-Boundedness)
For every $u, v \in S$, every $y_u \in J_{\lambda_u,d_u}$ and $y_v \in J_{\lambda_v,d_v}$, we have
\begin{align*}
      \phi(y_v)\cdot h^\phi(y_u) \leq \frac{2c}{\Delta_{u} + \Delta_{v}}.
\end{align*}
\end{enumerate}
Recall that $h^{\phi}:[-\infty,+\infty]\to[0,+\infty)$ is defined as that for any $y\in [-\infty,+\infty]$, $h^{\phi}(y)=0$ if $h(y)=0$, and if $h(y)\neq0$,
\[
h^{\phi}(y)= 
\frac{|h(y)|}{\phi(y)}. 
\]
The only difference from  \Cref{def:potential} is that in the above definition,  the $\alpha$-Contraction and $c$-Boundedness properties are required to hold on a subset $S$. We will prove the following lemma.

\begin{lemma} \label{lem:inf-on-tree-instance}
  Let $\+I = (V, E, \beta, \gamma, \tp{\lambda_v}_{v \in V})$ be a two-spin system on a tree $T = (V, E)$ with Gibbs distribution $\pi$, 
  where $0 \leq \beta \leq \gamma$, $\gamma > 0$, and $\lambda_v > 0$ for all $v \in V$.
    Let $r\in V$, $\Lambda \subseteq V \setminus \{r\}$ and $\sigma_\Lambda \in \Omega(\pi_\Lambda)$.
    For any $\alpha \in (0, 1)$ and $c > 0$,
  if there is an $(\alpha, c)$-potential function $\phi$ with respect to $\+I$ on $V\setminus \Lambda$, then it holds that
  \begin{align*}
    \sum_{v \in V\setminus \Lambda} \Delta_v \cdot \Psi^{\sigma_\Lambda}_\pi(r, v) &\leq \Delta_r \cdot \frac{2c}{\alpha}.
  \end{align*}
\end{lemma}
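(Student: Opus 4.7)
The plan is to prove \Cref{lem:inf-on-tree-instance} by a leaves-to-root induction on the tree $T$, following the potential-function analysis of \cite{LLY13,chen2020rapid} but keeping track of the vertex-dependent local fields. First I would root $T$ at $r$ and, for every free vertex $v\in V\setminus\Lambda$, introduce the cavity log-marginal-ratio
\[
y_v \triangleq \log\bigl(\pi_v^{\sigma_\Lambda\cap T_v}(\1)/\pi_v^{\sigma_\Lambda\cap T_v}(\0)\bigr)\in[-\infty,+\infty]
\]
on the subtree $T_v$ rooted at $v$ (with $y_w=\pm\infty$ for $w\in\Lambda$ according to $\sigma_w$). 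Because $T$ is a tree, these cavity ratios satisfy the tree recursion $y_v=H_{\lambda_v,k_v}(y_{w_1},\ldots,y_{w_{k_v}})$ over the children $w_1,\ldots,w_{k_v}$ of $v$, and in particular $y_v\in J_{\lambda_v,k_v}$ so that all hypotheses of \Cref{def:potential} apply at every $y_v$.

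Next I would invoke the standard chain-rule factorization of pairwise influence on rooted trees: for any descendant $u$ of $v$ with unique path $v=v_0,v_1,\ldots,v_\ell=u$,
\[
\Psi^{\sigma_\Lambda}_\pi(v,u)=\prod_{i=1}^{\ell}\abs{h(y_{v_i})},
\]
which follows from $\partial H_{\lambda,k}/\partial y_i=h(y_i)$ applied level by level together with the relation between $\ALO{\pi}{\sigma_\Lambda}(v,u)$ and the derivative of $y_v$ with respect to the boundary $y_u$. An immediate consequence is the leaves-to-root recursion
\[
F(v)\triangleq\sum_{u\in T_v\setminus\Lambda,\,u\neq v}\Delta_u\,\Psi^{\sigma_\Lambda}_\pi(v,u)
=\sum_{w\text{ child of }v,\ w\in V\setminus\Lambda}\abs{h(y_w)}\bigl(\Delta_w+F(w)\bigr),
\]
so it suffices to bound $F(r)$.

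The heart of the argument is an inductive claim of the form $\phi(y_v)\,F(v)\le\frac{2c}{\alpha}\Delta_v$ for every non-root $v\in V\setminus\Lambda$ (with a trivial base case when $v$ has no free descendants). In the inductive step I would write $\abs{h(y_w)}=\phi(y_w)\,h^\phi(y_w)$ and expand $\phi(y_v)F(v)=\sum_w\phi(y_v)h^\phi(y_w)\bigl(\phi(y_w)\Delta_w+\phi(y_w)F(w)\bigr)$. The term $\phi(y_w)F(w)$ is absorbed by the induction hypothesis, and the remaining $\phi(y_v)\,h^\phi(y_w)$ factors are controlled by a careful split: the ``first-step'' contribution across each child is bounded by $c$-boundedness $\phi(y_v)h^\phi(y_w)\le 2c/(\Delta_v+\Delta_w)$ (which, combined with $\Delta_w\le\Delta_v+\Delta_w$ and summation over at most $\Delta_v$ children, yields the leading factor $2c$), while the propagation of the inductive bound through the subtree is controlled by $\alpha$-contraction $\phi(y_v)\sum_w h^\phi(y_w)\le 1-\alpha$ (which produces the geometric $1/\alpha$ factor). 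At the root itself, since $d_r=\Delta_r-1$ is not the number of its children in $T$, $\alpha$-contraction is not directly available; one instead applies $c$-boundedness alone for the first step out of $r$ and then plugs in the inductive bound on each of its child subtrees, giving $F(r)\le\frac{2c}{\alpha}\Delta_r$ as desired.

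The main obstacle I anticipate is the fine bookkeeping to make $\alpha$-contraction (a $\phi(y_{\text{parent}})$-weighted $h^\phi$-sum bound) and $c$-boundedness (a degree-normalized absolute bound on $\phi(y_v)h^\phi(y_u)$) fit together into a single closing inductive invariant that produces the sharp constant $\frac{2c}{\alpha}$ without losing any additional degree or $\alpha$ factors. This is precisely the calculation carried out in \cite{chen2020rapid} in the uniform-local-field setting; the observation that is needed here is that their argument is entirely local at each vertex and depends only on the pointwise values of $\phi$ at $y_v$'s, so it extends verbatim to vertex-dependent local fields $(\lambda_v)_{v\in V}$ as long as the $\alpha$-contraction and $c$-boundedness conditions are phrased pointwise in $y_v$, which is exactly what \Cref{def:potential} provides.
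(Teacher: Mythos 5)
Your setup — rooting at $r$, using the cavity log-ratios $y_v$, the tree recursion $y_v = H_{\lambda_v,k_v}(\cdot)$, the chain-rule factorization $\Psi^{\sigma_\Lambda}_\pi(v,u) = \prod_i|h(y_{v_i})|$, and hence the subtree recursion $F(v) = \sum_{w}|h(y_w)|(\Delta_w + F(w))$ — is correct and matches the paper's machinery (its Lemmas 7.8 and 7.9 on tree influence). The gap is in the inductive invariant you propose, which does not close.

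Concretely, claim $\phi(y_v)F(v) \le \frac{2c}{\alpha}\Delta_v$, expand
\[
\phi(y_v)F(v) = \sum_w \phi(y_v)h^\phi(y_w)\phi(y_w)\Delta_w \;+\; \sum_w \phi(y_v)h^\phi(y_w)\cdot\phi(y_w)F(w),
\]
and plug in the IH $\phi(y_w)F(w)\le\frac{2c}{\alpha}\Delta_w$. The second sum becomes $\frac{2c}{\alpha}\sum_w\phi(y_v)h^\phi(y_w)\Delta_w$. Neither tool controls this in terms of $\Delta_v$: the $\alpha$-contraction bound $\phi(y_v)\sum_w h^\phi(y_w)\le 1-\alpha$ only yields $\frac{2c(1-\alpha)}{\alpha}\max_w\Delta_w$, and a child $w$ can have degree far exceeding $\Delta_v$; while $c$-boundedness gives $\phi(y_v)h^\phi(y_w)\Delta_w\le 2c$ per child, accumulating to $\frac{2c}{\alpha}\cdot 2c\,d_v$, which overshoots $\frac{2c}{\alpha}\Delta_v$ whenever $c>\tfrac{1}{2}$ (in the paper $c=36$). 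There is also a dangling $\phi(y_w)$ factor in the first sum that $c$-boundedness does not absorb without a separate upper bound on $\phi$ — one that \Cref{def:potential} does not guarantee. So the closing invariant is genuinely missing, not just ``fine bookkeeping.''

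The paper avoids exactly this degree-mismatch by proving a \emph{level-by-level} decay (its Lemma 7.9): $\sum_{v\in L_w(k)}\Delta_v\,\phi(\log R_w)\Upsilon^{\sigma_\Lambda}_\pi(w,v)\le\max_{v\in L_w(k)}\{\Delta_v\phi(\log R_v)\}\,(1-\alpha)^k$. Here the degree factor rides along inside a $\max$ evaluated at the far endpoint $v$, $\alpha$-contraction is applied at every intermediate vertex, and $c$-boundedness is invoked exactly once — at the root — to turn $\max_v\Delta_v\phi(\log R_v)\cdot h^\phi(\log R_{u_i})$ into $2c$. Your bottom-up scheme can be repaired by choosing the analogous subtree invariant $\phi(y_v)\bigl(\Delta_v+F(v)\bigr)\le\frac{1}{\alpha}\max_{u\in T_v\setminus\Lambda}\Delta_u\phi(y_u)$: this does close (the new $\phi(y_v)\Delta_v$ term is $\le\max_{u\in T_v}\Delta_u\phi(y_u)$, and the recursive term picks up only a $(1-\alpha)$ via contraction), and at the root one application of $c$-boundedness per child converts each $\frac{1}{\alpha}\max_u\Delta_u\phi(y_u)\cdot h^\phi(y_{u_i})$ into $\frac{2c}{\alpha}$, summing to $\frac{2c}{\alpha}\Delta_r$. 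Note in particular that $c$ should not appear in the interior invariant at all; your formulation conflates the single root-level application of boundedness with the interior contraction.
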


Fix $\Lambda \subseteq V \setminus \{r\}$, and $\sigma_\Lambda \in \Omega(\mu_\Lambda)$. 
Without loss of generality, we assume that $\Lambda$ only contains leaves of $T$, since for any $v \in \Lambda$, due to conditional independence the descendants of such $v$ can be removed without affecting the influence from the root.
We use $T_v$ to denote the subtree of $T$ rooted at vertex $v \in V$, and $L_v(k) \subseteq V \setminus \Lambda$ for the set of all \emph{free} vertices at distance $k$ away from $v$ in the subtree $T_v$.
Moreover, we define
\begin{align*}
\Upsilon^{\sigma_\Lambda}_\pi \triangleq \Psi^{\sigma_\Lambda}_\pi + I,	
\end{align*}
where $I$ denotes the identity matrix. We will prove the following lemma.

\begin{lemma} \label{lem:tree-decay-level}
  Let $c > 0$ and $\alpha \in (0, 1)$ be two real numbers.
  If there is an $(\alpha, c)$-potential function $\phi$ with respect to $\+I$ on $V\setminus \Lambda$, then for any integer $k \geq 1$, it holds that
  \begin{align*}
    \sum_{u \in L_r(k)} \Delta_u \Psi^{\sigma_\Lambda}_\pi (r, u) &\leq \sum_{u \in L_r(k)} \Delta_u \Upsilon^{\sigma_\Lambda}_\pi (r, u) \leq 2c(1 - \alpha)^{k-1} \Delta_r.
  \end{align*}
\end{lemma}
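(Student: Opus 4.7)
The first inequality is immediate: $\Upsilon^{\sigma_\Lambda}_\pi = \Psi^{\sigma_\Lambda}_\pi + I$ agrees with $\Psi^{\sigma_\Lambda}_\pi$ off the diagonal, and for $k \ge 1$ no $u \in L_r(k)$ equals $r$, so the two sums in fact coincide. The substance of the lemma is the second inequality, which I plan to prove by induction on $k$.

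Since $T$ is a tree, conditioning on any vertex's value blocks correlations across its separated subtrees, and iterating this observation yields the product form
\[
\ALO{T}{\sigma_\Lambda}(r, u) \;=\; \prod_{i=1}^{k} h(y_{x_i})
\]
along the unique $r$-$u$ path $r = x_0, x_1, \ldots, x_k = u$, where $y_{x_i}$ denotes the log-marginal-ratio at $x_i$ in the subtree $T_{x_i}$ under the pinning induced by $\sigma_\Lambda$. Taking absolute values gives $\Psi^{\sigma_\Lambda}_\pi(r,u) = \prod_{i=1}^{k} |h(y_{x_i})|$, which leads to the clean recursion
\[
S(r,k) \;:=\; \sum_{u \in L_r(k)} \Delta_u\, \Upsilon^{\sigma_\Lambda}_\pi(r,u) \;=\; \sum_{v \text{ child of } r} |h(y_v)| \cdot S(v,k-1),
\]
with base value $S(v,0) = \Delta_v$ since $\Upsilon^{\sigma_\Lambda}_\pi(v,v) = 1$.

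For the base case $k = 1$, I will apply $c$-boundedness at $u = v$ to obtain $|h(y_v)| = \phi(y_v)\, h^\phi(y_v) \le c/\Delta_v$, which immediately gives $S(r,1) = \sum_v \Delta_v |h(y_v)| \le c\,d_r \le 2c\,\Delta_r$.

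The main obstacle lies in the inductive step: naively combining the recursion with the inductive hypothesis using only boundedness loses a factor of $c$ per level instead of yielding $(1-\alpha)$. To expose the contraction, I will reorganize the recursion through the potential function as
\[
S(r,k) \;=\; \frac{1}{\phi(y_r)} \sum_{v} \bigl[\phi(y_r)\, h^\phi(y_v)\bigr] \cdot \bigl[\phi(y_v)\, S(v,k-1)\bigr],
\]
so that $\alpha$-contraction $\phi(y_r)\sum_v h^\phi(y_v) \le 1-\alpha$ supplies the per-level $(1-\alpha)$ factor, provided the bracketed quantities $\phi(y_v)\, S(v,k-1)$ admit a suitable uniform bound propagated inductively. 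The delicate point is the choice of this auxiliary $\phi$-weighted inductive quantity that balances the $\Delta_u$ weights against the potential-function factors: $c$-boundedness should be invoked once at the innermost layer to absorb the $\Delta_u$ weights (giving the initial $2c$), while each of the remaining $k-1$ layers contributes only a factor of $(1-\alpha)$ from $\alpha$-contraction, yielding $2c(1-\alpha)^{k-1}\Delta_r$ in total.
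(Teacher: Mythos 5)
Your overall strategy --- factorize the influence along tree paths via \Cref{lem:inf-chain} and \Cref{lem:inf-near}, introduce a $\phi$-weighted inductive quantity, and spend $\alpha$-contraction on $k-1$ edges while $c$-boundedness gives a single factor of $2c$ --- is exactly the paper's plan, and your observation that the two sums coincide for $k\ge 1$, your path-product formula, and your $k=1$ base case (taking $u=v$ in boundedness) are all correct.

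The inductive step, however, as you have written it, contains a real gap. The reorganization
\[
S(r,k) = \frac{1}{\phi(y_r)} \sum_{v} \bigl[\phi(y_r)\, h^\phi(y_v)\bigr]\cdot\bigl[\phi(y_v)\, S(v,k-1)\bigr]
\]
is algebraically fine, but the next step, invoking ``$\alpha$-contraction $\phi(y_r)\sum_v h^\phi(y_v)\le 1-\alpha$'', cannot be justified at the root. The contraction condition in \Cref{def:potential} is stated for a vertex $w$ together with exactly $d_w=\Delta_w-1$ inputs, which corresponds to the tree recursion at a \emph{non-root} node (a node has one parent and $\Delta_w-1$ children). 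The root $r$ has $\Delta_r$ children, one more than $d_r$, so contraction with $\Delta_r$ summands is not given. In addition, the leftover $1/\phi(y_r)$ has no place to go in the target bound $2c(1-\alpha)^{k-1}\Delta_r$, and your proposal never explains where the factor $\Delta_r$ comes from.

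The paper handles the root by \emph{not} applying contraction there: it bounds the sum over $r$'s $\Delta_r$ children by $\Delta_r$ times a maximum --- this is the source of $\Delta_r$ --- and proves the auxiliary claim
\[
\phi(\log R_w) \sum_{v\in L_w(j)} \Delta_v\, \Upsilon^{\sigma_\Lambda}_\pi(w,v) \;\le\; (1-\alpha)^j \max_{v\in L_w(j)} \Delta_v\, \phi(\log R_v)
\]
only for $w\ne r$, by induction on $j$ using contraction at such non-root $w$ (which now has the right number $d_w$ of children). Plugging this into the root bound leaves the quantity $\Delta_v \cdot h^\phi(\log R_u)\cdot \phi(\log R_v)$ with $u\in L_r(1)$ and $v\in L_r(k)$, and $c$-boundedness is applied to this \emph{cross-level} pair, not the innermost layer alone as you describe: $\Delta_v\, h^\phi(\log R_u)\phi(\log R_v)\le \frac{2c\Delta_v}{\Delta_u+\Delta_v}\le 2c$. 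To repair your argument you should (a) never invoke contraction at the root, (b) pull $\Delta_r$ out explicitly by maximizing over the root's children, and (c) pair $h^\phi$ at a level-$1$ child with $\phi$ at a level-$k$ vertex when applying boundedness.
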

\Cref{lem:inf-on-tree-instance} is an easy consequence of \Cref{lem:tree-decay-level}.
\begin{proof}[Proof of \Cref{lem:inf-on-tree-instance}]
Note that $\Psi^{\sigma_\Lambda}_T(r,r) = 0$. It holds that
\begin{align*}
  \sum_{u \in V_T} \Delta_{u} \cdot \Psi^{\sigma_\Lambda}_T (r, u)
  &= \sum_{k = 1}^{+\infty} \sum_{u \in L_r(k)} \Delta_u \cdot \Psi^{\sigma_\Lambda}_T (r, u)
  {\leq} \Delta_r \cdot \sum_{k = 1}^{\infty} 2c(1 - \alpha)^{k-1}
  = \frac{2c}{\alpha} \Delta_r,
\end{align*}
where the inequality is due to \Cref{lem:tree-decay-level}.
\end{proof}

Let $u \in V$ be a vertex in the rooted tree $T$.
Let $\+I_u=\+T_{T_u}$ denote the two-spin system induced by $\+I_T$ on the subtree $T_u$ rooted by $u$.
Let $\nu$ be the Gibbs distribution associated with $\+I_u$.
Let $\tau$ denote the configuration obtained by restricting $\sigma_\Lambda$ on $T_u$, formally, $\tau = \sigma_{\Lambda}(V_{T_u})$, where $V_{T_u}$ denotes all vertices in $T_u$.
We define the \emph{marginal ratio} at $u$ by
\begin{align*}
  R_u = \frac{\nu^{\tau}_u(\1)}{\nu^{\tau}_u(\0)}.
\end{align*}
It is well know~\cite{Wei06} that each $R_u$ can be calculated by the following tree recursion.
For a vertex $u \in V$ whose local field is $\lambda_u$, denote the children of $u$ as $w_1, w_2, \cdots, w_{d_u}$, where $d_u$ is the number of $u$'s children in $T_u$.
The recursion of the marginal ratio is defined explicitly as follow:
\begin{align*}
  R_u &= \begin{cases}
    0 & u \in \Lambda, \sigma_u = \0 ; \\
    +\infty & u \in \Lambda, \sigma_u = \1 ; \\
    F_{\lambda_u, d_u}(R_{w_1}, R_{w_2}, \cdots, R_{w_{d_u}}) & u\not\in \Lambda,
  \end{cases}
\end{align*}
where we recall that  the tree recursion $F_{\lambda_u, d_u}$ is defined as:
\begin{align*}
  F_{\lambda_u, d_u}\tp{R_{w_1}, R_{w_2}, \cdots, R_{w_{d_u}}} &= \lambda_u \prod_{i = 1}^{d_u} \frac{\beta R_{w_i} + 1}{R_{w_i} + \gamma}.
\end{align*}

Now, we prove \Cref{lem:tree-decay-level}.
Recall that for two vertices $u, v \in V \setminus \Lambda$, if $u\not= v$, then
\begin{align*}
  \Upsilon^{\sigma_\Lambda}_\pi(u, v) = \Psi^{\sigma_\Lambda}_\pi(u, v) = \max_{i, j \in \Omega(\pi^{\sigma_\Lambda}_u)} \DTV{\pi^{\sigma_\Lambda, u \gets i}_v}{\pi^{\sigma_\Lambda, u \gets j}_v}
\end{align*}
gives the influence from $u$ to $v$; and $\Upsilon_\pi^{\sigma_\Lambda}(w,w) = 1$ for any $w \in V \setminus \Lambda$.

The following lemma is taken from \cite{anari2020spectral}, which can be straightforwardly extended to two-spin systems with local fields.
\begin{lemma}[\text{\cite[Lemma B.2]{anari2020spectral}}]
Let $u, v, w \in V \setminus \Lambda$ be distinct vertices in tree $T$ such that $u$ is on the unique path from $v$ to $w$.
The signed influence matrix $\ALO{\pi}{\sigma_\Lambda}$ defined in \Cref{definition-signed-influence-matrix} satisfies
  \begin{align*}
   \ALO{\pi}{\sigma_\Lambda}(v, w) &= \ALO{\pi}{\sigma_\Lambda}(v, u) \cdot \ALO{\pi}{\sigma_\Lambda}(u, w).
  \end{align*}	
\end{lemma}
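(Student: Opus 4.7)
The plan is to exploit the tree structure directly: since $u$ lies on the unique $v$-$w$ path in $T$, removing $u$ disconnects $v$ from $w$, so conditioning on $u$'s value makes the marginals at $v$ and $w$ independent under $\pi^{\sigma_\Lambda}$. This is the standard Markov property for Gibbs distributions on trees, which follows because $\pi^{\sigma_\Lambda}$ factorizes over the edges of $T$ and cutting $u$ out yields two disjoint subproblems, one containing $v$ and one containing $w$.

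First I would dispose of the degenerate cases. If $\abs{\Omega(\pi^{\sigma_\Lambda}_v)}=1$ then both sides vanish by \Cref{definition-signed-influence-matrix}. If $\abs{\Omega(\pi^{\sigma_\Lambda}_u)}=1$, then pinning $v$ cannot alter the marginal at $u$ and, by the tree Markov property, cannot alter the marginal at $w$ either; on the other side, $\ALO{\pi}{\sigma_\Lambda}(v,u)=0$ by definition, so both sides again vanish. So assume $\Omega(\pi^{\sigma_\Lambda}_v)=\Omega(\pi^{\sigma_\Lambda}_u)=\{\0,\1\}$, and note $\Omega(\pi^{\sigma_\Lambda,v\gets i}_u)=\{\0,\1\}$ for $i\in\{\0,\1\}$ since on a tree pinning $v$ does not pin $u$.

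Next I would apply the law of total probability on the value of $u$:
\[
\pi^{\sigma_\Lambda, v\gets i}_w(\1)
=\sum_{c\in\{\0,\1\}} \pi^{\sigma_\Lambda, v\gets i}_u(c)\,
\pi^{\sigma_\Lambda, v\gets i,\, u\gets c}_w(\1).
\]
The tree Markov property now lets me drop the conditioning on $v$ in the last factor: because $u$ separates $v$ from $w$ in $T$ and $\pi^{\sigma_\Lambda}$ factorizes along $T$, we have $\pi^{\sigma_\Lambda, v\gets i,\, u\gets c}_w = \pi^{\sigma_\Lambda, u\gets c}_w$ for each $i,c$. Subtracting the $i=\0$ equation from the $i=\1$ equation yields
\[
\ALO{\pi}{\sigma_\Lambda}(v,w)=\sum_{c\in\{\0,\1\}}\bigl(\pi^{\sigma_\Lambda, v\gets \1}_u(c)-\pi^{\sigma_\Lambda, v\gets \0}_u(c)\bigr)\,\pi^{\sigma_\Lambda, u\gets c}_w(\1).
\]

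Finally I would use that both $\pi^{\sigma_\Lambda,v\gets \1}_u$ and $\pi^{\sigma_\Lambda,v\gets \0}_u$ sum to $1$ over $c\in\{\0,\1\}$, so the differences at $c=\1$ and $c=\0$ are negatives of each other and equal $\pm\ALO{\pi}{\sigma_\Lambda}(v,u)$. Pulling this common factor out gives $\ALO{\pi}{\sigma_\Lambda}(v,u)\cdot\bigl(\pi^{\sigma_\Lambda,u\gets \1}_w(\1)-\pi^{\sigma_\Lambda,u\gets \0}_w(\1)\bigr)=\ALO{\pi}{\sigma_\Lambda}(v,u)\cdot \ALO{\pi}{\sigma_\Lambda}(u,w)$, as claimed. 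The only real ``obstacle'' is making the tree Markov property precise, but this is immediate from the explicit factorization of the Gibbs measure of a two-spin system on a tree and thus should take only a line or two.
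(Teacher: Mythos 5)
Your approach---condition on the value of $u$, invoke the tree Markov property, and factor---is the standard proof of this identity and matches the argument in \cite[Lemma B.2]{anari2020spectral}; the paper itself only cites that lemma rather than reproving it, so there is no independent proof in the paper to compare against. The one claim you should not make is that $\Omega(\pi^{\sigma_\Lambda,v\gets i}_u)=\{\0,\1\}$ for both $i\in\{\0,\1\}$ ``since on a tree pinning $v$ does not pin $u$.'' That is false in the hardcore case $\beta=0$ with $u$ adjacent to $v$: pinning $v\gets\1$ forces $u\gets\0$, so $\pi^{\sigma_\Lambda,v\gets\1,u\gets\1}_w$ conditions on a probability-zero event and is undefined. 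This does not actually break the argument, because that term carries coefficient $\pi^{\sigma_\Lambda,v\gets\1}_u(\1)=0$ in the law of total probability. The clean fix is to restrict the sum to $c\in\Omega(\pi^{\sigma_\Lambda,v\gets i}_u)$ when invoking the Markov property, and then extend the sum back to all of $\{\0,\1\}$ afterwards, which is legal because $\pi^{\sigma_\Lambda,u\gets c}_w(\1)$ is well-defined for both $c$ (as $\Omega(\pi^{\sigma_\Lambda}_u)=\{\0,\1\}$ under your non-degeneracy assumption) and the re-added term has zero weight. With that patch, your derivation is complete and yields exactly the stated product formula.
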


Note that for any $u,v \in V \setminus \Lambda$ with $u \neq v$, it holds that $\Upsilon^{\sigma_\Lambda}_\pi = \abs{ \ALO{\pi}{\sigma_\Lambda}(u, v) }$, and for any $ w \in V\setminus \Lambda$, it holds that  $\Upsilon^{\sigma_\Lambda}_\pi(w, w) = 1$.
\begin{corollary}\label{lem:inf-chain}
Let $u, v, w \in V \setminus \Lambda$ be (not necessarily distinct) vertices in tree $T$ such that $u$ is on the unique path from $v$ to $w$.
It holds that
  \begin{align*}
    \Upsilon^{\sigma_\Lambda}_\pi(v, w) &= \Upsilon^{\sigma_\Lambda}_\pi(v, u) \cdot \Upsilon^{\sigma_\Lambda}_\pi(u, w).
  \end{align*}	
\end{corollary}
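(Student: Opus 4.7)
The plan is to derive this essentially as an immediate consequence of the preceding \cite[Lemma B.2]{anari2020spectral}, handling the ``not necessarily distinct'' aspect by a case analysis that uses the convention $\Upsilon^{\sigma_\Lambda}_\pi(w,w) = 1$. The only work is to translate from the signed influence $\+I^{\sigma_\Lambda}_\pi$ (where multiplicativity was proved) to $\Upsilon^{\sigma_\Lambda}_\pi = \Psi^{\sigma_\Lambda}_\pi + I$, which for distinct $a,b$ agrees with $|\+I^{\sigma_\Lambda}_\pi(a,b)|$ and on the diagonal is defined to be $1$.

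First I would dispose of the degenerate configurations of $u,v,w$. Since $u$ lies on the unique path from $v$ to $w$ in the tree, the only possible coincidences are $u=v$, $u=w$, or $v=w$ (the last of which forces $u=v=w$). In each such case the claim becomes trivial: for instance, if $u=v$ then $\Upsilon^{\sigma_\Lambda}_\pi(v,u) = 1$ and the identity reduces to $\Upsilon^{\sigma_\Lambda}_\pi(v,w)=\Upsilon^{\sigma_\Lambda}_\pi(u,w)$, which is literally the same thing; the case $u=w$ is symmetric, and if $v=w$ (hence $u=v=w$) both sides equal $1$.

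Next, for the generic case when $u,v,w$ are pairwise distinct, I would invoke the preceding lemma to obtain
\[
\+I^{\sigma_\Lambda}_\pi(v,w) \;=\; \+I^{\sigma_\Lambda}_\pi(v,u) \cdot \+I^{\sigma_\Lambda}_\pi(u,w),
\]
and then take absolute values on both sides. Using the remark stated just before the corollary that $\Upsilon^{\sigma_\Lambda}_\pi(a,b) = \bigl|\+I^{\sigma_\Lambda}_\pi(a,b)\bigr|$ whenever $a \neq b$, the right-hand side becomes $\Upsilon^{\sigma_\Lambda}_\pi(v,u)\cdot \Upsilon^{\sigma_\Lambda}_\pi(u,w)$ and the left-hand side becomes $\Upsilon^{\sigma_\Lambda}_\pi(v,w)$, yielding the desired equality.

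There is no real obstacle here: the multiplicativity along paths is already in hand from \cite{anari2020spectral}, and the corollary's content is just the bookkeeping that passes from the signed to the absolute influence while remaining consistent with the diagonal convention $\Upsilon^{\sigma_\Lambda}_\pi(w,w) = 1$. The only thing to be careful about is making sure the case split is exhaustive given the constraint that $u$ lies on the unique $v$-to-$w$ path in $T$, which rules out the possibility of $v=w$ with $u \neq v$.
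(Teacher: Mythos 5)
Your proof is correct and follows the same approach the paper intends: the corollary is stated as an immediate consequence of \cite[Lemma B.2]{anari2020spectral} together with the observation right before it that $\Upsilon^{\sigma_\Lambda}_\pi$ equals the absolute signed influence off the diagonal and $1$ on the diagonal, and you fill in the routine case analysis on coincidences among $u,v,w$ that the paper leaves implicit.
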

%
 %
 The following lemma \cite{chen2020rapid} was proved for two-spin systems with local fields.
\begin{lemma} [\text{\cite[Lemma 16]{chen2020rapid}}] \label{lem:inf-near}
  Let $v \in V \setminus \Lambda$ and $u \in V \setminus \Lambda$  a child of $v$ in the subtree $T_v$.
  If $|\Omega(\pi^{\sigma_\Lambda}_v)| = 1$, then $\Upsilon^{\sigma_\Lambda}_\pi(v, u) = 0$; otherwise, it holds that
  \begin{align*}
    \Upsilon^{\sigma_\Lambda}_\pi(v, u) =|I^{\sigma_\Lambda}_\pi(v, u)|= \abs{h(\log R_u)},
  \end{align*}
  where  $h(y)= - \frac{(1 - \beta\gamma)\mathrm{e}^y}{(\beta \mathrm{e}^y + 1)(\mathrm{e}^y + \gamma)}$. 
\end{lemma}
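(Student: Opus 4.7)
The plan is a direct computation that turns the tree-influence into the one-variable function $h$ evaluated at $\log R_u$. I would split into two cases. First, if $|\Omega(\pi^{\sigma_\Lambda}_v)| = 1$, then the value of $v$ is already pinned by the conditioning, so the outer maximum in
\begin{align*}
\Psi^{\sigma_\Lambda}_\pi(v,u) = \max_{c,c' \in \Omega(\pi^{\sigma_\Lambda}_v)} \DTV{\pi_u^{\sigma_\Lambda, v \gets c}}{\pi_u^{\sigma_\Lambda, v \gets c'}}
\end{align*}
ranges over a singleton and equals $0$; since $u \neq v$, also $\Upsilon^{\sigma_\Lambda}_\pi(v,u) = 0$. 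Henceforth assume $\Omega(\pi^{\sigma_\Lambda}_v) = \{\0,\1\}$. Because $\pi_u^{\sigma_\Lambda, v \gets c}$ is a distribution on the two-point set $\{\0,\1\}$, the total variation distance equals the absolute difference of the probabilities at $\1$, and the maximum over $c,c'$ is attained at $\{c,c'\} = \{\0,\1\}$. This already yields $\Upsilon^{\sigma_\Lambda}_\pi(v,u) = \Psi^{\sigma_\Lambda}_\pi(v,u) = |I^{\sigma_\Lambda}_\pi(v,u)|$, settling the first equality of the lemma.

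For the second equality, I would exploit the tree structure. Since $T$ is a tree and $u$ is a child of $v$, conditioning on $v$ separates the subtree $T_u$ from the rest of $T$, so the conditional marginal of $u$ under $\pi^{\sigma_\Lambda, v \gets c}$ depends only on (i) the edge weight across $(v,u)$ determined by $c$, and (ii) the intrinsic information from $T_u$, which by definition of the marginal ratio is exactly $R_u$ (computed by the tree recursion on $T_u$, already incorporating all pinnings of $\sigma_\Lambda$ inside $T_u$ and the local fields). The unnormalised weight at $u$ in $T_u$ is proportional to $R_u$ when $u=\1$ and to $1$ when $u=\0$; multiplying by the edge weight across $(v,u)$ gives
\begin{align*}
\pi_u^{\sigma_\Lambda, v \gets \1}(\1) = \frac{\beta R_u}{\beta R_u + 1}, \qquad \pi_u^{\sigma_\Lambda, v \gets \0}(\1) = \frac{R_u}{R_u + \gamma}.
\end{align*}
Subtracting and simplifying,
\begin{align*}
I^{\sigma_\Lambda}_\pi(v,u) = \frac{\beta R_u}{\beta R_u + 1} - \frac{R_u}{R_u + \gamma} = -\frac{(1 - \beta\gamma)\, R_u}{(\beta R_u + 1)(R_u + \gamma)} = h(\log R_u),
\end{align*}
after substituting $\mathrm{e}^y = R_u$ in the formula for $h$.

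The main subtlety will be to justify the tree decoupling rigorously at the level of Gibbs measures: one must verify that the $\sigma_\Lambda$-conditional distribution on $T_u$, after further pinning $v$, really factors so that $u$'s marginal ratio becomes $\beta R_u$ or $R_u/\gamma$ as above, and that the recursive definition of $R_u$ already absorbs every pinning of $\sigma_\Lambda$ and every local field inside $T_u$. A secondary concern is the boundary values $R_u \in \{0,+\infty\}$ (arising when $u$ is effectively forced by descendants), which must be handled by continuity: in these limits $h(\log R_u) = 0$, and correspondingly $\pi_u^{\sigma_\Lambda, v \gets \1}$ and $\pi_u^{\sigma_\Lambda, v \gets \0}$ both degenerate to the same point mass, so $\Upsilon^{\sigma_\Lambda}_\pi(v,u) = 0$ as well, matching the formula.
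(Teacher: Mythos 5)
Your proposal is a correct direct computation and matches the standard way this identity is established; the paper itself does not prove \Cref{lem:inf-near}, citing it from \cite{chen2020rapid}, so there is no internal proof to compare against, but your argument is exactly what one expects. The two key points --- (i) for Boolean spins the total variation distance reduces to $|\pi_u^{\sigma_\Lambda, v\gets\1}(\1)-\pi_u^{\sigma_\Lambda, v\gets\0}(\1)|$, giving $\Upsilon^{\sigma_\Lambda}_\pi(v,u)=\Psi^{\sigma_\Lambda}_\pi(v,u)=|I^{\sigma_\Lambda}_\pi(v,u)|$ once $\Omega(\pi_v^{\sigma_\Lambda})=\{\0,\1\}$, and (ii) in a tree, conditioning on $v$ separates $T_u$ from the rest, so the odds ratio of $u$ given $v\gets c$ factors as (edge weight given $v=c$)$\times R_u$, whence the two displayed marginals and the algebra reducing their difference to $h(\log R_u)$ --- are both handled cleanly.

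One small inaccuracy worth flagging, though it is ultimately moot: your claim that ``in these limits $h(\log R_u)=0$'' for $R_u\in\{0,+\infty\}$ is only half right. At $R_u=0$ one indeed has $h(-\infty)=0$, but at $R_u=+\infty$ with $\beta=0$ one gets $h(+\infty)=-1$, not $0$. The identity of the lemma would still hold in that degenerate case (the two conditional marginals are the point masses $\delta_{\0}$ and $\delta_{\1}$, so the TV distance is also $1$), so your formula is not wrong, but the reason you give for it is. More to the point, $R_u=+\infty$ cannot arise for a free vertex $u\notin\Lambda$: since $\gamma>0$, any feasible pinning of $T_u$ extends to a feasible configuration with $u\gets\0$ (e.g.\ set every free vertex of $T_u$ to $\0$), so $Z_u^{\0}>0$ and $R_u<+\infty$; equivalently, the tree recursion $F_{\lambda_u,d_u}$ maps into a bounded interval for $u\notin\Lambda$. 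You may simply drop the $R_u=+\infty$ case.
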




Now, we are ready to prove \Cref{lem:tree-decay-level}.
\begin{proof} [Proof of \Cref{lem:tree-decay-level}]
Suppose there is a $(\alpha, c)$-potential function $\phi$ with respect to $\+I$ on $V\setminus \Lambda$.
%
%
We denote the children of $r$ by $u_1, u_2, \cdots, u_{\Delta_r}$. Then by \Cref{lem:inf-chain} and \Cref{lem:inf-near}, for any integer $k \geq 1$, 
  \begin{align}
  \label{eq-agu-0/0}
    \sum_{v \in L_r(k)} \Delta_v \cdot \Upsilon^{\sigma_\Lambda}_\pi(r, v)
    &= \sum_{i = 1}^{\Delta_r} \Upsilon^{\sigma_\Lambda}_\pi(r, u_i) \sum_{v \in L_{u_i}(k-1)} \Upsilon^{\sigma_\Lambda}_{\pi} (u_i, v)  \Delta_v \notag\\
(\text{by \Cref{lem:inf-near}})\quad    &\leq \sum_{i = 1}^{\Delta_r} \abs{h(\log R_{u_i})} \sum_{v \in L_{u_i}(k-1)} \Upsilon^{\sigma_\Lambda}_{\pi} (u_i, v)  \Delta_v \notag\\
     &\leq \sum_{i = 1}^{\Delta_r} h^{\phi}(\log R_{u_i}) \sum_{v \in L_{u_i}(k-1)} \phi(\log R_{u_i})\Upsilon^{\sigma_\Lambda}_{\pi} (u_i, v) \Delta_v,
  \end{align}
  where~\eqref{eq-agu-0/0} holds because of the fact that  $\phi(y) > 0$ for all $y \in [-\infty,+\infty]$ that $|h(y)|> 0$ and the definition of $h^{\phi}(y)$ such that $h^{\phi}(y)=0$ if $h(y)=0$ and $h^\phi(y) = \frac{|h(y)|}{\phi(y)}$ if $h(y)\neq0$.
We have for any integer $k \geq 1$,
\begin{align}
\label{eq-target-0/0}
 \sum_{v \in L_r(k)} \Delta_v \cdot \Upsilon^{\sigma_\Lambda}_\pi(r, v) \leq \Delta_r \cdot \max_{1 \leq i \leq \Delta_r}\left\{h^\phi(\log R_{u_i}) \right\} \cdot \max_{1 \leq i\leq \Delta_r} \left\{\sum_{v \in L_{u_i}(k-1)} \phi(\log R_{u_i})\Upsilon^{\sigma_\Lambda}_{\pi} (u_i, v)  \Delta_v\right\},
\end{align}
To bound the right-hand-side of~\eqref{eq-target-0/0}, we claim that for all $w \in V \setminus (\{r\} \cup \Lambda)$ and $k\ge 0$,
  \begin{align} \label{eq:aux-decay-on-tree}
   \sum_{v \in L_w(k)} \Delta_v \cdot \phi(\log R_w) \Upsilon^{\sigma_\Lambda}_\pi(w, v)
    &\leq \max_{v \in L_w(k)} \{\Delta_v \cdot \phi(\log R_v)\} \cdot \tp{1 - \alpha}^k.
  \end{align}
Combining~\eqref{eq-target-0/0} and~\eqref{eq:aux-decay-on-tree}, we have
  \begin{align*}
    \sum_{v \in L_r(k)} \Delta_v \cdot \Upsilon^{\sigma_\Lambda}_\pi(r, v)
     &\leq \Delta_r \cdot \max_{1 \leq i \leq \Delta_r}\left\{h^\phi(\log R_{u_i}) \right\} \cdot \max_{1 \leq i \leq \Delta_r} \max_{v \in L_{u_i}(k-1)} \{\Delta_v \cdot \phi(\log R_v)\} \cdot (1 - \alpha)^{k-1} \\
    &= \Delta_r \cdot \max_{1 \leq i \leq \Delta_r}\left\{h^\phi(\log R_{u_i}) \right\} \cdot \max_{v \in L_r(k)} \{\Delta_v \cdot \phi(\log R_v)\} \cdot (1 - \alpha)^{k-1} \\
 &= \Delta_r \cdot \max_{u \in L_r(1), v \in L_r(k)} \left\{\Delta_v \cdot h^\phi(\log R_{u_i})  \cdot  \phi(\log R_v)\right\} \cdot (1 - \alpha)^{k-1} \\
(\star) \quad   
&\leq \Delta_r \cdot 2c \cdot (1 - \alpha)^{k-1}.
  \end{align*}
Inequality $(\star)$ holds because of the boundedness property and the fact that $u,v \in V \setminus \Lambda$ are free variables. 
  
We now use induction on $k$ to prove \eqref{eq:aux-decay-on-tree}.
The base case is $k = 0$. We have
\begin{align*}
    \Delta_w \cdot \phi(\log R_w) \Upsilon^{\sigma_\Lambda}_\pi(w, w) = \Delta_w \cdot \phi(\log R_w),
  \end{align*}
which holds because $\Upsilon^{\sigma_\Lambda}_{\pi}(w, w) = 1$.

Now, suppose \eqref{eq:aux-decay-on-tree} holds for some integer $k - 1 \geq 0$. 
Fix $w \in V \setminus (\Lambda \cup \{r\})$. Let $w_1,w_2,\ldots,w_{d_w}$ denote the children of $w$ in $T_w$. Since $w \neq r$, we have $1\leq  d_w < \Delta$. We assume $d_w > 0$ because \eqref{eq:aux-decay-on-tree} holds trivially if $d_w = 0$.
By \Cref{lem:inf-chain} and \Cref{lem:inf-near}, we have
  \begin{align*}
    \sum_{v \in L_w(k)} \Delta_v \cdot \phi(\log R_w) \Upsilon^{\sigma_\Lambda}_\pi(w, v)
    &= \sum_{i=1}^{d_w} \phi(\log R_w)\Upsilon^{\sigma_\Lambda}_\pi(w, w_i) \sum_{v \in L_{w_i}(k-1)} \Upsilon^{\sigma_v}_\pi(w_i, v) \cdot \Delta_v \\
    &\leq \sum_{i=1}^{d_w} \phi(\log R_w)\abs{h(\log{R_{w_i}})} \sum_{v \in L_{w_i}(k-1)} \Upsilon^{\sigma_v}_\pi(w_i, v) \cdot \Delta_v.
  \end{align*}
By a similar argument in~\eqref{eq-agu-0/0}, we have 
\begin{align*}
 \sum_{v \in L_w(k)} \Delta_v \cdot \phi(\log R_w) \Upsilon^{\sigma_\Lambda}_\pi(w, v)	\leq \sum_{i=1}^{d_w}\phi(\log R_w) h^\phi(\log R_{w_i}) \sum_{v \in L_{w_i}(k-1)} \phi(\log R_{w_i})\Upsilon^{\sigma_v}_\pi(w_i, v) \cdot \Delta_v,
\end{align*}
By the induction hypothesis, it holds that
  \begin{align*}
    \sum_{v \in L_w(k)} \Delta_v \cdot \phi(\log R_w) \Upsilon^{\sigma_\Lambda}_\pi(w, v)
    &\leq \sum_{i=1}^{d_w} \phi(\log R_w) h^\phi(\log R_{w_i}) \max_{v \in L_{w_i}(k-1)}\{\Delta_v \cdot \phi(\log R_v)\} \cdot (1 - \alpha)^{k-1} \\
    &\leq \max_{v \in L_w(k)}\{\Delta_v \cdot \phi(\log R_v)\} \cdot (1 - \alpha)^{k-1} \cdot \sum_{i=1}^{d_w} \phi(\log R_w) h^\phi(\log R_{w_i})   \\
    &\leq \max_{v \in L_w(k)}\{\Delta_v \cdot \phi(\log R_v)\} \cdot (1 - \alpha)^{k},
  \end{align*}
 where the last inequality holds because $w \in V\setminus \Lambda$ and $d_w \geq 1$, and $\phi$ satisfies the $\alpha$-Contraction property on $V\setminus \Lambda$.
\end{proof}

\begin{proof}[Proof of \Cref{theorem-good-potential-imply-SI-CLV}]
Recall that $\+I = (V, E, \beta, \gamma, \tp{\lambda_v}_{v \in V})$ is a two-spin system with Gibbs distribution $\pi$ where $0 \leq \beta \leq \gamma$, $\gamma > 0$, $\lambda_v > 0$ for all $v \in V$, and $\phi$ is an $(\alpha, c)$-function with respect to $\+I$.
For any $r \in V$, $\Lambda \subseteq V \setminus \{r\}$, and $\sigma_\Lambda \in \Omega(\pi_\Lambda)$, let $\+I_T$ denote the two-spin system on the SAW tree $\TSAW(G, r)$ whose Gibbs distribution is $\pi_\SAW$.
Recall that we use $\Psi^{\sigma_\Lambda}_T$ to denote $\Psi^{\sigma_\Lambda}_{\pi_\SAW}$ and $\Psi^{\sigma_\Lambda}_G$ to denote $\Psi^{\sigma_\Lambda}_\pi$ respectively.
Then, by \Cref{lem:tree-inf-preserve}, we have
\begin{align} \label{eq:aux-B3-1}
  \sum_{v \in V\setminus \Lambda} \Delta_{G,v} \cdot \Psi^{\sigma_\Lambda}_G(r, v) &\leq \sum_{\hat{v} \in V_\SAW \setminus \Lambda_\SAW} \Delta_{T,\hat{v}} \cdot \Psi^{\sigma_\Lambda}_T (r, \hat{v}),
\end{align} 
By the SAW-tree construction, for any $v \in V$, if $\hat{v}$ is a free copy of $v$ in $\+I_T$, then $v$ and $\hat{v}$ have the same local field and degree. It is straightforward to verify that the function $\phi$ is also an $(\alpha, c)$-function with respect to $\+I_T$ on $V_\SAW \setminus \Lambda_\SAW$.
\Cref{lem:inf-on-tree-instance} can be applied to $\+I_T$, and it holds that
\begin{align}\label{eq:aux-B3-2}
  \sum_{\hat{v} \in V_\SAW \setminus \Lambda_\SAW} \Delta_{T, \hat{v}} \cdot \Psi^{\sigma_\Lambda}_T(r, \hat{v}) &\leq \Delta_r \cdot \frac{2c}{\alpha}.
\end{align}
Finally, \Cref{theorem-good-potential-imply-SI-CLV} is proved by combining \eqref{eq:aux-B3-1} and \eqref{eq:aux-B3-2} with \Cref{lem:spec-ratio-to-row-sum}.
\end{proof}

\subsection{Proof of \Cref{lm:contraction}}\label{proof-lm:contraction}
Recall the definition of the tree recursion for marginal ratios $F_{\lambda,d}(x_1,\ldots,x_d)$ in~\eqref{eq:tree-recursion-marginal-ratio}. For $\vec{x}=(x_1,\ldots,x_d)\in[0,+\infty]^d$, let
\begin{align*}
\alpha_d(x_1,\ldots,x_d)
\triangleq 
\sqrt{\frac{(1-\beta \gamma)F_{\lambda,d}(\vec{x})}{\tp{\beta F_{\lambda,d}(\vec{x}) + 1} \tp{F_{\lambda,d}(\vec{x})+ \gamma}}} 
\sum_{i=1}^d \sqrt{\frac{(1-\beta \gamma)x_i}{(\beta x_i+1)(x_i+\gamma)}}.
\end{align*}

Recall the definition of the tree recursion for log-marginal-ratios $H_{\lambda,d}(y_1,\ldots,y_d)$ in~\eqref{eq:tree-recursion-log-marginal-ratio}, function $h(y)$ in~\eqref{eq-def-function-h} and $h^{\phi}(y)$~\eqref{eq:definition-h-phi}.
For any integer $d \ge 1$, any $y_1,y_2,\ldots,y_d \in [-\infty,+\infty]^d$ and $y = H_{\lambda,d}(y_1,y_2,\ldots,y_d)$, it holds that
\begin{align}\label{eq:log-to-normal}
 \phi(y)\sum_{i=1}^{d} h^{\phi}\tp{y_i}  = \sum_{i=1}^{d} \sqrt{|h(y)h(y_i)|}
 =\alpha_d(e^{y_1},\ldots,e^{y_d}).
\end{align}
%
For $x\in[0,+\infty]$, let $\alpha_d(x)$ denote the symmetric version of $\alpha_d(x_1,\ldots,x_d)$, specifically,
\[
\alpha_d(x)\triangleq \alpha_d(\underbrace{x,\ldots,x}_{d})
=
\sqrt{f_d(F_d(x))} 
\sqrt{f_d(x)},
\]
where $F_d(x)\triangleq\lambda\tp{\frac{\beta x+1}{x+\gamma}}^d$, defined in~\eqref{eq:tree-recursion}, is the symmetric version of the tree recursion for marginal ratios $F_{\lambda,d}(\vec{x})$; and $f_d(d)\triangleq \frac{d(1-\beta\gamma)x}{(\beta x+1)(x+\gamma)}$ is defined in \eqref{eq-def-fd}.
The following symmetrization was well known.
\begin{lemma}[\text{\cite[Lemma 13]{LLY13} }]\label{lm:lly-13}
  Let $d \ge 1$ be an integer, and let $\beta,\gamma,\lambda$ be real numbers satisfying that $0\le \beta\le\gamma$, $\gamma> 0$, $\lambda\ge 0$ and $\beta\gamma<1$.
  For any $x_1,x_2,\ldots,x_d \in [0,+\infty)$, there exists $\overline{x} \in [0,+\infty)$ such that
  \begin{align*}
    \alpha_d(x_1,x_2,\ldots,x_d) \le \alpha_d(\overline{x}).
  \end{align*}
\end{lemma}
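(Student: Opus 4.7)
The plan is to reduce the multi-variable maximization of $\alpha_d$ to a one-variable problem by choosing a clever symmetrization point $\overline x$. My choice is the unique $\overline x \in [0,+\infty)$ such that $F_d(\overline x) = F_{\lambda,d}(x_1,\ldots,x_d)$; since $F_d$ is continuous and monotone on $[0,+\infty)$ with the same range as $F_{\lambda,d}$ restricted to the diagonal, such $\overline x$ exists. With this choice, the outer factor of $\alpha_d$ is the same for $\vec x$ and $\overline x$, so, using the identity $f_d(x) = d\cdot f_1(x)$, the inequality $\alpha_d(\vec x)\le\alpha_d(\overline x)$ reduces to the clean one-dimensional statement
\[
  \sum_{i=1}^d \sqrt{f_1(x_i)} \;\le\; d\,\sqrt{f_1(\overline x)}.
\]

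Next I would change variables to $u_i \triangleq \log\!\tfrac{\beta x_i+1}{x_i+\gamma}$, which for $\beta\gamma<1$ is a monotone bijection between $[0,+\infty)$ and some interval $[\log\beta,\log(1/\gamma)]$, so each $x_i\in[0,+\infty)$ corresponds to a unique $u_i$. In these coordinates, the constraint $F_d(\overline x)=F_{\lambda,d}(\vec x)$ becomes the affine identity $\overline u = \tfrac{1}{d}\sum_i u_i$, where $\overline u$ is the transform of $\overline x$. Define $G(u)\triangleq \sqrt{f_1(x(u))}$, the pullback of $\sqrt{f_1}$ along this change of variables. If I can show that $G$ is concave on the relevant interval, then Jensen's inequality immediately delivers
\[
  \frac{1}{d}\sum_{i=1}^d G(u_i) \;\le\; G\!\left(\tfrac{1}{d}\sum_i u_i\right) = G(\overline u),
\]
which is exactly the inequality we need. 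The lemma then follows after handling the mild boundary cases $x_i\in\{0\}\cup\{+\infty\}$ by continuity (with the convention that $f_1$ vanishes there).

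The main obstacle, and the only genuinely nontrivial step, is establishing the concavity of $G(u)$. The clean approach is to compute $G''(u)$ via the chain rule using $\tfrac{\mathrm d u}{\mathrm d x}=\tfrac{\beta\gamma-1}{(\beta x+1)(x+\gamma)}$ and reduce the inequality $G''(u)\le 0$ to a polynomial inequality in $x$ whose coefficients depend on $\beta,\gamma$ in a way that exploits $0\le\beta\le\gamma$ and $\beta\gamma<1$. An alternative, slightly more conceptual route is a Lagrange-multiplier analysis: I would show that any interior critical point $\vec x^\star$ of $\alpha_d$ on the level set $\{F_{\lambda,d}(\vec x)=F\}$ must be symmetric, by comparing partial derivatives $\partial_{x_j}\sqrt{f_1(x_j)}$ and $\partial_{x_j}\log\!\tfrac{\beta x_j+1}{x_j+\gamma}$ and observing that the ratio is an injective function of $x_j$; combined with the boundary analysis on $\{x_i=0\}\cup\{x_i=+\infty\}$, this would force the maximum to be attained at the symmetric point, with $\overline x$ being exactly our chosen value. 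Either route isolates the technical work in a self-contained one-variable calculus exercise, after which the lemma follows immediately.
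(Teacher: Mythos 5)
Your plan is sound and is essentially the symmetrization argument from \cite{LLY13}: pick $\overline{x}$ with $F_d(\overline{x})=F_{\lambda,d}(\vec{x})$, pass to the coordinate $u=\log\tfrac{\beta x+1}{x+\gamma}$ so the constraint becomes $\overline u = \tfrac1d\sum_i u_i$, and conclude by Jensen from the concavity of $G(u)=\sqrt{f_1(x(u))}$. The concavity step that you flag as the remaining obstacle does in fact go through cleanly. Writing $P=\beta\mathrm{e}^{-u}$ and $Q=\gamma\mathrm{e}^{u}$ (so $PQ=\beta\gamma$ is constant in $u$), a short computation gives
\[
G(u)^2 \;=\; f_1\bigl(x(u)\bigr) \;=\; \frac{(1+\beta\gamma)-P-Q}{1-\beta\gamma},
\]
and multiplying $G''(u)\le 0$ through by positive factors reduces it to
\[
\tfrac12(P+Q)^2-(1+\beta\gamma)(P+Q)+2\beta\gamma\;=\;\tfrac12\bigl(s-2\bigr)\bigl(s-2\beta\gamma\bigr)\;\le\;0,
\qquad s\triangleq P+Q.
\]
The domain of $G$ is precisely $2\sqrt{\beta\gamma}\le s\le 1+\beta\gamma$ (the upper bound being where $G^2\ge 0$, the lower bound being AM--GM), and since $\beta\gamma<1$ this interval is contained in $[2\beta\gamma,2]$, so the displayed quantity is nonpositive. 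Thus $G$ is concave on its whole domain, Jensen applies, and your argument closes; the boundary cases $x_i=0$ require no special treatment since they correspond to the finite endpoint $u=-\log\gamma$ with $G=0$.
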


Furthermore, it was known that $\alpha_d(x)$ is bounded by  $\sqrt{f_d(\hat{x}_d)}$ when $(\beta,\gamma,\lambda)$ is $d$-unique.
\begin{lemma}[\text{\cite{LLY13}}]\label{lm:lly-14}
  Let $d \ge 1$ be an integer, and let $\beta,\gamma,\lambda$ be real numbers satisfying that $0\le \beta\le\gamma$, $\gamma> 0$, $\lambda\ge 0$ and $\beta\gamma<1$.
  If $(\beta,\gamma,\lambda)$ is $d$-unique with gap $\delta$, then for any $x\in[0,+\infty)$, 
  \[
  \alpha_d(x)=\sqrt{f_d(F_d(x))}\sqrt{f_d(x)}\le \sqrt{f_d(\hat{x}_d)}\le\sqrt{1-\delta},
  \]
  where $\hat{x}_d$ denotes the unique positive fixed point of $F_d(x)=\lambda\tp{\frac{\beta x+1}{x+\gamma}}^d$ and $f_d(\hat{x}_d)=|F'_d(\hat{x}_d)|=\frac{d(1-\beta\gamma)\hat{x}}{(\beta\hat{x}+1)(\hat{x}+\gamma)}$.
\end{lemma}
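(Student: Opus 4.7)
The second inequality $\sqrt{f_d(\hat{x}_d)} \le \sqrt{1-\delta}$ is immediate from the $d$-uniqueness with gap $\delta$ (Definition~\ref{definition:d-unique}), so the substantive content is the first inequality, which after squaring reads
\begin{align*}
  \alpha_d(x)^2 \;=\; f_d(x)\cdot f_d(F_d(x)) \;\le\; f_d(\hat{x}_d), \qquad \forall\, x \in [0,+\infty).
\end{align*}
The plan is to analyze the one-variable function $\Psi(x) \triangleq f_d(x) f_d(F_d(x))$ and show that its supremum is bounded by $f_d(\hat{x}_d)$ using the fixed-point structure of $F_d$ and the unimodality of $f_d$.

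The first step is to assemble the structural facts I will rely on. Since $\beta\gamma < 1$, a direct computation gives $F_d'(x) = -f_d(x) F_d(x)/x$, so $F_d$ is strictly decreasing on $(0,+\infty)$ and admits $\hat{x}_d$ as its unique positive fixed point; in particular $F_d$ swaps the intervals $(0,\hat{x}_d)$ and $(\hat{x}_d,+\infty)$, so $x$ and $F_d(x)$ lie on opposite sides of $\hat{x}_d$. The function $f_d$ is unimodal with $(\log f_d)'(x) = (\gamma - \beta x^2)/[x(\beta x+1)(x+\gamma)]$, attaining its maximum at $x^\ast = \sqrt{\gamma/\beta}$. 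In the logarithmic variable $y = \log x$, the identity $H_d'(y) = d\cdot h(y)$ with $h$ as in~\eqref{eq-def-function-h} recasts the desired inequality as
\begin{align*}
  d\cdot |h(y)|\cdot |h(H_d(y))| \;\le\; |h(\hat{y}_d)|, \qquad \hat y_d \triangleq \log\hat x_d,
\end{align*}
equivalently, the pointwise derivative $|(H_d\circ H_d)'(y)|$ is uniformly bounded by $|H_d'(\hat y_d)|$.

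To close the inequality, I would locate the maxima of $\Psi$ by solving the first-order condition $(\log f_d)'(x) + (\log f_d)'(F_d(x))\cdot F_d'(x) = 0$, substituting $F_d'(x) = -f_d(x) F_d(x)/x$ and using the rational substitution $t = (\beta x+1)/(x+\gamma)$, under which $f_d$ takes the clean form $f_d(x) = d(1-t\gamma)(t-\beta)/[t(1-\beta\gamma)]$ and $F_d(x) = \lambda t^d$. The resulting algebraic coupling between $t$ and the analogous $s = (\beta F_d(x)+1)/(F_d(x)+\gamma)$, combined with the bracketing property and the uniqueness bound $f_d(\hat x_d) \le 1-\delta$, should reduce the claim to a single-variable inequality that can be verified directly. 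The main obstacle I anticipate is that $\hat{x}_d$ is generically \emph{not} a critical point of $\Psi$: the derivative of $\log\Psi$ at $\hat{x}_d$ equals $(\log f_d)'(\hat{x}_d)\cdot(1 - f_d(\hat{x}_d))$, which under strict sub-criticality vanishes only when $\hat{x}_d = x^\ast$. Consequently the supremum of $\Psi$ is attained strictly away from $\hat{x}_d$, and controlling its value requires exploiting the specific algebraic cancellations forced by the tree recursion rather than a naive monotonicity or convexity argument.
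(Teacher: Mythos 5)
Your setup is correct and your computation that
\[
(\log\Psi)'(\hat x_d)=(\log f_d)'(\hat x_d)\bigl(1+F_d'(\hat x_d)\bigr)=(\log f_d)'(\hat x_d)\bigl(1-f_d(\hat x_d)\bigr)
\]
is exactly right; this shows $\hat x_d$ is generically not a critical point of $\Psi$, which is the correct difficulty to flag. But you stop at precisely the point where the argument actually needs to happen: you say the $t$-substitution ``should reduce the claim to a single-variable inequality that can be verified directly,'' and acknowledge that the maximum lies away from $\hat x_d$ and requires ``exploiting the specific algebraic cancellations,'' but you never produce those cancellations. As written this is a plan, not a proof.

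The missing step is resolved in the paper as follows. One computes $\alpha_d'(x)=c(x)\bigl(\frac{\gamma-\beta x^2}{d(1-\beta\gamma)x}-\frac{\gamma-\beta F_d(x)^2}{(\beta F_d(x)+1)(F_d(x)+\gamma)}\bigr)$ with $c(x)>0$, so the unique critical point $x_d$ satisfies
\[
(\beta F_d(x_d)+1)(F_d(x_d)+\gamma)=\frac{d(1-\beta\gamma)x_d\,(\gamma-\beta F_d(x_d)^2)}{\gamma-\beta x_d^2}.
\]
The key trick you are missing is to \emph{substitute this identity back into} $\alpha_d(x_d)$, which replaces $\alpha_d$ at the critical point with an auxiliary function
\[
\tilde\alpha_d(x)\triangleq\sqrt{\,d(1-\beta\gamma)\cdot\frac{\gamma-\beta x^2}{(\beta x+1)(x+\gamma)}\cdot\frac{F_d(x)}{\gamma-\beta F_d(x)^2}\,},
\]
so that $\alpha_d(x_d)=\tilde\alpha_d(x_d)$. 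This $\tilde\alpha_d$ is \emph{not} equal to $\alpha_d$ in general, but it has two virtues: $\tilde\alpha_d(\hat x_d)=\sqrt{f_d(\hat x_d)}$ exactly (by plugging in $F_d(\hat x_d)=\hat x_d$), and its two factors can be shown to be monotone (both decreasing, or both negative and decreasing, so that the product is increasing) on the interval between $x_d$ and $\hat x_d$. The case split is governed by the sign of $\gamma-\beta\hat x_d^2$, i.e.\ whether $\hat x_d$ lies left or right of $\sqrt{\gamma/\beta}$; determining that sign uses $\alpha_d'(\hat x_d)\gtrless 0$ together with $f_d(\hat x_d)<1$, which is where the $d$-uniqueness actually enters. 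The chain $\alpha_d(x)\le\alpha_d(x_d)=\tilde\alpha_d(x_d)\le\tilde\alpha_d(\hat x_d)=\sqrt{f_d(\hat x_d)}$ then closes the argument. This ``reduce at the critical point, then compare to the fixed point via a different monotone function'' maneuver is precisely the algebraic cancellation you anticipated would be needed but did not carry out; without it the proposal has a genuine gap.
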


\begin{remark}
In~\cite{LLY13}, the lemma was proved by assuming the up-to-$\Delta$ uniqueness (but the conclusion was also stronger which held for all $1\le d<\Delta$).
\Cref{lm:lly-14} can be proved by going through that proof.
\end{remark}

\begin{proof}
By taking derivative
\begin{align}\label{eq:decay-factor-gradient}
\alpha_d'(x)=c(x)\cdot
\left(\frac{\gamma-\beta x^2}{d(1-\beta\gamma)x}-\frac{\gamma-\beta F_d(x)^2}{(\beta F_d(x)+1)(F_d(x)+\gamma)}\right),
\end{align}
where $c(x)>0$ is a function that is always positive for $x\in[0,+\infty)$.
Note that $\frac{\gamma-\beta x^2}{d(1-\beta\gamma)x}$ is decreasing in $x$ and ranges from $+\infty$ to $-\infty$ and $F_d(x)$ is decreasing in $x$ and has a bounded range $[\lambda\beta^d,\lambda\gamma^{-d}]$. 
Then the equation
\begin{align}\label{eq:x_d}
\frac{\gamma-\beta x^2}{d(1-\beta\gamma)x}=\frac{\gamma-\beta F_d(x)^2}{(\beta F_d(x)+1)(F_d(x)+\gamma)}.
\end{align}
has a unique positive solution $x_d\in(0,+\infty)$, at which $\alpha_d(x)$ achieves its maximum.
Substituting $(\beta F_d(x_d)+1)(F_d(x_d)+\gamma)=\frac{d(1-\beta\gamma)x_d(\gamma-\beta F_d(x_d)^2)}{\gamma-\beta x_d^2}$ in $\alpha_d(x_d)$, we have
\begin{align*}
\alpha_d(x_d)
= \sqrt{d(1-\beta\gamma)\cdot\frac{(\gamma-\beta x_d^2)}{(\beta{x}_d+1)({x}_d+\gamma)}\cdot\frac{F_d(x_d)}{\left(\gamma-\beta F_d(x_d)^2\right)}}
\triangleq
\tilde{\alpha}_d(x_d).
\end{align*}
We then claim that if $(\beta,\gamma,\lambda)$ is $d$-unique, then $\tilde{\alpha}_d(x_d)\le\tilde{\alpha}_d(\hat{x}_d)$ for the fixed point $\hat{x}_d=F_d(\hat{x}_d)$.
To see that this is sufficient to prove the lemma, note that $\alpha_d(x)\le\alpha_d(x_d)=\tilde{\alpha}_d(x_d)\le\tilde{\alpha}_d(\hat{x}_d)$, and  by substituting $F_d(\hat{x}_d)=\hat{x}_d$, we have 
\[
\tilde{\alpha}_d(\hat{x}_d)
=
\sqrt{\frac{d(1-\beta\gamma)\hat{x}_d}{(\beta\hat{x}_d+1)(\hat{x}_d+\gamma)}}=\sqrt{f_d(\hat{x}_d)}.
\]
Next, we prove this claim. 
It is sufficient to show that $\tilde{\alpha}_d(x)$ is decreasing in $x\in[\hat{x}_d,x_d]$ if $\hat{x}_d\le x_d$ and $\tilde{\alpha}_d(x)$ is increasing in $x\in[x_d,\hat{x}_d]$ if $\hat{x}_d> x_d$.
\begin{itemize}
\item Case 1: $\hat{x}_d\le x_d$. 
In this case, according to \eqref{eq:decay-factor-gradient}, $\alpha'(\hat{x}_d)\ge 0$.
Note that
\begin{align*}
\alpha'(\hat{x}_d)
=
c(\hat{x}_d)
(\gamma-\beta\hat{x}_d^2)
\left(\frac{1}{d(1-\beta\gamma)\hat{x}_d}-\frac{1}{(\beta\hat{x}_d+1)(\hat{x}_d+\gamma)}\right),
\end{align*}
where $c(\hat{x}_d)>0$.
Due to the $d$-uniqueness of $(\beta,\gamma,\lambda)$, we have $|f_d'(\hat{x}_d)|=\frac{d(1-\beta\gamma)\hat{x}_d}{(\beta\hat{x}_d+1)(\hat{x}_d+\gamma)}<1$, that is, $\frac{1}{d(1-\beta\gamma)\hat{x}_d}-\frac{1}{(\beta\hat{x}_d+1)(\hat{x}_d+\gamma)}>0$. 
Therefore $\alpha'(\hat{x}_d)\ge 0$ means that $\gamma-\beta\hat{x}_d^2\ge 0$. 
Since $F_d(x)$ is monotonically decreasing in $x$ and $\hat{x}_d$ is its fixed point, we have
\[
\gamma-\beta F_d({x}_d)^2
\ge
\gamma-\beta F_d(\hat{x}_d)^2
=
\gamma-\beta\hat{x}_d^2
\ge
0.
\] 
Since $x_d$ satisfies \eqref{eq:x_d}, $\gamma-\beta{x}_d^2$ and $\gamma-\beta F_d({x}_d)^2$ must be simultaneously positive or negative, thus it also holds that $\gamma-\beta{x}_d^2\ge 0$. Then 
both $\frac{(\gamma-\beta x^2)}{(\beta{x}+1)({x}+\gamma)}$ and $\frac{F_d(x)}{\left(\gamma-\beta F_d(x)^2\right)}$ are positive and monotonically decreasing in $x\in[\hat{x}_d,x_d]$. Therefore, $\tilde{\alpha}_d(x_d)\le \tilde{\alpha}_d(\hat{x}_d)$.

\item Case 2: $\hat{x}_d> x_d$. By symmetry, we have $\gamma-\beta F_d(\hat{x}_d)^2=\gamma-\beta\hat{x}_d^2<0$, $\gamma-\beta F_d({x}_d)^2<0$, and $\gamma-\beta{x}_d^2<0$. Thus both $\frac{(\gamma-\beta x^2)}{(\beta{x}+1)({x}+\gamma)}$ and $\frac{F_d(x)}{\left(\gamma-\beta F_d(x)^2\right)}$ are negative and monotonically decreasing in $x\in[x_d,\hat{x}_d]$, hence their product is positive and increasing in $x\in[x_d,\hat{x}_d]$.  Therefore, $\tilde{\alpha}_d(x_d)\le \tilde{\alpha}_d(\hat{x}_d)$.
\end{itemize}
\end{proof}

\begin{proof}[Proof of \Cref{lm:contraction}]
We now prove \Cref{lm:contraction}. Given $y_1,y_2,\ldots,y_d \in [-\infty, +\infty]$, define 
\begin{align*}
x_i = \mathrm{e}^{y_i}.	
\end{align*}
If all $y_i \in [-\infty, +\infty)$,  then all  $x_i \in \mathbb{R}_{\geq 0}$. By~\eqref{eq:log-to-normal}, \Cref{lm:lly-13} and \Cref{lm:lly-14}, it holds that
\begin{align}
\label{eq-reals}
 \phi(y)\sum_{i=1}^{d} h^{\phi}\tp{y_i}= \alpha_d(e^{y_1},\ldots,e^{y_i}) =\alpha_d(x_1,\ldots,x_d)  
 \leq \sqrt{1-\delta}.	
\end{align}
The lemma follows.

Suppose there is a subset $S \subseteq [d]$ such that $y_i = +\infty$ for all $i \in S$ and $y_j \in [-\infty, +\infty)$ for all $j \in [d] \setminus S$.
In this case, $x_i = +\infty$ for all $i \in S$, and $x_j \in \mathbb{R}_{\geq 0}$ for all $j \in [d] \setminus S$. 
Define a function
\begin{align*}
g(x) =  \alpha_d(z_1,z_2,\ldots,z_d),
\end{align*}
where $z_i = x$ for all $i \in S$, and $z_j = x_j=\mathrm{e}^{y_j} \in [0,+\infty)$ for all $j \in [d] \setminus S$. By~\eqref{eq:log-to-normal} and definition of $\alpha_d(x_1,\ldots,x_d)$, it is straightforward to verify $\lim_{x \to \infty}g(x)$ exists and it holds that 
\begin{align*}
\alpha_d(e^{y_1},\ldots,e^{y_d})  =\alpha_d(x_1,\ldots,x_d)  = \lim_{x \to \infty}g(x).
\end{align*}
By~\eqref{eq-reals}, for any $x \in [0,+\infty)$, we have $g(x) \leq \sqrt{1-\delta}$. Hence 
\begin{align*}
 \phi(y)\sum_{i=1}^{d} h^{\phi}\tp{y_i} &= \lim_{x \to \infty}g(x) \leq \sqrt{1-\delta}.\qedhere
\end{align*}
\end{proof}
\end{document}